\newcommand{\eps}{\ensuremath{\varepsilon}}
\newcommand{\bs}{\boldsymbol}
\newcommand{\jac}{J_{\mathbf u} \, }
\newcommand{\mbf}{\mathbf}
\newcommand{\ds}{\displaystyle}
\newcommand{\ol}{\overline}
\title{Beyond-all-order asymptotics for homoclinic snaking of localised patterns in reaction-transport systems.}
\author{Edgardo Villar-Sep\'ulveda
\\
School of Engineering Mathematics and Technology
\\
University of Bristol}
\date{}
\begin{document}

\maketitle

\begin{abstract}
    Spatially localised stationary patterns of arbitrary wide spatial extent emerge from subcritical Turing bifurcations in one-dimensional reaction-diffusion systems. They lie on characteristic bifurcation curves that oscillate around a Maxwell point in a homoclinic snaking phenomenon. Here, a generalisation of the exponential asymptotics method by Chapman \& Kozyreff is developed to provide leading-order expressions for the width of the snaking region close to a Turing bifurcation's super/sub-critical transition in arbitrary $n$-component reaction-diffusion systems. First, general expressions are provided for the regular asymptotic approximation of the Maxwell point, which depends algebraically on the parametric distance from the codimension-two super/sub-critical Turing bifurcation. Then, expressions are derived for the width of the snaking, which is exponentially small in the same distance. The general theory is developed using a vectorised form of regular and exponential asymptotic expansions for localised patterns, which are matched at a Stokes' line. Detailed calculations for a particular example are algebraically cumbersome, depend sensitively on the form of the reaction kinetics, and rely on the summation of a weakly convergent series obtained via optimal truncation. Nevertheless, the process can be automated, and code is provided that carries out the calculations automatically, only requiring the user to input the model, and the values of parameters at a codimension-two bifurcation. The general theory also applies to higher-order equations, including those that can be recast as a reaction-diffusion system. The theory is illustrated by comparing numerical computations of localised patterns in two versions of the Swift-Hohenberg equation with different nonlinearities, and versions of activator-inhibitor reaction-diffusion systems.
\end{abstract}
	
\section{Introduction} \label{sec:introduction}
    The spontaneous formation of spatially periodic patterns through a finite wavenumber instability of systems of reaction-diffusion equations arises in several areas of applied mathematics, beginning with Alan Turing's seminal work on the chemical basis of morphogenesis \cite{turing}. See also \cite{dawes,krause_review}, for recent reviews and updates. If the Turing bifurcation is subcritical, giving rise to unstable periodic patterns that subsequently restabilise, then there should be a family of transverse heteroclinic connections between the background steady state and stable periodic patterns \cite{burke2007homoclinic,Chapman,Dean}. Unfolding this family leads to localised structures of arbitrarily wide spatial extent that are formed by homoclinic connections to the background with arbitrary oscillations near the periodic pattern \cite{Woods}. These connections organise themselves in what has been dubbed a snaking bifurcation diagram as depicted in Figure \ref{fig:fig1}. See, for example, \cite{Woods_Intro,burke2007homoclinic,Knobloch1,FahadWoods,villar2023degenerate} and references therein.

    Beck {\em et al} \cite{snakebeck} provided proof of the existence and properties of the snaking bifurcation diagram in a large class of reversible systems under generic assumptions on the existence of the heteroclinic connection. Computational methods can find much of the fine structure of homoclinic snaking in examples, but in general analytical descriptions of the localised patterns remain problematic, even in one spatial dimension. See \cite{bramburger-review} for a review of the state of the art in one or more spatial dimensions.

    One possibility to get an analytical handle on localised pattern formation due to homoclinic snaking is in the asymptotic limit of a codimension-two super/sub-critical Turing bifurcation. From such a bifurcation, there emerges a {\em Maxwell point} at which the background-to-periodic heteroclinic connection exists. Unfortunately, it is known that the normal form of such a bifurcation, up to any algebraic order, is integrable and so any heteroclinic connection is non-transverse and there can be no homoclinic snake within the normal form (see e.g.~\cite{ioossbook,Woods}). Therefore, the snaking region must emerge beyond all algebraic orders in parameters that unfold the normal form. The computation of this beyond-all-orders width about the Maxwell point, for arbitrary reaction-diffusion systems posed on the real line forms the subject of the present paper.

    At present, this beyond-all-orders computation has only been attempted in a few special cases \cite{Chapman,Dean,HannesdeWitt}. A canonical example of pattern formation is the fourth-order Swift-Hohenberg partial differential equation (PDE). As we shall see, such higher-order equations can be brought into the framework of the class of reaction-diffusion equations studied in this paper, provided we allow for a singular temporal evolution operator, which is fine because we are primarily concerned with the existence of stationary patterns here, not their dynamics.

    \begin{figure}
        \centering
        \includegraphics[width=\textwidth]{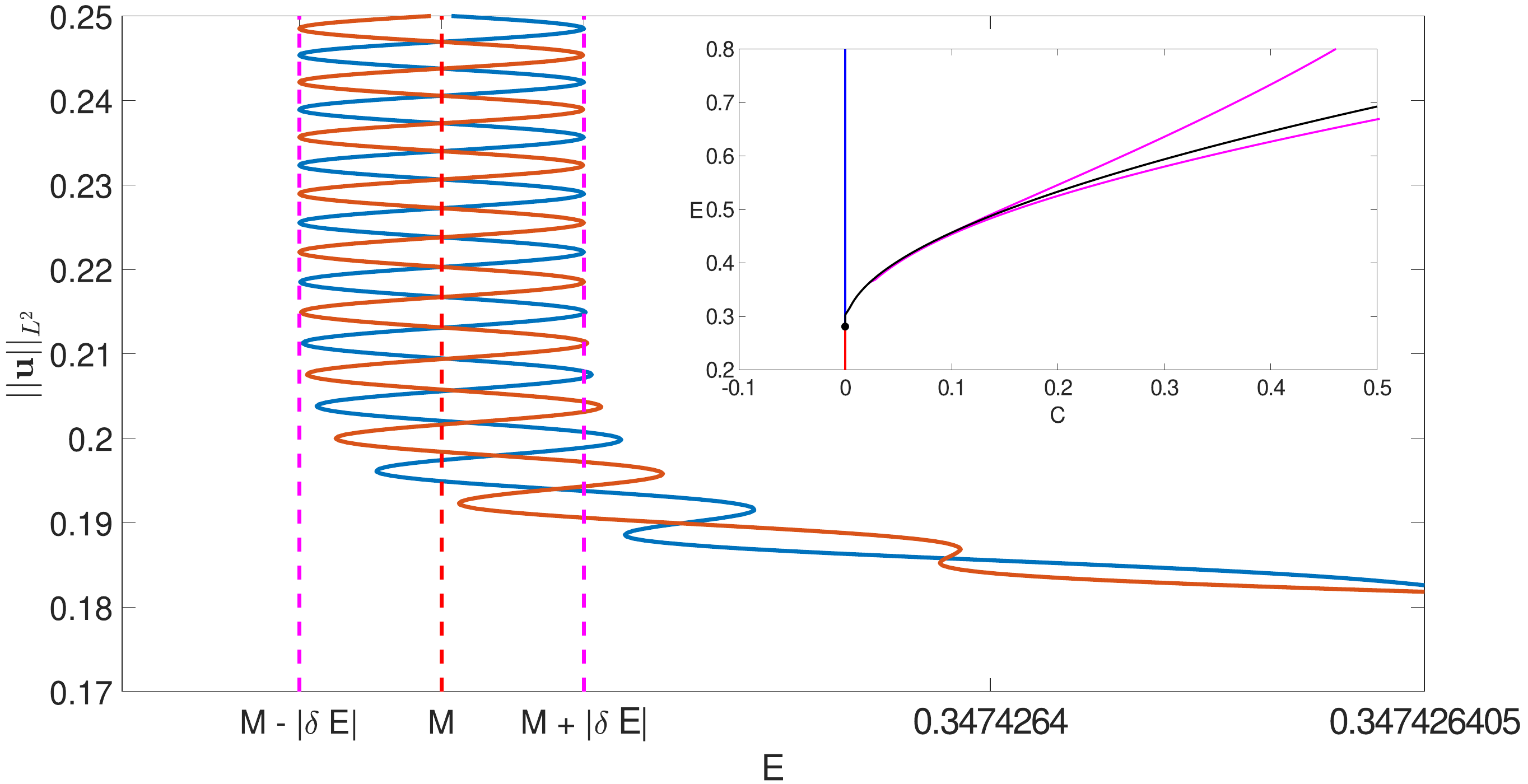}
        \caption{Qualitative description of the snake of localised patterns snaking around a Maxwell point in a one-parameter diagram. See text for details. The inset shows a two-parameter diagram where the magenta curves represent loci of the outermost folds of the snake, surrounding the Maxwell point (black curve) arising from a codimension-two Turing bifurcation (the point of transition between red and blue lines, which represent supercritical and subcritical Turing bifurcation lines, respectively). The asymptotic scaling of the width between the two magenta curves as we approach the codimension-two point is what the analysis in the paper seeks to explain. While intended to be qualitative, the numerical curves are taken from the Swift-Hohenberg 2-3 studied in Section \ref{sec:examples}.}
        \label{fig:fig1}
    \end{figure}

    This paper is motivated by the groundbreaking work by Chapman \& Kozyreff \cite{Chapman,Kozyreff} who used \textit{exponential asymptotics} to study homoclinic snaking in the Swift-Hohenberg equation with quadratic-cubic nonlinearities posed on the line. They used matching of inner and outer expansions of exponentially growing and decaying solutions, upon crossing a Stokes' line in the complex plane. The problem is reduced to the computation of a single coefficient of an exponentially small amplitude, which can be achieved through an iteration scheme. They found the iteration to be slowly converging and instead simply fit this coefficient to numerical results. Later, Dean \textit{et al.} \cite{Dean} applied the same technique to a version of the Swift-Hohenberg with cubic-quintic nonlinearities. The additional odd symmetry made the calculation slightly more straightforward and they were able to make their iteration scheme converge to an amplitude that matched extremely well with numerical results. More recently, the same method was implemented by De Witt \cite{HannesdeWitt} for a modified version of the Schnakenberg system, which is a reaction-diffusion equation with two components. 
    
    \subsection{Problem description}
        We pose a general reaction-diffusion system on the line as:
        \begin{align}
            \mathbb M \, \partial_t \, \mbf u = \mbf f(\mbf u; a, b) + \hat D \, \partial_{xx} \mbf u, \qquad x \in \mathbb R, \label{geneq}
        \end{align}
        where $\mbf u = \mbf u(x, t) \in \mathbb R^n$, $n\geq 2$, is a vector of real functions that are assumed to be well-defined for all $(x, t) \in \mathbb R \times \mathbb R^+_0$, $\hat D$ is a diffusion matrix, which can be any square matrix as long as \eqref{geneq} is well-posed and the assumptions below hold, $a, b \in \mathbb R$ are parameters of the system, and $\mathbb M$ is the mass matrix of the system, which can be any square matrix, as long as \eqref{geneq} is well-posed. Note that writing the system in this way allows the analysis to be applied to systems of $n\geq 1$ equations containing higher, even order spatial derivatives. For example, a fourth-order equation $\partial_t u_1 = \partial_{xxxx} u_1$ can be written as \eqref{geneq} simply by adding the extra equation $\partial_{xx} u_1 - u_2 = 0$.
    
        In this article, we aim to generalize the method of \textit{exponential asymptotics} to study \textit{homoclinic snaking} near codimension-two Turing bifurcation points under the assumption that such a phenomenon occurs. Specifically, we will find conditions for the existence of Maxwell points in which the homogeneous steady state has the same energy as a large-amplitude periodic orbit and construct approximations of the localized structures that appear close to such points.
    
        We assume that \eqref{geneq} has a homogeneous steady state $\mbf P = \mbf P(a, b)$ such that $\mbf f(\mbf P; a, b) = \mbf 0$ for all $a, b \in \mathbb R$, with $\jac \mbf f(\mbf P; a, b)$, the Jacobian matrix of $\mbf f(\mbf u; a, b)$ at $\mbf u = \mbf P$, being an invertible matrix.
        
        Furthermore, we assume that $\mbf P$ goes through a codimension-two Turing bifurcation point when $(a, b) = \left(a_0, b_0\right)$. That is, there exists $k^* > 0$, the \textit{wavenumber}, such that
        \begin{align*}
            \det\left(\jac \mbf f\left(\mbf P; a_0, b_0\right) - \left(k^*\right)^2 \, \hat D\right) = 0,
            \\
            \left. \frac{\partial}{\partial k} \left(\det\left(\jac \mbf f\left(\mbf P; a_0, b_0\right) - k^2 \, \hat D\right)\right)\right|_{k = k^*} = 0, \quad \text{and}
            \\
            \dim\left(\ker\left(\jac \mbf f\left(\mbf P; a_0, b_0\right) - \left(k^*\right)^2 \, \hat D\right)\right) = 1.
        \end{align*}
        Furthermore, we assume that
        \begin{align}
            \det\left(\jac \mbf f\left(\mbf P; a_0, b_0\right) - k^2 \, \hat D\right) \neq 0, \label{no-DT}
        \end{align}
        for all $k \in \mathbb R \setminus \left \{k^*\right\}$; and the existence of localized structures close to this codimension-two bifurcation point.
        
        Without loss of generality, we assume that $\mbf P = \mbf 0$ for all $a, b$. If not, we can make the change $\mbf u \to \mbf u - \mbf P$, which translates $\mbf P$ to the origin. Furthermore, for the same reason, without loss of generality, we assume that $\left(a_0, b_0\right) = (0, 0)$. Moreover, for simplicity of notation, we will omit the dependence on $a$ and $b$ and drop the star sign of the critical wavenumber $k^*$ when there is no place for confusion.
    
        The process we are about to follow comprises the following steps:
        \begin{enumerate}
            \item Obtain an expression for the amplitude of a localized solution close to a codimension-two Turing bifurcation point (Section \ref{sec:regular_asymptotics}).
            \item Expand the system about the singularities of such amplitudes, which involves the computation of inner and outer solutions (Sections \ref{sec:late_term} and \ref{sec:late_term2}).
            \item Determine the order of $n$ at which we truncate our asymptotic expansion optimally and study the equation for the remainder (Sections \ref{sec:first_residual} and \ref{sec:second_residual}).
            \item Find conditions that let us ensure that the remainder tends to zero as its corresponding independent variable tends to $\pm \infty$, which ensures that our expansion will be well-defined, providing us with an estimation for the width of the homoclinic snaking (end of Section \ref{sec:second_residual}).
        \end{enumerate}

        \newtheorem{remark}{Remark}
  
    \subsection{Outline}
        The outline of this article is as follows. Section \ref{sec:regular_asymptotics} starts by introducing the standard notation used throughout the article, and computes a regular asymptotic expansion up to fifth order of the kind of solutions we are interested in. For later use, we will require to go up to seventh order; those results are presented in Appendix \ref{sec:7expansion}. Sections \ref{sec:late_term} and \ref{sec:late_term2} consider the behaviour of the asymptotic expansion as the order tends to $\infty$, with the two sections dealing respectively with late-term expansion in an inner and an outer region. Section \ref{sec:first_residual} then develops the equation for the remainder up to fifth order after truncating the expansion at a high order, without considering the forcing due to truncation. As for the amplitude equation, it will be required to develop the equation for the remainder up to seventh order; those results are presented in Appendix \ref{sec:rem_order7}. The effect of this forcing is considered in Section \ref{sec:second_residual}, which gives a final condition in order to determine the width of the snaking. Next, Section \ref{sec:joining_fronts} shows how to match solutions at the boundaries of the domains where different expansions are valid in order to join fronts. Appendix \ref{sec:symmetric_case} shows what happens in a case in which a generic assumption does not hold and, to illustrate the theory, Section \ref{sec:examples} presents results from several examples in which this generic assumption holds and one in which it does not. In each case, we evaluate the specific regular and beyond-all-orders coefficients and compare the results with numerical computations. Note that this evaluation step is computationally cumbersome, but we provide the reader with code in an associated GitHub repository \cite{beyond-code}, which can be used to replicate the results, as well as compute the snaking width for other examples. Finally, Section \ref{sec:discussion} presents a brief discussion and suggests avenues for future work.  

\section{Regular asymptotics} \label{sec:regular_asymptotics}
    We start the analysis of our system by performing a regular asymptotic expansion. This process has been carried out in different ways in the past in order to study the criticality of Turing bifurcations (see, e.g.~\cite{TOMS}). However, near a codimension-two Turing bifurcation point, localized solutions are expected to appear apart from the periodic states. To study such solutions, we need to obtain a general and accurate expression for the amplitude of different patterns that arise near a codimension-two Turing bifurcation. As we explain below, such an expression will have singularities that need to be studied to control the convergence of the solution we aim to build through this process.

    We start by introducing some notation to be used throughout this article.

    \subsection{Including parameters and kinetics in a general way}
        First, we scale $x \to x/k$ and expand \eqref{geneq} as:
        \begin{align}
            \mathbb M \, \frac{\partial \mbf u}{\partial t} = \sum_{i = 0}^{M_a} \sum_{j = 0}^{M_b} a^i \, b^j \, \mbf f_{i, j}(\mbf u) + k^2 \, \hat D \, \frac{\partial^2 \mbf u}{\partial x^2}, \label{general_equation}
        \end{align}
        where $M_a, M_b > 0$ are the degrees of the right-hand side of \eqref{general_equation} in terms of $a$ and $b$, respectively (they could be infinite) Here, we assume that for every $i, j$, $\mbf f_{i, j}$ is an analytic function on $\mbf u$, which implies that we can expand it using Taylor series up to order infinity. In particular, if we Taylor-expand each function $\mbf f_{i, j}$ around $\mbf u = \mbf 0$, the system becomes
    	\begin{align*}
    	    \mathbb M \, \frac{\partial \mbf u}{\partial t} = \sum_{i = 0}^{M_a} \sum_{j = 0}^{M_b} a^i \, b^j \sum_{m \geq 1} \frac{1}{m!} \, \left(\sum_{r = 1}^n u_r \, \frac{\partial}{\partial u_r}\right)^m \mbf f_{i, j}(\mbf 0) + k^2 \, \hat D \, \frac{\partial^2 \mbf u}{\partial x^2}.
    	\end{align*}
        We are interested in the study of stationary solutions, so we assume that $\dfrac{\partial \mbf u}{\partial t} = \mbf 0$. Let $0 < \varepsilon \ll 1$ be a small positive number. Define a long spatial variable by $X = \varepsilon^2 \, x$. Thus, the system becomes
        \begin{align}
            \mbf 0 = \sum_{i = 0}^{M_a} \sum_{j = 0}^{M_b} a^i \, b^j \sum_{m \geq 1} \frac{1}{m!} \, \left(\sum_{r = 1}^n u_r \, \frac{\partial}{\partial u_r}\right)^m \mbf f_{i, j}(\mbf 0) + k^2 \, \hat D \, \left(\frac{\partial^2 \mbf u}{\partial x^2} + 2 \, \varepsilon^2 \, \frac{\partial^2 \mbf u}{\partial x \partial X} + \varepsilon^4 \, \frac{\partial^2 \mbf u}{\partial X^2}\right). \label{eqtoexpand}
        \end{align}
    	Furthermore, consider the following expansions:
    	\begin{align*}
    	    \mbf u &\to \sum_{r = 1}^N \varepsilon^r \, \mbf u^{[r]} + \mbf R_N,
    	    \\
    	    a &\to \sum_{p \geq  1} a_p \, \varepsilon^p,
    	    \\
    	    b &\to \sum_{q \geq  1} b_q \, \varepsilon^q + \delta b,
    	\end{align*}
        where $\mbf R_N$ stands for the remainder of our approximation of the solution after truncating the asymptotic expansion at order $N$, for every pair of integers $p, q \geq 1$, $a_p, b_q$ are real numbers that approximate the Maxwell point, and $\delta b$ is the separation of $b$ from such a point. Our first goal is then to find an approximation of the Maxwell point that exists close to codimension-two Turing bifurcation points, together with an approximation of a solution that joins the homogeneous steady state $\mbf 0$ to a large-amplitude periodic orbit with the same minimal energy as $\mbf 0$. We do this by finding suitable parameters in the asymptotic expansion to find a Maxwell point, whilst making sure that said asymptotic expansion is valid.
    
        Now, for each integer $0 \leq i \leq M_a, 0 \leq j \leq M_b$, we introduce the following symmetric, multilinear vector functions
        \begin{equation}
            \begin{aligned}
                \mbf F_{2, i, j}\left(\mbf v_1, \mbf v_2\right) &= \frac{1}{2!} \, \left.\left(\sum_{1 \leq p_1, p_2 \leq n} v_{1, p_1} \, v_{2, p_2} \, \frac{\partial^2 \mbf f_{i, j}(\mbf u)}{\partial u_{p_1} \partial u_{p_2}}\right)\right|_{\mbf u = \mbf 0},
        		\\ 
        		\mbf F_{3, i, j}\left(\mbf v_1, \mbf v_2, \mbf v_3\right) &= \frac{1}{3!} \, \left.\left(\sum_{1 \leq p_1, p_2, p_3 \leq n} v_{1, p_1} \, v_{2, p_2} \, v_{3, p_3} \, \frac{\partial^3 \mbf f_{i, j}(\mbf u)}{\partial u_{p_1} \partial u_{p_2} \partial u_{p_3}}\right)\right|_{\mbf u = \mbf 0},
        		\\
        		\mbf F_{4, i, j}\left(\mbf v_1, \mbf v_2, \mbf v_3, \mbf v_4\right) &= \frac{1}{4!} \, \left.\left(\sum_{1 \leq p_1, p_2, p_3, p_4 \leq n} v_{1, p_1} \, v_{2, p_2} \, v_{3, p_3} \, v_{4, p_4} \, \frac{\partial^4 \mbf f_{i, j}}{\partial u_{p_1} \partial u_{p_2} \partial u_{p_3} \partial u_{p_4}}\right)\right|_{\mbf u = \mbf 0},
        		\\ 
        		\mbf F_{5, i, j}\left(\mbf v_1, \mbf v_2, \mbf v_3, \mbf v_4, \mbf v_5\right) &= \frac{1}{5!} \, \left. \left(\sum_{1 \leq p_1, p_2, p_3, p_4, p_5 \leq n} v_{1, p_1} \, v_{2, p_2} \, v_{3, p_3} \, v_{4, p_4} \, v_{5, p_5} \, \frac{\partial^5 \mbf f_{i, j}}{\partial u_{p_1} \partial u_{p_2} \partial u_{p_3} \partial u_{p_4} \partial u_{p_5}}\right)\right|_{\mbf u = \mbf 0},
            \end{aligned} \label{symmetric_linear_vector_functions}
        \end{equation}
        where $\mbf v_\ell = \left(v_{\ell, 1}, \ldots, v_{\ell, n}\right)^\intercal$, for $\ell = 1, \ldots, 5$, and we note that these functions can be naturally generalized to higher orders.
    
        Also, for simplicity of notation, we define
        \begin{align*}
            \mathcal M_\ell = \jac \mbf f_{0, 0}(\mbf 0) - \ell^2 \, k^2 \, \hat D,
        \end{align*}
        for each non-negative integer $\ell$, and we note that $\mathcal M_\ell$ is invertible for every $\ell \neq 1$.

    \subsection{Construction of a regular amplitude equation}
        This section aims to show how to develop a regular asymptotic expansion for the amplitude of patterns arising at a Turing bifurcation, $A_1 = A_1(X) \in \mathbb C$, under the ansatz $u \sim A_1 \, e^{ix} \, \bs \phi_1^{[1]}$ where $\bs \phi_1^{[1]}$ is an eigenvector of $\mathcal M_1$. In particular, we want to show that $A_1$ satisfies an amplitude equation of the form
        \begin{align}
    	    \alpha_1 \, A_1'' + i \, \alpha_2 \, A_1' + i \, \alpha_3 \, \abs{A_1}^2 \, A_1' + i \, \alpha_4 \, A_1^2 \, \bar A_1' + \alpha_5 \, A_1 + \alpha_6 \, \abs{A_1}^2 \, A_1 + \alpha_7 \, \abs{A_1}^4 A_1 = 0, \label{firstamplitudeeq}
    	\end{align}
        where $\alpha_i \in \mathbb R$ for each $i = 1, \ldots, 7$, and $\alpha_1 \neq 0$. For this, we perform an asymptotic expansion of system \eqref{eqtoexpand} in terms of $\varepsilon$. We do this order by order as follows.
        
        \paragraph{Order $\mathcal O(\mathcal \varepsilon)$.} At this order, \eqref{eqtoexpand} becomes
        \begin{align*}
            \mbf 0 = \jac \mbf f_{0, 0}(\mbf 0) \, \mbf u^{[1]} + k^2 \, \hat D \, \frac{\partial^2 \mbf u^{[1]}}{\partial x^2},
        \end{align*}
        which implies
        \begin{align}
            \mbf u^{[1]} = A_1 \, e^{i\hat x} \, \bs \phi_1^{[1]} + c.c., \label{first_order_approximation}
        \end{align}
        where $c.c.$ stands for the complex-conjugate of the term to the left of it, $\bs \phi_1^{[1]} \neq \mbf 0$ fulfills
        \begin{align}
            \mathcal M_1 \, \bs \phi_1^{[1]} = \mbf 0, \label{phidef}
        \end{align}
        and $\hat x = x - \hat \chi$, where $\hat \chi$ is a real variable that stands for a spatial translation in $x$, which plays a key degree of freedom one needs for matching solutions in Section \ref{sec:joining_fronts}. However, for simplicity of notation, when there is no confusion, we will simply write $x$ instead of $\hat x$.
        
        \paragraph{Order $\mathcal O\left(\varepsilon^2\right)$.} At this order, \eqref{eqtoexpand} becomes
        \begin{align*}
            \mbf 0 = \jac \mbf f_{0, 0}(\mbf 0) \, \mbf u^{[2]} + a_1 \, \jac \mbf f_{1, 0}(\mbf 0) \, \mbf u^{[1]} + b_1 \, \jac \mbf f_{0, 1}(\mbf 0) \, \mbf u^{[1]} + \mbf F_{2, 0, 0}\left(\mbf u^{[1]}, \mbf u^{[1]}\right) + k^2 \, \hat D \, \frac{\partial^2 \mbf u^{[2]}}{\partial x^2},
        \end{align*}
        which implies
        \begin{align*}
            \left(\jac \mbf f_{0, 0}(\mbf 0) + k^2 \, \hat D \, \frac{\partial^2}{\partial x^2}\right) \, \mbf u^{[2]} &= - a_1 \, A_1 \, e^{ix} \, \jac \mbf f_{1, 0}(\mbf 0) \, \bs \phi_1^{[1]} - b_1 \, A_1 \, e^{ix} \, \jac \mbf f_{0, 1}(\mbf 0) \, \bs \phi_1^{[1]}
            \\
            & \quad - \mbf F_{2, 0, 0}\left(\bs \phi_1^{[1]}, \bs \phi_1^{[1]}\right) \left(\abs{A_1}^2 + A_1^2 \, e^{2ix}\right) + c.c.
        \end{align*}
        Therefore, as this equation is linear, we have that
        \begin{align*}
            \mbf u^{[2]} = \abs{A_1}^2 \, \mbf W_0^{[2]} + A_1 \, e^{ix} \, \mbf W_1^{[2]} + A_1^2 \, e^{2ix} \, \mbf W_2^{[2]} + A_2 \, e^{ix} \, \bs \phi_1^{[1]} + c.c.,
        \end{align*}
        where $A_2 = A_2(X)$,
        \begin{align}
            \mathcal M_0 \, \mbf W_0^{[2]} &= - \mbf F_{2, 0, 0}\left(\bs \phi_1^{[1]}, \bs \phi_1^{[1]}\right), \notag
            \\
            \mathcal M_1 \, \mbf W_1^{[2]} &= - a_1 \, \jac \mbf f_{1, 0}(\mbf 0) \, \bs \phi_1^{[1]} - b_1 \, \jac \mbf f_{0, 1}(\mbf 0) \, \bs \phi_1^{[1]}, \label{solvcond1}
        \end{align}
        and
        \begin{align*}
            \mathcal M_2 \, \mbf W_2^{[2]} &= - \mbf F_{2, 0, 0}\left(\bs \phi_1^{[1]}, \bs \phi_1^{[1]}\right)
        \end{align*}
        Here, we recall that $\det\left(M_1\right) = 0$. Therefore, \eqref{solvcond1} might not have a solution for every value of $a_1, b_1 \in \mathbb R$. For this expansion to be valid, we need all the equations to have a solution, so we make use of the Fredholm alternative, which states that for a generic Fredholm operator, $L$, we have that
        \begin{align}
            \mbox{im}(L)^\perp = \ker\left(L^*\right), \label{fredholm}
        \end{align}
        In particular, using the usual inner product of vectors in $\mathbb C^n$, we have that for a real matrix $M$, $M^* = M^\intercal$. Considering this, we define $\bs \psi_\pm = e^{\pm ix} \, \bs \psi$, where $\bs \psi \neq \mbf 0$ fulfills
        \begin{align}
            \mathcal M_1^\intercal \, \bs \psi = \mbf 0. \label{psidef}
        \end{align}
        With this, and using \eqref{fredholm}, we choose $a_1, b_1 \in \mathbb R$ such that
        \begin{align}
            \left \langle a_1 \, \jac \mbf f_{1, 0}(\mbf 0) \, \bs \phi_1^{[1]} + b_1 \, \jac \mbf f_{0, 1}(\mbf 0) \, \bs \phi_1^{[1]}, \bs \psi \right \rangle = 0, \label{a_1,b_1secondorder}
        \end{align}
        which ensures that \eqref{solvcond1} has a solution.
    
        \paragraph{Order $\mathcal O\left(\eps^3\right)$.} At this order, \eqref{eqtoexpand} becomes
    	\begin{align*}
    	    \mbf 0 = \jac \mbf f_{0, 0}(\mbf 0) \, \mbf u^{[3]} + a_1 \, \jac \mbf f_{1, 0}(\mbf 0) \, \mbf u^{[2]} + b_1 \, \jac \mbf f_{0, 1}(\mbf 0) \, \mbf u^{[2]} + a_2 \, \jac \mbf f_{1, 0}(\mbf 0) \, \mbf u^{[1]} + b_2 \, \jac \mbf f_{0, 1}(\mbf 0) \, \mbf u^{[1]}
            \\
            + a_1^2 \, \jac \mbf f_{2, 0}(\mbf 0) \, \mbf u^{[1]} + a_1 \, b_1 \, \jac \mbf f_{1, 1}(\mbf 0) \, \mbf u^{[1]} + b_1^2 \, \jac \mbf f_{0, 2}(\mbf 0) \, \mbf u^{[1]} + 2 \, \mbf F_{2, 0, 0}\left(\mbf u^{[1]}, \mbf u^{[2]}\right)
            \\
            + a_1 \, \mbf F_{2, 1, 0}\left(\mbf u^{[1]}, \mbf u^{[1]}\right) + b_1 \, \mbf F_{2, 0, 1}\left(\mbf u^{[1]}, \mbf u^{[1]}\right) + \mbf F_{3, 0, 0}\left(\mbf u^{[1]}, \mbf u^{[1]}, \mbf u^{[1]}\right)
    	    \\
            + k^2 \, \hat D \, \left(\frac{\partial^2 \mbf u^{[3]}}{\partial x^2} + 2 \, \frac{\partial^2 \mbf u^{[1]}}{\partial x \partial X}\right),
    	\end{align*}
    	which implies
    	\begin{align*}
    	    \left(\jac \mbf f_{0, 0}(\mbf 0) + k^2 \, \hat D \, \frac{\partial^2}{\partial x^2}\right) \, \mbf u^{[3]} &= - a_1 \, \jac \mbf f_{1, 0}(\mbf 0) \, \mbf u^{[2]} - b_1 \, \jac \mbf f_{0, 1}(\mbf 0) \, \mbf u^{[2]} - a_2 \, \jac \mbf f_{1, 0}(\mbf 0) \, \mbf u^{[1]}
            \\
            & \quad - b_2 \, \jac \mbf f_{0, 1}(\mbf 0) \, \mbf u^{[1]} - a_1^2 \, \jac \mbf f_{2, 0}(\mbf 0) \, \mbf u^{[1]} - a_1 \, b_1 \, \jac \mbf f_{1, 1}(\mbf 0) \, \mbf u^{[1]}
            \\
            & \quad - b_1^2 \, \jac \mbf f_{0, 2}(\mbf 0) \, \mbf u^{[1]} - 2 \, \mbf F_{2, 0, 0}\left(\mbf u^{[1]}, \mbf u^{[2]}\right) - a_1 \, \mbf F_{2, 1, 0}\left(\mbf u^{[1]}, \mbf u^{[1]}\right)
            \\
            & \quad - b_1 \, \mbf F_{2, 0, 1}\left(\mbf u^{[1]}, \mbf u^{[1]}\right) - \mbf F_{3, 0, 0}\left(\mbf u^{[1]}, \mbf u^{[1]}, \mbf u^{[1]}\right) - 2 \, k^2 \, \hat D \, \frac{\partial^2 \mbf u^{[1]}}{\partial x \partial X},
    	\end{align*}
    	Therefore, the solution to this equation is given by:
    	\begin{align*}
    	    \mbf u^{[3]} &= \abs{A_1}^2 \, \mbf W_0^{[3]} + A_1 \, e^{ix} \, \mbf W_1^{[3]} + \abs{A_1}^2 A_1 \, e^{ix} \, \mbf W_{1, 2}^{[3]} + i \, A_1' \, e^{ix} \, \mbf W_{1, 3}^{[3]} + A_1^2 \, e^{2ix} \, \mbf W_2^{[3]}
    	    \\
    	    & \quad + A_1^3 \, e^{3ix} \, \mbf W_3^{[3]} + 2 \, A_1 \, \bar A_2 \, \mbf W_0^{[2]} + A_2 \, e^{ix} \, \mbf W_1^{[2]} + A_3 \, e^{ix} \, \bs \phi_1^{[1]} + 2 \, A_1 \, A_2 \, e^{2ix} \, \mbf W_2^{[2]} + c.c.,
    	\end{align*}
    	where $A_3 = A_3(X)$, the sign $'$ denotes differentiation with respect to $X$,
    	\begin{align*}
    	    \mathcal M_0 \, \mbf W_0^{[3]} &= - a_1 \, \jac \mbf f_{1, 0}(\mbf 0) \, \mbf W_0^{[2]} - b_1 \, \jac \mbf f_{0, 1}(\mbf 0) \, \mbf W_0^{[2]} - 2 \, \mbf F_{2, 0, 0}\left(\bs \phi_1^{[1]}, \mbf W_1^{[2]}\right)
            \\
            & \quad - a_1 \, \mbf F_{2, 1, 0}\left(\bs \phi_1^{[1]}, \bs \phi_1^{[1]}\right) - b_1 \, \mbf F_{2, 0, 1}\left(\bs \phi_1^{[1]}, \bs \phi_1^{[1]}\right),
    	\end{align*}
        \begin{equation}
            \begin{aligned}
        	    \mathcal M_1 \, \mbf W_1^{[3]} &= - a_1 \, \jac \mbf f_{1, 0}(\mbf 0) \, \mbf W_1^{[2]} - b_1 \, \jac \mbf f_{0, 1}(\mbf 0) \, \mbf W_1^{[2]} - a_2 \, \jac \mbf f_{1, 0}(\mbf 0) \, \bs \phi_1^{[1]} - b_2 \, \jac \mbf f_{0, 1}(\mbf 0) \, \bs \phi_1^{[1]}
                \\
                & \quad  - a_1^2 \, \jac \mbf f_{2, 0}(\mbf 0) \, \bs \phi_1^{[1]} - a_1 \, b_1 \, \jac \mbf f_{1, 1}(\mbf 0) \, \bs \phi_1^{[1]} - b_1^2 \, \jac \mbf f_{0, 2}(\mbf 0) \, \bs \phi_1^{[1]},
            \end{aligned} \label{solvcond13}
        \end{equation}
        \begin{align}
    	    \mathcal M_1 \, \mbf W_{1, 2}^{[3]} = - 2 \, \mbf F_{2, 0, 0}\left(\bs \phi_1^{[1]}, 2 \, \mbf W_0^{[2]} + \mbf W_2^{[2]}\right) - 3 \, \mbf F_{3, 0, 0}\left(\bs \phi_1^{[1]}, \bs \phi_1^{[1]}, \bs \phi_1^{[1]}\right), \label{solvcond23}
    	\end{align}
    	\begin{align}
    	    \mathcal M_1 \, \mbf W_{1, 3}^{[3]} = - 2 \, k^2 \, \hat D \, \bs \phi_1^{[1]}, \label{solvcond33}
    	\end{align}
    	\begin{align*}
    	    \mathcal M_2 \, \mbf W_2^{[3]} &= - a_1 \, \jac \mbf f_{1, 0}(\mbf 0) \, \mbf W_2^{[2]} - b_1 \, \jac \mbf f_{0, 1}(\mbf 0) \, \mbf W_2^{[2]} - 2 \, \mbf F_{2, 0, 0}\left(\bs \phi_1^{[1]}, \mbf W_1^{[2]}\right)
            \\
            & \quad - a_1 \, \mbf F_{2, 1, 0}\left(\bs \phi_1^{[1]}, \bs \phi_1^{[1]}\right) - b_1 \, \mbf F_{2, 0, 1}\left(\bs \phi_1^{[1]}, \bs \phi_1^{[1]}\right),
    	\end{align*}
    	and
    	\begin{align*}
    	    \mathcal M_3 \, \mbf W_3^{[3]} = - 2 \, \mbf F_{2, 0, 0}\left(\bs \phi_1^{[1]}, \mbf W_2^{[2]}\right) - \mbf F_{3, 0, 0}\left(\bs \phi_1^{[1]}, \bs \phi_1^{[1]}, \bs \phi_1^{[1]}\right).
    	\end{align*}
    	Note that, as stated after \eqref{no-DT}, when $(a, b) = (0, 0)$, we are at a codimension-two Turing bifurcation point. Therefore, from \cite{TOMS}, we have that
    	\begin{align*}
    	    \left \langle 2 \, \mbf F_{2, 0, 0}\left(\bs \phi_1^{[1]}, 2 \, \mbf W_0^{[2]} + \mbf W_2^{[2]}\right) + 3 \,\mbf F_{3, 0, 0}\left(\bs \phi_1^{[1]}, \bs \phi_1^{[1]}, \bs \phi_1^{[1]}\right), \bs \psi\right \rangle = 0,
    	\end{align*}
        which implies that \eqref{solvcond23} has a solution.
        
    	On the other hand, to ensure that \eqref{solvcond13} has a solution, we need that $a_1, b_1, a_2, b_2 \in \mathbb R$ fulfill
    	\begin{align*}
    	    \left \langle a_1 \, \jac \mbf f_{1, 0}(\mbf 0) \, \mbf W_1^{[2]} + b_1 \, \jac \mbf f_{0, 1}(\mbf 0) \, \mbf W_1^{[2]} + a_2 \, \jac \mbf f_{1, 0}(\mbf 0) \, \bs \phi_1^{[1]} + b_2 \, \jac \mbf f_{0, 1}(\mbf 0) \, \bs \phi_1^{[1]}\right.
            \\
            \left. + a_1^2 \, \jac \mbf f_{2, 0}(\mbf 0) \, \bs \phi_1^{[1]} + a_1 \, b_1 \, \jac \mbf f_{1, 1}(\mbf 0) \, \bs \phi_1^{[1]} + b_1^2 \, \jac \mbf f_{0, 2}(\mbf 0) \, \bs \phi_1^{[1]}, \bs \psi \right \rangle = 0.
    	\end{align*}    	
    	Next, to see that \eqref{solvcond33} has a solution, we consider the following standard result:
    	\newtheorem{lemma}{Lemma}
    	\begin{lemma} \label{lemma:firstsolvcond}
    	    For $\bs \phi_1^{[1]}$ and $\bs \psi$ defined in \eqref{phidef} and \eqref{psidef}, respectively, we have that
    	    \begin{align*}
    	        \left \langle \hat D \, \bs \phi_1^{[1]}, \bs \psi \right \rangle = 0.
    	    \end{align*}
    	\end{lemma}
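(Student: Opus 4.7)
The plan is to exploit the codimension-two Turing bifurcation assumption, which states that $k=k^*$ is a double root of $\det(\jac \mathbf f_{0,0}(\mathbf 0) - k^2 \hat D)$, and translate this into the desired orthogonality via Jacobi's formula for derivatives of determinants.

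Concretely, introduce the matrix-valued function $\mathcal M(\mu) := \jac \mathbf f_{0,0}(\mathbf 0) - \mu \hat D$ of the scalar $\mu = k^2$, so that $\mathcal M(k^2)=\mathcal M_1$ at the critical wavenumber. The two scalar conditions at the codimension-two point say that $\det \mathcal M(\mu)$ vanishes to at least second order at $\mu = k^2$. In particular, by the chain rule and $2k^*\neq 0$,
\begin{align}
    \left.\frac{d}{d\mu}\det\mathcal M(\mu)\right|_{\mu = k^2} = 0.
\end{align}
I would then apply Jacobi's formula,
\begin{align}
    \frac{d}{d\mu}\det \mathcal M(\mu) = \operatorname{tr}\!\left(\operatorname{adj}(\mathcal M(\mu))\,\frac{d\mathcal M}{d\mu}\right) = -\operatorname{tr}\!\left(\operatorname{adj}(\mathcal M(\mu))\,\hat D\right),
\end{align}
so that at the critical wavenumber $\operatorname{tr}(\operatorname{adj}(\mathcal M_1)\,\hat D)=0$.

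The next step is a linear-algebra observation about the adjugate of a matrix with one-dimensional kernel. Since $\dim\ker\mathcal M_1 = 1$ with $\ker\mathcal M_1 = \operatorname{span}\{\bs\phi_1^{[1]}\}$ and $\ker\mathcal M_1^\intercal = \operatorname{span}\{\bs\psi\}$, the matrix $\mathcal M_1$ has rank $n-1$, which forces $\operatorname{adj}(\mathcal M_1)$ to have rank exactly one. Combined with the standard identities $\mathcal M_1\,\operatorname{adj}(\mathcal M_1) = \operatorname{adj}(\mathcal M_1)\,\mathcal M_1 = 0$, the columns of $\operatorname{adj}(\mathcal M_1)$ lie in $\ker\mathcal M_1$ and its rows lie in $\ker\mathcal M_1^\intercal$, so I can write
\begin{align}
    \operatorname{adj}(\mathcal M_1) = c\,\bs\phi_1^{[1]}\bs\psi^\intercal
\end{align}
for some scalar $c$. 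Substituting into $\operatorname{tr}(\operatorname{adj}(\mathcal M_1)\hat D) = 0$ and using the cyclic property of the trace gives $c\,\bs\psi^\intercal\hat D\,\bs\phi_1^{[1]} = c\,\langle \hat D\bs\phi_1^{[1]},\bs\psi\rangle = 0$.

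The only remaining issue, and the one place the argument can fail, is to show $c\neq 0$. If $c=0$ then $\operatorname{adj}(\mathcal M_1)=0$, i.e.\ every $(n-1)\times(n-1)$ minor of $\mathcal M_1$ vanishes, which forces $\operatorname{rank}(\mathcal M_1)\leq n-2$ and hence $\dim\ker\mathcal M_1 \geq 2$, contradicting the simplicity assumption on the critical mode stated before \eqref{no-DT}. Therefore $c\neq 0$ and the desired identity $\langle \hat D\bs\phi_1^{[1]},\bs\psi\rangle = 0$ follows. The main subtlety is really just this last nonvanishing argument, since the remainder of the proof is a direct application of Jacobi's formula together with the vanishing-derivative condition for the critical wavenumber.
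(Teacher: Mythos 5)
Your proof is correct, but it takes a genuinely different route from the paper's. The paper differentiates the eigenpair equation $\left(\jac \mbf f_{0,0}(\mbf 0) - k^2\,\hat D\right)\bs\phi_1^{[1]}(k) = \lambda(k)\,\bs\phi_1^{[1]}(k)$ with respect to $k$, uses that both $\lambda$ and $\lambda'$ vanish at the critical wavenumber (the tangency of the dispersion curve), and then pairs the resulting identity $\mathcal M_1\,\frac{\dd \bs\phi_1^{[1]}}{\dd k} = 2k\,\hat D\,\bs\phi_1^{[1]}$ with $\bs\psi$; the conclusion drops out because $\mathcal M_1^\intercal\bs\psi = \mbf 0$ and $k\neq 0$, with no nondegeneracy constant to worry about. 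You instead work at the level of the determinant: the double root of $\det\mathcal M(\mu)$ at $\mu = k^2$, Jacobi's formula $\frac{\dd}{\dd\mu}\det\mathcal M = -\operatorname{tr}\left(\operatorname{adj}(\mathcal M)\,\hat D\right)$, and the rank-one factorization $\operatorname{adj}(\mathcal M_1) = c\,\bs\phi_1^{[1]}\bs\psi^\intercal$ forced by $\mathcal M_1\operatorname{adj}(\mathcal M_1)=\operatorname{adj}(\mathcal M_1)\mathcal M_1=0$. Your approach has the advantage of avoiding any implicit smooth selection of an eigenvalue/eigenvector branch $\left(\lambda(k),\bs\phi_1^{[1]}(k)\right)$, and it makes explicit exactly where the simplicity hypothesis $\dim\ker\mathcal M_1 = 1$ is used (both in the factorization of the adjugate and in the essential final step $c\neq 0$, which you correctly identify and prove by the rank argument); the price is the extra bookkeeping around the adjugate, whereas the paper's computation is two lines once the vanishing of $\lambda'$ is granted. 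Both arguments are complete and rely on the same underlying degeneracy condition.
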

    	\begin{proof}
    	    For each $k \in \mathbb R$, we can think of $\bs \phi_1^{[1]} = \bs \phi_1^{[1]}(k)$ as a function of $k$ is defined by the equation
    	    \begin{align*}
    	        \left(\jac \, \mbf f_{0, 0}(\mbf 0) - k^2 \, \hat D\right) \, \bs \phi_1^{[1]} = \lambda(k) \, \bs \phi_1^{[1]},
    	    \end{align*}
    	    where $\lambda(k)$ represents the eigenvalue of $\jac \, \mbf f_{0, 0}(\mbf 0) - k^2 \, \hat D$ with the largest real part such that $\lambda\left(k^*\right) = 0$ for the value $k = k^*$ where the Turing bifurcation occurs. With this, after differentiating this expression with respect to $k$, we have
    	    \begin{align*}
    	        - 2 \, k \, \hat D \, \bs \phi_1^{[1]} + \left(\jac \, \mbf f_{0, 0}(\mbf 0) - k^2 \, \hat D\right) \, \frac{\dd \bs \phi_1^{[1]}}{\dd k} = \frac{\dd \lambda}{\dd k}(k) \, \bs \phi_1^{[1]} + \lambda(k) \, \frac{\dd \bs \phi_1^{[1]}}{\dd k}.
    	    \end{align*}
            Therefore, when evaluating $k = k^*$ ---and dropping the star sign---, we obtain
    	    \begin{align*}
    	        \mathcal M_1 \, \frac{\dd \bs \phi_1^{[1]}}{\dd k} = 2 \, k \, \hat D \, \bs \phi_1^{[1]}.
    	    \end{align*}
    	    Thus, when applying the inner product with respect to $\bs \psi$, we see that
    	    \begin{align*}
    	        2 \, k \, \left \langle \hat D \, \bs \phi_1^{[1]}, \bs \psi\right \rangle = \left \langle \mathcal M_1 \, \frac{\dd \bs \phi_1^{[1]}}{\dd k}, \bs \psi\right \rangle = \left \langle \frac{\dd \bs \phi_1^{[1]}}{\dd k}, \mathcal M_1^\intercal \, \bs \psi\right \rangle = 0,
    	    \end{align*}
    	    which concludes the proof.
    	\end{proof}
    	
    	\paragraph{Order $\mathcal O\left(\varepsilon^4\right)$.} At this order, equation \eqref{eqtoexpand} becomes
    	\begin{align*}
    	    \mbf 0 &= \jac \mbf f_{0, 0}(\mbf 0) \, \mbf u^{[4]} + a_1 \, \jac \mbf f_{1, 0}(\mbf 0) \, \mbf u^{[3]} + b_1 \, \jac \mbf f_{0, 1}(\mbf 0) \, \mbf u^{[3]} + a_2 \, \jac \mbf f_{1, 0}(\mbf 0) \, \mbf u^{[2]} + b_2 \, \jac \mbf f_{0, 1}(\mbf 0) \, \mbf u^{[2]}
            \\
            & \quad + a_3 \, \jac \mbf f_{1, 0}(\mbf 0) \, \mbf u^{[1]} + b_3 \, \jac \mbf f_{0, 1}(\mbf 0) \, \mbf u^{[1]} + a_1^2 \, \jac \mbf f_{2, 0}(\mbf 0) \, \mbf u^{[2]} + a_1 \, b_1 \, \jac \mbf f_{1, 1}(\mbf 0) \, \mbf u^{[2]}
            \\
            & \quad + b_1^2 \, \jac \mbf f_{0, 2}(\mbf 0) \, \mbf u^{[2]} + 2 \, a_1 \, a_2 \, \jac \mbf f_{2, 0}(\mbf 0) \, \mbf u^{[1]} + \left(a_1 \, b_2 + a_2 \, b_1 \right) \, \jac \mbf f_{1, 1}(\mbf 0) \, \mbf u^{[1]}
            \\
            & \quad + 2 \, b_1 \, b_2 \, \jac \mbf f_{0, 2}(\mbf 0) \, \mbf u^{[1]} + a_1^3 \, \jac \mbf f_{3, 0}(\mbf 0) \, \mbf u^{[1]} + a_1^2 \, b_1 \, \jac \mbf f_{2, 1}(\mbf 0) \, \mbf u^{[1]} + a_1 \, b_1^2 \, \jac \mbf f_{1, 2}(\mbf 0) \, \mbf u^{[1]}
            \\
            &\quad + b_1^3 \, \jac \mbf f_{0, 3}(\mbf 0) \, \mbf u^{[1]} + 2 \, \mbf F_{2, 0, 0}\left(\mbf u^{[1]}, \mbf u^{[3]}\right) + \mbf F_{2, 0, 0} \left(\mbf u^{[2]}, \mbf u^{[2]}\right) + 2 \, a_1 \, \mbf F_{2, 1, 0} \left(\mbf u^{[1]}, \mbf u^{[2]}\right)
            \\
            &\quad + 2 \, b_1 \, \mbf F_{2, 0, 1} \left(\mbf u^{[1]}, \mbf u^{[2]}\right) + a_2 \, \mbf F_{2, 1, 0} \left(\mbf u^{[1]}, \mbf u^{[1]}\right) + b_2 \, \mbf F_{2, 0, 1} \left(\mbf u^{[1]}, \mbf u^{[1]}\right)
            \\
            & \quad + a_1^2 \, \mbf F_{2, 2, 0} \left(\mbf u^{[1]}, \mbf u^{[1]}\right) + a_1 \, b_1 \, \mbf F_{2, 1, 1} \left(\mbf u^{[1]}, \mbf u^{[1]}\right) + b_1^2 \, \mbf F_{2, 0, 2} \left(\mbf u^{[1]}, \mbf u^{[1]}\right)
            \\
            & \quad + 3 \, \mbf F_{3, 0, 0} \left(\mbf u^{[1]}, \mbf u^{[1]}, \mbf u^{[2]}\right) + a_1 \, \mbf F_{3, 1, 0}\left(\mbf u^{[1]}, \mbf u^{[1]}, \mbf u^{[1]}\right) + b_1 \, \mbf F_{3, 0, 1}\left(\mbf u^{[1]}, \mbf u^{[1]}, \mbf u^{[1]}\right)
            \\
            & \quad + \mbf F_{4, 0, 0}\left(\mbf u^{[1]}, \mbf u^{[1]}, \mbf u^{[1]}, \mbf u^{[1]}\right) + k^2 \, \hat D \, \left(\frac{\partial^2 \mbf u^{[4]}}{\partial x^2} + 2 \, \frac{\partial^2 \mbf u^{[2]}}{\partial x \partial X}\right).
    	\end{align*}
    	Therefore,
        \begin{align*}
            \mbf u^{[4]} &= \abs{A_1}^2 \, \mbf W_0^{[4]} + \abs{A_1}^4 \, \mbf W_{0, 2}^{[4]} + i \, \bar A_1 \, A_1' \, \mbf W_{0, 3}^{[4]} + A_1 \, e^{ix} \, \mbf W_1^{[4]} + \abs{A_1}^2 \, A_1 \, e^{ix} \, \mbf W_{1, 2}^{[4]} + i \, A_1' \, e^{ix} \, \mbf W_{1, 3}^{[4]}
            \\
            & \quad + A_1^2 \, e^{2ix} \, \mbf W_2^{[4]} + \abs{A_1}^2 \, A_1^2 \, e^{2ix} \, \mbf W_{2, 2}^{[4]} + i \, A_1 \, A_1' \, e^{2ix} \, \mbf W_{2, 3}^{[4]} + A_1^3 \, e^{3ix} \, \mbf W_3^{[4]} + A_1^4 \, e^{4ix} \, \mbf W_4^{[4]}
            \\
            & \quad + 2 \, A_1 \, \bar A_2 \, \mbf W_0^{[3]} + 2 \, A_1 \, \bar A_3 \, \mbf W_0^{[2]} + \abs{A_2}^2 \, \mbf W_0^{[2]} + A_2 \, e^{ix} \, \mbf W_1^{[3]} + A_1^2 \, \bar A_2 \, e^{ix} \, \mbf W_{1, 2}^{[3]}
            \\
            & \quad + 2 \, \abs{A_1}^2 \, A_2 \, e^{ix} \, \mbf W_{1, 2}^{[3]} + i \, A_2' \, e^{ix} \, \mbf W_{1, 3}^{[3]} + A_3 \, e^{ix} \, \mbf W_1^{[2]} + A_4 \, e^{ix} \, \bs \phi_1^{[1]} + 2 \, A_1 \, A_2 \, e^{2ix} \, \mbf W_2^{[3]}
            \\
            & \quad + 2 \, A_1 \, A_3 \, e^{2ix} \, \mbf W_2^{[2]} + A_2^2 \, e^{2ix} \, \mbf W_2^{[2]} + 3 \, A_1^2 \, A_2 \, e^{3ix} \, \mbf W_3^{[3]} + c.c.
        \end{align*}
    	where $A_4 = A_4(X)$,
    	\begin{align*}
    	    \mathcal M_0 \, \mbf W_0^{[4]} &= - a_1 \, \jac \mbf f_{1, 0}(\mbf 0) \, \mbf W_0^{[3]} - b_1 \, \jac \mbf f_{0, 1}(\mbf 0) \, \mbf W_0^{[3]} - a_2 \, \jac \mbf f_{1, 0}(\mbf 0) \, \mbf W_0^{[2]} - b_2 \, \jac \mbf f_{0, 1}(\mbf 0) \, \mbf W_0^{[2]}
            \\
            & \quad - a_1^2 \, \jac \mbf f_{2, 0}(\mbf 0) \, \mbf W_0^{[2]} - a_1 \, b_1 \, \jac \mbf f_{1, 1}(\mbf 0) \, \mbf W_0^{[2]} - b_1^2 \, \jac \mbf f_{0, 2}(\mbf 0) \, \mbf W_0^{[2]} - 2 \, \mbf F_{2, 0, 0}\left(\bs \phi_1^{[1]}, \mbf W_1^{[3]}\right)
            \\
            & \quad - \mbf F_{2, 0, 0} \left(\mbf W_1^{[2]}, \mbf W_1^{[2]}\right) - 2 \, a_1 \, \mbf F_{2, 1, 0} \left(\bs \phi_1^{[1]}, \mbf W_1^{[2]}\right) - 2 \, b_1 \, \mbf F_{2, 0, 1} \left(\bs \phi_1^{[1]}, \mbf W_1^{[2]}\right)
    	    \\
            & \quad - a_2 \, \mbf F_{2, 1, 0} \left(\bs \phi_1^{[1]}, \bs \phi_1^{[1]}\right) - b_2 \, \mbf F_{2, 0, 1} \left(\bs \phi_1^{[1]}, \bs \phi_1^{[1]}\right) - a_1^2 \, \mbf F_{2, 2, 0} \left(\bs \phi_1^{[1]},\bs \phi_1^{[1]}\right)
            \\
            & \quad - a_1 \, b_1 \, \mbf F_{2, 1, 1} \left(\bs \phi_1^{[1]}, \bs \phi_1^{[1]}\right) - b_1^2 \, \mbf F_{2, 0, 2} \left(\bs \phi_1^{[1]}, \bs \phi_1^{[1]}\right),
    	\end{align*}
        \begin{align*}
    	    \mathcal M_0 \, \mbf W_{0, 2}^{[4]} &= - 2 \, \mbf F_{2, 0, 0}\left(\bs \phi_1^{[1]}, \mbf W_{1,2}^{[3]}\right) - 2 \, \mbf F_{2, 0, 0} \left(\mbf W_0^{[2]}, \mbf W_0^{[2]}\right) - \mbf F_{2, 0, 0} \left(\mbf W_2^{[2]}, \mbf W_2^{[2]}\right)
    	    \\
    	    & \quad - 3 \, \mbf F_{3, 0, 0} \left(\bs \phi_1^{[1]}, \bs \phi_1^{[1]}, 2 \, \mbf W_0^{[2]} + \mbf W_2^{[2]}\right) - 3 \, \mbf F_{4, 0, 0}\left(\bs \phi_1^{[1]}, \bs \phi_1^{[1]}, \bs \phi_1^{[1]}, \bs \phi_1^{[1]}\right),
    	\end{align*}
    	\begin{align*}
    	    \mathcal M_0 \, \mbf W_{0, 3}^{[4]} = - 2 \, \mbf F_{2, 0, 0}\left(\bs \phi_1^{[1]}, \mbf W_{1,3}^{[3]}\right),
    	\end{align*}
    	\begin{align}
    	    \mathcal M_1 \, \mbf W_1^{[4]} &= - a_1 \, \jac \mbf f_{1, 0}(\mbf 0) \, \mbf W_1^{[3]} - b_1 \, \jac \mbf f_{0, 1}(\mbf 0) \, \mbf W_1^{[3]} - a_2 \, \jac \mbf f_{1, 0}(\mbf 0) \, \mbf W_1^{[2]} \notag
            \\
            & \quad - b_2 \, \jac \mbf f_{0, 1}(\mbf 0) \, \mbf W_1^{[2]} - a_3 \, \jac \mbf f_{1, 0}(\mbf 0) \, \bs \phi_1^{[1]} - b_3 \, \jac \mbf f_{0, 1}(\mbf 0) \, \bs \phi_1^{[1]} - a_1^2 \, \jac \mbf f_{2, 0}(\mbf 0) \, \mbf W_1^{[2]} \notag
    	    \\
    	    & \quad - a_1 \, b_1 \, \jac \mbf f_{1, 1}(\mbf 0) \, \mbf W_1^{[2]} - b_1^2 \, \jac \mbf f_{0, 2}(\mbf 0) \, \mbf W_1^{[2]} - 2 \, a_1 \, a_2 \, \jac \mbf f_{2, 0}(\mbf 0) \, \bs \phi_1^{[1]} \label{solvcond14}
            \\
            & \quad - \left(a_1 \, b_2 + a_2 \, b_1\right) \, \jac \mbf f_{1, 1}(\mbf 0) \, \bs \phi_1^{[1]} - 2 \, b_1 \, b_2 \, \jac \mbf f_{0, 2}(\mbf 0) \, \bs \phi_1^{[1]} - a_1^3 \, \jac \mbf f_{3, 0}(\mbf 0) \, \bs \phi_1^{[1]} \notag
            \\
            & \quad - a_1^2 \, b_1 \, \jac \mbf f_{2, 1}(\mbf 0) \, \bs \phi_1^{[1]} - a_1 \, b_1^2 \, \jac \mbf f_{1, 2}(\mbf 0) \, \bs \phi_1^{[1]} - b_1^3 \, \jac \mbf f_{0, 3}(\mbf 0) \, \bs \phi_1^{[1]}, \notag
    	\end{align}
        \begin{equation}
            \begin{aligned}
        	    \mathcal M_1 \, \mbf W_{1, 2}^{[4]} &= - a_1 \, \jac \mbf f_{1, 0}(\mbf 0) \, \mbf W_{1, 2}^{[3]} - b_1 \, \jac \mbf f_{0, 1}(\mbf 0) \, \mbf W_{1, 2}^{[3]} - 2 \, \mbf F_{2, 0, 0}\left(\bs \phi_1^{[1]}, 2 \, \mbf W_0^{[3]} + \mbf W_2^{[3]}\right)
        	    \\
        	    &\quad - 2 \, \mbf F_{2, 0, 0} \left(\mbf W_1^{[2]}, 2 \, \mbf W_0^{[2]} + \mbf W_2^{[2]}\right) - 2 \, a_1 \, \mbf F_{2, 1, 0} \left(\bs \phi_1^{[1]}, 2 \, \mbf W_0^{[2]} + \mbf W_2^{[2]}\right)
        	    \\
        	    &\quad - 2 \, b_1 \, \mbf F_{2, 0, 1} \left(\bs \phi_1^{[1]}, 2 \, \mbf W_0^{[2]} + \mbf W_2^{[2]}\right) - 9 \, \mbf F_{3, 0, 0} \left(\bs \phi_1^{[1]}, \bs \phi_1^{[1]}, \mbf W_1^{[2]}\right)
                \\
                &\quad- 3 \, a_1 \, \mbf F_{3, 1, 0}\left(\bs \phi_1^{[1]}, \bs \phi_1^{[1]}, \bs \phi_1^{[1]}\right) - 3 \, b_1 \, \mbf F_{3, 0, 1}\left(\bs \phi_1^{[1]}, \bs \phi_1^{[1]}, \bs \phi_1^{[1]}\right),
            \end{aligned} \label{solvcond24}
        \end{equation}
    	\begin{align}
    	    \mathcal M_1 \, \mbf W_{1, 3}^{[4]} = - a_1 \, \jac \mbf f_{1, 0}(\mbf 0) \, \mbf W_{1, 3}^{[3]} - b_1 \, \jac \mbf f_{0, 1}(\mbf 0) \, \mbf W_{1, 3}^{[3]} - 2 \, k^2 \, \hat D \, \mbf W_1^{[2]}, \label{solvcond34}
    	\end{align}
    	\begin{align*}
    	    \mathcal M_2 \, \mbf W_2^{[4]} &= - a_1 \, \jac \mbf f_{1, 0}(\mbf 0) \, \mbf W_2^{[3]} - b_1 \, \jac \mbf f_{0, 1}(\mbf 0) \, \mbf W_2^{[3]} - a_2 \, \jac \mbf f_{1, 0}(\mbf 0) \, \mbf W_2^{[2]} - b_2 \, \jac \mbf f_{0, 1}(\mbf 0) \, \mbf W_2^{[2]}
            \\
            & \quad - a_1^2 \, \jac \mbf f_{2, 0}(\mbf 0) \, \mbf W_2^{[2]} - a_1 \, b_1 \, \jac \mbf f_{1, 1}(\mbf 0) \, \mbf W_2^{[2]} - b_1^2 \, \jac \mbf f_{0, 2}(\mbf 0) \, \mbf W_2^{[2]} - 2 \, \mbf F_{2, 0, 0}\left(\bs \phi_1^{[1]}, \mbf W_1^{[3]}\right)
            \\
            & \quad - \mbf F_{2, 0, 0} \left(\mbf W_1^{[2]}, \mbf W_1^{[2]}\right) - 2 \, a_1 \, \mbf F_{2, 1, 0} \left(\bs \phi_1^{[1]}, \mbf W_1^{[2]}\right) - 2 \, b_1 \, \mbf F_{2, 0, 1} \left(\bs \phi_1^{[1]}, \mbf W_1^{[2]}\right) - a_2 \, \mbf F_{2, 1, 0} \left(\bs \phi_1^{[1]}, \bs \phi_1^{[1]}\right)
            \\
            & \quad - b_2 \, \mbf F_{2, 0, 1} \left(\bs \phi_1^{[1]}, \bs \phi_1^{[1]}\right) - a_1^2 \, \mbf F_{2, 2, 0} \left(\bs \phi_1^{[1]}, \bs \phi_1^{[1]}\right) - a_1 \, b_1 \, \mbf F_{2, 1, 1} \left(\bs \phi_1^{[1]}, \bs \phi_1^{[1]}\right) - b_1^2 \, \mbf F_{2, 0, 2} \left(\bs \phi_1^{[1]}, \bs \phi_1^{[1]}\right),
    	\end{align*}
    	\begin{align*}
    	    \mathcal M_2 \, \mbf W_{2, 2}^{[4]} &= - 2 \, \mbf F_{2, 0, 0}\left(\bs \phi_1^{[1]}, \mbf W_{1,2}^{[3]} + \mbf W_3^{[3]}\right) - 4 \, \mbf F_{2, 0, 0} \left(\mbf W_0^{[2]}, \mbf W_2^{[2]}\right)
    	    \\
    	    & \quad - 6 \, \mbf F_{3, 0, 0} \left(\bs \phi_1^{[1]}, \bs \phi_1^{[1]}, \mbf W_0^{[2]} + \mbf W_2^{[2]}\right) - 4 \, \mbf F_{4, 0, 0}\left(\bs \phi_1^{[1]}, \bs \phi_1^{[1]}, \bs \phi_1^{[1]}, \bs \phi_1^{[1]}\right),
    	\end{align*}
    	\begin{align*}
    	    \mathcal M_2 \, \mbf W_{2, 3}^{[4]} = - 2 \, \mbf F_{2, 0, 0}\left(\bs \phi_1^{[1]}, \mbf W_{1, 3}^{[3]}\right) - 8 \, k^2 \, \hat D \, \mbf W_2^{[2]},
    	\end{align*}
        \begin{align*}
            \mathcal M_3 \, \mbf W_3^{[4]} &= - a_1 \, \jac \mbf f_{1, 0}(\mbf 0) \, \mbf W_3^{[3]} - b_1 \, \jac \mbf f_{0, 1}(\mbf 0) \, \mbf W_3^{[3]} - 2 \, \mbf F_{2, 0, 0}\left(\bs \phi_1^{[1]}, \mbf W_2^{[3]}\right) - 2 \, \mbf F_{2,0,0} \left(\mbf W_1^{[2]}, \mbf W_2^{[2]}\right)
            \\
            & \quad - 2 \, a_1 \, \mbf F_{2, 1, 0}\left(\bs \phi_1^{[1]}, \mbf W_2^{[2]}\right) - 2 \, b_1 \, \mbf F_{2, 0, 1}\left(\bs \phi_1^{[1]}, \mbf W_2^{[2]}\right) - 3 \, \mbf F_{3, 0, 0}\left(\bs \phi_1^{[1]}, \bs \phi_1^{[1]}, \mbf W_1^{[2]}\right)
            \\
            & \quad - a_1 \, \mbf F_{3, 1, 0}\left(\bs \phi_1^{[1]}, \bs \phi_1^{[1]}, \bs \phi_1^{[1]}\right) - b_1 \, \mbf F_{3, 0, 1}\left(\bs \phi_1^{[1]}, \bs \phi_1^{[1]}, \bs \phi_1^{[1]}\right),
    	\end{align*}
    	and
    	\begin{align*}
    	    \mathcal M_4 \, \mbf W_4^{[4]} &= - 2 \, \mbf F_{2, 0, 0}\left(\bs \phi_1^{[1]}, \mbf W_3^{[3]}\right) - \mbf F_{2, 0, 0} \left(\mbf W_2^{[2]}, \mbf W_2^{[2]}\right)
    	    \\
    	    & \quad - 3 \, \mbf F_{3, 0, 0} \left(\bs \phi_1^{[1]}, \bs \phi_1^{[1]}, \mbf W_2^{[2]}\right) - \mbf F_{4,0,0}\left(\bs \phi_1^{[1]},\bs \phi_1^{[1]},\bs \phi_1^{[1]},\bs \phi_1^{[1]}\right).
    	\end{align*}
    	Here, we need to ensure that equations \eqref{solvcond14}, \eqref{solvcond24} and \eqref{solvcond34} have a solution so we require
        \begin{align*}
    	    \left \langle a_1 \, \jac \mbf f_{1, 0}(\mbf 0) \, \mbf W_1^{[3]} + b_1 \, \jac \mbf f_{0, 1}(\mbf 0) \, \mbf W_1^{[3]} + a_2 \, \jac \mbf f_{1, 0}(\mbf 0) \, \mbf W_1^{[2]} + b_2 \, \jac \mbf f_{0, 1}(\mbf 0) \, \mbf W_1^{[2]} \right. \notag
            \\
            + a_3 \, \jac \mbf f_{1, 0}(\mbf 0) \, \bs \phi_1^{[1]} + b_3 \, \jac \mbf f_{0, 1}(\mbf 0) \, \bs \phi_1^{[1]} + a_1^2 \, \jac \mbf f_{2, 0}(\mbf 0) \, \mbf W_1^{[2]} + a_1 \, b_1 \, \jac \mbf f_{1, 1}(\mbf 0) \, \mbf W_1^{[2]} \notag
    	    \\
    	    + b_1^2 \, \jac \mbf f_{0, 2}(\mbf 0) \, \mbf W_1^{[2]} + 2 \, a_1 \, a_2 \, \jac \mbf f_{2, 0}(\mbf 0) \, \bs \phi_1^{[1]} + \left(a_1 \, b_2 + a_2 \, b_1\right) \, \jac \mbf f_{1, 1}(\mbf 0) \, \bs \phi_1^{[1]}
            \\
            + 2 \, b_1 \, b_2 \, \jac \mbf f_{0, 2}(\mbf 0) \, \bs \phi_1^{[1]} + a_1^3 \, \jac \mbf f_{3, 0}(\mbf 0) \, \bs \phi_1^{[1]} + a_1^2 \, b_1 \, \jac \mbf f_{2, 1}(\mbf 0) \, \bs \phi_1^{[1]} + a_1 \, b_1^2 \, \jac \mbf f_{1, 2}(\mbf 0) \, \bs \phi_1^{[1]} \notag
            \\
            \left. + b_1^3 \, \jac \mbf f_{0, 3}(\mbf 0) \, \bs \phi_1^{[1]}, \bs \psi\right \rangle = 0, \notag
    	\end{align*}
    	\begin{align}
    	    \left \langle a_1 \, \jac \mbf f_{1, 0}(\mbf 0) \, \mbf W_{1, 2}^{[3]} + b_1 \, \jac \mbf f_{0, 1}(\mbf 0) \, \mbf W_{1, 2}^{[3]} + 2 \, \mbf F_{2, 0, 0}\left(\bs \phi_1^{[1]}, 2 \, \mbf W_0^{[3]} + \mbf W_2^{[3]}\right) \right. \notag
    	    \\
    	    + 2 \, \mbf F_{2, 0, 0} \left(\mbf W_1^{[2]}, 2 \, \mbf W_0^{[2]} + \mbf W_2^{[2]}\right) + 2 \, a_1 \, \mbf F_{2, 1, 0} \left(\bs \phi_1^{[1]}, 2 \, \mbf W_0^{[2]} + \mbf W_2^{[2]}\right) \label{a_1,b_1fourthorder}
    	    \\
    	    + 2 \, b_1 \, \mbf F_{2, 0, 1} \left(\bs \phi_1^{[1]}, 2 \, \mbf W_0^{[2]} + \mbf W_2^{[2]}\right) + 9 \, \mbf F_{3, 0, 0} \left(\bs \phi_1^{[1]}, \bs \phi_1^{[1]}, \mbf W_1^{[2]}\right) \notag
            \\
            \left. + 3 \, a_1 \, \mbf F_{3, 1, 0}\left(\bs \phi_1^{[1]}, \bs \phi_1^{[1]}, \bs \phi_1^{[1]}\right) + 3 \, b_1 \, \mbf F_{3, 0, 1}\left(\bs \phi_1^{[1]}, \bs \phi_1^{[1]}, \bs \phi_1^{[1]}\right),\bs \psi\right \rangle = 0, \notag
    	\end{align}
    	and
    	\begin{align}
    	    \left \langle a_1 \, \jac \mbf f_{1, 0}(\mbf 0) \, \mbf W_{1, 3}^{[3]} + b_1 \, \jac \mbf f_{0, 1}(\mbf 0) \, \mbf W_{1, 3}^{[3]} + 2 \, k^2 \, \hat D \, \mbf W_1^{[2]}, \bs \psi \right \rangle = 0. \label{a_1,b_1,2fourthorder}
    	\end{align}
        An important thing to note here is that equations \eqref{a_1,b_1secondorder}, \eqref{a_1,b_1fourthorder} and \eqref{a_1,b_1,2fourthorder} can be written as a linear system of equations
        \begin{align}
            \upsilon_1 \, a_1 + \upsilon_2 \, b_1 &=0, \notag
            \\
            \upsilon_3 \, a_1 + \upsilon_4 \, b_1 &=0, \label{zero_first_order}
            \\
            \upsilon_5 \, a_1 + \upsilon_6 \, b_1 &=0, \notag
        \end{align}
        where $\upsilon_i \in \mathbb R$ for all $i = 1, \ldots, 6$. This implies that, if the determinant of any $2 \times 2$ submatrix of the matrix of coefficients of this system is different from zero, then $a_1 = b_1 = 0$. That is, generically, these variables need to be equal to zero. Nevertheless, we carry on with a general asymptotic analysis as there may be systems in which one can carry out the analysis with these variables being different from zero.
    	
    	\paragraph{Order $\mathcal O\left(\varepsilon^5\right)$.} At this order, \eqref{eqtoexpand} becomes
        {\allowdisplaybreaks
    	\begin{align}
    	    \mbf 0 &= \jac \mbf f_{0, 0}(\mbf 0) \, \mbf u^{[5]} + a_1 \, \jac \mbf f_{1, 0}(\mbf 0) \, \mbf u^{[4]} + b_1 \, \jac \mbf f_{0, 1}(\mbf 0) \, \mbf u^{[4]} + a_2 \, \jac \mbf f_{1, 0}(\mbf 0) \, \mbf u^{[3]} + b_2 \, \jac \mbf f_{0, 1}(\mbf 0) \, \mbf u^{[3]} \notag
            \\
            & \quad + a_3 \, \jac \mbf f_{1, 0}(\mbf 0) \, \mbf u^{[2]} + b_3 \, \jac \mbf f_{0, 1}(\mbf 0) \, \mbf u^{[2]} + a_4 \, \jac \mbf f_{1, 0}(\mbf 0) \, \mbf u^{[1]} + b_4 \, \jac \mbf f_{0, 1}(\mbf 0) \, \mbf u^{[1]} + a_1^2 \, \jac \mbf f_{2, 0}(\mbf 0) \, \mbf u^{[3]} \notag
            \\
            & \quad + a_1 \, b_1 \, \jac \mbf f_{1, 1}(\mbf 0) \, \mbf u^{[3]} + b_1^2 \, \jac \mbf f_{0, 2}(\mbf 0) \, \mbf u^{[3]} + 2 \, a_1 \, a_2 \, \jac \mbf f_{2, 0}(\mbf 0) \, \mbf u^{[2]} + \left(a_1 \, b_2 + a_2 \, b_1\right) \, \jac \mbf f_{1, 1}(\mbf 0) \, \mbf u^{[2]} \notag
            \\
            & \quad + 2 \, b_1 \, b_2 \, \jac \mbf f_{0, 2}(\mbf 0) \, \mbf u^{[2]} + 2 \, a_1 \, a_3 \, \jac \mbf f_{2, 0}(\mbf 0) \, \mbf u^{[1]} + \left(a_1 \, b_3 + a_3 \, b_1\right) \, \jac \mbf f_{1, 1}(\mbf 0) \, \mbf u^{[1]} + 2 \, b_1 \, b_3 \, \jac \mbf f_{0, 2}(\mbf 0) \, \mbf u^{[1]} \notag
            \\
            & \quad + a_2^2 \, \jac \mbf f_{2, 0}(\mbf 0) \, \mbf u^{[1]} + a_2 \, b_2 \, \jac \mbf f_{1, 1}(\mbf 0) \, \mbf u^{[1]} + b_2^2 \, \jac \mbf f_{0, 2}(\mbf 0) \, \mbf u^{[1]} + a_1^3 \, \jac \mbf f_{3, 0}(\mbf 0) \, \mbf u^{[2]} + a_1^2 \, b_1 \, \jac \mbf f_{2, 1}(\mbf 0) \, \mbf u^{[2]} \notag
            \\
            & \quad + a_1 \, b_1^2 \, \jac \mbf f_{1, 2}(\mbf 0) \, \mbf u^{[2]} + b_1^3 \, \jac \mbf f_{0, 3}(\mbf 0) \, \mbf u^{[2]} + 3 \, a_1^2 \, a_2 \, \jac \mbf f_{3, 0}(\mbf 0) \, \mbf u^{[1]} + 2 \, a_1 \, a_2 \, b_1 \, \jac \mbf f_{2, 1}(\mbf 0) \, \mbf u^{[1]} \notag
            \\
            & \quad + a_2 \, b_1^2 \, \jac \mbf f_{1, 2}(\mbf 0) \, \mbf u^{[1]} + a_1^2 \, b_2 \, \jac \mbf f_{2, 1}(\mbf 0) \, \mbf u^{[1]} + 2 \, a_1 \, b_1 \, b_2 \, \jac \mbf f_{1, 2}(\mbf 0) \, \mbf u^{[1]} + 3 \, b_1^2 \, b_2 \, \jac \mbf f_{0, 3}(\mbf 0) \, \mbf u^{[1]} \notag
            \\
            & \quad + a_1^4 \, \jac \mbf f_{4, 0}(\mbf 0) \, \mbf u^{[1]} + a_1^3 \, b_1 \, \jac \mbf f_{3, 1}(\mbf 0) \, \mbf u^{[1]} + a_1^2 \, b_1^2 \, \jac \mbf f_{2, 2}(\mbf 0) \, \mbf u^{[1]} + a_1 \, b_1^3 \, \jac \mbf f_{1, 3}(\mbf 0) \, \mbf u^{[1]} + b_1^4 \, \jac \mbf f_{0, 4}(\mbf 0) \, \mbf u^{[1]} \notag
            \\
            & \quad + 2 \, \mbf F_{2, 0, 0}\left(\mbf u^{[1]}, \mbf u^{[4]}\right) + 2 \, \mbf F_{2, 0, 0}\left(\mbf u^{[2]}, \mbf u^{[3]}\right) + 2 \, a_1 \, \mbf F_{2, 1, 0}\left(\mbf u^{[1]}, \mbf u^{[3]}\right) + 2 \, b_1 \, \mbf F_{2, 0, 1}\left(\mbf u^{[1]}, \mbf u^{[3]}\right) \notag
            \\
            & \quad + a_1 \, \mbf F_{2, 1, 0}\left(\mbf u^{[2]}, \mbf u^{[2]}\right) + b_1 \, \mbf F_{2, 0, 1}\left(\mbf u^{[2]}, \mbf u^{[2]}\right) + 2 \, a_2 \, \mbf F_{2, 1, 0}\left(\mbf u^{[1]}, \mbf u^{[2]}\right) + 2 \, b_2 \, \mbf F_{2, 0, 1}\left(\mbf u^{[1]}, \mbf u^{[2]}\right) \notag
            \\
            & \quad + a_3 \, \mbf F_{2, 1, 0}\left(\mbf u^{[1]}, \mbf u^{[1]}\right) + b_3 \, \mbf F_{2, 0, 1}\left(\mbf u^{[1]}, \mbf u^{[1]}\right) + 2 \, a_1^2 \, \mbf F_{2, 2, 0}\left(\mbf u^{[1]}, \mbf u^{[2]}\right) + 2 \, a_1 \, b_1 \, \mbf F_{2, 1, 1}\left(\mbf u^{[1]}, \mbf u^{[2]}\right) \notag
            \\
            & \quad + 2 \, b_1^2 \, \mbf F_{2, 0, 2}\left(\mbf u^{[1]}, \mbf u^{[2]}\right) + 2 \, a_1 \, a_2 \, \mbf F_{2, 2, 0}\left(\mbf u^{[1]}, \mbf u^{[1]}\right) + \left(a_1 \, b_2 + a_2 \, b_1\right) \, \mbf F_{2, 1, 1}\left(\mbf u^{[1]}, \mbf u^{[1]}\right) \notag
            \\
            & \quad + 2 \, b_1 \, b_2 \, \mbf F_{2, 0, 2}\left(\mbf u^{[1]}, \mbf u^{[1]}\right) + a_1^3 \, \mbf F_{2, 3, 0}\left(\mbf u^{[1]}, \mbf u^{[1]}\right) + a_1^2 \, b_1 \, \mbf F_{2, 2, 1}\left(\mbf u^{[1]}, \mbf u^{[1]}\right) + a_1 \, b_1^2 \, \mbf F_{2, 1, 2}\left(\mbf u^{[1]}, \mbf u^{[1]}\right) \notag
    	    \\
            & \quad + b_1^3 \, \mbf F_{2, 0, 3}\left(\mbf u^{[1]}, \mbf u^{[1]}\right) + 3 \, \mbf F_{3, 0, 0}\left(\mbf u^{[1]}, \mbf u^{[1]}, \mbf u^{[3]}\right) + 3 \, \mbf F_{3, 0, 0}\left(\mbf u^{[1]}, \mbf u^{[2]}, \mbf u^{[2]}\right) \notag
            \\
            & \quad + 3 \, a_1 \, \mbf F_{3, 1, 0}\left(\mbf u^{[1]}, \mbf u^{[1]}, \mbf u^{[2]}\right) + 3 \, b_1 \, \mbf F_{3, 0, 1}\left(\mbf u^{[1]}, \mbf u^{[1]}, \mbf u^{[2]}\right) + a_2 \, \mbf F_{3, 1, 0}\left(\mbf u^{[1]}, \mbf u^{[1]}, \mbf u^{[1]}\right) \notag
            \\
            & \quad + b_2 \, \mbf F_{3, 0, 1}\left(\mbf u^{[1]}, \mbf u^{[1]}, \mbf u^{[1]}\right) + a_1^2 \, \mbf F_{3, 2, 0}\left(\mbf u^{[1]}, \mbf u^{[1]}, \mbf u^{[1]}\right) + a_1 \, b_1 \, \mbf F_{3, 1, 1}\left(\mbf u^{[1]}, \mbf u^{[1]}, \mbf u^{[1]}\right) \notag
            \\
            & \quad + b_1^2 \, \mbf F_{3, 0, 2}\left(\mbf u^{[1]}, \mbf u^{[1]}, \mbf u^{[1]}\right) + 4 \, \mbf F_{4, 0, 0} \left(\mbf u^{[1]}, \mbf u^{[1]}, \mbf u^{[1]}, \mbf u^{[2]}\right) + a_1 \, \mbf F_{4, 1, 0}\left(\mbf u^{[1]}, \mbf u^{[1]}, \mbf u^{[1]}, \mbf u^{[1]}\right) \notag
            \\
            & \quad + b_1 \, \mbf F_{4, 0, 1}\left(\mbf u^{[1]}, \mbf u^{[1]}, \mbf u^{[1]}, \mbf u^{[1]}\right) + \mbf F_{5, 0, 0}\left(\mbf u^{[1]}, \mbf u^{[1]}, \mbf u^{[1]}, \mbf u^{[1]}, \mbf u^{[1]}\right) \notag
            \\
            & \quad + k^2 \, \hat D \, \left(\frac{\partial^2 \mbf u^{[5]}}{\partial x^2} + 2 \, \frac{\partial^2 \mbf u^{[3]}}{\partial x \partial X} + \frac{\partial^2 \mbf u^{[1]}}{\partial X^2}\right). \label{fifth-order-eq}
    	\end{align}
        }
        At this point, we are not interested in the full solution to this equation (find the complete development up to order seven in Appendix \ref{sec:7expansion}). For now, we only need to ensure it exists. To do this, we have to obtain a solvability condition from the terms that are multiples of $e^{ix}$ on the right-hand side of \eqref{fifth-order-eq}, which are given by
    	\begin{align*}
    	    - \mbf Q_1^{[5]} \, A_1'' - i \, \mbf Q_2^{[5]} \, A_1' - i \, \mbf Q_3^{[5]} \, \abs{A_1}^2 \, A_1' - i \, \mbf Q_4^{[5]} \, A_1^2 \, \bar A_1' - \mbf Q_5^{[5]} \, A_1 - \mbf Q_6^{[5]} \, \abs{A_1}^2 \, A_1 - \mbf Q_7^{[5]} \, \abs{A_1}^4 \, A_1 + \ldots,
    	\end{align*}
        where ``$\ldots$'' are terms that depend on $A_2, A_3, A_4$ and do not influence the solvability condition, and
        {\allowdisplaybreaks
        \begin{align*}
            \mbf Q_1^{[5]} &= - 2 \, k^2 \, \hat D \, \mbf W_{1, 3}^{[3]} + k^2 \, \hat D \, \bs \phi_1^{[1]},
            \\
            \mbf Q_2^{[5]} &= a_1 \, \jac \mbf f_{1, 0}(\mbf 0) \, \mbf W_{1, 3}^{[4]} + b_1 \, \jac \mbf f_{0, 1}(\mbf 0) \, \mbf W_{1, 3}^{[4]} + a_2 \, \jac \mbf f_{1, 0}(\mbf 0) \, \mbf W_{1, 3}^{[3]} + b_2 \, \jac \mbf f_{0, 1}(\mbf 0) \, \mbf W_{1, 3}^{[3]}
            \\
            & \quad + a_1^2 \, \jac \mbf f_{2, 0}(\mbf 0) \, \mbf W_{1, 3}^{[3]} + a_1 \, b_1 \, \jac \mbf f_{1, 1}(\mbf 0) \, \mbf W_{1, 3}^{[3]} + b_1^2 \, \jac \mbf f_{0, 2}(\mbf 0) \, \mbf W_{1, 3}^{[3]} + 2 \, k^2 \, \hat D \, \mbf W_1^{[3]},
            \\
            \mbf Q_3^{[5]} &= 2 \, \mbf F_{2, 0, 0}\left(\bs \phi_1^{[1]}, \mbf W_{0, 3}^{[4]} + \mbf W_{2, 3}^{[4]}\right) + 4 \, \mbf F_{2, 0, 0}\left(\mbf W_0^{[2]}, \mbf W_{1, 3}^{[3]}\right) + 6 \, \mbf F_{3, 0, 0}\left(\bs \phi_1^{[1]}, \bs \phi_1^{[1]}, \mbf W_{1, 3}^{[3]}\right) + 4 \, k^2 \, \hat D \, \mbf W_{1, 2}^{[3]},
            \\
            \mbf Q_4^{[5]} &= - 2 \, \mbf F_{2, 0, 0}\left(\bs \phi_1^{[1]}, \mbf W_{0, 3}^{[4]}\right) - 2 \, \mbf F_{2, 0, 0}\left(\mbf W_2^{[2]}, \mbf W_{1, 3}^{[3]}\right) - 3 \, \mbf F_{3, 0, 0}\left(\bs \phi_1^{[1]}, \bs \phi_1^{[1]}, \mbf W_{1, 3}^{[3]}\right) + 2 \, k^2 \, \hat D \, \mbf W_{1, 2}^{[3]},
            \\
            \mbf Q_5^{[5]} &= a_1 \, \jac \mbf f_{1, 0}(\mbf 0) \, \mbf W_1^{[4]} + b_1 \, \jac \mbf f_{0, 1}(\mbf 0) \, \mbf W_1^{[4]} + a_2 \, \jac \mbf f_{1, 0}(\mbf 0) \, \mbf W_1^{[3]} + b_2 \, \jac \mbf f_{0, 1}(\mbf 0) \, \mbf W_1^{[3]}
            \\
            & \quad + a_3 \, \jac \mbf f_{1, 0}(\mbf 0) \, \mbf W_1^{[2]} + b_3 \, \jac \mbf f_{0, 1}(\mbf 0) \, \mbf W_1^{[2]} + a_4 \, \jac \mbf f_{1, 0}(\mbf 0) \, \bs \phi_1^{[1]} + b_4 \, \jac \mbf f_{0, 1}(\mbf 0) \, \bs \phi_1^{[1]}
            \\
            & \quad + a_1^2 \, \jac \mbf f_{2, 0}(\mbf 0) \, \mbf W_1^{[3]} + a_1 \, b_1 \, \jac \mbf f_{1, 1}(\mbf 0) \, \mbf W_1^{[3]} + b_1^2 \, \jac \mbf f_{0, 2}(\mbf 0) \, \mbf W_1^{[3]} + 2 \, a_1 \, a_2 \, \jac \mbf f_{2, 0}(\mbf 0) \, \mbf W_1^{[2]}
    	    \\
            & \quad + \left(a_1 \, b_2 + a_2 \, b_1\right) \, \jac \mbf f_{1, 1}(\mbf 0) \, \mbf W_1^{[2]} + 2 \, b_1 \, b_2 \, \jac \mbf f_{0, 2}(\mbf 0) \, \mbf W_1^{[2]} + 2 \, a_1 \, a_3 \, \jac \mbf f_{2, 0}(\mbf 0) \, \bs \phi_1^{[1]}
            \\
    	    & \quad + \left(a_1 \, b_3 + a_3 \, b_1\right) \, \jac \mbf f_{1, 1}(\mbf 0) \, \bs \phi_1^{[1]} + 2 \, b_1 \, b_3 \, \jac \mbf f_{0, 2}(\mbf 0) \, \bs \phi_1^{[1]} + a_2^2 \, \jac \mbf f_{2, 0}(\mbf 0) \, \bs \phi_1^{[1]}
            \\
            & \quad + a_2 \, b_2 \, \jac \mbf f_{1, 1}(\mbf 0) \, \bs \phi_1^{[1]} + b_2^2 \, \jac \mbf f_{0, 2}(\mbf 0) \, \bs \phi_1^{[1]} + a_1^3 \, \jac \mbf f_{3, 0}(\mbf 0) \, \mbf W_1^{[2]} + a_1^2 \, b_1 \, \jac \mbf f_{2, 1}(\mbf 0) \, \mbf W_1^{[2]}
            \\
            & \quad + a_1 \, b_1^2 \, \jac \mbf f_{1, 2}(\mbf 0) \, \mbf W_1^{[2]} + b_1^3 \, \jac \mbf f_{0, 3}(\mbf 0) \, \mbf W_1^{[2]} + 3 \, a_1^2 \, a_2 \, \jac \mbf f_{3, 0}(\mbf 0) \, \bs \phi_1^{[1]} + 2 \, a_1 \, a_2 \, b_1 \, \jac \mbf f_{2, 1}(\mbf 0) \, \bs \phi_1^{[1]}
            \\
            & \quad + a_2 \, b_1^2 \, \jac \mbf f_{1, 2}(\mbf 0) \, \bs \phi_1^{[1]} + a_1^2 \, b_2 \, \jac \mbf f_{2, 1}(\mbf 0) \, \bs \phi_1^{[1]} + 2 \, a_1 \, b_1 \, b_2 \, \jac \mbf f_{1, 2}(\mbf 0) \, \bs \phi_1^{[1]} + 3 \, b_1^2 \, b_2 \, \jac \mbf f_{0, 3}(\mbf 0) \, \bs \phi_1^{[1]}
    	    \\
    	    & \quad + a_1^4 \, \jac \mbf f_{4, 0}(\mbf 0) \, \bs \phi_1^{[1]} + a_1^3 \, b_1 \, \jac \mbf f_{3, 1}(\mbf 0) \, \bs \phi_1^{[1]} + a_1^2 \, b_1^2 \, \jac \mbf f_{2, 2}(\mbf 0) \, \bs \phi_1^{[1]} + a_1 \, b_1^3 \, \jac \mbf f_{1, 3}(\mbf 0) \, \bs \phi_1^{[1]}
            \\
            & \quad + b_1^4 \, \jac \mbf f_{0, 4}(\mbf 0) \, \bs \phi_1^{[1]},
            \\
            \mbf Q_6^{[5]} &= a_1 \, \jac \mbf f_{1, 0}(\mbf 0) \, \mbf W_{1, 2}^{[4]} + b_1 \, \jac \mbf f_{0, 1}(\mbf 0) \, \mbf W_{1, 2}^{[4]} + a_2 \, \jac \mbf f_{1, 0}(\mbf 0) \, \mbf W_{1, 2}^{[3]} + b_2 \, \jac \mbf f_{0, 1}(\mbf 0) \, \mbf W_{1, 2}^{[3]}
            \\
            & \quad + a_1^2 \, \jac \mbf f_{2, 0}(\mbf 0) \, \mbf W_{1, 2}^{[3]} + a_1 \, b_1 \, \jac \mbf f_{1, 1}(\mbf 0) \, \mbf W_{1, 2}^{[3]} + b_1^2 \, \jac \mbf f_{0, 2}(\mbf 0) \, \mbf W_{1, 2}^{[3]}
            \\
            & \quad + 2 \, \mbf F_{2, 0, 0}\left(\bs \phi_1^{[1]}, 2 \, \mbf W_0^{[4]} + \mbf W_2^{[4]}\right) + 2 \, \mbf F_{2, 0, 0}\left(\mbf W_1^{[2]}, 2 \, \mbf W_0^{[3]} + \mbf W_2^{[3]}\right)
    	    \\
    	    & \quad + 2 \, \mbf F_{2, 0, 0}\left(\mbf W_1^{[3]}, 2 \, \mbf W_0^{[2]} + \mbf W_2^{[2]}\right) + 2 \, a_1 \, \mbf F_{2, 1, 0}\left(\bs \phi_1^{[1]}, 2 \, \mbf W_0^{[3]} + \mbf W_2^{[3]}\right)
            \\
            & \quad + 2 \, b_1 \, \mbf F_{2, 0, 1}\left(\bs \phi_1^{[1]}, 2 \, \mbf W_0^{[3]} + \mbf W_2^{[3]}\right) + 2 \, a_1 \, \mbf F_{2, 1, 0}\left(\mbf W_1^{[2]}, 2 \, \mbf W_0^{[2]} + \mbf W_2^{[2]}\right)
            \\
            & \quad + 2 \, b_1 \, \mbf F_{2, 0, 1}\left(\mbf W_1^{[2]}, 2 \, \mbf W_0^{[2]} + \mbf W_2^{[2]}\right) + 2 \, a_2 \, \mbf F_{2, 1, 0}\left(\bs \phi_1^{[1]}, 2 \, \mbf W_0^{[2]} + \mbf W_2^{[2]}\right)
            \\
            & \quad + 2 \, b_2 \, \mbf F_{2, 0, 1}\left(\bs \phi_1^{[1]}, 2 \, \mbf W_0^{[2]} + \mbf W_2^{[2]}\right) + 2 \, a_1^2 \, \mbf F_{2, 2, 0}\left(\bs \phi_1^{[1]}, 2 \, \mbf W_0^{[2]} + \mbf W_2^{[2]}\right)
            \\
            & \quad + 2 \, a_1 \, b_1 \, \mbf F_{2, 1, 1}\left(\bs \phi_1^{[1]}, 2 \, \mbf W_0^{[2]} + \mbf W_2^{[2]}\right) + 2 \, b_1^2 \, \mbf F_{2, 0, 2}\left(\bs \phi_1^{[1]}, 2 \, \mbf W_0^{[2]} + \mbf W_2^{[2]}\right)
            \\
            & \quad + 9 \, \mbf F_{3, 0, 0}\left(\bs \phi_1^{[1]}, \bs \phi_1^{[1]}, \mbf W_1^{[3]}\right) + 9 \, \mbf F_{3, 0, 0}\left(\bs \phi_1^{[1]}, \mbf W_1^{[2]}, \mbf W_1^{[2]}\right)
            \\
            & \quad + 9 \, a_1 \, \mbf F_{3, 1, 0}\left(\bs \phi_1^{[1]}, \bs \phi_1^{[1]}, \mbf W_1^{[2]}\right) + 9 \, b_1 \, \mbf F_{3, 0, 1}\left(\bs \phi_1^{[1]}, \bs \phi_1^{[1]}, \mbf W_1^{[2]}\right)
            \\
            & \quad + 3 \, a_2 \, \mbf F_{3, 1, 0}\left(\bs \phi_1^{[1]}, \bs \phi_1^{[1]}, \bs \phi_1^{[1]}\right) + 3 \, b_2 \, \mbf F_{3, 0, 1}\left(\bs \phi_1^{[1]}, \bs \phi_1^{[1]}, \bs \phi_1^{[1]}\right)
            \\
            & \quad + 3 \, a_1^2 \, \mbf F_{3, 2, 0}\left(\bs \phi_1^{[1]}, \bs \phi_1^{[1]}, \bs \phi_1^{[1]}\right) + 3 \, a_1 \, b_1 \, \mbf F_{3, 1, 1}\left(\bs \phi_1^{[1]}, \bs \phi_1^{[1]}, \bs \phi_1^{[1]}\right) + 3 \, b_1^2 \, \mbf F_{3, 0, 2}\left(\bs \phi_1^{[1]}, \bs \phi_1^{[1]}, \bs \phi_1^{[1]}\right),
            \\
            \mbf Q_7^{[5]} &= 2 \, \mbf F_{2, 0, 0}\left(\bs \phi_1^{[1]}, 2 \, \mbf W_{0, 2}^{[4]} + \mbf W_{2, 2}^{[4]}\right) + 2 \, \mbf F_{2, 0, 0}\left(\mbf W_2^{[2]}, \mbf W_3^{[3]}\right) + 2 \, \mbf F_{2, 0, 0}\left(\mbf W_{1, 2}^{[3]}, 2 \, \mbf W_0^{[2]} + \mbf W_2^{[2]}\right)
    	    \\
    	    & \quad + 3 \, \mbf F_{3, 0, 0}\left(\bs \phi_1^{[1]}, \bs \phi_1^{[1]}, 3 \, \mbf W_{1, 2}^{[3]} + \mbf W_3^{[3]}\right) + 12 \, \mbf F_{3, 0, 0}\left(\bs \phi_1^{[1]}, \mbf W_0^{[2]}, \mbf W_0^{[2]}\right)
    	    \\
    	    & \quad + 6 \, \mbf F_{3, 0, 0}\left(\bs \phi_1^{[1]}, \mbf W_2^{[2]}, 2 \, \mbf W_0^{[2]} + \mbf W_2^{[2]}\right) + 8 \, \mbf F_{4, 0, 0} \left(\bs \phi_1^{[1]}, \bs \phi_1^{[1]}, \bs \phi_1^{[1]}, 3 \, \mbf W_0^{[2]} + 2 \, \mbf W_2^{[2]}\right)
    	    \\
    	    & \quad + 10 \, \mbf F_{5, 0, 0}\left(\bs \phi_1^{[1]}, \bs \phi_1^{[1]}, \bs \phi_1^{[1]}, \bs \phi_1^{[1]}, \bs \phi_1^{[1]}\right).
        \end{align*}
        }
        Again, we need to ensure that \eqref{fifth-order-eq} has a solution and, as before, we can apply the inner product with $\bs \psi$. Finally, we arrive at the amplitude equation  \eqref{firstamplitudeeq}, where $\alpha_i = \left \langle \mbf Q_i^{[5]}, \, \bs \psi \right \rangle $ for $i = 1, \ldots, 7$, and we require $\alpha_1 \neq 0$.

    \subsection{Solving the amplitude equation} \label{sub:solving_amplitude}
        In this section, we look for heteroclinic orbits in \eqref{firstamplitudeeq}, between the trivial steady state $A_1 = 0$ and a finite-amplitude periodic pattern, whose amplitude is to be determined. We are interested in such fronts because they help us define the Maxwell point and, generically, give rise to homoclinic snaking, which we aim to study through beyond-all-orders terms.
      
        To solve \eqref{firstamplitudeeq}, we write $A_1$ in polar coordinates as $A_1 = R_1 \, e^{i \, \varphi_1}$, where $R_1 = R_1(X) \in \mathbb R$ and $\varphi_1 = \varphi_1(X) \in \mathbb R$. With this, \eqref{firstamplitudeeq} becomes
        \begin{align*}
            \alpha_1 \, R_1'' - \alpha_1 \, R_1 \, \left(\varphi_1'\right)^2 - \alpha_2 \, R_1 \, \varphi_1' - \alpha_3 \, R_1^3 \, \varphi_1' + \alpha_4 \, R_1^3 \, \varphi_1' + \alpha_5 \, R_1 + \alpha_6 \, R_1^3 + \alpha_7 \, R_1^5
            \\
            + i \, \left(\alpha_1 \, R_1 \, \varphi_1'' + 2 \, \alpha_1 \, R_1' \, \varphi_1' + \alpha_2 \, R_1' + \left(\alpha_3 + \alpha_4\right) \, R_1^2 \, R_1'\right) = 0,
        \end{align*}
    	which can be split into two equations given by
    	\begin{align}
            \alpha_1 \, R_1'' - \alpha_1 \, R_1 \, \left(\varphi_1'\right)^2 - \alpha_2 \, R_1 \, \varphi_1' - \alpha_3 \, R_1^3 \, \varphi_1' + \alpha_4 \, R_1^3 \, \varphi_1' + \alpha_5 \, R_1 + \alpha_6 \, R_1^3 + \alpha_7 \, R_1^5 = 0, \label{realeq}
    	\end{align}
        and
        \begin{align}
            \alpha_1 \, R_1 \, \varphi_1'' + 2 \, \alpha_1 \, R_1' \, \varphi_1' + \alpha_2 \, R_1' + \left(\alpha_3 + \alpha_4\right) \, R_1^2 \, R_1' = 0. \label{imageq}
        \end{align}
    	Now, note that \eqref{imageq} is equivalent to
        \begin{align*}
             R_1^2 \, \varphi_1'' + 2 \, R_1 \, R_1' \, \varphi_1' &= - \frac{\alpha_3 + \alpha_4}{\alpha_1} \, R_1^3 \, R_1' - \frac{\alpha_2}{\alpha_1} \, R_1 \, R_1',
        \end{align*}
    	which can be integrated directly to obtain
        \begin{align}
            \varphi_1' &= - \frac{\alpha_3 + \alpha_4}{4 \, \alpha_1} \, R_1^2 - \frac{\alpha_2}{2 \, \alpha_1}. \label{varphi}
        \end{align}    
    	On the other hand, when replacing \eqref{varphi} into \eqref{realeq}, we obtain
    	\begin{align}
    	    2 \, R_1'' = \frac{\dd V}{\dd R_1}, \label{diffusionequation}
    	\end{align}
    	where
    	\begin{align}
    	    \frac{\dd V}{\dd R_1} &= 2 \, \beta_1 \, R_1 + 4 \, \beta_3 \, R_1^3 + 6 \, \beta_5 \, R_1^5, \notag
    	    \\
    	    \beta_1 &= - \frac{\alpha_2^2 + 4 \, \alpha_1 \, \alpha_5}{4 \, \alpha_1^2}, \notag
    	    \\
    	    \beta_3 &= - \frac{\alpha_2 \, \left(\alpha_3 - \alpha_4\right) + 2 \, \alpha_1 \, \alpha_6}{4 \, \alpha_1^2}, \label{betadef}
    	    \\
    	    \beta_5 &= - \frac{\left(\alpha_3 + \alpha_4\right) \, \left(3 \, \alpha_3 - 5 \, \alpha_4\right) + 16 \, \alpha_1 \, \alpha_7}{48 \, \alpha_1^2}. \notag
    	\end{align}
    	Note that \eqref{diffusionequation} is invariant under the change $X \to \upsilon - X$ for every $\upsilon \in \mathbb R$. Furthermore, we have that $V$ represents an energy potential which, for simplicity, we subject to the condition $V(0) = 0$. This implies that
        \begin{align*}
            V\left(R_1\right) = \beta_1 \, R_1^2 + \beta_3 \, R_1^4 + \beta_5 \, R_1^6,
        \end{align*}
        which has critical points at $R_1 = 0$ and
    	\begin{align*}
    	    R_{1, \pm} &= \sqrt{\frac{- \beta_3 \pm \sqrt{\beta_3^2 - 3 \, \beta_1 \, \beta_5}}{3 \, \beta_5}}.
    	\end{align*}
    	We need these critical points to be real so we require $\beta_1 \, \beta_5 > 0$, $\beta_3 \, \beta_5 < 0$ and $\beta_3^2 \geq 3 \, \beta_1 \, \beta_5$.
    	
    	Furthermore, note that
    	\begin{align}
    	    \frac{\dd^2 V}{\dd R_1^2}(0) &= 2 \, \beta_1, \quad \text{and} \notag
    	    \\
    	    \frac{\dd^2 V}{\dd R_1^2}\left(R_{1, +}\right) &= \frac{8}{3} \, \frac{\beta_3^2 - 3 \, \beta_1 \, \beta_5 - \beta_3 \, \sqrt{\beta_3^2 - 3 \, \beta_1 \, \beta_5}}{\beta_5}. \label{second-quantity}
    	\end{align}
    	To find a Maxwell point, we need to find the condition under which $R_1 = 0$ and $R_1 = R_{1, +}$ have the same minimal energy, which implies that we need $\beta_1 > 0$, $\beta_3 < 0$ and $\beta_5 > 0$. Furthermore, with these assumptions, we have that \eqref{second-quantity} is positive, which implies that $R_{1, +}$ is also a minimum.
        
    	In addition, we require $R_1 = 0$ to have the same minimal energy as $R_{1, +}$. Therefore, we need
    	\begin{align}
    	    V(0) = V\left(R_{1, +}\right) \qquad \iff \qquad \beta_3^2 = 4 \, \beta_1 \, \beta_5, \label{Second Maxwell}
    	\end{align}
        which defines a first approximation to the Maxwell point and is an assumption we make from now on.
        
        With this, if we multiply \eqref{diffusionequation} by $R_1'$ and integrate it once, we obtain
    	\begin{align}
    	    \left(R_1'\right)^2 = R_1^2 \, \left(\beta_1 + \beta_3 \, R_1^2 + \beta_5 \, R_1^4\right), \label{maxwell-consequence}
    	\end{align}
        where we note that the right-hand side is a positive multiple of a parabola in $R_1^2$ with a positive leading-order coefficient and a discriminant equal to zero due to \eqref{Second Maxwell}. This implies that the right-hand side of \eqref{maxwell-consequence} is non-negative, making \eqref{maxwell-consequence} well-defined.
        
    	Let us now consider the following change $R_1^2 = 1/\mathcal P$, which implies:
    	\begin{align*}
    		R_1' = -\frac{\mathcal P'}{2 \, \mathcal P^{3/2}} \qquad \text{and} \qquad \left(\frac{\mathcal P'}{2 \, \mathcal P^{3/2}}\right)^2= \frac{\beta_1}{\mathcal P} + \frac{\beta_3}{\mathcal P^2} + \frac{\beta_5}{\mathcal P^3}.
    	\end{align*}
    	Therefore,
    	\begin{align*}
    		\left(\mathcal P'\right)^2= 4 \, \left(\beta_1 \, \mathcal P^2 + \beta_3 \, \mathcal P + \beta_5\right).
    	\end{align*}
    	We proceed to solve this equation using the method of separation of variables. In particular, if we consider the negative branch of this equation when taking the square root, we have that
    	\begin{align*}
    		\frac{1}{2} \, \int \frac{\dd \mathcal P}{\sqrt{\beta_1 \, \mathcal P^2 +  \beta_3 \, \mathcal P + \beta_5}} = - X + C,
    	\end{align*}
        where $C \in \mathbb R$ is a constant of integration. Now, let $\mathcal Q = \sqrt{\beta_1 \, \mathcal P^2 + \beta_3 \, \mathcal P + \beta_5}$. Then,
    	\begin{align*}
    		\beta_1 \, \mathcal P^2 + \beta_3 \, \mathcal P + \beta_5 - \mathcal Q^2 = 0,
    	\end{align*}
    	which implies
    	\begin{align*}
    		\mathcal P = \frac{- \beta_3 \pm \sqrt{\beta_3^2 - 4 \, \beta_1 \, \left(\beta_5 - \mathcal Q^2\right)}}{2 \, \beta_1} = \frac{- \beta_3 \pm 2 \, \sqrt{\beta_1} \, \mathcal Q}{2 \, \beta_1} \qquad \text{and} \qquad \dd \mathcal P = \pm \, \frac{\dd \mathcal Q}{\sqrt{\beta_1}}.
    	\end{align*}
    	With this, we have
    	\begin{align*}
    		\frac{1}{2} \, \int \frac{\dd \mathcal P}{\sqrt{\beta_1 \, \mathcal P^2 + \beta_3 \, \mathcal P + \beta_5}} = \frac{1}{2} \, \int \frac{\dd \mathcal Q}{\mathcal Q \, \sqrt{\beta_1}} = \frac{1}{2 \, \sqrt{\beta_1}} \, \log(\mathcal Q) = \frac{1}{2 \, \sqrt{\beta_1}} \, \log\left(\sqrt{\beta_1 \, \mathcal P^2 + \beta_3 \, \mathcal P + \beta_5}\right),
    	\end{align*}
    	where we choose the positive sign for $\mathcal P$, as it needs to be non-negative by definition.
    	
    	With this, we note that
        \begin{align*}
            \frac{1}{2 \, \sqrt{\beta_1}} \, \log\left(\sqrt{\beta_1 \, \mathcal P^2 + \beta_3 \, \mathcal P + \beta_5}\right) &= - X + C
            \\
            \iff \beta_1 \, \mathcal P^2 + \beta_3 \, \mathcal P + \beta_5 &= C^2 \, \exp\left(- 4 \, \sqrt{\beta_1} \, X\right),
        \end{align*}
        which yields
        \begin{align*}
            \mathcal P = \frac{1}{R_1^2} &= \frac{- \beta_3 + \sqrt{\beta_3^2 - 4 \, \beta_1 \, \left(\beta_5 - C^2 \, \exp\left(- 4 \, \sqrt{\beta_1} \, X\right)\right)}}{2 \, \beta_1}
            \\
            &= \frac{- \beta_3 + 2 \, \sqrt{\beta_1} \, C \, \exp\left(- 2 \, \sqrt{\beta_1} \, X\right)}{2 \, \beta_1}
        \end{align*}
        Therefore, we conclude that
        \begin{align}
            R_1^2 &= \frac{4 \, \beta_1}{- 2 \, \beta_3 + \exp\left(- 2 \, \sqrt{\beta_1} \, X\right)}, \label{R_1-up_front}
        \end{align}
        where we have taken $C = \dfrac{1}{4 \, \sqrt{\beta_1}}$, for simplicity. Note that \eqref{R_1-up_front} corresponds to an up-front since
        \begin{align*}
            \lim_{X \to - \infty} R_1^2 = 0, \qquad \text{and} \qquad \lim_{X \to \infty} R_1^2 = - \frac{2 \, \beta_1}{\beta_3} > 0.
        \end{align*}
        On the other hand, as equation \eqref{diffusionequation} is invariant under the change $X \to - X$, then the down-front
        \begin{align}
            \tilde R_1^2(X) = \frac{4 \, \beta_1}{- 2 \, \beta_3 +  \exp\left(2 \, \sqrt{\beta_1} \, X\right)}, \label{R_1-down_front}
        \end{align}
        is also a solution to \eqref{diffusionequation}. Figure \ref{fig:fronts_to_join} depicts these two fronts. Specifically, the red (respectively, blue) dashed line represents the up-front (respectively, down-front) given by \eqref{R_1-up_front} (respectively, \eqref{R_1-down_front}). On the other hand, the green continuous lines show an oscillatory pattern enclosed by these fronts. These two solutions' existence is key for studying late terms in the asymptotic expansion, as their existence and matching provide conditions on some parameters of the expansion (see Section \ref{sec:joining_fronts}).
        \begin{figure}
            \centering
            \includegraphics[width = 0.6\linewidth]{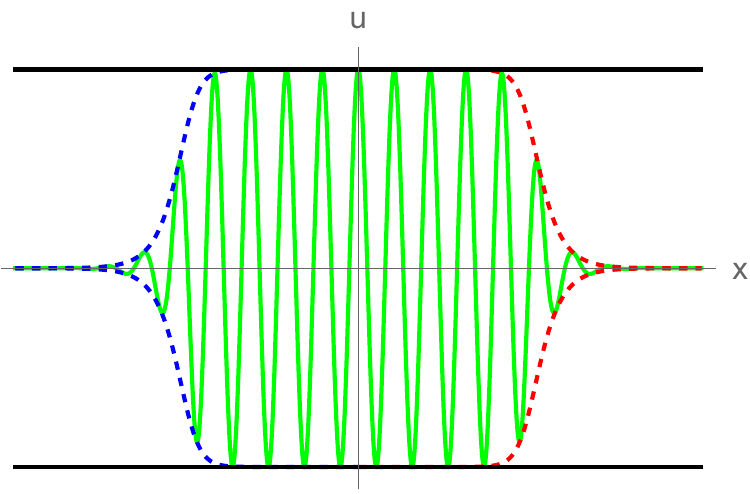}
            \caption{Form of solutions captured by the amplitude equation, \eqref{firstamplitudeeq}, at the Maxwell point. The red and blue dashed curves represent the envelope of oscillatory localized solutions to \eqref{general_equation}, represented by $R_1$. Specifically, the blue dashed line represents an up-front joining $R_1 = 0$ to $R_1 = \sqrt{- \dfrac{2 \, \beta_1}{\beta_3}}$ (represented by the top and bottom black horizontal lines) from left to right, whilst the red dashed curve represents a translated reflection of the blue curve, which turns out to be a down-front joining the same steady states in the opposite direction. Furthermore, the green continuous curve represents the oscillatory pattern enclosed by that envelope.}
            \label{fig:fronts_to_join}
        \end{figure}
        
        To keep consistency with notation, we focus on the up-front, \eqref{R_1-up_front}. The analysis for the down-front is analogous. This implies
        \begin{align*}
            \varphi_1' &= \frac{\beta_1 \, \left(\alpha_3 + \alpha_4\right)}{\alpha_1\left(2 \, \beta_3 - \exp\left(- 2 \, \sqrt{\beta_1} \, X\right)\right)} - \frac{\alpha_2}{2 \, \alpha_1},
        \end{align*}
        so $\varphi_1$ takes the following form:
        \begin{align}
            \varphi_1 = \eta \, \log\left(1 - 2 \, \beta_3 \, \exp\left(2 \sqrt{\beta_1} \, X\right)\right) - 2 \, \xi \, \sqrt{\beta_1} \, X - \xi \, \log\left(- 2 \, \beta_3\right) - \eta \, \log\left(2 \, \sqrt{\beta_1}\right), \label{varphi_1-up-front}
        \end{align}
        where
        \begin{align*}
            \xi = \frac{\alpha_2}{4 \, \alpha_1 \, \sqrt{\beta_1}}, \qquad \text{and} \qquad \eta = \frac{\left(\alpha_3 + \alpha_4\right) \, \sqrt{\beta_1}}{4 \, \alpha_1 \, \beta_3}.
        \end{align*}
        Now, to simplify notation, we consider the following translation:
        \begin{align*}
            X \to - \frac{\log\left(-2 \, \beta_3\right)}{2 \, \sqrt{\beta_1}} + X,
        \end{align*}
        which transforms \eqref{R_1-up_front} and \eqref{varphi_1-up-front} into
        \begin{align*}
            R_1^2 &= - \frac{2 \, \beta_1}{\beta_3} \, \frac{\exp\left(2 \, \sqrt{\beta_1} \, X\right)}{1 + \exp\left(2 \, \sqrt{\beta_1} \, X\right)},
        \end{align*}
        which has singularities at $X_j = \dfrac{(2 \, j + 1) \, \pi \, i}{2 \, \sqrt{\beta_1}}$ for $j \in \mathbb Z$, and
        \begin{align*}
            \varphi_1 = \eta \, \log\left(1 + \exp\left(2 \sqrt{\beta _1} X\right)\right) - 2 \, \xi \, \sqrt{\beta_1} \, X - \eta \, \log\left(2 \, \sqrt{\beta_1}\right).
        \end{align*}
        respectively. With this, we note that
        \begin{align*}
            A_1 &= R_1 \, e^{i \, \varphi_1}
            \\
            &= \left(2 \, \sqrt{\beta_1}\right)^{1 - \eta \, i} \, \left(- 2 \, \beta_3\right)^{- \frac{1}{2}} \, \sqrt{\frac{1}{1 + \exp\left(- 2 \, \sqrt{\beta_1} \, X\right)}} \, \left(1 + \exp\left(2 \, \sqrt{\beta_1} \, X\right)\right)^{\eta \, i} \, \exp\left(- 2 \, i \, \sqrt{\beta_1} \, \xi \, X\right).
        \end{align*}
        Now, we note that the leading-order expansion of different functions associated with $A_1(X)$ around $X_0 = \dfrac{\pi \, i}{2 \, \sqrt{\beta_1}}$, which turns out to be one of the singularities of $R_1$ that is closest to the real axis in $X$ \big(together with $X_{- 1}$\big), is given by
        {\allowdisplaybreaks
        \begin{align*}
            A_1 \, \bs \phi_1^{[1]} &\sim - i \, C_A \, \sqrt{\frac{- 1}{X_0 - X}} \left(X_0 - X\right)^{\eta i} \, \bs \phi_1^{[1]},
            \\
            \bar A_1 \, \bs \phi_1^{[1]} &\sim - i \, C_{\bar A} \, \sqrt{\frac{- 1}{X_0 - X}} \left(X_0 - X\right)^{- \eta i} \, \bs \phi_1^{[1]},
            \\
            \abs{A_1}^2 \, \mbf W_0^{[2]} = R_1^2 \, \mbf W_0^{[2]} &\sim C_A \, C_{\bar A} \, \left(X_0 - X\right)^{- 1} \, \mbf W_0^{[2]},
            \\
            A_1^2 \, \mbf W_2^{[2]} &\sim C_A^2 \, \left(X_0 - X\right)^{- 1 + 2 \, \eta \, i} \, \mbf W_2^{[2]},
            \\
            \bar A_1^2 \, \mbf W_2^{[2]} &\sim C_{\bar A}^2 \, \left(X_0 - X\right)^{- 1 - 2 \, \eta \, i} \, \mbf W_2^{[2]},
            \allowdisplaybreaks
            \\
            \abs{A_1}^2 \, A_1 \, \mbf W_{1, 2}^{[3]} = R_1^2 \, A_1 \, \mbf W_{1, 2}^{[3]} &\sim i \, C_A^2 \, C_{\bar A} \, \left(\frac{- 1}{X_0 - X}\right)^{3/2} \left(X_0 - X\right)^{\eta i} \, \mbf W_{1, 2}^{[3]},
            \\
            \abs{A_1}^2 \, \bar A_1 \, \mbf W_{1, 2}^{[3]} = R_1^2 \, \bar A_1 \, \mbf W_{1, 2}^{[3]} &\sim i \, C_A \, C_{\bar A}^2 \, \left(\frac{- 1}{X_0 - X}\right)^{3/2} \left(X_0 - X\right)^{- \eta i} \, \mbf W_{1, 2}^{[3]},
            \\
            i \, A_1' \, \mbf W_{1, 3}^{[3]} &\sim - \frac{i}{2} \, (2 \, \eta + i) \, C_A \, \sqrt{\frac{- 1}{X_0 - X}} \, \left(X_0 - X\right)^{- 1 + \eta i} \, \mbf W_{1, 3}^{[3]},
            \\
            - i \, \bar A_1' \, \mbf W_{1, 3}^{[3]} &\sim - \frac{i}{2} \, (2 \, \eta - i) \, C_A \, \sqrt{\frac{- 1}{X_0 - X}} \, \left(X_0 - X\right)^{- 1 - \eta i} \, \mbf W_{1, 3}^{[3]},
            \\
            A_1^3 \, \mbf W_3^{[3]} &\sim i \, C_A^3 \, \left(\frac{- 1}{X_0 - X}\right)^{3/2} \left(X_0 - X\right)^{3 \, \eta \, i} \, \mbf W_3^{[3]},
            \\
            \bar A_1^3 \, \mbf W_3^{[3]} &\sim i \, C_{\bar A}^3 \, \left(\frac{- 1}{X_0 - X}\right)^{3/2} \left(X_0 - X\right)^{- 3 \, \eta \, i} \, \mbf W_3^{[3]},
            \allowdisplaybreaks
            \\
            \abs{A_1}^4 \, \mbf W_{0, 2}^{[4]} = R_1^4 \, \mbf W_{0, 2}^{[4]} &\sim C_A^2 \, C_{\bar A}^2 \, \left(X_0 - X\right)^{- 2} \, \mbf W_{0, 2}^{[4]},
            \\
            i \, \bar A_1 \, A_1' \, \mbf W_{0, 3}^{[4]} &\sim \frac{i}{2} \, (1 - 2 \, \eta \, i) \, C_A \, C_{\bar A} \, \left(X_0 - X\right)^{- 2} \, \mbf W_{0, 3}^{[4]},
            \\
            - i \, A_1 \, \bar A_1' \, \mbf W_{0, 3}^{[4]} &\sim - \frac{i}{2} \, (1 + 2 \, \eta \, i) \,  C_A \, C_{\bar A} \, \left(X_0 - X\right)^{- 2} \, \mbf W_{0, 3}^{[4]},
            \\
            \abs{A_1}^2 \, A_1^2 \, \mbf W_{2, 2}^{[4]} = R_1^2 \, A_1^2 \, \mbf W_{2, 2}^{[4]} &\sim C_A^3 \, C_{\bar A} \, \left(X_0 - X\right)^{- 2 + 2 \, \eta \, i} \, \mbf W_{2, 2}^{[4]},
            \\
            \abs{A_1}^2 \, \bar A_1^2 \, \mbf W_{2, 2}^{[4]} = R_1^2 \, \bar A_1^2 \, \mbf W_{2, 2}^{[4]} &\sim C_A \, C_{\bar A}^3 \, \left(X_0 - X\right)^{- 2 - 2 \, \eta \, i} \, \mbf W_{2, 2}^{[4]},
            \\
            i \, A_1 \, A_1' \, \mbf W_{2, 3}^{[4]} &\sim \frac{i}{2} \, (1 - 2 \, \eta \, i) \, C_A^2 \, \left(X_0 - X\right)^{- 2 + 2 \, \eta \, i} \, \mbf W_{2, 3}^{[4]},
            \\
            - i \, \bar A_1 \, \bar A_1' \, \mbf W_{2, 3}^{[4]} &\sim - \frac{i}{2} \, (1 + 2 \, \eta \, i) \, C_{\bar A}^2 \, \left(X_0 - X\right)^{- 2 - 2 \, \eta \, i} \, \mbf W_{2, 3}^{[4]}.
        \end{align*}
        }
        where
        {\allowdisplaybreaks
        \begin{align*}
            C_A &= i \, e^{\pi \, \xi} \, \left(2 \, \sqrt{\beta_1}\right)^{1/2} \, \left(- 2 \, \beta_3\right)^{- 1/2},
            \\
            C_{\bar A} &= i \, e^{- \pi \, \xi} \, \left(2 \, \sqrt{\beta_1}\right)^{1/2} \, \left(- 2 \, \beta_3\right)^{- 1/2}.
        \end{align*}
        }
        \begin{remark}
            First, note that
            \begin{align}
                C_{\bar A} &= - e^{- 2 \, \pi \, \xi} \, \bar C_A. \label{mostwonderfulidea}
            \end{align}
            Furthermore, observe that the expression of the leading-order term of $A_1$ at $X_0$ can be simplified even further if we assume
            \begin{align*}
                \sqrt{\frac{- 1}{X_0 - X}} = i \, \sqrt{\frac{1}{X_0 - X}}.
            \end{align*}
            Nonetheless, as we are now working with a complex variable to carry out the analysis, we have to bear in mind that, in the complex plane, some functions are not uniquely defined. We will then state everything in this way to keep the development general.
        \end{remark}
        
\section{Late-term expansion --- inner solution} \label{sec:late_term}
    In this section, we aim to study the asymptotic behaviour of our solution, under the assumption that we carry out the asymptotic expansion up to a high order $N\gg 1$. In particular, as explained in \cite{Chapman}, to understand the leading-order asymptotics in this case, which allows us to obtain an approximation of the width of the homoclinic snaking, we must find two kinds of solutions: an \textit{inner} and an \textit{outer} solution. The inner solution refers to an approximation that is valid only sufficiently close to a singularity, and the outer solution lets us understand the behaviour of solutions away from it. Of course, as these solutions are intended to represent different approximations of the same function, they need to be matched as $X$ tends to the singularity.
    
    In particular, using the natural extensions of the symmetric multilinear vector functions we defined in \eqref{symmetric_linear_vector_functions}, let $M > 0$ be an integer and define $\mbf p = \left(p_1, p_2, \ldots \right)$, $\mbf q = \left(q_1, q_2, \ldots\right)$, $\mbf n = \left(n_1, \ldots, n_M\right)$, $\mbf r = \left(r_1, \ldots, r_M\right)$, $\mbf e_M = (1, \ldots, 1)$ (the vector with $M$ ones) and $\mbf e = (1, 2, \ldots, M)$. With this, assuming that $n$ is sufficiently large, and Taylor expanding \eqref{eqtoexpand}, we see that the $n$-th order equation is given by
    \begin{multline}
	    \mbf 0 = \sum_{\substack{p = 0 \\ \mbf p \cdot \mbf e = p}}^{n - 1} \, \sum_{\substack{q = 0 \\ \mbf q \cdot \mbf e = q}}^{n - 1 - p} C_{\mbf p, \mbf q} \, \left(\prod_{s \geq 1} a_s^{p_s}\right) \, \left(\prod_{s \geq 1} b_s^{q_s}\right) \, \jac \mbf f_{p, q}(\mbf 0) \, \mbf u^{[n - p - q]}
	    \\
	    + \sum_{\substack{p = 0 \\ \mbf p \cdot \mbf e = p}}^{n - 1} \, \sum_{\substack{q = 0 \\ \mbf q \cdot \mbf e = q}}^{n - 1 - p} \, \sum_{\substack{M = 2 \\ \mbf n \cdot \mbf e_M = n - p - q}}^{n - p - q} C_{\mbf p, \mbf q} \, \left(\prod_{s \geq 1} a_s^{p_s}\right) \, \left(\prod_{s \geq 1} b_s^{q_s}\right) \, \mbf F_{M, p, q}\left(\mbf u^{[n_1]}, \ldots , \mbf u^{[n_M]}\right)
	    \\
	    + k^2 \, \hat D \, \left(\frac{\partial^2 \mbf u^{[n]}}{\partial x^2} + 2 \, \frac{\partial^2 \mbf u^{[n - 2]}}{\partial x \partial X} + \frac{\partial^2 \mbf u^{[n - 4]}}{\partial X^2}\right), \label{late-term-expansion}
	\end{multline}
    where $C_{\mbf p, \mbf q}$ are the coefficients associated with the powers of $\mbf a$ and $\mbf b$ in the Taylor expansion of $\mbf f_{\mbf p, \mbf q}$, for each pair of vectors, $\mbf p$ and $\mbf q$, defined above. From the results we have obtained so far, we have
    \begin{align}
        \mbf u^{[n]} = \sum_{r = - n}^n \mbf W_{n, r} \, e^{i r x}, \label{general_ansatz_un}
    \end{align}
    where $n \geq 1$ is an integer, and $\mbf W_{n, r} = \mbf W_{n, r}(X)$ for every integers $n \geq 1, - n \leq r \leq n$. In particular, taking into account only the terms that are dominant at each mode as $X \to X_0$, we still have $\mbf W_{n, - r} = \ol{\mbf W_{n, r}}$ for each integers $n \geq 1, - n \leq r \leq n$, $r \neq 0$, and we note that
    \begin{align*}
        \mbf W_{1, 1} \, e^{ix} + c.c. &= A_1 \, e^{ix} \, \bs \phi_1^{[1]} + c.c.,
        \\
        \mbf W_{2, 0} &= 2 \, \abs{A_1}^2 \, \mbf W_0^{[2]},
        \\
        \mbf W_{2, 2} \, e^{2ix} + c.c. &= A_1^2 \, e^{2ix} \, \mbf W_2^{[2]} + c.c.,
        \\
        \mbf W_{3, 1} \, e^{ix} + c.c. &\sim \abs{A_1}^2 \, A_1 \, e^{ix} \, \mbf W_{1, 2}^{[3]} + i \, A_1' \, e^{ix} \, \mbf W_{1, 3}^{[3]}  + c.c.,
        \\
        \mbf W_{3, 3} \, e^{3ix} + c.c. &=  A_1^3 \, e^{3ix} \, \mbf W_3^{[3]} + c.c.,
        \\
        \mbf W_{4, 0} &\sim \abs{A_1}^4 \, \mbf W_{0, 2}^{[4]} + i \, \bar A_1 \, A_1' \, \mbf W_{0, 3}^{[4]} + c.c.,
        \\
        \mbf W_{4, 2} \, e^{2ix} + c.c. &\sim \abs{A_1}^2 \, A_1^2 \, e^{2ix} \, \mbf W_{2, 2}^{[4]} + i \, A_1 \, A_1' \, e^{2ix} \, \mbf W_{2, 3}^{[4]} + c.c.
    \end{align*}
    Therefore, when replacing \eqref{general_ansatz_un} into \eqref{late-term-expansion} we obtain, for $n \geq 1$ large and each $r \in [- n, n]$,
    \begin{multline}
	    \mbf 0 = \sum_{\substack{p = 0 \\ \mbf p \cdot \mbf e = p}}^{n - 1} \, \sum_{\substack{q = 0 \\ \mbf q \cdot \mbf e = q}}^{n - 1 - p} C_{\mbf p, \mbf q} \, \left(\prod_{s \geq 1} a_s^{p_s}\right) \, \left(\prod_{s \geq 1} b_s^{q_s}\right) \, \jac \mbf f_{p, q}(\mbf 0) \, \mbf W_{n - p - q, r}
	    \\
	    + \sum_{\substack{p = 0 \\ \mbf p \cdot \mbf e = p}}^{n - 1} \, \sum_{\substack{q = 0 \\ \mbf q \cdot \mbf e = q}}^{n - 1 - p} \, \sum_{\substack{M = 2 \\ \mbf n \cdot \mbf e_M = n - p - q}}^{n - p - q} \sum_{\substack{\mbf r \cdot \mbf e_M = r \\ \abs{r_\ell} \leq n_\ell}} C_{\mbf p, \mbf q} \, \left(\prod_{s \geq 1} a_s^{p_s}\right) \, \left(\prod_{s \geq 1} b_s^{q_s}\right) \, \mbf F_{M, p, q}\left(\mbf W_{n_1, r_1}, \ldots , \mbf W_{n_M, r_M}\right)
	    \\
	    + k^2 \, \hat D \, \left(- r^2 \, \mbf W_{n, r} + 2 \, i \, r \, \mbf W_{n - 2, r}' + \mbf W_{n - 4, r}''\right), \label{firstansatz}
	\end{multline}
 
    \subsection{Asymptotic expansion}
        Again, what we need to do now is to solve a linear equation order by order. We highlight that The following analysis is valid for any singularity, but we focus on $X_0$, as it is one of the singularities closest to the real axis for $X$. In particular, based on \cite{Chapman} and the information we have obtained so far, we use the ansatz
    	\begin{align}
    	    \mbf W_{n, r}(X) = \frac{\mbf V_{n, r}}{\left(X_0 - X\right)^{\frac{n}{2} - r \, \eta \, i}}, \label{ansatz-inner-sol}
    	\end{align}
        for our inner solution, which implies
        \begin{align*}
            \mbf W_{n - 2, r}' &= \frac{n - 2 - 2 \, r \, \eta \, i}{2} \, \frac{\mbf V_{n - 2, r}}{\left(X_0 - X\right)^{\frac{n}{2} - r \, \eta \, i}},
            \\
            \mbf W_{n - 4, r}'' &= \frac{(n - 2 - 2 \, r \, \eta \, i)(n - 4 - 2 \, r \, \eta \, i)}{4} \, \frac{\mbf V_{n - 4, r}}{\left(X_0-X\right)^{\frac{n}{2} - r \,  \eta \, i}}.
        \end{align*}
    	Therefore, when replacing \eqref{ansatz-inner-sol} into \eqref{firstansatz}, we obtain
        {\allowdisplaybreaks
        \begin{multline*}
    	    \mbf 0 = \sum_{\substack{p = 0 \\ \mbf p \cdot \mbf e = p}}^{n - 1} \, \sum_{\substack{q = 0 \\ \mbf q \cdot \mbf e = q}}^{n - 1 - p} C_{\mbf p, \mbf q} \, \left(\prod_{s \geq 1} a_s^{p_s}\right) \, \left(\prod_{s \geq 1} b_s^{q_s}\right) \, \jac \mbf f_{p, q}(\mbf 0) \, \frac{\mbf V_{n - p - q, r}}{\left(X_0 - X\right)^{\frac{n - p - q}{2} - r \, \eta \, i}}
    	    \\
    	    + \sum_{\substack{p = 0 \\ \mbf p \cdot \mbf e = p}}^{n - 1} \, \sum_{\substack{q = 0 \\ \mbf q \cdot \mbf e = q}}^{n - 1 - p} \, \sum_{\substack{M = 2 \\ \mbf n \cdot \mbf e_M = n - p - q}}^{n - p - q} \sum_{\substack{\mbf r \cdot \mbf e_M = r \\ \abs{r_\ell} \leq n_\ell}} C_{\mbf p, \mbf q} \, \left(\prod_{s \geq 1} a_s^{p_s}\right) \, \left(\prod_{s \geq 1} b_s^{q_s}\right)
             \\
             \times \mbf F_{M, p, q}\left(\frac{\mbf V_{n_1, r_1}}{\left(X_0 - X\right)^{\frac{n_1}{2} - r_1 \, \eta \, i}}, \ldots , \frac{\mbf V_{n_M, r_M}}{\left(X_0 - X\right)^{\frac{n_M}{2} - r_M \, \eta \, i}}\right)
    	    \\
    	    + k^2 \, \hat D \, \left(- r^2 \, \frac{\mbf V_{n, r}}{\left(X_0 - X\right)^{\frac{n}{2} - r \, \eta \, i}} + 2 \, i \, r \, \left(\frac{n - 2 - 2 \, r \, \eta \, i}{2} \, \frac{\mbf V_{n - 2, r}}{\left(X_0 - X\right)^{\frac{n}{2} - r \, \eta \, i}}\right)\right.
            \\
            \left. + \frac{(n - 2 - 2 \, r \, \eta \, i) \, (n - 4 - 2 \, r \, \eta \, i)}{4} \, \frac{\mbf V_{n - 4, r}}{\left(X_0-X\right)^{\frac{n}{2} - r \, \eta \, i}}\right),
    	\end{multline*}
        }
        which is equivalent to
        \begin{multline*}
    	    \mbf 0 = \sum_{\substack{p = 0 \\ \mbf p \cdot \mbf e = p}}^{n - 1} \, \sum_{\substack{q = 0 \\ \mbf q \cdot \mbf e = q}}^{n - 1 - p} C_{\mbf p, \mbf q} \, \left(\prod_{s \geq 1} a_s^{p_s}\right) \, \left(\prod_{s \geq 1} b_s^{q_s}\right) \, \jac \mbf f_{p, q}(\mbf 0) \, \frac{\mbf V_{n - p - q, r}}{\left(X_0 - X\right)^{\frac{- p - q}{2}}}
    	    \\
    	    + \sum_{\substack{p = 0 \\ \mbf p \cdot \mbf e = p}}^{n - 1} \, \sum_{\substack{q = 0 \\ \mbf q \cdot \mbf e = q}}^{n - 1 - p} \, \sum_{\substack{M = 2 \\ \mbf n \cdot \mbf e_M = n - p - q}}^{n - p - q} \sum_{\substack{\mbf r \cdot \mbf e_M = r \\ \abs{r_\ell} \leq n_\ell}} \frac{C_{\mbf p, \mbf q}}{\left(X_0 - X\right)^{\frac{- p - q}{2}}} \, \left(\prod_{s \geq 1} a_s^{p_s}\right) \, \left(\prod_{s \geq 1} b_s^{q_s}\right) \, \mbf F_{M, p, q}\left(\mbf V_{n_1, r_1}, \ldots , \mbf V_{n_M, r_M}\right)
    	    \\
    	    + k^2 \, \hat D \, \left(- r^2 \, \mbf V_{n, r} + 2 \, i \, r \, \left(\frac{n - 2 - 2 \, r \, \eta \, i}{2}\right) \, \mbf V_{n - 2, r}\right.
            \\
            \left. + \frac{(n - 2 - 2 \, r \, \eta \, i)(n - 4 - 2 \, r \, \eta \, i)}{4} \, \mbf V_{n - 4, r}\right),
    	\end{multline*}
        and we observe that the only terms that play a role in the inner solution are those with no parameters. Therefore, we can simplify our equation for the inner solution as
        \begin{multline*}
    	    \mbf 0 = \jac \mbf f_{0, 0}(\mbf 0) \, \mbf V_{n, r} + \sum_{\substack{M = 2 \\ \mbf n \cdot \mbf e_M = n}}^n \sum_{\substack{\mbf r \cdot \mbf e_M = r \\ \abs{r_\ell} \leq n_\ell}} \mbf F_{M, 0, 0}\left(\mbf V_{n_1, r_1}, \ldots , \mbf V_{n_M, r_M}\right)
    	    \\
    	    + k^2 \, \hat D \, \left(- r^2 \, \mbf V_{n, r} + 2 \, i \, r \, \left(\frac{n - 2 - 2 \, r \, \eta \, i}{2}\right) \, \mbf V_{n - 2, r} + \frac{(n - 2 - 2 \, r \, \eta \, i) \, (n - 4 - 2 \, r \, \eta \, i)}{4} \, \mbf V_{n - 4, r}\right),
    	\end{multline*}
        which is equivalent to
        \begin{equation}
            \begin{aligned}
                \mbf 0 &= \left(\jac \mbf f_{0, 0}(\mbf 0) - r^2 \, k^2 \, \hat D\right) \mbf V_{n, r} + \sum_{\substack{M = 2 \\ \mbf n \cdot \mbf e_M = n}}^n \sum_{\substack{\mbf r \cdot \mbf e_M = r \\ \abs{r_\ell} \leq n_\ell}} \mbf F_{M, 0, 0}\left(\mbf V_{n_1, r_1}, \ldots , \mbf V_{n_M, r_M}\right)
        	    \\
        	    &\quad + k^2 \, \hat D \, \left(2 \, i \, r \, \left(\frac{n - 2 - 2 \, r \, \eta \, i}{2}\right) \, \mbf V_{n - 2, r} + \frac{(n - 2 - 2 \, r \, \eta \, i) \, (n - 4 - 2 \, r \, \eta \, i)}{4} \, \mbf V_{n - 4, r}\right),
            \end{aligned} \label{DM}
        \end{equation}
        Here, note that if we change $r$ into $- r$ in the previous expression, we have
        \begin{multline*}
    	    \mbf 0 = \left(\jac \mbf f_{0, 0}(\mbf 0) - r^2 \, k^2 \, \hat D\right) \mbf V_{n, - r} + \sum_{\substack{M = 2 \\ \mbf n \cdot \mbf e_M = n}}^n \sum_{\substack{\mbf r \cdot \mbf e_M = r \\ \abs{r_\ell} \leq n_\ell}} \mbf F_{M, 0, 0}\left(\mbf V_{n_1, - r_1}, \ldots , \mbf V_{n_M, - r_M}\right)
    	    \\
    	    + k^2 \, \hat D \, \left(- 2 \, i \, r \, \left(\frac{n - 2 + 2 \, r \, \eta \, i}{2}\right) \, \mbf V_{n - 2, - r} + \frac{(n - 2 + 2 \, r \, \eta \, i)(n - 4 + 2 \, r \, \eta \, i)}{4} \, \mbf V_{n - 4, - r}\right).
    	\end{multline*}
        which gives us the conjugate of \eqref{DM} if, based on \eqref{mostwonderfulidea}, we consider
        \begin{align*}
            \mbf V_{n, - r} = (- 1)^n \, e^{- 2 \, r \, \pi \, \xi} \, \ol{\mbf V_{n, r}},
        \end{align*}
        for every integers $n \geq 1$ and $- n \leq r \leq n$.
        
        Now, in order to solve \eqref{DM}, we expand each vector $\mbf V_{n, r}$ as
        \begin{align}
            \mbf V_{n, r} = \kappa^{n - 1} \, \Gamma\left(\frac{n - 1}{2} + \gamma_r\right) \, \sum_{s\geq 0} \frac{\mbf v_r^{[s]}}{n^s}, \label{Vnr}
        \end{align}
        where $\kappa$ and $\gamma_r$ are complex variables yet to be determined.
        
        Now, noting that
        \begin{align*}
            \Gamma\left(\frac{n - 1}{2} + \gamma_r\right) = \left(\frac{n - 3}{2} + \gamma_r\right) \Gamma\left(\frac{n - 3}{2} + \gamma_r\right) = \left(\frac{n - 3}{2} + \gamma_r\right) \left(\frac{n - 5}{2} + \gamma_r\right) \Gamma\left(\frac{n - 5}{2} + \gamma_r\right),
        \end{align*}
        and replacing \eqref{Vnr} into \eqref{DM}, we obtain
        \begin{multline*}
    	    \mbf 0 = \left(\jac \mbf f_{0, 0}(\mbf 0) - r^2 \, k^2 \, \hat D\right) \kappa^{n - 1} \, \Gamma\left(\frac{n - 1}{2} + \gamma_r\right) \, \sum_{s\geq 0} \frac{\mbf v_r^{[s]}}{n^s}
            \\
            + \sum_{\substack{M = 2 \\ \mbf n \cdot \mbf e_M = n}}^n \sum_{\substack{\mbf r \cdot \mbf e_M = r \\ \abs{r_\ell} \leq n_\ell}} \mbf F_{M, 0, 0}\left(\mbf V_{n_1, r_1}, \ldots , \mbf V_{n_M, r_M}\right)
    	    \\
    	    + k^2 \, \hat D \, \left(2 \, i \, r \, \kappa^{n - 3} \, \left(\frac{n - 2 - 2 \, r \, \eta \, i}{2}\right) \, \Gamma\left(\frac{n - 3}{2} + \gamma_r\right) \, \sum_{s\geq 0} \frac{\mbf v_r^{[s]}}{(n - 2)^s}\right.
            \\
            \left. + \kappa^{n - 5} \, \frac{(n - 2 - 2 \, r \, \eta \, i)(n - 4 - 2 \, r \, \eta \, i)}{4} \, \Gamma\left(\frac{n - 5}{2} + \gamma_r\right) \, \sum_{s\geq 0} \frac{\mbf v_r^{[s]}}{(n - 4)^s}\right),
    	\end{multline*}
        which is equivalent to
        \begin{multline*}
    	    \mbf 0 = \left(\jac \mbf f_{0, 0}(\mbf 0) - r^2 \, k^2 \, \hat D\right) \kappa^4 \, \sum_{s\geq 0} \frac{\mbf v_r^{[s]}}{n^s}
            \\
            + \kappa^{5 - n} \, \Gamma\left(\frac{n - 1}{2} + \gamma_r\right)^{-1} \, \sum_{\substack{M = 2 \\ \mbf n \cdot \mbf e_M = n}}^n \sum_{\substack{\mbf r \cdot \mbf e_M = r \\ \abs{r_\ell} \leq n_\ell}} \mbf F_{M, 0, 0}\left(\mbf V_{n_1, r_1}, \ldots , \mbf V_{n_M, r_M}\right)
    	    \\
    	    + k^2 \, \hat D \, \left(2 \, i \, r \, \kappa^2 \left(\frac{n - 2 - 2 \, r \, \eta \, i}{n - 3 + 2 \, \gamma_r}\right) \, \sum_{s\geq 0} \frac{\mbf v_r^{[s]}}{(n - 2)^s}\right.
            \\
            \left. + \frac{(n - 2 - 2 \, r \, \eta \, i)(n - 4 - 2 \, r \, \eta \, i)}{\left(n - 3 + 2 \, \gamma_r\right) \left(n - 5 + 2 \, \gamma_r\right)} \, \sum_{s\geq 0} \frac{\mbf v_r^{[s]}}{(n - 4)^s}\right),
    	\end{multline*}
        Furthermore, by expanding the previous expression in powers of $1/n$, assuming that $n$ is large, we have
        \begin{multline}
            \mbf 0 = \left(\jac \mbf f_{0, 0}(\mbf 0) - r^2 \, k^2 \, \hat D\right) \, \kappa^4 \, \sum_{s \geq 0} \frac{\mbf v_r^{[s]}}{n^s}
            \\
            + \kappa^{5 - n} \, \Gamma\left(\frac{n - 1}{2} + \gamma_r\right)^{-1} \, \sum_{\substack{M = 2 \\ \mbf n \cdot \mbf e_M = n}}^n \sum_{\substack{\mbf r \cdot \mbf e_M = r \\ \abs{r_\ell} \leq n_\ell}} \mbf F_{M, 0, 0}\left(\mbf V_{n_1, r_1}, \ldots , \mbf V_{n_M, r_M}\right)
    	    \\
            + k^2 \, \hat D \, \left(2 \, i \, r \, \kappa^2 \left(1 + \frac{- 2 \, \gamma_r - 2 \, r \, \eta \, i + 1}{n} + \frac{\left(2 \, \gamma_r - 3\right) \left(2 \, \gamma_r + 2 \, r \, \eta \, i - 1\right)}{n^2}\right) \left(\mbf v_r^{[0]} + \left(\frac{1}{n} + \frac{2}{n^2}\right) \, \mbf v_r^{[1]} + \frac{\mbf v_r^{[2]}}{n^2}\right)\right.
            \\
            \left. + \left(1 + \frac{- 4 \, \gamma_r - 4 \, r \, \eta \, i + 2}{n} - \frac{\left(- 2 \, i \, \gamma_r + 2 \, \eta \, r + i\right) \left(- 6 \, i \, \gamma_r + 2 \, r \, \eta + 9 \, i\right)}{n^2}\right) \, \left(\mbf v_r^{[0]} + \left(\frac{1}{n} + \frac{4}{n^2}\right) \, \mbf v_r^{[1]} + \frac{\mbf v_r^{[2]}}{n^2}\right)\right)
            \\
            + \mathcal O\left(\frac{1}{n^3}\right). \label{nicetoexpand}
        \end{multline}
        Now, as $n$ is large, then powers of $1/n$ can be considered small numbers, which lets us follow the usual approach to study this equation following an asymptotic approach.

    \subsection{Solution at $\mathcal O(1)$} \label{sub:kappa_explanation}
        At this order, \eqref{nicetoexpand} becomes
    	\begin{align}
    	    M(\kappa, r) \, \mbf v_r^{[0]} = - \mbf{NL}(1), \label{kernel}
    	\end{align}
    	where
    	\begin{align*}
    	    M(\kappa, r) = \kappa^4 \left(\jac \mbf f_{0, 0}(\mbf 0) - r^2 \, k^2 \, \hat D\right) + \kappa^2 \left(2 \, i \, r \, k^2 \, \hat D\right) + k^2 \, \hat D,
    	\end{align*}
        and $\mbf{NL}(1)$, are terms of order $\mathcal O(1)$ associated with the nonlinearities of system \eqref{general_equation}, which depend on $\gamma_r$. We highlight that we will deal with this problem using mode dominance as follows. We define $\gamma$ as the value of $\gamma_r$ with the highest real part as $r$ takes integer values between $- \infty$ and $\infty$. With this, we can note that $\mbf{NL}(1) = \mbf 0$ for the values of $r$ such that $\gamma_r = \gamma$, assuming they are finite. To see why this is not an assumption we need to make, consider the following result:
        \begin{lemma}
            The implicit function $\kappa = \abs{\kappa(r)}$ determined by $\det(M(\kappa, r)) = 0$ is even and bounded away from $r = \pm 1$. Furthermore, it is finite for $r = \pm 1$ and $\ds{\lim_{\abs{r} \to \infty} \kappa(r) = 0}$.
        \end{lemma}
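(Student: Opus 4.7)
The strategy is to exploit a hidden factorisation of $M(\kappa, r)$ that reduces the defining relation $\det M(\kappa, r) = 0$ to the dispersion polynomial of the reaction part of \eqref{general_equation}, whose zero set is already constrained by \eqref{no-DT} and the codimension-two Turing hypothesis.

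First I would note the algebraic identity $(1 + i r \kappa^2)^2 = 1 + 2 i r \kappa^2 - r^2 \kappa^4$, which rewrites $M$ as
\begin{align*}
M(\kappa, r) = \kappa^4\, \jac \mbf f_{0,0}(\mbf 0) + \bigl(1 + i r \kappa^2\bigr)^2\, k^2 \hat D.
\end{align*}
Hence $\det M(\kappa, r) = 0$ is equivalent to $\det\bigl(\jac \mbf f_{0,0}(\mbf 0) - \nu\, k^2 \hat D\bigr) = 0$ with $\nu = -(1 + i r \kappa^2)^2/\kappa^4$. By the Turing assumption together with \eqref{no-DT}, the only real root of this polynomial is $\nu = 1$, and all remaining roots $\nu_j$ are necessarily non-real. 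Solving the quadratic $(1+ir\kappa^2)^2 = -\nu_j \kappa^4$ then produces the explicit branch set
\begin{align*}
\kappa^2(r) \in \bigl\{\, i/(r \pm \sqrt{\nu_j})\,\bigr\}_j.
\end{align*}

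Evenness of $|\kappa(r)|$ follows from the reality of $\jac \mbf f_{0,0}(\mbf 0)$ and $\hat D$: direct conjugation yields $\overline{M(\kappa, r)} = M(\bar\kappa, -r)$, so the full set of roots at $-r$ is the complex conjugate of the full set at $r$, and moduli are preserved. The branch selected by the ansatz \eqref{Vnr} is the mode-dominant one, i.e., that minimising $|\kappa^2|$; for the Turing pair coming from $\nu = 1$ this gives
\begin{align*}
|\kappa(r)|^2 = \min\bigl(1/|r - 1|,\; 1/|r + 1|\bigr) = \frac{1}{|r| + 1},
\end{align*}
which is manifestly even in $r$, equals $1/2$ at $r = \pm 1$, and decays to $0$ as $|r| \to \infty$. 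For the non-Turing branches, $\sqrt{\nu_j} \notin \mathbb R$ forces $|r \pm \sqrt{\nu_j}| \geq |\Im\sqrt{\nu_j}| > 0$ uniformly in real $r$, so those branches remain finite throughout and likewise vanish at infinity, which rules them out as dominant at large $|r|$ but shows they never compete dangerously at $r = \pm 1$ either.

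\textbf{Main obstacle.} The algebra becomes essentially mechanical once the factorisation $(1+ir\kappa^2)^2$ is spotted; the principal difficulty is conceptual, namely justifying that the ``mode-dominant'' branch yields a globally well-defined, single-valued function of $r$ with the claimed properties, particularly near $r = \pm 1$ where the naive Turing branch $\kappa^2 = i/(r \mp 1)$ diverges. The symmetry $r \mapsto -r$ combined with the minimum-modulus selection rule resolves this uniquely: any candidate branch crossing at $r_0 > 0$ is mirrored at $-r_0$, so retaining the smallest-$|\kappa|$ branch throughout continues the function continuously across such crossings and produces a single even, bounded $|\kappa(r)|$ satisfying $\lim_{|r|\to\infty}\kappa(r) = 0$.
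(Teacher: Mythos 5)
Your core argument is correct and in fact more explicit than the paper's. The paper proves evenness by the same conjugation identity $\ol{M(\kappa, r)} = M(\bar\kappa, -r)$, but then handles finiteness and the limit abstractly: it observes that $\det(M(\kappa, r))$ is a polynomial in $\kappa$ whose degree drops at $r = \pm 1$ (so its roots remain finite there by the fundamental theorem of algebra), and reads off $\kappa \to 0$ from the large-$\abs{r}$ asymptotics of the determinant. Your factorisation $M(\kappa, r) = \kappa^4\, \jac \mbf f_{0,0}(\mbf 0) + (1 + i r \kappa^2)^2\, k^2 \hat D$ instead produces the complete root set $\kappa^2 = i/(r \pm \sqrt{\nu_j})$ explicitly, from which evenness, boundedness away from $r = \pm 1$, finiteness at $r = \pm 1$ (where the offending Turing root $i/(r \mp 1)$ escapes to infinity and simply leaves the root set), and decay at infinity are all immediate and hold branch by branch. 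That is a genuine gain in transparency, and it also explains \emph{why} the degree of the determinant drops at $r = \pm 1$.

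Two corrections. First, \eqref{no-DT} only excludes real wavenumbers, i.e.\ it guarantees that $\nu = 1$ is the only root of $\det\bigl(\jac \mbf f_{0,0}(\mbf 0) - \nu\, k^2 \hat D\bigr)$ in $[0, \infty)$; negative real roots $\nu_j$ are not excluded, so "necessarily non-real" overstates the hypothesis. Your inequality $\abs{r \pm \sqrt{\nu_j}} \geq \abs{\Im\sqrt{\nu_j}} > 0$ survives unchanged (a negative real $\nu_j$ still has a purely imaginary square root), so only the wording needs fixing. Second, and more substantively, your branch-selection rule is backwards: since $\mbf V_{n, r} \propto \kappa^{n - 1}$ in \eqref{Vnr}, the dominant contribution as $n \to \infty$ comes from the root of \emph{largest} modulus, not the smallest. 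The relevant function is therefore $\max\bigl(1/\abs{r - 1},\, 1/\abs{r + 1}\bigr) = 1/\abs{\abs{r} - 1}$ for $\abs{r} \neq 1$, which blows up as $r \to \pm 1$ but is finite at $r = \pm 1$ precisely because the divergent root leaves the spectrum there; this is why the lemma distinguishes "bounded away from $r = \pm 1$" from "finite at $r = \pm 1$", a distinction your minimising branch $1/(\abs{r} + 1)$ erases. Your rule would also single out $r = 0$ as the unique dominant mode, contradicting the subsequent identification of $r \in \{0, \pm 2\}$ with $\kappa^2 = \pm i$. The lemma's four assertions hold for every branch, so your proof of the statement itself stands, but the selection paragraph should be corrected.
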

        \begin{proof}
            Note that
            \begin{align*}
                \ol{M(\kappa, r)} = M(\bar \kappa, - r),
            \end{align*}
            which proves that $\kappa = \abs{\kappa(r)}$ is an even function.

            Now, note that $\det(M(\kappa, r))$ is a polynomial of degree $4 \, n$ in $\kappa$ for all $\abs{r} \neq 1$. On the other hand, when $\abs{r} = 1$, the degree of $\det(M(\kappa, r))$ decreases to $4 \, n - 4$. However, due to the fundamental theorem of algebra, these polynomials will have well-defined finite complex roots for all values of $r$.
    
            On the other hand, when $\abs{r}$ gets large,
            \begin{align*}
                \det(M(\kappa, r)) \sim \det\left(r \, k^2 \, \kappa^2 \, \left(- r \, \kappa^2 + 2 \, i\right) \, \hat D\right).
            \end{align*}
            which implies that $\kappa$ must tend to zero as $\abs{r}\to \infty$ in order to fulfill the equation $\det(M(\kappa, r)) = 0$.
        \end{proof}        
    	Now, to solve \eqref{kernel}, we want $\mbf v_r^{[0]}$ to be non-zero for the values of $r$ such that $\gamma_r = \gamma$. To determine those values of $r$, we need to find the values of $\kappa$ with the largest amplitude such that $\det(M(\kappa, r)) = 0$, together with the values of $r$ that give rise to those values of $\kappa$, as these will generate the leading-order solution.

        Observe that
        \begin{align*}
            M(\kappa, r) &= \kappa^4 \left(\jac \mbf f_{0, 0}(\mbf 0) + \left(- r^2 + \frac{2 \, i \, r}{\kappa^2} + \frac{1}{\kappa^4}\right) \, k^2 \, \hat D\right).
        \end{align*}
        Next, note that there might be complex roots in $\ell$ to the equation $\det\left(\jac \mbf f_{0, 0}(\mbf 0) - \ell^2 \, \hat D\right) = 0$. Nevertheless, as we are focusing on the patterns that arise at a codimension-two Turing bifurcation point, then we only focus on the case $\ell \in \mathbb R$, which implies that $\ell = k$ is the only solution. Thus, $\det(M(\kappa, r)) = 0$ if
        \begin{align}
            - r^2 + \frac{2 \, i \, r}{\kappa^2} + \frac{1}{\kappa^4} = - 1 \qquad \iff \qquad \kappa^2 = \frac{i}{\left(r \pm 1\right)}, \label{kappa}
        \end{align}
        for $r \neq \pm 1$, whilst $\kappa^2 = \pm i/2$ when $r = \pm 1$. This implies that the dominant modes for $\kappa$ are attained when $r \in \{0, \pm 2\}$. There is one case that needs extra care, though. That is the case in which system \eqref{general_equation} is invariant under the change $\mbf u \to - \mbf u$. In such a case, the modes $r = 0, \pm 2$ lead to trivial solutions, making the coming analysis unfeasible \cite{Dean}. We will consider the general case for now and will develop the changes needed in the symmetric case in Appendix \ref{sec:symmetric_case}.
        
        Now, we define $\kappa_\pm^2 = \pm i$, and take the following values of $\gamma$ in order of dominance, taking into account that an increase of 1 in the value of $n$ increases the power of $X_0 - X$ in the denominator of \eqref{ansatz-inner-sol} by 1/2:
        \begin{align}
            \gamma_{0, \pm 2} = \gamma, \qquad \gamma_{\pm 1} = \gamma - \frac{1}{2}, \qquad \text{and} \qquad \gamma_{\pm p} = \gamma - \frac{p - 2}{2}, \quad \text{for } p \geq 3. \label{gammavals}
        \end{align}
        \subsection{The case $\kappa^2 = \kappa_+^2$}
            In this case, we have
            \begin{align*}
                M\left(\pm \sqrt{i}, r\right) &= - \left(\jac \mbf f_{0, 0}(\mbf 0) - (r - 1)^2 \, k^2 \, \hat D\right).
            \end{align*}
            In particular,
            \begin{align*}
                M\left(\pm \sqrt{i}, 0\right) &= M\left(\pm \sqrt{i}, 2\right) = - \left(\jac \mbf f_{0, 0}(\mbf 0) - k^2 \, \hat D\right),
            \end{align*}
            which implies that $\mbf v_0^{[0]} = c_0^{[0]} \, \bs \phi_1^{[1]}$, and $\mbf v_2^{[0]} = c_2^{[0]} \, \bs \phi_1^{[1]}$, where $c_0^{[0]}, c_2^{[0]} \in \mathbb R$, and
            \begin{align*}
                \vdots
                \\
                \mbf v_{- 4}^{[0]} & = \mbf 0,
                \\
                \mbf v_{- 4}^{[0]} & = \mbf 0,
                \\
                \mbf v_{- 2}^{[0]} & = \mbf 0,
                \\
                \mbf v_{- 1}^{[0]} &= - 2 \, \kappa^3 \, C_{\bar A} \, c_0^{[0]} \, \mbf W_2^{[2]},
                \\
                \mbf v_1^{[0]} &= - 2 \, \kappa^3 \, \left(C_{\bar A} \, c_2^{[0]} + C_A \, c_0^{[0]}\right) \, \mbf W_0^{[2]},
                \\
                \mbf v_3^{[0]} &= - 2 \, \kappa^3 \, C_A \, c_2^{[0]} \, \mbf W_2^{[2]},
                \\
                \mbf v_4^{[0]} &= - 3 \, i \, C_A^2 \, c_2^{[0]} \, \mbf W_3^{[3]},
                \\
                \vdots
            \end{align*}
            With this, we continue to the following order.
            
            \paragraph{Order $\mathcal O(1/n)$.} Following the same procedure at this order, we obtain
            \begin{align*}
                \vdots
                \\
                \mbf v_{- 2}^{[1]} &= - 6 \, i \, \mbf W_3^{[3]} \, c_0^{[0]} \, C_{\bar A}^2,
                \\
                \mbf v_{- 1}^{[1]} &= - 2 \, \kappa \, C_{\bar A} \, \left(C_{\bar A}^2 \, c_2^{[0]} \, \mbf W_{2, 2}^{[4]} + c_0^{[0]} \, \left(3 \, C_A \, C_{\bar A} \, \mbf W_{2, 2}^{[4]} + (\eta + i \, (\gamma - 1)) \, \mbf W_{2, 3}^{[4]}\right)\right),
                \\
                \mbf v_0^{[1]} &= - \left(c_0^{[0]} \, \left((1 - 2 \, \gamma) \, \mbf W_{1, 3}^{[3]} + 4 \, i \, C_A \, C_{\bar A} \, \mbf W_{1, 2}^{[3]}\right) + 2 \, i \, C_{\bar A}^2 \, c_2^{[0]} \, \mbf W_{1, 2}^{[3]}\right),
                \\
                \mbf v_1^{[1]} &= - 2 \, \kappa \, \left(C_A \, c_0^{[0]} \, \left(4 \, C_A \, C_{\bar A} \, \mbf W_{0, 2}^{[4]} + (\eta + \gamma \, i) \, \mbf W_{0, 3}^{[4]}\right) + C_{\bar A} \, c_2^{[0]} \, \left(4 \, C_A \, C_{\bar A} \, \mbf W_{0, 2}^{[4]} + (3 \, \eta - \gamma \, i) \, \mbf W_{0, 3}^{[4]}\right)\right),
                \\
                \mbf v_2^{[1]} &= - i \, \left(2 \, C_A \, \left(2 \, c_2^{[0]} \, C_{\bar A} + C_A \, c_0^{[0]}\right) \, \mbf W_{1, 2}^{[3]} + c_2^{[0]} \, (- 2 \, \gamma \, i + 4 \, \eta + i) \, \mbf W_{1, 3}^{[3]}\right),
                \\
                \mbf v_3^{[1]} &= - 2 \, \kappa \, C_A \, \left(C_A^2 \, c_0^{[0]} \, \mbf W_{2, 2}^{[4]} + c_2^{[0]} \, \left(3 \, C_A \, C_{\bar A} \, \mbf W_{2, 2}^{[4]} + (- \gamma \, i + 3 \, \eta + i) \, \mbf W_{2, 3}^{[4]}\right)\right),
                \\
                \vdots
            \end{align*}
            With this, we are ready to determine a solvability condition at the following order.
            
            \paragraph{Order $\mathcal O(1/n^2)$} Using the information we obtained at previous orders, at this order we obtain two solvability conditions given by
            \begin{multline*}
                S_{c, 0} = - \left((4 \, \gamma \, (\gamma - 2) + 3) \, \alpha_1 + 4 \, C_A \, C_{\bar A} \, \left((\eta + (\gamma - 1) \, i) \, \alpha_3 - (2 \, \eta + i) \, \alpha_4\right) + 12 \, C_A^2 \, C_{\bar A}^2 \, \alpha_7\right) \, c_0^{[0]}
                \\
                + 2 \, C_{\bar A}^2 \left((- 2 \, \eta + i) \, \alpha_3 + (4 \, \eta + i \, (1 - 2 \, \gamma)) \, \alpha_4 - 4 \, C_A \, C_{\bar A} \, \alpha_7\right) \, c_2^{[0]} = 0,
            \end{multline*}
            and
            \begin{multline*}
                S_{0, 2} = - 2 \, C_A^2 \, \left((2 \, \eta + i) \, \alpha_3 + i \, (1 - 2 \, \gamma) \, \alpha_4 + 4 \, C_A \, C_{\bar A} \, \alpha_7\right) \, c_0^{[0]}
                \\
                - \left((2 \, \gamma - 3 + 4 \, \eta \, i) (2 \, \gamma - 1 + 4 \, \eta \, i) \, \alpha_1\right.
                \\
                \left. + 4 \, C_A \, C_{\bar A} \left((3 \, \eta + i \, (1 - \gamma)) \, \alpha_3 + (- 2 \, \eta + i) \, \alpha_4\right) + 12 \, C_A^2 \, C_{\bar A}^2 \, \alpha_7\right) \, c_2^{[0]} = 0.
            \end{multline*}
            Note that this is a linear system of equations for $c_0^{[0]}$ and $c_2^{[0]}$, which has nontrivial solutions when $\beta_3^2 = 4 \, \beta_1 \, \beta_5$ if and only if
            \begin{align*}
                \gamma \in \{- 1 - \eta \, i, - \eta \, i, 2 - \eta \, i, 3 - \eta \, i\}.
            \end{align*}
            Therefore, as $\gamma$ was defined to be the value of $\gamma_r$ with the highest real part, then we take $\gamma = 3 - \eta \, i$, which yields
            \begin{align}
                c_2^{[0]} = \frac{h_1}{k_1} \, c_0^{[0]}, \label{h1-k1ratio}
            \end{align}
            where
            \begin{align*}
                h_1 &= (4 \, \gamma \, (\gamma - 2) + 3) \, \alpha_1 + 4 \, C_A \, C_{\bar A} \, \left((\eta + (\gamma - 1) \, i) \, \alpha_3 - (2 \, \eta + i) \, \alpha_4\right) + 12 \, C_A^2 \, C_{\bar A}^2 \, \alpha_7,
                \\
                k_1 &= 2 \, C_{\bar A}^2 \left((- 2 \, \eta + i) \, \alpha_3 + (4 \, \eta + i \, (1 - 2 \, \gamma)) \, \alpha_4 - 4 \, C_A \, C_{\bar A} \, \alpha_7\right).
            \end{align*}
            
        \subsection{The case $\kappa^2 = \kappa_-^2$}
            In this case, we have
            \begin{align*}
                M\left(\pm \sqrt{- i}, r\right) = - \left(\jac \mbf f_{0, 0}(\mbf 0) - (r + 1)^2 \, k^2 \, \hat D\right).
            \end{align*}
            Therefore, $\mbf v_{- 2}^{[0]} = c_{- 2}^{[0]} \, \bs \phi_1^{[1]}$, and $\mbf v_0^{[0]} = c_0^{[0]} \, \bs \phi_1^{[1]}$, where $c_{- 2}^{[0]}, c_0^{[0]} \in \mathbb R$, and
            \begin{align*}
                \vdots
                \\
                \mbf v_{- 4}^{[0]} &= 3 \, i \, C_{\bar A}^2 \, c_{- 2}^{[0]} \, \mbf W_3^{[3]},
                \\
                \mbf v_{- 3}^{[0]} &= - 2 \, \kappa^3 \, C_{\bar A} \, c_{- 2}^{[0]} \, \mbf W_2^{[2]},
                \\
                \mbf v_{- 1}^{[0]} &= - 2 \, \kappa^3 \, \mbf W_0^{[2]} \, \left(C_{\bar A} \, c_0^{[0]} + C_A \, c_{- 2}^{[0]}\right),
                \\
                \mbf v_1^{[0]} &= - 2 \, \kappa^3 \, C_A \, c_0^{[0]} \, \mbf W_2^{[2]},
                \\
                \mbf v_2^{[0]} &= \mbf 0,
                \\
                \mbf v_3^{[0]} &= \mbf 0,
                \\
                \mbf v_4^{[0]} &= \mbf 0,
                \\
                \vdots
            \end{align*}
            \paragraph{Order $\mathcal O(1/n)$} At this order, we obtain
            \begin{align*}
                \vdots
                \\
                \mbf v_{- 3}^{[1]} &= - 2 \, \kappa \, C_{\bar A} \, \left(C_{\bar A}^2 \, c_0^{[0]} \, \mbf W_{2, 2}^{[4]} + c_{- 2}^{[0]} \, \left(3 \, C_A \, C_{\bar A} \, \mbf W_{2, 2}^{[4]} + (3 \, \eta + i \, (\gamma - 1)) \, \mbf W_{2, 3}^{[4]}\right)\right),
                \\
                \mbf v_{- 2}^{[1]} &= - \left(c_{- 2}^{[0]} \, \left((2 \, \gamma - 1 - 4 \, \eta \, i) \, \mbf W_{1, 3}^{[3]} - 4 \, i \, C_A \, C_{\bar A} \, \mbf W_{1, 2}^{[3]}\right) - 2 \, i \, C_{\bar A}^2 \, c_0^{[0]} \, \mbf W_{1, 2}^{[3]}\right),
                \\
                \mbf v_{- 1}^{[1]} &= - 2 \, \kappa \, \left(C_A \, c_{- 2}^{[0]} \, \left(4 \, C_A \, C_{\bar A} \, \mbf W_{0, 2}^{[4]} + (3 \, \eta + \gamma \, i) \, \mbf W_{0, 3}^{[4]}\right) + c_0^{[0]} \, C_{\bar A} \, \left(4 \, C_A \, C_{\bar A} \, \mbf W_{0, 2}^{[4]} + (\eta - \gamma \, i) \, \mbf W_{0, 3}^{[4]}\right)\right),
                \\
                \mbf v_0^{[1]} &= - \left((1 - 2 \, \gamma) \, c_0^{[0]} \, \mbf W_{1, 3}^{[3]} - 2 \, i \, C_A \, \left(2 \, C_{\bar A} \, c_0^{[0]} + C_A \, c_{- 2}^{[0]}\right) \, \mbf W_{1, 2}^{[3]}\right),
                \\
                \mbf v_1^{[1]} &= - 2 \, C_A \, \kappa \, \left(C_A^2 \, c_{- 2}^{[0]} \, \mbf W_{2, 2}^{[4]} + c_0^{[0]} \, \left(3 \, C_A \, \mbf W_{2, 2}^{[4]} \, C_{\bar A} + (\eta + (1 - \gamma) \, i) \, \mbf W_{2, 3}^{[4]}\right)\right),
                \\
                \mbf v_2^{[1]} &= 6 \, i \, C_A^2 \, c_0^{[0]} \, \mbf W_3^{[3]},
                \\
                \vdots
            \end{align*}
            which implies that the solvability conditions at order $\mathcal O\left(1/n^2\right)$ are given by
            \begin{multline*}
                S_{c , - 2} = \left(- (2 \, \gamma - 3 - 4 \, i \, \eta) (2 \, \gamma - 1 - 4 \, i \, \eta) \, \alpha_1\right.
                \\
                \left. + 4 \, C_A \, C_{\bar A} \, \left(- i \, (\gamma - 3 \, \eta \, i - 1) \, \alpha_3 + (2 \, \eta + i) \, \alpha_4\right) - 12 \, C_A^2 \, C_{\bar A}^2 \, \alpha_7\right) \, c_{- 2}^{[0]}
                \\
                - 2 \, C_{\bar A}^2 \left((2 \, \eta - i) \, \alpha_3 + i \, (2 \, \gamma - 1) \, \alpha_4 + 4 \, C_A \, C_{\bar A} \, \alpha_7\right) \, c_0^{[0]} = 0,
            \end{multline*}
            and
            \begin{multline*}
                S_{c, 0} = - 2 \, C_A^2 \, \left((2 \, \eta + i) \, \alpha_3 + (- 4 \, \eta + i \, (1 - 2 \, \gamma)) \, \alpha_4 + 4 \, C_A \, C_{\bar A} \, \alpha_7\right) \, c_{- 2}^{[0]}
                \\
                - \left((4 \, \gamma \, (\gamma - 2) + 3) \, \alpha_1 + 4 \, C_A \, C_{\bar A} \, \left((\eta - i \, (\gamma - 1)) \, \alpha_3 - (2 \, \eta - i) \, \alpha_4\right) + 12 \, C_A^2 \, C_{\bar A}^2 \, \alpha_7\right) \, c_0^{[0]} = 0.
            \end{multline*}
            As before, it can be proven that this system of equations has nontrivial solutions when $\beta_3^2 = 4 \, \beta_1 \, \beta_5$ if and only if
            \begin{align*}
                \gamma \in \{- 1 + \eta \, i, \eta \, i, 2 + \eta \, i, 3 + \eta \, i\}.
            \end{align*}
            Furthermore, when $\gamma$ takes one of these values, we have
            \begin{align*}
                c_{- 2}^{[0]} = \frac{\bar h_1}{\bar k_1} \, c_0^{[0]}.
            \end{align*}
            Moreover, from \eqref{Vnr}, we conclude that $c_0^{[0]}$ and $c_{\pm 2}^{[0]}$ are key terms of our inner solution, which can be obtained via the following limit:
            \begin{align}
                c_r^{[0]} &= \frac{1}{\left \langle \bs \phi_1^{[1]}, \bs \phi_1^{[1]}\right \rangle} \, \lim_{n \to \infty} \frac{\kappa^{- (n - 1)} \, \left \langle \mbf V_{n, r}, \bs \phi_1^{[1]} \right \rangle}{\Gamma\left(\frac{n - 1}{2} + \gamma\right)}, \label{c0-def}
            \end{align}
            for $r = - 2, 0, 2$ and the corresponding value of $\kappa$, provided that these limits exist.
            
        \subsection{Combining eigenvectors}
            Now, we note that the solution comprising the dominant values of $\kappa$ is given by
            {\allowdisplaybreaks
            \begin{align}
                \mbf u^{[n]} &= \sum_{r = - n}^n \mbf W_{n, r}(X) \, e^{irx} = \sum_{r = - n}^n \frac{\mbf V_{n, r}}{\left(X_0 - X\right)^{\frac{n}{2} - r \, \eta \, i}} \, e^{irx} \notag
                \\
                &= \sum_{r = - n}^n \frac{\kappa^{n - 1} \, \Gamma\left(\frac{n - 1}{2} + \gamma_r\right)}{\left(X_0 - X\right)^{\frac{n}{2} - r \, \eta \, i}} \, \sum_{s \geq 0} \frac{\mbf v_r^{[s]}}{n^s} \, e^{irx} \notag
                \\
                & \underset{\substack{n \to \infty \\[0.7ex] X \to X_0}}{\sim} \sum_{r = 0, 2}\left(\left(\sqrt{i}\right)^{n - 1} \, \frac{\Gamma\left(\frac{n - 1}{2} + \gamma\right)}{\left(X_0 - X\right)^{\frac{n}{2} - r \, \eta \, i}} \, \mbf v_r^{[0]} \, e^{irx} + \left(- \sqrt{i}\right)^{n - 1} \, \frac{\Gamma\left(\frac{n - 1}{2} + \gamma\right)}{\left(X_0 - X\right)^{\frac{n}{2} - r \, \eta \, i}} \, \mbf v_r^{[0]} \, e^{irx}\right) \notag
                \\
                & \qquad + \sum_{r = - 2, 0}\left(\left(\sqrt{- i}\right)^{n - 1} \, \frac{\Gamma\left(\frac{n - 1}{2} + \gamma\right)}{\left(X_0 - X\right)^{\frac{n}{2} - r \, \eta \, i}} \, \mbf v_r^{[0]} \, e^{irx} + \left(- \sqrt{- i}\right)^{n - 1} \, \frac{\Gamma\left(\frac{n - 1}{2} + \gamma\right)}{\left(X_0 - X\right)^{\frac{n}{2} - r \, \eta \, i}} \, \mbf v_r^{[0]} \, e^{irx}\right) \notag
                \\
                &= \frac{\Gamma\left(\frac{n - 1}{2} + \gamma\right)}{\left(X_0 - X\right)^{\frac{n}{2}}} \, e^{i (n - 1) \pi/4} \, \left(\lambda_1 + (- 1)^n \, \lambda_2\right) \, \left(1 + \frac{h_1}{k_1} \, \frac{e^{2ix}}{\left(X_0 - X\right)^{- 2 \, \eta \, i}}\right) \, \bs \phi_1^{[1]} \notag
                \\
                & \qquad + \frac{\Gamma\left(\frac{n - 1}{2} + \bar \gamma\right)}{\left(X_0 - X\right)^{\frac{n}{2}}} \, e^{- i (n - 1) \pi/4} \, \left(\lambda_3 + (- 1)^n \, \lambda_4\right) \, \left(1 + \frac{\bar h_1}{\bar k_1} \, \frac{e^{- 2ix}}{\left(X_0 - X\right)^{2 \, \eta \, i}}\right) \, \bs \phi_1^{[1]}, \label{final-inner}
            \end{align}
            }
            where $\lambda_i$ correspond to the corresponding coefficients $c_r^{[0]}$ we found before for the corresponding dominant value of $\kappa$.
            
            Last but not least, for the inner solution we have
            \begin{align*}
                \mbf W_{2 \, n, 2 \, r - 1} = \mbf 0, \qquad \text{and} \qquad \mbf W_{2 \, n - 1, 2 \, r} = \mbf 0, \quad \forall \, n \in \mathbb N, r \in \mathbb Z.
            \end{align*}
            Therefore, as $\mbf u^{[n]}$, has even harmonics, we need $\lambda_2 = \lambda_1$, and $\lambda_4 = \lambda_3$, which implies
            \begin{equation}
                \begin{aligned}
                    \mbf u^{[n]} &\sim i^{\frac{n - 1}{2}} \, \lambda_1 \, \frac{\Gamma\left(\frac{n - 1}{2} + \gamma\right)}{\left(X_0 - X\right)^{\frac{n}{2}}} \, \left(1 + (- 1)^n\right) \, \left(1 + \frac{h_1}{k_1} \, \frac{e^{2ix}}{\left(X_0 - X\right)^{- 2 \, \eta \, i}}\right) \, \bs \phi_1^{[1]}
                    \\
                    & \qquad + (- i)^{\frac{n - 1}{2}} \, \lambda_3 \, \frac{\Gamma\left(\frac{n - 1}{2} + \bar \gamma\right)}{\left(X_0 - X\right)^{\frac{n}{2}}} \, \left(1 + (- 1)^n\right) \, \left(1 + \frac{\bar h_1}{\bar k_1} \, \frac{e^{- 2ix}}{\left(X_0 - X\right)^{2 \, \eta \, i}}\right) \, \bs \phi_1^{[1]}.
                \end{aligned} \label{innerapprox}
            \end{equation}
            Thus, from now on, we assume that $n$ is even (when it is a large integer).
            \begin{remark}
                Note that, for each dominant value of $\kappa$ we consider, there are two associated dominant values of $r$. Therefore, as there are four dominant values of $\kappa$, the summations in \eqref{final-inner} and \eqref{innerapprox} comprise eight terms in total. Nevertheless, something similar will happen to the other four terms so they are omitted.
            \end{remark}
            
\section{Late term expansion --- outer solution}
\label{sec:late_term2}    
    The previous procedure allowed us to approximate the solution when close to singularities. Nevertheless, we note that the ansatz \eqref{ansatz-inner-sol} is not so accurate away from them as it lacks the information provided by the parameters of the system. For that reason, we need to consider a more general ansatz away from the singularities. Let
    \begin{align}
        \mbf W_{n, r} = \frac{\kappa^{n - 1} \, \Gamma\left(\frac{n - 1}{2} + \gamma_r\right)}{\left(X_0 - X\right)^{\frac{n - 1}{2} + \gamma_r}} \, \sum_{s\geq 0} \frac{\mbf g_r^{[s/2]}(X)}{n^{s/2}}, \label{ansatz-os}
    \end{align}
    where we have used the summation over natural halved orders to take into account all the terms we obtained in our expansion for the amplitude up to order five (see Section \ref{sec:regular_asymptotics}). With this, note that
    \begin{align*}
        \mbf W_{n - 2, r}' &= \frac{1}{2} \, \kappa^{n - 3} \, \frac{\Gamma\left(\frac{n - 3}{2} + \gamma_r\right)}{\left(X_0 - X\right)^{\frac{n - 1}{2} + \gamma_r}} \sum_{s\geq 0} \frac{1}{(n - 2)^{s/2}} \left(\left(n - 3 + 2 \, \gamma_r\right) \, \mbf g_r^{[s/2]} + 2 \, \left(X_0 - X\right) \, \left(\mbf g_r^{[s/2]}\right)'\right),
        \\
        \mbf W_{n - 4, r}'' &= \frac{1}{4} \, \kappa^{n - 5} \, \frac{\Gamma\left(\frac{n - 5}{2} + \gamma_r\right)}{\left(X_0 - X\right)^{\frac{n - 1}{2} + \gamma_r}} \sum_{s\geq 0} \frac{1}{(n - 4)^{s/2}} \left(\left(n - 5 + 2 \, \gamma_r\right) \left(n - 3 + 2 \, \gamma_r\right) \, \mbf g_r^{[s/2]}\right.
        \\
        & \qquad \left. + 4 \, \left(n - 5 + 2 \, \gamma_r\right) \, \left(X_0 - X\right) \, \left(\mbf g_r^{[s/2]}\right)' + 4 \, \left(X_0 - X\right)^2 \, \left(\mbf g_r^{[s/2]}\right)''\right).
    \end{align*}
    Thus, when replacing \eqref{ansatz-os} into \eqref{firstansatz}, we obtain
    {\allowdisplaybreaks
    \begin{multline*}
        \mbf 0 = \sum_{\substack{p = 0 \\ \mbf p \cdot \mbf e = p}}^{n - 1} \, \sum_{\substack{q = 0 \\ \mbf q \cdot \mbf e = q}}^{n - 1 - p} C_{\mbf p, \mbf q} \, \left(\prod_{s \geq 1} a_s^{p_s}\right) \, \left(\prod_{s \geq 1} b_s^{q_s}\right) \, \jac \mbf f_{p, q}(\mbf 0) \, \kappa^{n - 1 - p - q} \, \frac{\Gamma\left(\frac{n - p - q - 1}{2} + \gamma_r\right)}{\left(X_0 - X\right)^{\frac{n - p - q - 1}{2} + \gamma_r}} \, \sum_{s\geq 0} \frac{\mbf g_r^{[s/2]}}{(n - p - q)^{s/2}}
        \\
        + \sum_{\substack{p = 0 \\ \mbf p \cdot \mbf e = p}}^{n - 1} \, \sum_{\substack{q = 0 \\ \mbf q \cdot \mbf e = q}}^{n - 1 - p} \, \sum_{\substack{M = 2 \\ \mbf n \cdot \mbf e_M = n - p - q}}^{n - p - q} \sum_{\substack{\mbf r \cdot \mbf e_M = r \\ - n_i \leq r_i \leq n_i}} C_{\mbf p, \mbf q} \, \left(\prod_{s \geq 1} a_s^{p_s}\right) \, \left(\prod_{s \geq 1} b_s^{q_s}\right)
         \\
         \times \mbf F_{M, p, q}\left(A_{n_1, r_1} \, \mbf W_{n_1, r_1}, \ldots , A_{n_M, r_M} \, \mbf W_{n_M, r_M}\right)
        \\
        + k^2 \, \hat D \, \left(- r^2 \, \kappa^{n - 1} \, \frac{\Gamma\left(\frac{n - 1}{2} + \gamma_r\right)}{\left(X_0 - X\right)^{\frac{n - 1}{2} + \gamma_r}} \, \sum_{s\geq 0} \frac{\mbf g_r^{[s/2]}}{n^{s/2}}\right.
        \\
        + 2 \, i \, r \, \left(\frac{1}{2} \, \kappa^{n - 3} \, \frac{\Gamma\left(\frac{n - 3}{2} + \gamma_r\right)}{\left(X_0 - X\right)^{\frac{n - 1}{2} + \gamma_r}} \, \sum_{s\geq 0} \frac{1}{(n - 2)^{s/2}}  \left(\left(n - 3 + 2 \, \gamma_r\right) \, \mbf g_r^{[s/2]} + 2 \, \left(X_0 - X\right) \, \left(\mbf g_r^{[s/2]}\right)'\right)\right)
        \\
        + \frac{1}{4} \, \kappa^{n - 5} \, \frac{\Gamma\left(\frac{n - 5}{2} + \gamma_r\right)}{\left(X_0 - X\right)^{\frac{n - 1}{2} + \gamma_r}} \sum_{s\geq 0} \frac{1}{(n - 4)^{s/2}} \left(\left(n - 5 + 2 \, \gamma_r\right) \left(n - 3 + 2 \, \gamma_r\right) \, \mbf g_r^{[s/2]}\right.
        \\
        \left. \left. + 4 \, \left(n - 5 + 2 \, \gamma_r\right) \left(X_0 - X\right) \left(\mbf g_r^{[s/2]}\right)' + 4 \, \left(X_0 - X\right)^2 \left(\mbf g_r^{[s/2]}\right)''\right)\right),
    \end{multline*}
    }
    which is equivalent to
    \begin{multline}
        \mbf 0 = \Gamma\left(\frac{n - 1}{2} + \gamma_r\right)^{- 1} \, \sum_{\substack{p = 0 \\ \mbf p \cdot \mbf e = p}}^{n - 1} \, \sum_{\substack{q = 0 \\ \mbf q \cdot \mbf e = q}}^{n - 1 - p} C_{\mbf p, \mbf q} \, \left(\prod_{s \geq 1} a_s^{p_s}\right) \, \left(\prod_{s \geq 1} b_s^{q_s}\right) \, \jac \mbf f_{p, q}(\mbf 0) \, \kappa^{4 - p - q} \, \frac{\Gamma\left(\frac{n - p - q}{2} + \gamma_r\right)}{\left(X_0 - X\right)^{\frac{- p - q}{2}}}
         \\
         \times \sum_{s \geq 0} \frac{\mbf g_r^{[s/2]}}{(n - p - q)^{s/2}}
        \\
        + \kappa^{5 - n} \, \Gamma\left(\frac{n - 1}{2} + \gamma_r\right)^{- 1} \, \left(X_0 - X\right)^{\frac{n - 1}{2} + \gamma_r} \sum_{\substack{p = 0 \\ \mbf p \cdot \mbf e = p}}^{n - 1} \, \sum_{\substack{q = 0 \\ \mbf q \cdot \mbf e = q}}^{n - 1 - p} \, \sum_{\substack{M = 2 \\ \mbf n \cdot \mbf e_M = n - p - q}}^{n - p - q} \sum_{\substack{\mbf r \cdot \mbf e_M = r \\ - n_i \leq r_i \leq n_i}} C_{\mbf p, \mbf q} \, \left(\prod_{s \geq 1} a_s^{p_s}\right) \, \left(\prod_{s \geq 1} b_s^{q_s}\right)
        \\
        \times \mbf F_{M, p, q}\left(A_{n_1, r_1} \, \mbf W_{n_1, r_1}, \ldots , A_{n_M, r_M} \, \mbf W_{n_M, r_M}\right)
        \\
        + k^2 \, \hat D \, \left(- r^2 \, \kappa^4 \, \sum_{s\geq 0} \frac{\mbf g_r^{[s/2]}}{n^{s/2}}\right.
        \\
        + 2 \, i \, r \, \kappa^2 \, \sum_{s\geq 0} \frac{1}{(n - 2)^{s/2}}  \left(\mbf g_r^{[s/2]} + \frac{2}{n - 3 + 2 \, \gamma_r} \, \left(X_0 - X\right) \, \left(\mbf g_r^{[s/2]}\right)'\right)
        \\
        + \sum_{s\geq 0} \frac{1}{(n - 4)^{s/2}} \left(\mbf v_r^{[s/2]} + \frac{4}{n - 3 + 2 \, \gamma_r} \, \left(X_0 - X\right) \left(\mbf g_r^{[s/2]}\right)'\right.
        \\
        \left. \left. + \frac{4}{\left(n - 5 + 2 \, \gamma_r\right) \left(n - 3 + 2 \, \gamma_r\right)} \, \left(X_0 - X\right)^2 \left(\mbf g_r^{[s/2]}\right)''\right)\right), \label{notnicetoexpand}
    \end{multline}
    Now, as usual, we go forward by following an order-by-order expansion.
    
    \subsection{The case $\kappa^2 = \kappa_+^2$}
        \paragraph{Order $O(1)$.} At this order, as for the inner solution, \eqref{notnicetoexpand} becomes
        \begin{align*}
            M(\kappa, r) \, \mbf g_r^{[0]} = \mbf{NL}(1).
        \end{align*}
        Now, as $\kappa^2 = \kappa_+^2$, then $\mbf g_0^{[0]} = h_0^{[0]} \, \bs \phi_1^{[1]}$, and $\mbf g_2^{[0]} = h_2^{[0]} \, \bs \phi_1^{[1]}$, where $h_0^{[0]} = h_0^{[0]}(X)$ and $h_2^{[0]} = h_2^{[0]}(X)$, which implies that
        \begin{align*}
            \vdots
            \\
            \mbf g_{- 4}^{[0]} &= \mbf 0,
            \\
            \mbf g_{- 3}^{[0]} &= \mbf 0,
            \\
            \mbf g_{- 2}^{[0]} &= \mbf 0,
            \\
            \mbf g_{- 1}^{[0]} &= - 2 \, \kappa^3 \, \bar A_1 \, h_0^{[0]} \, \mbf W_2^{[2]},
            \\
            \mbf g_1^{[0]} &= - 2 \, \kappa^3 \, \left(\bar A_1 \, h_2^{[0]} + A_1 \, h_0^{[0]}\right) \, \mbf W_0^{[2]},
            \\
            \mbf g_3^{[0]} &= - 2 \, \kappa^3 \, A_1 \, h_2^{[0]} \, \mbf W_2^{[2]},
            \\
            \mbf g_4^{[0]} &= - 3 \, i \, A_1^2 \, h_2^{[0]} \, \mbf W_3^{[3]},
            \\
            \vdots
        \end{align*}
        \paragraph{Order $\mathcal O\left(1/n^{1/2}\right)$} At this order,
        \begin{align*}
            \vdots
            \\
            \mbf g_{- 3}^{[1/2]} &= \mbf 0,
            \\
            \mbf g_{- 2}^{[1/2]} &= \mbf 0,
            \\
            \mbf g_{- 1}^{[1/2]} &= - 2 \, \sqrt{2} \, i \, \sqrt{X_0 - X} \, \bar A_1 \, h_0^{[0]} \, \mbf W_2^{[3]},
            \\
            \mbf g_0^{[1/2]} &= - \sqrt{2} \, \kappa^3 \, \sqrt{X_0 - X} \, h_0^{[0]} \, \mbf W_1^{[2]},
            \\
            \mbf g_1^{[1/2]} &= - 2 \, \sqrt{2} \, i \, \sqrt{X_0 - X} \, \left(\bar A_1 \, h_2^{[0]} + A_1 \, h_0^{[0]}\right) \, \mbf W_0^{[3]},
            \\
            \mbf g_2^{[1/2]} &= - \sqrt{2} \, \kappa^3 \, \sqrt{X_0 - X} \, h_2^{[0]} \, \mbf W_1^{[2]},
            \\
            \mbf g_3^{[1/2]} &= - 2 \, \sqrt{2} \, i \, \sqrt{X_0 - X} \, A_1 \, h_2^{[0]} \, \mbf W_2^{[3]},
            \\
            \vdots
        \end{align*}
        
        \paragraph{Order $\mathcal O(1/n)$.} At this order,
        {\allowdisplaybreaks
        \begin{align*}
            \vdots
            \\
            \mbf g_{- 2}^{[1]} &= 6 \, i \, \left(X - X_0\right) \, \bar A_1^2 \, h_0^{[0]} \, \mbf W_3^{[3]},
            \\
            \mbf g_{- 1}^{[1]} &= 2 \, \kappa \, \left(X - X_0\right) \, \left(\vphantom{\left(2 \, h_0^{[0]} \, \mbf W_2^{[4]} - i \, \left(h_0^{[0]}\right)' \, \mbf W_{2, 3}^{[4]}\right)} \bar A_1^3 \, h_2^{[0]} \, \mbf W_{2, 2}^{[4]} + 3 \, \abs{A_1}^2 \, \bar A_1 \, h_0^{[0]} \, \mbf W_{2, 2}^{[4]}\right.
            \\
            &\qquad \left. + \bar A_1 \, \left(2 \, h_0^{[0]} \, \mbf W_2^{[4]} - i \, \left(h_0^{[0]}\right)' \, \mbf W_{2, 3}^{[4]}\right) - i \, \bar A_1' \, h_0^{[0]} \, \mbf W_{2, 3}^{[4]}\right),
            \\
            \mbf g_0^{[1]} &= - 2 \, \left(X - X_0\right) \, \left(- 2 \, i \, \abs{A_1}^2 \, h_0^{[0]} \, \mbf W_{1, 2}^{[3]} - i \, \bar A_1^2 \, h_2^{[0]} \, \mbf W_{1, 2}^{[3]} - i \, h_0^{[0]} \, \mbf W_1^{[3]} - \left(h_0^{[0]}\right)' \, \mbf W_{1, 3}^{[3]}\right),
            \\
            \mbf g_1^{[1]} &= 2 \, \kappa \, \left(X - X_0\right) \, \left(A_1 \, \left(2 \, h_0^{[0]} \, \left(2 \, \abs{A_1}^2 \, \mbf W_{0, 2}^{[4]} + \mbf W_0^{[4]}\right) + 4 \, \bar A_1^2 \, h_2^{[0]} \, \mbf W_{0, 2}^{[4]} - i \, \left(h_0^{[0]}\right)' \, \mbf W_{0, 3}^{[4]}\right)\right.
            \\
            &\qquad \left. + \bar A_1 \, \left(2 \, h_2^{[0]} \, \mbf W_0^{[4]} + i \, \left(h_2^{[0]}\right)' \, \mbf W_{0, 3}^{[4]}\right) - i \, \left(\bar A_1' \, h_2^{[0]} - A_1' \, h_0^{[0]}\right) \, \mbf W_{0, 3}^{[4]}\right),
            \\
            \mbf g_2^{[1]} &= 2 \, i \, \left(X - X_0\right) \, \left(h_2^{[0]} \, \left(2 \, \abs{A_1}^2 \, \mbf W_{1, 2}^{[3]} + \mbf W_1^{[3]}\right) + A_1^2 \, h_0^{[0]} \, \mbf W_{1, 2}^{[3]} + i \, \left(h_2^{[0]}\right)' \, \mbf W_{1, 3}^{[3]}\right),
            \\
            \mbf g_3^{[1]} &= 2 \, \kappa \, \left(X - X_0\right) \, \left(A_1 \, \left(h_2^{[0]} \, \left(3 \, \abs{A_1}^2 \, \mbf W_{2, 2}^{[4]} + 2 \, \mbf W_2^{[4]}\right) + i \, \left(h_2^{[0]}\right)' \, \mbf W_{2, 3}^{[4]}\right) + A_1^3 \, h_0^{[0]} \, \mbf W_{2, 2}^{[4]} + i \, A_1' \, h_2^{[0]} \, \mbf W_{2, 3}^{[4]}\right),
            \\
            \vdots
        \end{align*}
        }
    
        \paragraph{Order $\mathcal O\left(1/n^{3/2}\right)$} At this order,
        {\allowdisplaybreaks
        \begin{align*}
            \vdots
            \\
            \mbf g_0^{[3/2]} &= \begin{multlined}[t]
                \frac{\kappa \, \sqrt{X_0 - X}}{2 \, \sqrt{2}} \, \left(h_0^{[0]} \, \left(16 \, \left(X - X_0\right) \, \abs{A_1}^2 \, \mbf W_{1, 2}^{[4]} + i \, (4 \, \gamma - 5) \, \mbf W_1^{[2]} + 8 \, \left(X - X_0\right) \, \mbf W_1^{[4]}\right)\right.
                \\
                \left. + 8 \, \left(X - X_0\right) \, \left(\bar A_1^2 \, h_2^{[0]} \, \mbf W_{1, 2}^{[4]} - i \, \left(h_0^{[0]}\right)' \, \mbf W_{1, 3}^{[4]}\right)\right),
            \end{multlined}
            \\
            \mbf g_2^{[3/2]} &= \begin{multlined}[t]
                \frac{\kappa \, \sqrt{X_0 - X}}{2 \, \sqrt{2}} \, \left(h_2^{[0]} \, \left(16 \, \left(X - X_0\right) \, \abs{A_1}^2 \, \, \mbf W_{1, 2}^{[4]} + i \, (4 \, \gamma - 5) \, \mbf W_1^{[2]} + 8 \, \left(X - X_0\right) \, \mbf W_1^{[4]}\right)\right.
                \\
                \left. + 8 \, \left(X - X_0\right) \, A_1^2 \, h_0^{[0]} \, \mbf W_{1, 2}^{[4]} + 8 \, i \, \left(X - X_0\right) \, \left(h_2^{[0]}\right)' \, \mbf W_{1, 3}^{[4]}\right).
            \end{multlined}
            \\
            \vdots
        \end{align*}
        }
        With this, we are ready to get the solvability conditions at the next order.
    
        \paragraph{Order $\mathcal O\left(1/n^2\right)$} At this order, we obtain two solvability conditions given by
        \begin{multline}
            S_{c, 0} = \alpha_1 \, \left(h_0^{[0]}\right)'' - i \left(\alpha_2 + \alpha_3 \, \abs{A_1}^2\right) \, \left(h_0^{[0]}\right)'
            \\
            + \left(\alpha_5 + 2 \, \alpha_6 \, \abs{A_1}^2 + 3 \, \alpha_7 \, \abs{A_1}^4 - i \left(\alpha_3 \, A_1 \, \bar A_1' + 2 \, \alpha_4 \, \bar A_1 \, A_1'\right)\right) \, h_0^{[0]}
            \\
            - i \, \alpha_4 \, \bar A_1^2 \, \left(h_2^{[0]}\right)' + \left(\alpha_6 \, \bar A_1^2 + 2 \, \alpha_7 \, \bar A_1^2 \, \abs{A_1}^2 - i \, \alpha_3 \, \bar A_1 \, \bar A_1'\right) h_2^{[0]} = 0, \label{sc11os}
        \end{multline}
        and
        \begin{multline}
            S_{c, 2} = \alpha_1 \, \left(h_2^{[0]}\right)'' + i \, \left(\alpha_2 + \alpha_3 \, \abs{A_1}^2\right) \, \left(h_2^{[0]}\right)'
            \\
            + \left(\alpha_5 + 2 \, \alpha_6 \, \abs{A_1}^2 + 3 \, \alpha_7 \, \abs{A_1}^4 + i \, \left(\alpha_3 \, \bar A_1 \, A_1' + 2 \, \alpha_4 \, A_1 \, \bar A_1'\right)\right) \, h_2^{[0]}
            \\
            + i \, \alpha_4 \, A_1^2 \, \left(h_0^{[0]}\right)' + \left(\alpha_6 \, A_1^2 + 2 \, \alpha_7 \, A_1^2 \, \abs{A_1}^2 + i \, \alpha_3 \, A_1 \, A_1'\right) \, h_0^{[0]} = 0. \label{sc12os}
        \end{multline}

    \subsection{The case $\kappa^2 = \kappa_-^2$}
        In this case, we have $\mbf g_{- 2}^{[0]} = h_{- 2}^{[0]} \, \bs \phi_1^{[1]}$ and $\mbf g_0^{[0]} = h_0^{[0]} \, \bs \phi_1^{[1]}$, where $h_{- 2}^{[0]} = h_{- 2}^{[0]}(X)$ and $h_0^{[0]} = h_0^{[0]}(X)$. Furthermore, as before, following an asymptotic expansion using an order-by-order approach, we have that.
    
        \paragraph{Order $\mathcal O(1)$.} At this order,
        \begin{align*}
            \vdots
            \\
            \mbf g_{- 4}^{[0]} &= 3 \, i \, \bar A_1^2 \, h_{- 2}^{[0]} \, \mbf W_3^{[3]},
            \\
            \mbf g_{- 3}^{[0]} &= - 2 \, \kappa^3 \, \bar A_1 \, h_{- 2}^{[0]} \, \mbf W_2^{[2]},
            \\
            \mbf g_{- 1}^{[0]} &= - 2 \, \kappa^3 \, \left(\bar A_1 \, h_0^{[0]} + A_1 \, h_{- 2}^{[0]}\right) \, \mbf W_0^{[2]},
            \\
            \mbf g_1^{[0]} &= - 2 \, \kappa^3 \, A_1 \, h_0^{[0]} \, \mbf W_2^{[2]},
            \\
            \mbf g_2^{[0]} &= \mbf 0,
            \\
            \mbf g_3^{[0]} &= \mbf 0,
            \\
            \mbf g_4^{[0]} &= \mbf 0,
            \\
            \vdots
        \end{align*}
    
        \paragraph{Order $\mathcal O\left(1/n^{1/2}\right)$.} At this order,
        \begin{align*}
            \vdots
            \\
            \mbf g_{- 3}^{[1/2]} &= 2 \, i \, \sqrt{2} \, \sqrt{X_0 - X} \, \bar A_1 \, h_{- 2}^{[0]} \, \mbf W_2^{[3]},
            \\
            \mbf g_{- 2}^{[1/2]} &= - \sqrt{2} \, \kappa^3 \, \sqrt{X_0 - X} \, h_{- 2}^{[0]} \, \mbf W_1^{[2]},
            \\
            \mbf g_{- 1}^{[1/2]} &= 2 \, \sqrt{2} \, i \, \mbf W_0^{[3]} \, \sqrt{X_0 - X} \, \left(\bar A_1 \, h_0^{[0]} + A_1 \, h_{- 2}^{[0]}\right),
            \\
            \mbf g_0^{[1/2]} &= - \sqrt{2} \, \kappa^3 \, \sqrt{X_0 - X} \, h_0^{[0]} \, \mbf W_1^{[2]},
            \\
            \mbf g_1^{[1/2]} &= 2 \, \sqrt{2} \, i \, \sqrt{X_0 - X} \, A_1 \, h_0^{[0]} \, \mbf W_2^{[3]},
            \\
            \mbf g_2^{[1/2]} &= \mbf 0,
            \\
            \mbf g_3^{[1/2]} &= \mbf 0,
            \\
            \vdots
        \end{align*}
    
        \paragraph{Order $\mathcal O(1/n)$.} At this order,
        {\allowdisplaybreaks
        \begin{align*}
            \vdots
            \\
            \mbf g_{- 3}^{[1]} &= 2 \, \kappa \, \left(X - X_0\right) \, \left(\bar A_1^3 \, h_0^{[0]} \, \mbf W_{2, 2}^{[4]} + 3 \, \abs{A_1}^2 \, \bar A_1 \, h_{- 2}^{[0]} \, \mbf W_{2, 2}^{[4]}\right.
            \\
            &\qquad \left. + \bar A_1 \, \left(2 \, h_{- 2}^{[0]} \, \mbf W_2^{[4]} - i \, \left(h_{- 2}^{[0]}\right)' \, \mbf W_{2, 3}^{[4]}\right) - i \, \bar A_1' \, h_{- 2}^{[0]} \, \mbf W_{2, 3}^{[4]}\right),
            \\
            \mbf g_{- 2}^{[1]} &= - 2 \, \left(X - X_0\right) \, \left(2 \, i \, \abs{A_1}^2 \, h_{- 2}^{[0]} \, \mbf W_{1, 2}^{[3]} + i \, \left(\bar A_1^2 \, h_0^{[0]} \, \mbf W_{1, 2}^{[3]} + h_{- 2}^{[0]} \, \mbf W_1^{[3]}\right) + \left(h_{- 2}^{[0]}\right)' \, \mbf W_{1, 3}^{[3]}\right),
            \\
            \mbf g_{- 1}^{[1]} &= 2 \, \kappa \, \left(X - X_0\right) \, \left(A_1 \, \left(2 \, h_{- 2}^{[0]} \, \left(2 \, \abs{A_1}^2 \, \mbf W_{0, 2}^{[4]} + \mbf W_0^{[4]}\right) + 4 \, \bar A_1^2 \, h_0^{[0]} \, \mbf W_{0, 2}^{[4]} - i \, \left(h_{- 2}^{[0]}\right)' \, \mbf W_{0, 3}^{[4]}\right)\right.
            \\
            &\qquad \left. + \bar A_1 \, \left(2 \, h_0^{[0]} \, \mbf W_0^{[4]} + i \, \left(h_0^{[0]}\right)' \, \mbf W_{0, 3}^{[4]}\right) - i \, \left(h_0^{[0]} \, \bar A_1' - h_{- 2}^{[0]} \, A_1'\right) \, \mbf W_{0, 3}^{[4]}\right),
            \\
            \mbf g_0^{[1]} &= - 2 \, i \, \left(X - X_0\right) \, \left(h_0^{[0]} \, \left(2 \, \abs{A_1}^2 \, \mbf W_{1, 2}^{[3]} + \mbf W_1^{[3]}\right) + A_1^2 \, h_{- 2}^{[0]} \, \mbf W_{1, 2}^{[3]} + i \, \left(h_0^{[0]}\right)' \, \mbf W_{1, 3}^{[3]}\right),
            \\
            \mbf g_1^{[1]} &= 2 \, \kappa \, \left(X - X_0\right) \, \left(A_1 \, \left(h_0^{[0]} \, \left(3 \, \abs{A_1}^2 \, \mbf W_{2, 2}^{[4]} + 2 \, \mbf W_2^{[4]}\right) + i \, \left(h_0^{[0]}\right)' \, \mbf W_{2, 3}^{[4]}\right)\right.
            \\
            &\qquad \left. + A_1^3 \, h_{- 2}^{[0]} \, \mbf W_{2, 2}^{[4]} + i \, A_1' \, h_0^{[0]} \, \mbf W_{2, 3}^{[4]}\right),
            \\
            \mbf g_2^{[1]} &= - 6 \, i \, \left(X - X_0\right) \, A_1^2 \, h_0^{[0]} \, \mbf W_3^{[3]},
            \\
            \vdots
        \end{align*}
        }
    
        \paragraph{Order $\mathcal O(1/n^{3/2})$.} At this order,
        \begin{align*}
            \vdots
            \\
            \mbf g_{- 2}^{[3/2]} &= \begin{multlined}[t]
                \frac{\kappa \, \sqrt{X_0 - X}}{2 \, \sqrt{2}} \, \left(\vphantom{\left(\bar A_1^2 \, h_0^{[0]} \, \mbf W_{1, 2}^{[4]} - i \, \left(h_{- 2}^{[0]}\right)' \, \mbf W_{1, 3}^{[4]}\right)} h_{- 2}^{[0]} \, \left(16 \, \left(X - X_0\right) \, \abs{A_1}^2 \, \mbf W_{1, 2}^{[4]} - i \, (4 \, \gamma - 5) \, \mbf W_1^{[2]} + 8 \, \left(X - X_0\right) \, \mbf W_1^{[4]}\right)\right.
                \\
                \left. + 8 \, \left(X - X_0\right) \, \left(\bar A_1^2 \, h_0^{[0]} \, \mbf W_{1, 2}^{[4]} - i \, \left(h_{- 2}^{[0]}\right)' \, \mbf W_{1, 3}^{[4]}\right)\right),
            \end{multlined}
            \\
            \mbf g_0^{[3/2]} &= \begin{multlined}[t]
                \frac{\kappa \, \sqrt{X_0 - X}}{2 \, \sqrt{2}} \, \left(h_0^{[0]} \, \left(16 \, \left(X - X_0\right) \, \abs{A_1}^2 \, \mbf W_{1, 2}^{[4]} - i \, (4 \, \gamma - 5) \, \mbf W_1^{[2]} + 8 \, \left(X - X_0\right) \, \mbf W_1^{[4]}\right)\right.
                \\
                \left. + 8 \, \left(X - X_0\right) \, A_1^2 \, h_{- 2}^{[0]} \, \mbf W_{1, 2}^{[4]} + 8 \, i \, \left(X - X_0\right) \, \left(h_0^{[0]}\right)' \, \mbf W_{1, 3}^{[4]}\right).
            \end{multlined}
            \\
            \vdots
        \end{align*}
        \paragraph{Order $\mathcal O\left(1/n^2\right)$.} Finally, at this order, we obtain two solvability conditions given by
        \begin{multline}
            S_{c, - 2} = \alpha_1 \, \left(h_{- 2}^{[0]}\right)'' - i \, \left(\alpha_2 + \alpha_3 \, \abs{A_1}^2\right) \, \left(h_{- 2}^{[0]}\right)'
            \\
            + \left(\alpha_5 + 2 \, \alpha_6 \, \abs{A_1}^2 + 3 \, \alpha_7 \, \abs{A_1}^4 - i \, \left(\alpha_3 \, A_1 \, \bar A_1' + 2 \, \alpha_4 \, \bar A_1 \, A_1'\right)\right) \, h_{- 2}^{[0]}
            \\
            - i \, \alpha_4 \, \bar A_1^2 \, \left(h_0^{[0]}\right)' + \left(\alpha_6 \, \bar A_1^2 + 2 \, \alpha_7 \, \abs{A_1}^2 \, \bar A_1^2 - i \, \alpha_3 \, \bar A_1 \, \bar A_1'\right) \, h_0^{[0]} = 0, \label{sc21os}
        \end{multline}
        and
        \begin{multline}
            S_{c, 0} = \alpha_1 \, \left(h_0^{[0]}\right)'' + i \, \left(\alpha_2 + \alpha_3 \, \abs{A_1}^2\right) \, \left(h_0^{[0]}\right)'
            \\
            + \left(\alpha_5 + 2 \, \alpha_6 \, \abs{A_1}^2 + 3 \, \alpha_7 \, \abs{A_1}^4 + i \, \left(\alpha_3 \, \bar A_1 \, A_1' + 2 \, \alpha_4 \, A_1 \, \bar A_1'\right)\right) \, h_0^{[0]}
            \\
            + i \, \alpha_4 \, A_1^2 \, \left(h_{- 2}^{[0]}\right)' + \left(\alpha_6 \, A_1^2 + 2 \, \alpha_7 \, \abs{A_1}^2 \, A_1^2 + i \, \alpha_3 \, A_1 \, A_1'\right) \, h_{- 2}^{[0]} = 0. \label{sc22os}
        \end{multline}
        
    \subsection{Solving a coupled system of amplitude equations}
        Note that the pairs of equations \eqref{sc11os}--\eqref{sc12os} and \eqref{sc21os}--\eqref{sc22os} are symmetric. Thus, it is enough to solve one pair of them in order to know the solutions for both. In particular, we focus on solving
        \begin{align}
            \alpha_1 \, \left(h_0^{[0]}\right)'' - i \left(\alpha_2 + \alpha_3 \, \abs{A_1}^2\right) \, \left(h_0^{[0]}\right)' &+ \left(\alpha_5 + 2 \, \alpha_6 \, \abs{A_1}^2 + 3 \, \alpha_7 \, \abs{A_1}^4 - i \left(\alpha_3 \, A_1 \, \bar A_1' + 2 \, \alpha_4 \, \bar A_1 \, A_1'\right)\right) \, h_0^{[0]} \notag
            \\
            - i \, \alpha_4 \, \bar A_1^2 \, \left(h_2^{[0]}\right)' &+ \left(\alpha_6 \, \bar A_1^2 + 2 \, \alpha_7 \, \bar A_1^2 \, \abs{A_1}^2 - i \, \alpha_3 \, \bar A_1 \, \bar A_1'\right) \, h_2^{[0]} = 0, \label{actualeq1os}
            \\
            \alpha_1 \, \left(h_2^{[0]}\right)'' + i \, \left(\alpha_2 + \alpha_3 \, \abs{A_1}^2\right) \, \left(h_2^{[0]}\right)' &+ \left(\alpha_5 + 2 \, \alpha_6 \, \abs{A_1}^2 + 3 \, \alpha_7 \, \abs{A_1}^4 + i \, \left(\alpha_3 \, \bar A_1 \, A_1' + 2 \, \alpha_4 \, A_1 \, \bar A_1'\right)\right) \, h_2^{[0]} \notag
            \\
            + i \, \alpha_4 \, A_1^2 \, \left(h_0^{[0]}\right)' &+ \left(\alpha_6 \, A_1^2 + 2 \, \alpha_7 \, A_1^2 \, \abs{A_1}^2 + i \, \alpha_3 \, A_1 \, A_1'\right) \, h_0^{[0]} = 0. \label{actualeq2os}
        \end{align}
        To solve these equations, we note that \eqref{actualeq2os} is the complex conjugate version of \eqref{actualeq1os}. Furthermore, based on \cite{Chapman,Dean}, and by looking at the equations $R_1$ and $\varphi_1$ solve, \eqref{realeq} and \eqref{imageq}, we set
        \begin{align*}
            h_0^{[0]} = \left(R_2 - i \, R_1 \, \varphi_2\right) \, e^{- i \, \varphi_1} \qquad \text{and} \qquad h_2^{[0]} = \left(R_2 + i \, R_1 \, \varphi_2\right) \, e^{i \, \varphi_1},
        \end{align*}
        where $R_2$ and $\varphi_2$ are real functions. Therefore, when we replace this into \eqref{actualeq1os}, we obtain
        \begin{multline*}
            - 48 \, \alpha_1^2 \, \beta_5 \, R_1^5 \, \varphi_2 - 32 \, \alpha_1^2 \, \beta_3 \, R_1^3 \, \varphi_2 + 16 \, \alpha_1^2 \, \varphi_2 \, R_1'' + 8 \, \alpha_1 \, \alpha_3 \, R_1^2 \, R_2' + 8 \, \alpha_1 \, \alpha_4 \, R_1^2 \, R_2'
            \\
            + 32 \, \alpha_1^2 \, R_1' \, \varphi_2' + 24 \, \alpha_1 \, \alpha_3 \, R_2 \, R_1 \, R_1' + 24 \, \alpha_1 \, \alpha_4 \, R_2 \, R_1 \, R_1'  - 16 \, \alpha_1^2 \, \beta_1 \, R_1 \, \varphi_2 + 16 \, \alpha_1^2 \, R_1 \, \varphi_2''
            \\
            + i \, \left(- 240 \, \alpha_1^2 \, \beta_5 \, R_2 \, R_1^4 - 4 \, \alpha_3^2 \, R_2 \, R_1^4 + 12 \, \alpha_4^2 \, R_2 \, R_1^4 + 8 \, \alpha_3 \, \alpha_4 \, R_2 \, R_1^4 - 8 \, \alpha_1 \, \alpha_3 \, R_1^3 \, \varphi_2'\right.
            \\
            \left. + 24 \, \alpha_1 \, \alpha_4 \, R_1^3 \, \phi_2' - 96 \, \alpha_1^2 \, \beta_3 \, R_2 \, R_1^2 - 16 \, \alpha_1^2 \, \beta_1 \, R_2 + 16 \, \alpha_1^2 \, R_2''\right) = 0.
        \end{multline*}
        Here, as all the coefficients and functions are real, we have that the real and imaginary parts of the previous equation must equal zero to fulfill it.
    
        Then, after splitting the equation and simplifying it using \eqref{realeq}, \eqref{imageq}, and $\beta_3^2 = 4 \, \beta_1 \, \beta_5$, we obtain
        \begin{align}
            4 \, \alpha_1 \, R_1' \, \varphi_2' + 3 \, \left(\alpha_3 + \alpha_4\right) \, R_1 \, R_2 \, R_1' + R_1 \, \left(2 \, \alpha_1 \, \varphi_2'' + \left(\alpha_3 + \alpha_4\right) \, R_1 \, R_2'\right) &= 0, \label{simplest}
        \end{align}
        and
        \begin{multline}
            4 \, \alpha_1^2 \, \left(R_2'' - \beta_1 \, R_2\right)
            \\
            - R_1^2 \, \left(R_2 \, \left(24 \, \alpha_1^2 \, \beta_3 + R_1^2 \, \left(60 \, \alpha_1^2 \, \beta_5 + \left(\alpha_3 + \alpha_4\right) \, \left(\alpha_3 - 3 \, \alpha_4\right)\right)\right) + 2 \, \alpha_1 \, \left(\alpha_3 - 3 \, \alpha_4\right) \, R_1 \, \varphi_2'\right) = 0. \label{mostcomplicated}
        \end{multline}
        Note that \eqref{simplest} can be directly solved for $\varphi_2'$. Specifically, we have
        \begin{align*}
            \varphi_2' = - \frac{\left(\alpha_3 + \alpha_4\right) \, R_1 \, R_2}{2 \, \alpha_1} + \frac{4 \, \alpha_1 \, K_3}{R_1^2},
        \end{align*}
        where $K_3 \in \mathbb R$ is a constant of integration. Now, when replacing this expression into \eqref{mostcomplicated}, we obtain
        \begin{align}
            R_2'' - \left(\beta_1 + 6 \, \beta_3 \, R_1^2 + 15 \, \beta_5 \, R_1^4\right) \, R_2 = 2 \, \left(\alpha_3 - 3 \, \alpha_4\right) \, K_3 \, R_1, \label{eqforR2}
        \end{align}
        which is a linear non-homogeneous equation in $R_2$. To solve it, note that the choice $K_3 = 0$ makes the equation homogeneous, with a solution given by
        \begin{align*}
            R_{2, h} = R_1' \, \left(K_1 + K_2 \, \int_{X_0}^X \frac{\dd s}{\left(R_1'\right)^2}\right).
        \end{align*}
        Now, observe that if $K_1 = K_2 = K_3 = 0$, then $\varphi_2 = K_4 \in \mathbb R$ is a constant, which leads to
        \begin{align*}
            h_0^{[0]} &= - i \, K_4 \, R_1 \, e^{- i \, \varphi_1} = - i \, K_4 \, \bar A_1,
            \\
            h_2^{[0]} &= i \, K_4 \, R_1 \, e^{i \, \varphi_1} = i \, K_4 \, A_1.
        \end{align*}
        On the other hand, when $K_2 = K_3 = 0$, we have that $R_{2, h} = K_1 \, R_1'$, which implies
        \begin{align*}
            \varphi_2' = - K_1 \, \frac{\left(\alpha_3 + \alpha_4\right) \, \left(R_1^2\right)'}{4 \, \alpha_1} = - K_1 \, \varphi_1''.
        \end{align*}
        This yields $\varphi_2 = - K_1 \, \varphi_1'$ (setting the constant of integration to zero), which implies
        \begin{align*}
            h_0^{[0]} &= \left(K_1 \, R_1' - i \, K_1 \, R_1 \, \varphi_1'\right) \, e^{- i \, \varphi_1} = K_1 \, \bar A_1',
            \\
            h_2^{[0]} &= \left(K_1 \, R_1' + i \, K_1 \, R_1 \, \varphi_1'\right) \, e^{i \, \varphi_1} = K_1 \, A_1'.
        \end{align*}
        Now, if $K_3 \neq 0$, then we can use the method of variation of parameters to find the solution to the non-homogeneous equation, \eqref{eqforR2}. In particular, we have
        \begin{align*}
            R_{2, p} = \left(\alpha_3 - 3 \, \alpha_4\right) \, K_3 \, R_1' \, \ds{\int_{X_0}^X \frac{R_1^2}{\left(R_1'\right)^2} \, \dd s}.
        \end{align*}
        Therefore, we conclude that, when $K_1 = 0$,
        \begin{align*}
            R_2 = R_1' \, \int_{X_0}^X \frac{K_2 + \left(\alpha_3 - 3 \, \alpha_4\right) \, K_3 \, R_1^2}{\left(R_1'\right)^2} \, \dd s,
        \end{align*}
        which implies
        \begin{align*}
            \varphi_2 = - \frac{\alpha_3 + \alpha_4}{2 \, \alpha_1} \, \int_{X_0}^X \left(R_1 \, R_1' \, \int_{X_0}^s \frac{K_2 + \left(\alpha_3 - 3 \, \alpha_4\right) \, K_3 \, R_1^2}{\left(R_1'\right)^2} \, \dd \sigma\right) \, \dd s + 4 \, \alpha_1 \, K_3 \, \int_{X_0}^X \frac{\dd s}{R_1^2}.
        \end{align*}
        Thus, when gathering all the information we have obtained, we conclude that
        \begin{align*}
            h_2^{[0]} &= \begin{multlined}[t]
                K_1 \, A_1' + i \, K_4 \, A_1 + \left(R_1' \, \int_{X_0}^X \frac{K_2 + \left(\alpha_3 - 3 \, \alpha_4\right) \, K_3 \, R_1^2}{\left(R_1'\right)^2} \, \dd s\right.
                \\
                \left. + i \, R_1 \, \left(- \frac{\alpha_3 + \alpha_4}{2 \, \alpha_1} \, \int_{X_0}^X \left(R_1 \, R_1' \, \int_{X_0}^s \frac{K_2 + \left(\alpha_3 - 3 \, \alpha_4\right) \, K_3 \, R_1^2}{\left(R_1'\right)^2} \, \dd \sigma\right) \, \dd s + 4 \, \alpha_1 \, K_3 \, \int_{X_0}^X \frac{\dd s}{R_1^2}\right)\right) \, e^{i \, \varphi_1},
            \end{multlined}
        \end{align*}
        and
        \begin{align*}
            h_0^{[0]} &= \begin{multlined}[t]
                K_1 \, \bar A_1' - i \, K_4 \, \bar A_1 + \left(R_1' \, \int_{X_0}^X \frac{K_2 + \left(\alpha_3 - 3 \, \alpha_4\right) \, K_3 \, R_1^2}{\left(R_1'\right)^2} \, \dd s\right.
                \\
                \left. - i \, R_1 \, \left(- \frac{\alpha_3 + \alpha_4}{2 \, \alpha_1} \, \int_{X_0}^X \left(R_1 \, R_1' \, \int_{X_0}^s \frac{K_2 + \left(\alpha_3 - 3 \, \alpha_4\right) \, K_3 \, R_1^2}{\left(R'\right)^2} \, \dd \sigma\right) \, \dd s + 4 \, \alpha_1 \, K_3 \, \int_{X_0}^X \frac{\dd s}{R_1^2}\right)\right) \, e^{- i \, \varphi_1}.
            \end{multlined}
        \end{align*}

        \begin{remark} \label{remark:non-real}
            We highlight that we solved the amplitude equations in this section by assuming that $R_2$ and $\varphi_2$ were real functions. Nevertheless, this is not the case when making an integration involving $X_0$ in the limits of integration. In reality, we are only interested in the case where $X$ is real. However, as we have been focusing on the study of singularities, we need to centre our solutions on them in every step in order to complete our analysis.
        \end{remark}

    \subsection{Matching inner and outer solutions}
        We have obtained an inner and an outer solution for our late-term expansion. To match these, we need
        \begin{align*}
            \kappa^{n - 1} \, \Gamma\left(\frac{n - 1}{2} + \gamma_r\right) \, \frac{\mbf g_r^{[0]}(X)}{\left(X_0 - X\right)^{\frac{n - 1}{2} + \gamma_r}} \sim \kappa^{n - 1} \, \Gamma\left(\frac{n - 1}{2} + \gamma_r\right) \, \frac{\mbf v_r^{[0]}}{\left(X_0 - X\right)^{\frac{n}{2} - r \, \eta \, i}},
        \end{align*}
        which implies
        \begin{align*}
            \mbf g_r^{[0]}(X) \sim \frac{\mbf v_r^{[0]}}{\left(X_0 - X\right)^{\frac{1}{2} - r \, \eta \, i - \gamma_r}}.
        \end{align*}
        In particular, we note that
        \begin{align*}
            \mbf g_{- 2}^{[0]}(X) \sim \frac{\mbf v_{- 2}^{[0]}}{\left(X_0 - X\right)^{\frac{1}{2} + 2 \, \eta \, i - \gamma}}, \qquad \mbf g_0^{[0]}(X) \sim \frac{\mbf v_0^{[0]}}{\left(X_0 - X\right)^{\frac{1}{2} - \gamma}}, \qquad \mbf g_2^{[0]}(X) \sim \frac{\mbf v_2^{[0]}}{\left(X_0 - X\right)^{\frac{1}{2} - 2 \, \eta \, i - \gamma}}.
        \end{align*}
        Furthermore, when $\gamma = 3 - \eta \, i$, we have
        \begin{align*}
            \mbf g_{- 2}^{[0]}(X) \sim \frac{\mbf v_{- 2}^{[0]}}{\left(X_0 - X\right)^{- \frac{5}{2} + 3 \, \eta \, i}}, \qquad \mbf g_0^{[0]}(X) \sim \frac{\mbf v_0^{[0]}}{\left(X_0 - X\right)^{- \frac{5}{2} + \eta \, i}}, \qquad \mbf g_2^{[0]}(X) \sim \frac{\mbf v_2^{[0]}}{\left(X_0 - X\right)^{- \frac{5}{2} - \eta \, i}}.
        \end{align*}
        Now, we observe that
        \begin{align*}
            h_2^{[0]} \sim \begin{multlined}[t]
                K_1 \, \left(\frac{i}{2} \, (2 \, \eta + i) \, e^{\pi \, \xi} \, \left(2 \, \sqrt{\beta_1}\right)^{1/2} \, \left(- 2 \, \beta_3\right)^{- 1/2} \, \left(X_0 - X\right)^{\eta \, i} \, \left(\frac{- 1}{X_0 - X}\right)^{3/2}\right)
                \\
                + K_2 \, \left(- \frac{1}{6} \, \frac{(3 + 2 \, \eta \, i) \, e^{\pi \, \xi} \, \left(2 \, \sqrt{\beta_1}\right)^{- 1/2} \, \left(- 2 \, \beta_3\right)^{1/2} \, \left(X_0 - X\right)^{2 + \eta \, i}}{\sqrt{\frac{- 1}{X_0 - X}}}\right)
                \\
                + K_3 \, \left(\frac{i \, \left(\alpha_3 \, (3 + 4 \, \eta \, (\eta - i)) - 3 \, \alpha_4 \, (2 \, \eta - i)^2\right) \, e^{\pi \, \xi} \, \left(2 \, \sqrt{\beta_1}\right)^{1/2} \, \left(- 2 \, \beta_3\right)^{- 1/2} \, \left(X_0 - X\right)^{1 + \eta \, i}}{6 \, \eta \, \sqrt{\frac{- 1}{X_0 - X}}}\right)
                \\
                + K_4 \, \left(i \, e^{\pi \, \xi} \, \left(2 \, \sqrt{\beta_1}\right)^{1/2} \, \left(- 2 \, \beta_3\right)^{- 1/2} \, \sqrt{\frac{- 1}{X_0 - X}} \left(X_0 - X\right)^{\eta \, i}\right),
            \end{multlined}
        \end{align*}
        which implies that the analysis done previously for the inner solution is matched only by the dominant term associated with $K_2$, which lets us conclude that
        \begin{align*}
            h_2^{[0]} \sim K_2 \, \left(R_1' \int_{X_0}^X \frac{\dd s}{\left(R_1'\right)^2} - \frac{\alpha_3 + \alpha_4}{2 \, \alpha_1} \, i \, R_1 \, \int_{X_0}^X \left(R_1 \, R_1' \, \int_{X_0}^s \frac{\dd \sigma}{\left(R_1'\right)^2}\right) \, \dd s\right) \, e^{i \, \varphi_1}.
        \end{align*}
        Furthermore, by matching this solution with \eqref{innerapprox} as $X \to X_0$, we obtain
        \begin{align}
            K_2 = - \frac{6 \, i \, e^{- \pi \, \xi} \, \left(2 \, \sqrt{\beta_1}\right)^{1/2} \, \left(- 2 \, \beta_3\right)^{- 1/2}}{(3 + 2 \, \eta \, i)} \, \lambda_1. \label{K_2}
        \end{align}
        Furthermore, we conclude that the contribution to the solution provided by $X_0$ is given by
        \begin{equation}
            \begin{aligned}
                \mbf u^{[n]} &\sim i^{\frac{n - 1}{2}} \, \frac{\Gamma\left(\frac{n - 1}{2} + \gamma\right)}{\left(X_0 - X\right)^{\frac{n - 1}{2} + \gamma}} \, \left(1 + (- 1)^n\right) \, \left(h_0^{[0]} + h_2^{[0]} \, e^{2ix}\right) \, \bs \phi_1^{[1]}
                \\
                &\quad + (- i)^{\frac{n - 1}{2}} \, \frac{\Gamma\left(\frac{n - 1}{2} + \bar \gamma\right)}{\left(X_0 - X\right)^{\frac{n - 1}{2} + \bar \gamma}} \, \left(1 + (- 1)^n\right) \, \left(\ol{h_0^{[0]}} + \ol{h_{- 2}^{[0]}} \, e^{- 2ix}\right) \, \bs \phi_1^{[1]},
            \end{aligned} \label{forrhs+}
        \end{equation}
        as $n \to \infty$. Similarly, due to symmetry, we have that the contribution to the solution provided by $\bar X_0 = X_{- 1}$ is given by
        \begin{align*}
            \mbf u^{[n]} &\sim (- i)^{\frac{n - 1}{2}} \, \frac{\Gamma\left(\frac{n - 1}{2} + \bar \gamma\right)}{\left(\bar X_0 - X\right)^{\frac{n - 1}{2} + \bar \gamma}} \, \left(1 + (- 1)^n\right) \, \left(\ol{h_0^{[0]}} + \ol{h_2^{[0]}} \, e^{-2ix}\right) \, \bs \phi_1^{[1]}
            \\
            &\quad + i^{\frac{n - 1}{2}} \, \frac{\Gamma\left(\frac{n - 1}{2} + \bar \gamma\right)}{\left(\bar X_0 - X\right)^{\frac{n - 1}{2} + \bar \gamma}} \, \left(1 + (- 1)^n\right) \, \left(h_0^{[0]} + h_{- 2}^{[0]} \, e^{2ix}\right) \, \bs \phi_1^{[1]},
        \end{align*}
        as $n \to \infty$.
        \begin{remark}
            We highlight that the last few expressions are just an abuse of notation. Let us recall that $R_2$ and $\varphi_2$ should be real functions (see Remark \ref{remark:non-real}), which would make $h_0^{[2]} = \ol{h_2^{[0]}}$. Nevertheless, as we are working at complex singularities, this is no longer the case. One technically needs to be extra careful with the terms that are being complex-conjugated and consider the fact that there are more constants of integration fulfilling different rules that can be deduced in the same way as when we matched solutions in this section. In any case, the development shown here is enough for our purposes and it is consistent with \cite{Chapman,Dean}.
        \end{remark}

\section{Estimation of the residual} \label{sec:first_residual}
    In this section, we start the study of the residual that arises after truncating the asymptotic expansion we have been carrying out. Specifically, if we assume that we have performed the expansion up to a sufficiently large order, then the remainder is expected to be small. Therefore, we are only interested in studying its linearized equation, which turns out to approximate it well and will let us obtain more conditions to ensure that such a term tends to zero as $X \to \pm \infty$.

    We begin by introducing new notation for the linearized equation of the residual. In particular, we define the following symmetric multilinear vector functions in terms of the Jacobian of \eqref{general_equation}:
    {\allowdisplaybreaks
    \begin{align*}
        \jac \mbf F_{1, i, j}\left(\mbf v_1\right) &= \left.\left(\sum_{p_1 = 1}^n v_{1, p_1} \frac{\partial \jac \mbf f_{i, j}(\mbf u)}{\partial u_{p_1}}\right)\right|_{\mbf u = \mbf 0},
        \\
        \jac \mbf F_{2, i, j}\left(\mbf v_1, \mbf v_2\right) &= \frac{1}{2!} \, \left.\left(\sum_{p_1, p_2 = 1}^n v_{1, p_1} \, v_{2, p_2} \, \frac{\partial^2 \jac \mbf f_{i, j}(\mbf u)}{\partial u_{p_1} \partial u_{p_2}}\right)\right|_{\mbf u = \mbf 0},
        \\
        \jac \mbf F_{3, i, j}\left(\mbf v_1, \mbf v_2, \mbf v_3\right) &= \frac{1}{3!} \, \left.\left(\sum_{p_1, p_2, p_3 = 1}^n v_{1, p_1} \, v_{2, p_2} \, v_{3, p_3} \, \frac{\partial^3 \jac \mbf f_{i, j}(\mbf u)}{\partial u_{p_1} \partial u_{p_2} \partial u_{p_3}}\right)\right|_{\mbf u = \mbf 0},
        \\
        \jac \mbf F_{4, i, j}\left(\mbf v_1, \mbf v_2, \mbf v_3, \mbf v_4\right) &= \frac{1}{4!} \, \left.\left(\sum_{p_1, p_2, p_3, p_4 = 1}^n v_{1, p_1} \, v_{2, p_2} \, v_{3, p_3} \, v_{4, p_4} \, \frac{\partial^4 \jac \mbf f_{i, j}(\mbf u)}{\partial u_{p_1} \partial u_{p_2} \partial u_{p_3} \partial u_{p_4}}\right)\right|_{\mbf u = \mbf 0}.
    \end{align*}
    }
    However, although using these functions is useful for stating the equations for the remainder at each order, they will not let us spot key relationships between the vectors that define the remainder and the vectors obtained in Section \ref{sec:regular_asymptotics}. To spot these relationships, we consider the following result
    \begin{lemma}
        For every pair of non-negative integers $i, j$, and every vectors $\mbf v_1, \mbf v_2, \mbf v_3, \mbf v_4, \mbf v_5$, we have
        \begin{align*}
            \jac \mbf F_{1, i, j}\left(\mbf v_1\right) \, \mbf v_2 &= 2 \, \mbf F_{2, i, j}\left(\mbf v_1, \mbf v_2\right),
            \\
            \jac \mbf F_{2, i, j}\left(\mbf v_1, \mbf v_2\right) \, \mbf v_3 &= 3 \, \mbf F_{3, i, j}\left(\mbf v_1, \mbf v_2, \mbf v_3\right),
            \\
            \jac \mbf F_{3, i, j}\left(\mbf v_1, \mbf v_2, \mbf v_3\right) \, \mbf v_4 &= 4 \, \mbf F_{4, i, j}\left(\mbf v_1, \mbf v_2, \mbf v_3, \mbf v_4\right),
            \\
            \jac \mbf F_{4, i, j}\left(\mbf v_1, \mbf v_2, \mbf v_3, \mbf v_4\right) \, \mbf v_5 &= 5 \, \mbf F_{5, i, j}\left(\mbf v_1, \mbf v_2, \mbf v_3, \mbf v_4, \mbf v_5\right).
        \end{align*}
    \end{lemma}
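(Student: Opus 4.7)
The plan is a direct computation from the definitions. Each identity has the form $\jac \mbf F_{m,i,j}(\mbf v_1,\dots,\mbf v_m)\,\mbf v_{m+1}=(m+1)\,\mbf F_{m+1,i,j}(\mbf v_1,\dots,\mbf v_{m+1})$ for $m=1,2,3,4$, so I would prove the general identity once and then instantiate it four times to cover the stated cases. Since the paper explicitly notes that the symmetric multilinear vector functions $\mbf F_{m,i,j}$ can be extended to arbitrary order, it is legitimate to treat this in a single unified argument.

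First, I would unpack the component form. The $a$th component of $\jac \mbf F_{m,i,j}(\mbf v_1,\dots,\mbf v_m)$ is the $(a,b)$ entry of the matrix obtained by differentiating $f_{i,j,a}$ once more in $u_b$, namely
\begin{align*}
\bigl(\jac \mbf F_{m,i,j}(\mbf v_1,\dots,\mbf v_m)\bigr)_{a,b}
=\frac{1}{m!}\sum_{1\le p_1,\dots,p_m\le n} v_{1,p_1}\cdots v_{m,p_m}\,
\left.\frac{\partial^{m+1} f_{i,j,a}}{\partial u_{p_1}\cdots\partial u_{p_m}\partial u_b}\right|_{\mbf u=\mbf 0},
\end{align*}
because $\jac \mbf f_{i,j}(\mbf u)$ has $(a,b)$-entry $\partial f_{i,j,a}/\partial u_b$ and the definition of $\jac\mbf F_{m,i,j}$ in the preceding display differentiates this matrix $m$ times and contracts with $\mbf v_1,\dots,\mbf v_m$.

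Next, I would apply this matrix to $\mbf v_{m+1}$. Writing $p_{m+1}=b$ and summing over $b$ with weights $v_{m+1,b}$ gives
\begin{align*}
\bigl(\jac \mbf F_{m,i,j}(\mbf v_1,\dots,\mbf v_m)\,\mbf v_{m+1}\bigr)_a
=\frac{1}{m!}\sum_{1\le p_1,\dots,p_{m+1}\le n} v_{1,p_1}\cdots v_{m+1,p_{m+1}}\,
\left.\frac{\partial^{m+1} f_{i,j,a}}{\partial u_{p_1}\cdots\partial u_{p_{m+1}}}\right|_{\mbf u=\mbf 0}.
\end{align*}
Comparing this to the definition of $\mbf F_{m+1,i,j}$ from \eqref{symmetric_linear_vector_functions}, whose $a$th component carries the prefactor $1/(m+1)!$ instead of $1/m!$, the ratio of prefactors is exactly $(m+1)!/m!=m+1$, so the right-hand side equals $(m+1)\bigl(\mbf F_{m+1,i,j}(\mbf v_1,\dots,\mbf v_{m+1})\bigr)_a$. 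Equality of all components yields the claimed vector identity, and specialising to $m=1,2,3,4$ gives the four equations stated in the lemma.

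There is no genuine obstacle here: the only subtlety is bookkeeping the index relabelling $b\mapsto p_{m+1}$ cleanly and tracking the combinatorial factor from $1/m!$ to $1/(m+1)!$, which is exactly where the multiplicative constants $2,3,4,5$ come from. Analyticity of $\mbf f_{i,j}$ at $\mbf u=\mbf 0$ (already assumed after \eqref{general_equation}) justifies the equality of mixed partial derivatives, so that the order in which $b$ and $p_1,\dots,p_m$ appear in the differentiation does not matter.
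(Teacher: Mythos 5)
Your proof is correct and takes essentially the same approach as the paper: a direct component-wise computation from the definitions, tracking the factor $(m+1)!/m! = m+1$. The only cosmetic difference is that the paper writes out the $m=1$ case explicitly and asserts the rest follow similarly, whereas you prove the general identity once and specialise.
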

    \begin{proof}
        The proof is similar for each equality. Therefore, for simplicity, we give explicit proof of the first equality only. Observe that
        \begin{align*}
            \jac \mbf F_{1, i, j}\left(\mbf v_1\right) \, \mbf v_2 = \left.\left(\sum_{p = 1}^n v_{1, p} \, \frac{\partial \jac \mbf f_{i, j}(\mbf u)}{\partial u_p}\right)\right|_{\mbf u = \mbf 0} \, \mbf v_2 = \left.\left(\sum_{p = 1}^n \sum_{q = 1}^n v_{1, p} \, v_{2, q} \, \frac{\partial^2 \mbf f_{i, j}(\mbf u)}{\partial u_q \partial u_p}\right)\right|_{\mbf u = \mbf 0} = 2 \, \mbf F_{2, i, j}\left(\mbf v_1, \mbf v_2\right),
        \end{align*}
        which proves the first equality and, therefore, as explained above, the result.
    \end{proof}
	The next step is to estimate the remainder of the approximation we are carrying out. To do that, we consider that our solution can be written as
    \begin{align*}
        \mbf u = \mbf S + \mbf R_N,
    \end{align*}
    where $\mbf S = \ds{\sum_{n = 1}^N \varepsilon^n \, \mbf u^{[n]}}$ is the approximation of the solution we have developed so far, and $\mbf R_N$ is the remainder, which is assumed to be small. We can therefore linearize \eqref{general_equation} about $\mbf S$, considering that $\mbf S$ solves the equation up to order $N$, obtaining
    \begin{equation}
        \begin{aligned}
            \mbf 0 &= \sum_{p = 0}^{M_a} \, \sum_{q = 0}^{M_b} a^p \, b^q \, \jac \mbf f_{p, q}(\mbf S) \, \mbf R_N + k^2 \, \hat D \, \left(\frac{\partial^2 \mbf R_N}{\partial x^2} + 2 \, \varepsilon^2 \, \frac{\partial^2 \mbf R_N}{\partial x \partial X} + \varepsilon^4 \, \frac{\partial^2 \mbf R_N}{\partial X^2}\right) + \mbf E(\delta b)
            \\
            & \quad + \text{forcing due to truncation},
        \end{aligned} \label{remaindereq}
    \end{equation}
    where `forcing due to truncation' corresponds to terms of order $\mathcal O\left(\varepsilon^{N + 1}\right)$ that come out of the evaluation of $\mbf S$ in the linear operator due to the truncation of the expansion, and $\mbf E(\delta b)$ corresponds to the term that appears when $b$ gets away from the Maxwell point. Now, as usual, we need to solve \eqref{remaindereq} going order by order by expanding $\mbf R_N$ as
    \begin{align*}
        \mbf R_N = \sum_{p = 1}^{\infty} \varepsilon^p \, \mbf R_N^{[p]}.
    \end{align*}
    Fortunately, the corresponding solutions at each order can be written in terms of the solutions we already obtained in Section \ref{sec:regular_asymptotics}, as stated below.
    \paragraph{Order $\mathcal O(\varepsilon)$:} At this order, \eqref{remaindereq} becomes
    \begin{align*}
        \left(\jac \mbf f_{0, 0}(\mbf 0) + k^2 \, \frac{\partial^2}{\partial x^2}\right) \mbf R_N^{[1]} = \mbf 0,
    \end{align*}
    which has a solution
    \begin{align*}
        \mbf R_N^{[1]} = B_1 \, e^{ix} \, \bs \phi_1^{[1]} + c.c.,
    \end{align*}
    where $B_1 = B_1(X)$.
    \paragraph{Order $\mathcal O\left(\varepsilon^2\right)$:} At this order, \eqref{remaindereq} becomes
    \begin{align*}
        \mbf 0 = \jac \mbf f_{0, 0}(\mbf 0) \, \mbf R_N^{[2]} + a_1 \, \jac \mbf f_{1, 0}(\mbf 0) \, \mbf R_N^{[1]} + b_1 \, \jac \mbf f_{0, 1}(\mbf 0) \, \mbf R_N^{[1]} + \jac \mbf F_{1, 0, 0}\left(\mbf u^{[1]}\right) \, \mbf R_N^{[1]} + k^2 \, \hat D \, \frac{\partial^2 \mbf R_N^{[2]}}{\partial x^2}.
    \end{align*}
    Therefore,
    \begin{align*}
        \mbf R_N^{[2]} = 2 \, A_1 \, \bar B_1 \, \mbf W_0^{[2]} + B_1 \, e^{ix} \, \mbf W_1^{[2]} + 2 \, A_1 \, B_1 \, e^{2ix} \, \mbf W_2^{[2]} + B_2 \, e^{ix} \, \bs \phi_1^{[1]} + c.c.,
    \end{align*}
    where $B_2 = B_2(X)$.

    \paragraph{Order $\mathcal O\left(\varepsilon^3\right)$:} At this order, \eqref{remaindereq} becomes
    \begin{align*}
        \mbf 0 &= \jac \mbf f_{0, 0}(\mbf 0) \, \mbf R_N^{[3]} + a_1 \, \jac \mbf f_{1, 0}(\mbf 0) \, \mbf R_N^{[2]} + b_1 \, \jac \mbf f_{0, 1}(\mbf 0) \, \mbf R_N^{[2]} + a_2 \, \jac \mbf f_{1, 0}(\mbf 0) \, \mbf R_N^{[1]} + b_2 \, \jac \mbf f_{0, 1}(\mbf 0) \, \mbf R_N^{[1]}
        \\
        & \quad + a_1^2 \, \jac \mbf f_{2, 0}(\mbf 0) \, \mbf R_N^{[1]} + a_1 \, b_1 \, \jac \mbf f_{1, 1}(\mbf 0) \, \mbf R_N^{[1]} + b_1^2 \, \jac \mbf f_{0, 2}(\mbf 0) \, \mbf R_N^{[1]} + \jac \mbf F_{1, 0, 0}\left(\mbf u^{[1]}\right) \, \mbf R_N^{[2]}
        \\
        & \quad + \jac \mbf F_{1, 0, 0}\left(\mbf u^{[2]}\right) \, \mbf R_N^{[1]} + a_1 \, \jac \mbf F_{1, 1, 0}\left(\mbf u^{[1]}\right) \, \mbf R_N^{[1]} + b_1 \, \jac \mbf F_{1, 0, 1} \left(\mbf u^{[1]}\right) \, \mbf R_N^{[1]} + \jac \mbf F_{2, 0, 0}\left(\mbf u^{[1]}, \mbf u^{[1]}\right) \, \mbf R_N^{[1]}
        \\
        & \quad + k^2 \, \hat D \, \left(\frac{\partial^2 \mbf R_N^{[3]}}{\partial x^2} + 2 \, \frac{\partial^2 \mbf R_N^{[1]}}{\partial x \partial X}\right),
    \end{align*}
    which implies that
    \begin{align*}
        \mbf R_N^{[3]} &= 2 \, A_1 \, \bar B_1 \, \mbf W_0^{[3]} + B_1 \, e^{ix} \, \mbf W_1^{[3]} + A_1^2 \, \bar B_1 \, e^{ix} \, \mbf W_{1, 2}^{[3]} + 2 \, \abs{A_1}^2 \, B_1 \, e^{ix} \, \mbf W_{1, 2}^{[3]} + i \, B_1' \, e^{ix} \, \mbf W_{1, 3}^{[3]}
        \\
        & + 2 \, A_1 \, B_1 \, e^{2ix} \, \mbf W_2^{[3]} + 2 \, A_2 \, B_1 \, e^{2ix} \, \mbf W_2^{[2]} + 3 \, A_1^2 \, B_1 \, e^{3ix} \, \mbf W_3^{[3]} + 2 \, A_1 \, \bar B_2 \, \mbf W_0^{[2]} + 2 \, A_2 \, \bar B_1 \, \mbf W_0^{[2]}   
        \\
        & \quad + B_2 \, e^{ix} \, \mbf W_1^{[2]} + B_3 \, e^{ix} \, \bs \phi_1^{[1]} + 2 \, A_1 \, B_2 \, e^{2ix} \, \mbf W_2^{[2]} + c.c.,
    \end{align*}
    where $B_3 = B_3(X)$.
    
    \paragraph{Order $\mathcal O\left(\varepsilon^4\right)$:} At this order, \eqref{remaindereq} becomes
    \begin{align*}
        \mbf 0 &= \jac \mbf f_{0, 0}(\mbf 0) \, \mbf R_N^{[4]} + a_1 \, \jac \mbf f_{1, 0}(\mbf 0) \, \mbf R_N^{[3]} + b_1 \, \jac \mbf f_{0, 1}(\mbf 0) \, \mbf R_N^{[3]} + a_2 \, \jac \mbf f_{1, 0}(\mbf 0) \, \mbf R_N^{[2]} + b_2 \, \jac \mbf f_{0, 1}(\mbf 0) \, \mbf R_N^{[2]}
        \\
        & \quad + a_3 \, \jac \mbf f_{1, 0}(\mbf 0) \, \mbf R_N^{[1]} + b_3 \, \jac \mbf f_{0, 1}(\mbf 0) \, \mbf R_N^{[1]} + a_1^2 \, \jac \mbf f_{2, 0}(\mbf 0) \, \mbf R_N^{[2]} + a_1 \, b_1 \, \jac \mbf f_{1, 1}(\mbf 0) \, \mbf R_N^{[2]} + b_1^2 \, \jac \mbf f_{0, 2}(\mbf 0) \, \mbf R_N^{[2]}
        \\
        & \quad + 2 \, a_1 \, a_2 \, \jac \mbf f_{2, 0}(\mbf 0) \, \mbf R_N^{[1]} + a_1 \, b_2 \, \jac \mbf f_{1, 1}(\mbf 0) \, \mbf R_N^{[1]} + a_2 \, b_1 \, \jac \mbf f_{1, 1}(\mbf 0) \, \mbf R_N^{[1]} + 2 \, b_1 \, b_2 \, \jac \mbf f_{0, 2}(\mbf 0) \, \mbf R_N^{[1]}
        \\
        & \quad + a_1^3 \, \jac \mbf f_{3, 0}(\mbf 0) \, \mbf R_N^{[1]} + a_1^2 \, b_1 \, \jac \mbf f_{2, 1}(\mbf 0) \, \mbf R_N^{[1]} + a_1 \, b_1^2 \, \jac \mbf f_{1, 2}(\mbf 0) \, \mbf R_N^{[1]} +  b_1^3 \, \jac \mbf f_{0, 3}(\mbf 0) \, \mbf R_N^{[1]}
        \\
        & \quad + \jac \mbf F_{1, 0, 0}\left(\mbf u^{[1]}\right) \, \mbf R_N^{[3]} + \jac \mbf F_{1, 0, 0}\left(\mbf u^{[3]}\right) \, \mbf R_N^{[1]} + \jac \mbf F_{1, 0, 0} \left(\mbf u^{[2]}\right) \, \mbf R_N^{[2]} + a_1 \, \jac \mbf F_{1, 1, 0} \left(\mbf u^{[1]}\right) \, \mbf R_N^{[2]}
        \\
        & \quad + b_1 \, \jac \mbf F_{1, 0, 1} \left(\mbf u^{[1]}\right) \, \mbf R_N^{[2]} + a_1 \, \jac \mbf F_{1, 1, 0}\left(\mbf u^{[2]}\right) \, \mbf R_N^{[1]} + b_1 \, \jac \mbf F_{1, 0, 1}\left(\mbf u^{[2]}\right) \, \mbf R_N^{[1]} + a_2 \, \jac \mbf F_{1, 1, 0}\left(\mbf u^{[1]}\right) \, \mbf R_N^{[1]}
        \\
        & \quad + b_2 \, \jac \mbf F_{1, 0, 1}\left(\mbf u^{[1]}\right) \, \mbf R_N^{[1]} + a_1^2 \, \jac \mbf F_{1, 2, 0}\left(\mbf u^{[1]}\right) \, \mbf R_N^{[1]} + a_1 \, b_1 \, \jac \mbf F_{1, 1, 1}\left(\mbf u^{[1]}\right) \, \mbf R_N^{[1]} + b_1^2 \, \jac \mbf F_{1, 0, 2}\left(\mbf u^{[1]}\right) \, \mbf R_N^{[1]}
        \\
        & \quad + \jac \mbf F_{2, 0, 0}\left(\mbf u^{[1]}, \mbf u^{[1]}\right) \, \mbf R_N^{[2]} + 2 \, \jac \mbf F_{2, 0, 0}\left(\mbf u^{[1]}, \mbf u^{[2]}\right) \, \mbf R_N^{[1]} + a_1 \, \jac \mbf F_{2, 1, 0}\left(\mbf u^{[1]}, \mbf u^{[1]}\right) \, \mbf R_N^{[1]}
        \\
        & \quad + b_1 \, \jac \mbf F_{2, 0, 1}\left(\mbf u^{[1]}, \mbf u^{[1]}\right) \, \mbf R_N^{[1]} + \jac \mbf F_{3, 0, 0}\left(\mbf u^{[1]}, \mbf u^{[1]}, \mbf u^{[1]}\right) \, \mbf R_N^{[1]} + k^2 \, \hat D \, \left(\frac{\partial^2 \mbf R_N^{[4]}}{\partial x^2} + 2 \, \frac{\partial^2 \mbf R_N^{[2]}}{\partial x \partial X}\right),
    \end{align*}
    which has a solution given by
    \begin{align}
        \mbf R_N^{[4]} &= 2 \, A_1 \, \bar B_1 \, \mbf W_0^{[4]} + 4 \, \abs{A_1}^2 \, A_1 \, \bar B_1 \, \mbf W_{0, 2}^{[4]} + i \, \bar A_1 \, B_1' \, \mbf W_{0, 3}^{[4]} + i \, A_1' \, \bar B_1 \, \mbf W_{0, 3}^{[4]} + B_1 \, e^{ix} \, \mbf W_1^{[4]} \notag
        \\
        & + A_1^2 \, \bar B_1 \, e^{ix} \, \mbf W_{1, 2}^{[4]} + 2 \, \abs{A_1}^2 \, B_1 \, e^{ix} \, \mbf W_{1, 2}^{[4]} + i \, B_1' \, e^{ix} \, \mbf W_{1, 3}^{[4]} + 2 \, A_1 \, B_1 \, e^{2ix} \, \mbf W_2^{[4]} \notag
        \\
        & + A_1^3 \, \bar B_1 \, e^{2ix} \, \mbf W_{2, 2}^{[4]} + 3 \, \abs{A_1}^2 \, A_1 \, B_1 \, e^{2ix} \, \mbf W_{2, 2}^{[4]} + i \, A_1 \, B_1' \, e^{2ix} \, \mbf W_{2, 3}^{[4]} + i \, A_1' \, B_1 \, e^{2ix} \, \mbf W_{2, 3}^{[4]} \notag
        \\
        & + 3 \, A_1^2 \, B_1 \, e^{3ix} \, \mbf W_3^{[4]} + 4 \, A_1^3 \, B_1 \, e^{4ix} \, \mbf W_4^{[4]} + 2 \, A_1 \, \bar B_2 \, \mbf W_0^{[3]} + 2 \, A_2 \, \bar B_1 \, \mbf W_0^{[3]} + 2 \, A_2 \, \bar B_2 \, \mbf W_0^{[2]} \notag
        \\
        & + 2 \, A_1 \, \bar B_3 \, \mbf W_0^{[2]} + 2 \, A_3 \, \bar B_1 \, \mbf W_0^{[2]} + B_2 \, e^{ix} \, \mbf W_1^{[3]} + A_1^2 \, \bar B_2 \, e^{ix} \, \mbf W_{1, 2}^{[3]} + 2 \, \abs{A_1}^2 \, B_2 \, e^{ix} \, \mbf W_{1, 2}^{[3]} \notag
        \\
        & + 2 \, A_1 \, A_2 \, \bar B_1 \, e^{ix} \, \mbf W_{1, 2}^{[3]} + 2 \, A_1 \, \bar A_2 \, B_1 \, e^{ix} \, \mbf W_{1, 2}^{[3]} + 2 \, \bar A_1 \, A_2 \, B_1 \, e^{ix} \, \mbf W_{1, 2}^{[3]} + i \, B_2' \, e^{ix} \, \mbf W_{1, 3}^{[3]} \notag
        \\
        & + B_3 \, e^{ix} \, \mbf W_1^{[2]} + B_4 \, e^{ix} \, \bs \phi_1^{[1]} + 2 \, A_1 \, B_2 \, e^{2ix} \, \mbf W_2^{[3]} + 2 \, A_2 \, B_1 \, e^{2ix} \, \mbf W_2^{[3]} + 2 \, A_2 \, B_2 \, e^{2ix} \, \mbf W_2^{[2]} \notag
        \\
        & + 2 \, A_1 \, B_3 \, e^{2ix} \, \mbf W_2^{[2]} + 2 \, A_3 \, B_1 \, e^{2ix} \, \mbf W_2^{[2]} + 3 \, A_1^2 \, B_2 \, e^{3ix} \, \mbf W_3^{[3]} + 6 \, A_1 \, A_2 \, B_1 \, e^{3ix} \, \mbf W_3^{[3]} + c.c., \label{fourth_remainder}
    \end{align}
    where $B_4 = B_4(X)$.

    \paragraph{Order $\mathcal O\left(\varepsilon^5\right)$:} At this order, we have
    {\allowdisplaybreaks
	\begin{align*}
        \mbf 0 &= \jac \mbf f_{0, 0}(\mbf 0) \, \mbf R_N^{[5]} + a_1 \, \jac \mbf f_{1, 0}(\mbf 0) \, \mbf R_N^{[4]} + b_1 \, \jac \mbf f_{0, 1}(\mbf 0) \, \mbf R_N^{[4]} + a_2 \, \jac \mbf f_{1, 0}(\mbf 0) \, \mbf R_N^{[3]} + b_2 \, \jac \mbf f_{0, 1}(\mbf 0) \, \mbf R_N^{[3]}
        \\
        & \quad + a_3 \, \jac \mbf f_{1, 0}(\mbf 0) \, \mbf R_N^{[2]} + b_3 \, \jac \mbf f_{0, 1}(\mbf 0) \, \mbf R_N^{[2]} + a_4 \, \jac \mbf f_{1, 0}(\mbf 0) \, \mbf R_N^{[1]} + b_4 \, \jac \mbf f_{0, 1}(\mbf 0) \, \mbf R_N^{[1]} + a_1^2 \, \jac \mbf f_{2, 0}(\mbf 0) \, \mbf R_N^{[3]}
        \\
        & \quad + a_1 \, b_1 \, \jac \mbf f_{1, 1}(\mbf 0) \, \mbf R_N^{[3]} + b_1^2 \, \jac \mbf f_{0, 2}(\mbf 0) \, \mbf R_N^{[3]} + 2 \, a_1 \, a_2 \, \jac \mbf f_{2, 0}(\mbf 0) \, \mbf R_N^{[2]} + a_1 \, b_2 \, \jac \mbf f_{1, 1}(\mbf 0) \, \mbf R_N^{[2]}
        \\
        & \quad + a_2 \, b_1 \, \jac \mbf f_{1, 1}(\mbf 0) \, \mbf R_N^{[2]} + 2 \, b_1 \, b_2 \, \jac \mbf f_{0, 2}(\mbf 0) \, \mbf R_N^{[2]} + 2 \, a_1 \, a_3 \, \jac \mbf f_{2, 0}(\mbf 0) \, \mbf R_N^{[1]} + a_1 \, b_3 \, \jac \mbf f_{1, 1}(\mbf 0) \, \mbf R_N^{[1]}
        \\
        & \quad + a_3 \, b_1 \, \jac \mbf f_{1, 1}(\mbf 0) \, \mbf R_N^{[1]} + 2 \, b_1 \, b_3 \, \jac \mbf f_{0, 2}(\mbf 0) \, \mbf R_N^{[1]} + a_2^2 \, \jac \mbf f_{2, 0}(\mbf 0) \, \mbf R_N^{[1]} + a_2 \, b_2 \, \jac \mbf f_{1, 1}(\mbf 0) \, \mbf R_N^{[1]}
        \\
        & \quad + b_2^2 \, \jac \mbf f_{0, 2}(\mbf 0) \, \mbf R_N^{[1]} + a_1^3 \, \jac \mbf f_{3, 0}(\mbf 0) \, \mbf R_N^{[2]} + a_1^2 \, b_1 \, \jac \mbf f_{2, 1}(\mbf 0) \, \mbf R_N^{[2]} + a_1 \, b_1^2 \, \jac \mbf f_{1, 2}(\mbf 0) \, \mbf R_N^{[2]}
        \\
        & \quad + b_1^3 \, \jac \mbf f_{0, 3}(\mbf 0) \, \mbf R_N^{[2]} + 3 \, a_1^2 \, a_2 \, \jac \mbf f_{3, 0}(\mbf 0) \, \mbf R_N^{[1]} + 2 \, a_1 \, a_2 \, b_1 \, \jac \mbf f_{2, 1}(\mbf 0) \, \mbf R_N^{[1]} + a_2 \, b_1^2 \, \mbf F_{1, 2}(\mbf 0) \, \mbf R_N^{[1]}
        \\
        & \quad + a_1^2 \, b_2 \, \jac \mbf f_{2, 1}(\mbf 0) \, \mbf R_N^{[1]} + 2 \, a_1 \,  b_1 \, b_2 \, \jac \mbf f_{1, 2}(\mbf 0) \, \mbf R_N^{[1]} + 3 \, b_1^2 \, b_2 \, \jac \mbf f_{0, 3}(\mbf 0) \, \mbf R_N^{[1]} + a_1^4 \, \jac \mbf f_{4, 0}(\mbf 0) \, \mbf R_N^{[1]}
        \\
        & \quad + a_1^3 \, b_1 \, \jac \mbf f_{3, 1}(\mbf 0) \, \mbf R_N^{[1]} + a_1^2 \, b_1^2 \, \jac \mbf f_{2, 2}(\mbf 0) \, \mbf R_N^{[1]} + a_1 \, b_1^3 \, \jac \mbf f_{1, 3}(\mbf 0) \, \mbf R_N^{[1]} + b_1^4 \, \jac \mbf f_{0, 4}(\mbf 0) \, \mbf R_N^{[1]}
        \\
        & \quad + \jac \mbf F_{1, 0, 0}\left(\mbf u^{[1]}\right) \, \mbf R_N^{[4]} + \jac \mbf F_{1, 0, 0}\left(\mbf u^{[4]}\right) \, \mbf R_N^{[1]} + \jac \mbf F_{1, 0, 0} \left(\mbf u^{[2]}\right) \, \mbf R_N^{[3]} + \jac \mbf F_{1, 0, 0}\left(\mbf u^{[3]}\right) \, \mbf R_N^{[2]}
        \\
        & \quad + a_1 \, \jac \mbf F_{1, 1, 0}\left(\mbf u^{[1]}\right) \, \mbf R_N^{[3]} + b_1 \, \jac \mbf F_{1, 0, 1}\left(\mbf u^{[1]}\right) \, \mbf R_N^{[3]} + a_1 \, \jac \mbf F_{1, 1, 0}\left(\mbf u^{[3]}\right) \, \mbf R_N^{[1]} + b_1 \, \jac \mbf F_{1, 0, 1}\left(\mbf u^{[3]}\right) \, \mbf R_N^{[1]}
        \\
        & \quad + a_1 \, \jac \mbf F_{1, 1, 0}\left(\mbf u^{[2]}\right) \, \mbf R_N^{[2]} + b_1 \, \jac \mbf F_{1, 0, 1}\left(\mbf u^{[2]}\right) \, \mbf R_N^{[2]} + a_2 \, \jac \mbf F_{1, 1, 0}\left(\mbf u^{[1]}\right) \, \mbf R_N^{[2]} + b_2 \, \jac \mbf F_{1, 0, 1}\left(\mbf u^{[1]}\right) \, \mbf R_N^{[2]}
        \\
        & \quad + a_2 \, \jac \mbf F_{1, 1, 0}\left(\mbf u^{[2]}\right) \, \mbf R_N^{[1]} + b_2 \, \jac \mbf F_{1, 0, 1}\left(\mbf u^{[2]}\right) \, \mbf R_N^{[1]} + a_3 \, \jac \mbf F_{1, 1, 0}\left(\mbf u^{[1]}\right) \, \mbf R_N^{[1]} + b_3 \, \jac \mbf F_{1, 0, 1}\left(\mbf u^{[1]}\right) \, \mbf R_N^{[1]}
        \\
        & \quad + a_1^2 \, \jac \mbf F_{1, 2, 0}\left(\mbf u^{[1]}\right) \, \mbf R_N^{[2]} + a_1 \, b_1 \, \jac \mbf F_{1, 1, 1}\left(\mbf u^{[1]}\right) \, \mbf R_N^{[2]} + b_1^2 \, \jac \mbf F_{1, 0, 2}\left(\mbf u^{[1]}\right) \, \mbf R_N^{[2]} + a_1^2 \, \jac \mbf F_{1, 2, 0}\left(\mbf u^{[2]}\right) \, \mbf R_N^{[1]}
        \\
        & \quad + a_1 \, b_1 \, \jac \mbf F_{1, 1, 1}\left(\mbf u^{[2]}\right) \, \mbf R_N^{[1]} + b_1^2 \, \jac \mbf F_{1, 0, 2}\left(\mbf u^{[2]}\right) \, \mbf R_N^{[1]} + 2 \, a_1 \, a_2 \, \jac \mbf F_{1, 2, 0}\left(\mbf u^{[1]}\right) \, \mbf R_N^{[1]}
        \\
        & \quad + a_1 \, b_2 \, \jac \mbf F_{1, 1, 1}\left(\mbf u^{[1]}\right) \, \mbf R_N^{[1]} + a_2 \, b_1 \, \jac \mbf F_{1, 1, 1}\left(\mbf u^{[1]}\right) \, \mbf R_N^{[1]} + 2 \, b_1 \, b_2 \, \jac \mbf F_{1, 0, 2}\left(\mbf u^{[1]}\right) \, \mbf R_N^{[1]}
        \\
        & \quad + a_1^3 \, \jac \mbf F_{1, 3, 0}\left(\mbf u^{[1]}\right)  \, \mbf R_N^{[1]} + a_1^2 \, b_1 \, \jac \mbf F_{1, 2, 1}\left(\mbf u^{[1]}\right) \, \mbf R_N^{[1]} + a_1 \, b_1^2 \, \jac \mbf F_{1, 1, 2}\left(\mbf u^{[1]}\right) \, \mbf R_N^{[1]}
        \\
        & \quad + b_1^3 \, \jac \mbf F_{1, 0, 3}\left(\mbf u^{[1]}\right)  \, \mbf R_N^{[1]} + \jac \mbf F_{2, 0, 0}\left(\mbf u^{[1]}, \mbf u^{[1]}\right) \, \mbf R_N^{[3]} + 2 \, \jac \mbf F_{2, 0, 0}\left(\mbf u^{[1]}, \mbf u^{[3]}\right) \, \mbf R_N^{[1]}
        \\
        & \quad + 2 \, \jac \mbf F_{2, 0, 0}\left(\mbf u^{[1]}, \mbf u^{[2]}\right) \, \mbf R_N^{[2]} + \jac \mbf F_{2, 0, 0}\left(\mbf u^{[2]}, \mbf u^{[2]}\right) \, \mbf R_N^{[1]} + a_1 \, \jac \mbf F_{2, 1, 0}\left(\mbf u^{[1]}, \mbf u^{[1]}\right) \, \mbf R_N^{[2]}
        \\
        & \quad + b_1 \, \jac \mbf F_{2, 0, 1}\left(\mbf u^{[1]}, \mbf u^{[1]}\right) \, \mbf R_N^{[2]} + 2 \, a_1 \, \jac \mbf F_{2, 1, 0}\left(\mbf u^{[1]}, \mbf u^{[2]}\right) \, \mbf R_N^{[1]} + 2 \, b_1 \, \jac \mbf F_{2, 0, 1}\left(\mbf u^{[1]}, \mbf u^{[2]}\right) \, \mbf R_N^{[1]}
        \\
        & \quad + a_2 \, \jac \mbf F_{2, 1, 0}\left(\mbf u^{[1]}, \mbf u^{[1]}\right) \, \mbf R_N^{[1]} + b_2 \, \jac \mbf F_{2, 0, 1}\left(\mbf u^{[1]}, \mbf u^{[1]}\right) \, \mbf R_N^{[1]} + a_1^2 \, \jac \mbf F_{2, 2, 0}\left(\mbf u^{[1]}, \mbf u^{[1]}\right) \, \mbf R_N^{[1]}
        \\
        & \quad + a_1 \, b_1 \, \jac \mbf F_{2, 1, 1}\left(\mbf u^{[1]}, \mbf u^{[1]}\right) \, \mbf R_N^{[1]} + b_1^2 \, \jac \mbf F_{2, 0, 2}\left(\mbf u^{[1]}, \mbf u^{[1]}\right) \, \mbf R_N^{[1]} + \jac \mbf F_{3, 0, 0}\left(\mbf u^{[1]}, \mbf u^{[1]}, \mbf u^{[1]}\right) \, \mbf R_N^{[2]}
        \\
        & \quad + 3 \, \jac \mbf F_{3, 0, 0}\left(\mbf u^{[1]}, \mbf u^{[1]}, \mbf u^{[2]}\right) \, \mbf R_N^{[1]} + a_1 \, \jac \mbf F_{3, 1, 0}\left(\mbf u^{[1]}, \mbf u^{[1]}, \mbf u^{[1]}\right) \, \mbf R_N^{[1]} + b_1 \, \jac \mbf F_{3, 0, 1}\left(\mbf u^{[1]}, \mbf u^{[1]}, \mbf u^{[1]}\right) \, \mbf R_N^{[1]}
        \\
        & \quad + \jac \mbf F_{4, 0, 0}\left(\mbf u^{[1]}, \mbf u^{[1]}, \mbf u^{[1]}, \mbf u^{[1]}\right) \, \mbf R_N^{[1]} + k^2 \, \hat D \, \left(\frac{\partial^2 \mbf R_N^{[5]}}{\partial x^2} + 2 \, \frac{\partial^2 \mbf R_N^{[3]}}{\partial x \partial X} + \frac{\partial^2 \mbf R_N^{[1]}}{\partial X^2}\right),
    \end{align*}
    }
    Again as in Section \ref{sec:regular_asymptotics}, at this order, we need to determine a solvability condition associated with the resonant terms that are multiples of $e^{ix}$. Said part of this equation is given by
    \begin{align}
        \left(\jac \mbf f_{0, 0}(\mbf 0) - k^2 \, \hat D\right) \, \mbf R_N^{[5]} &= - \mbf Q_1^{[5]} \, B_1'' - i \, \mbf Q_2^{[5]} \, B_1' - i \, \mbf Q_3^{[5]} \, \abs{A_1}^2 \, B_1' - i \, \mbf Q_4^{[5]} \, A_1^2 \, \bar B_1' - \mbf Q_5^{[5]} \, B_1 - 2 \, \mbf Q_6^{[5]} \, \abs{A_1}^2 \, B_1 \notag
        \\
        & \quad - 3 \, \mbf Q_7^{[5]} \, \abs{A_1}^4 B_1 - i \, \mbf Q_3^{[5]} \, \bar A_1 \, A_1' \, B_1 - 2 \, i \, \mbf Q_4^{[5]} \, A_1 \, \bar A_1' \, B_1 - \mbf Q_6^{[5]} \, A_1^2 \, \bar B_1 \label{solv_cond_rem}
        \\
        & \quad - 2 \, \mbf Q_7^{[5]} \, \abs{A_1}^2 A_1^2 \, \bar B_1 - i \, \mbf Q_3^{[5]} \, A_1 \, A_1' \, \bar B_1 - \mbox{non-homogeneous part} - \ldots, \notag
    \end{align}
    where `non-homogeneous part' represents the terms associated with $\mbf E(\delta b)$ that appear at this order and `$\ldots$' stands for terms that are multiples of $A_2, A_3, A_4, B_2, B_3. B_4$ and are not required to establish this solvability condition as they are non-resonant.

    With this, when applying the inner product with $\bs \psi$ on both sides of \eqref{solv_cond_rem}, we obtain
    \begin{equation}
        \begin{aligned}
            \alpha_1 \, B_1'' + i \, \alpha_2 \, B_1' + i \, \alpha_3 \, \abs{A_1}^2 \, B_1' + i \, \alpha_3 \, A_1 \, A_1' \, \bar B_1 + i \, \alpha_3 \, \bar A_1 \, A_1' \, B_1 + i \, \alpha_4 \, A_1^2 \, \bar B_1' + 2 \, i \, \alpha_4 \, A_1 \, \bar A_1' \, B_1
            \\
            + \alpha_5 \, B_1 + \alpha_6 \, A_1^2 \, \bar B_1 + 2 \, \alpha_6 \, \abs{A_1}^2 \, B_1 + 2 \, \alpha_7 \, \abs{A_1}^2 \, A_1^2 \, \bar B_1 + 3 \, \alpha_7 \, \abs{A_1}^4 \, B_1 + \frac{\dd \delta}{\dd A_1}\left(A_1\right) = 0,
        \end{aligned} \label{rem_eq}
    \end{equation}
    where $\dfrac{\dd \delta}{\dd A_1}$ is the function one obtains when applying the inner product with the non-homogeneous part of \eqref{solv_cond_rem}, and satisfies $\dfrac{\dd \delta}{\dd A_1}\left(R_1 \, e^{ix}\right) = \dfrac{\dd \delta}{\dd R_1}\left(R_1\right) \, e^{ix}$ \Big(note that the term $\dfrac{\dd \delta}{\dd A_1}\left(A_1\right)$ is an abuse of notation used to simplify notation\Big). Now, if we set $B_1 = \left(R_B + i \, R_1 \, \varphi_B\right) \, e^{i \, \varphi_1}$, where $R_B$ and $\varphi_B$ are real functions, then we can split the resulting equation into
    \begin{equation}
        \begin{aligned}
            8 \, \alpha_1 \left(2 \, \frac{\dd \delta}{\dd R_1} + R_1^2 \, \left(R_1 \, \left(10 \, \alpha_7 \, R_1 \, R_B - \left(\alpha_3 - 3 \, \alpha_4\right) \, \phi_B'\right) + 6 \, \alpha_6 \, R_B\right) + 2 \, \alpha_5 \, R_B\right)
            \\
            + 16 \, \alpha_1^2 \, R_B'' + R_B \left(4 \, \alpha_2^2 + \left(11 \, \alpha_3^2 - 2 \, \alpha_4 \, \alpha_3 - 13 \, \alpha_4^2\right) R_1^4 + 24 \, \alpha_2 \left(\alpha_3 - \alpha_4\right) R_1^2\right) = 0,
        \end{aligned} \label{reequationremainder}
    \end{equation}
    and
    \begin{align}
        4 \, R_1 \left(2 \, \alpha_1 \left(3 \, \alpha_3 \, R_B \, R_1' + 3 \, \alpha_4 \, R_B \, R_1' + 2 \, \alpha_1 \, \varphi_B'' + 2 \, \alpha_5 \, \varphi_B\right) + \alpha_2^2 \, \varphi_B\right) \notag
        \\
        + 16 \, \alpha_1^2 \left(\varphi_B \, R_1'' + 2 \, R_1' \, \varphi_B'\right) + \left(3 \, \alpha_3^2 - 2 \, \alpha_4 \, \alpha_3 - 5 \, \alpha_4^2 + 16 \, \alpha_1 \, \alpha_7\right) R_1^5 \, \varphi_B \label{imageqremainder}
        \\
        + 8 \, \left(\alpha_2 \left(\alpha_3 - \alpha_4\right) + 2 \, \alpha_1 \alpha_6\right) R_1^3 \, \varphi_B + 8 \, \alpha_1 \left(\alpha_3 + \alpha_4\right) R_1^2 \, R_B' = 0. \notag
    \end{align}
    Now, in order to simplify expressions, note that to make all these expansions possible, we need the coefficient $\alpha_1$ to be different from zero, which lets us find the coefficients $\alpha_5, \alpha_6$ and $\alpha_7$ from \eqref{betadef}:
    \begin{align*}
        \alpha_5 & - \alpha_1 \, \beta_1 - \frac{\alpha_2^2}{4 \, \alpha _1},
        \\
        \alpha_6 &= \frac{\alpha_2 \, \left(\alpha_4 - \alpha_3\right)}{2 \, \alpha_1} - 2 \, \alpha_1 \, \beta_3,
        \\
        \alpha_7 &= - 3 \, \alpha_1 \, \beta_5 - \frac{\left(\alpha_3 + \alpha_4\right) \, \left(3 \, \alpha_3 - 5 \, \alpha_4\right)}{16 \, \alpha_1}
    \end{align*}
    With this, \eqref{imageqremainder} becomes
    \begin{align*}
        R_1^2 \, \varphi_B'' + 2 \, R_1 \, R_1' \, \varphi_B' = - \frac{\left(\alpha_3 + \alpha_4\right)}{2 \, \alpha_1} \left(R_1^3 \, R_B\right)',
    \end{align*}
    which implies
    \begin{align}
        \varphi_B' = - \frac{\alpha_3 + \alpha_4}{2 \, \alpha_1} \, R_1 \, R_B + 4 \, \alpha_1 \, \frac{L_3}{R_1^2}. \label{varphi_B}
    \end{align}
    Therefore, when replacing \eqref{varphi_B} into \eqref{reequationremainder}, we obtain
    \begin{align}
        R_B'' - \left(\beta_1 + 6 \, \beta_3 \, R_1^2 + 15 \, \beta_5 \, R_1^4\right) \, R_B = 2 \, \left(\alpha_3 - 3 \, \alpha_4\right) \, L_3 \, R_1 - \frac{1}{\alpha_1} \, \frac{\dd \delta}{\dd R_1}\left(R_1\right) = 0, \label{eqforRb}
    \end{align}
    which is a non-homogeneous second-order equation just as \eqref{eqforR2}, but with an extra non-homogeneous term. The solutions will then have the same form plus an extra term. In particular, the homogeneous solution to \eqref{eqforRb} is given by
    \begin{align*}
        R_{B, h}(X) = R_1' \, \left(L_1 + L_2 \, \int_{X_0}^X \frac{\dd s}{\left(R_1'\right)^2}\right),
    \end{align*}
    whilst the particular one is given by
    \begin{align*}
        R_{B, p}(X) = R_1' \, \left(\left(\alpha_3 - 3 \, \alpha_4\right) \, L_3 \, \int_{X_0}^X \frac{R_1^2}{\left(R_1'\right)^2} \, \dd s - \frac{1}{\alpha_1} \, \int_{X_0}^X \frac{\delta\left(R_1\right)}{\left(R_1'\right)^2} \, \dd s\right).
    \end{align*}    
    Therefore, by excluding $L_1$, we have
    \begin{align*}
        R_B(X) = R_1' \, \left(\int_{X_0}^X \frac{L_2 + \left(\alpha_3 - 3 \, \alpha_4\right) \, L_3 \, R_1^2}{\left(R_1'\right)^2} \, \dd s - \frac{1}{\alpha_1} \, \int_{X_0}^X \frac{\delta\left(R_1\right)}{\left(R_1'\right)^2} \, \dd s\right),
    \end{align*}
    which implies
    \begin{align*}
        \varphi_B(X) &= - \frac{\alpha_3 + \alpha_4}{2 \, \alpha_1} \, \int_{X_0}^X R_1 \, R_1' \, \left(\int_{X_0}^\omega \frac{L_2 + \left(\alpha_3 - 3 \, \alpha_4\right) \, L_3 \, R_1^2}{\left(R_1'\right)^2} \, \dd s - \frac{1}{\alpha_1} \, \int_{X_0}^\omega \frac{\delta\left(R_1\right)}{\left(R_1'\right)^2} \, \dd s\right) \, \dd \omega
        \\
        & \quad + 4 \, \alpha_1 \, L_3 \, \int_{X_0}^X \frac{1}{R_1^2} \, \dd s,
    \end{align*}
    and
    \begin{align}
        B_1 &= \left(R_B + i \, R_1 \, \varphi_B\right) \, e^{i \, \varphi_1} \notag
        \\
        &= \begin{multlined}[t]
            L_1 \, A_1' - i \, L_4 \, \bar A_1 + \left(R_1' \, \left(\int_{X_0}^X \frac{L_2 + \left(\alpha_3 - 3 \, \alpha_4\right) \, L_3 \, R_1^2}{\left(R_1'\right)^2} \, \dd s - \frac{1}{\alpha_1} \, \int_{X_0}^X \frac{\delta\left(R_1\right)}{\left(R_1'\right)^2} \, \dd s\right)\right.
            \\
            + i \, R_1 \, \left(4 \, \alpha_1 \, L_3 \, \int_{X_0}^X \frac{1}{R_1^2} \, \dd s \right.
            \\
            - \frac{\alpha_3 + \alpha_4}{2 \, \alpha_1} \, \int_{X_0}^X R_1 \, R_1' \, \left(\int_{X_0}^\omega \frac{L_2 + \left(\alpha_3 - 3 \, \alpha_4\right) \, L_3 \, R_1^2}{\left(R_1'\right)^2} \, \dd s\right.
            \\
            \left.\left.\left. - \frac{1}{\alpha_1} \, \int_{X_0}^\omega \frac{\delta\left(R_1\right)}{\left(R_1'\right)^2} \, \dd s\right) \, \dd \omega\right)\right) \, e^{i \, \varphi_1(X)},
        \end{multlined} \label{B_1exp}
    \end{align}

    \begin{remark}
        Once again, $R_B$ and $\varphi_B$ must, technically, be real functions (see Remark \ref{remark:non-real}). Nevertheless, we make the integration centred at $X_0$ to continue our analysis at singularities.
    \end{remark}

\section{Remainder with forcing due to truncation} \label{sec:second_residual}
    In Section \ref{sec:first_residual}, we analyzed the equation for the remainder whilst ignoring the forcing due to truncation. That helped us obtain a general expression for the amplitude of the residual, but the solution we obtained for it misses a term. In particular, we note that, due to the linearity of equation \eqref{remaindereq}, some constants of integration need to be determined in order to know the actual expression of our remainder. To find such constants, we focus our attention on a Stokes' line, at which a key dominant term gets switched on, and will let us focus on the key part of the remainder to study its boundedness. With the forcing due to truncation, equation \eqref{remaindereq} becomes:
    \begin{equation}
        \begin{aligned}
            \mbf 0 &= \sum_{p = 0}^{M_a} \sum_{q = 1}^{M_b} a^p \, b^q \, \jac \mbf f_{p, q}(\mbf S) \, \mbf R_N + k^2 \, \hat D \, \left(\frac{\partial^2 \mbf R_N}{\partial x^2} + 2 \, \varepsilon^2 \, \frac{\partial^2 \mbf R_N}{\partial x \partial X} + \varepsilon^4 \, \frac{\partial^2 \mbf R_N}{\partial X^2}\right)
            \\
            & \qquad + \varepsilon^{N + 1} \, k^2 \, \hat D \, \left(2 \, \frac{\partial^2 \mbf u^{[N - 1]}}{\partial x \partial X} + \frac{\partial^2 \mbf u^{[N - 3]}}{\partial X^2}\right) + \varepsilon^{N + 3} \, k^2 \, \hat D \, \frac{\partial^2 \mbf u^{[N - 1]}}{\partial X^2}.
        \end{aligned} \label{truncation_analysis}
    \end{equation}
    Now, let us recall that our leading-order solution provided by $X_0$ is given by \eqref{forrhs+}. Therefore, the dominant terms triggered by $X_0$, present in the forcing of \eqref{truncation_analysis}, are given by
    \begin{align*}
        \varepsilon^{N + 1} \, \kappa^N \, e^{i \hat x} \, \frac{\Gamma\left(\frac{N}{2} + \gamma\right)}{\left(X_0 - X\right)^{\frac{N}{2} + \gamma}} \sum_{r = 0, 2} e^{i (r - 1) \hat x} \, \mbf C_r,
    \end{align*}
    where
    \begin{align*}
        \mbf C_r &= h_r^{[0]} \, \left(\frac{1 + 2 \, i \, \kappa^2 \, r}{\kappa^4} - \varepsilon^2 \, \left(\frac{N}{2} + \gamma\right) \, \frac{1}{\kappa^2 \, \left(X - X_0\right)}\right) \, k^2 \, \hat D \, \bs \phi_1^{[1]}.
    \end{align*}
    Furthermore, two consecutive terms have a ratio given, approximately, by
    \begin{align*}
        \frac{\varepsilon^{N + 3} \, \kappa^{N + 2} \, \dfrac{\Gamma\left(\frac{N + 2}{2} + \gamma\right)}{\left(X_0 - X\right)^{\frac{N + 2}{2} + \gamma}} \, e^{irx}}{\varepsilon^{N + 1} \, \kappa^N \, \dfrac{\Gamma\left(\frac{N}{2} + \gamma\right)}{\left(X_0 - X\right)^{\frac{N}{2} + \gamma}} \, e^{irx}} \sim \varepsilon^2 \, \kappa^2 \, \left(\frac{\frac{N}{2} + \gamma}{X_0 - X}\right).
    \end{align*}
    We want consecutive terms to become the same order as $N \to \infty$, so we need
    \begin{align*}
        \varepsilon^2 \, \abs{\kappa}^2 \, \frac{\abs{\frac{N}{2} + \gamma}}{\abs{X_0 - X}} \sim 1 \qquad \text{then} \qquad \frac{N}{2} + \gamma \sim \frac{\abs{X_0 - X}}{\varepsilon^2 \, \abs{\kappa}^2} + \nu,
    \end{align*}
    where $\nu$ is a small finite number used to ensure that $N$ is an even natural number.
    
    We seek to find the Stokes' line, which is a singularity that triggers the dominant part of the remainder. With that in mind, we make the change $X - X_0 = \rho \, e^{i \theta}$, where $\rho > 0$ and $\theta \in \mathbb R$, which implies that $\dfrac{N}{2} + \gamma \sim \dfrac{\rho}{\varepsilon^2 \, \abs{\kappa}^2} + \nu$, and $X = X_0 + \rho \, e^{i \theta}$. Therefore, when using the phase shifts introduced in \eqref{first_order_approximation}, we have
    \begin{align*}
        e^{i \hat x} = e^{i \left(X - X_0 + X_0\right)/\varepsilon^2 - i\hat \chi} = e^{i \left(X - X_0\right)/\varepsilon^2} \, e^{iX_0/\varepsilon^2} \, e^{- i\hat \chi}.
    \end{align*}
    
    Furthermore, by using Stirling's approximation, we note that
    {\allowdisplaybreaks
    \begin{align*}
        e^{i\left(X - X_0\right)/\varepsilon^2} \, &\varepsilon^{N + 2 \, \gamma} \, \kappa^{N + 2 \, \gamma} \, \frac{\Gamma\left(\frac{N}{2} + \gamma\right)}{\left(X_0 - X\right)^{\frac{N}{2} + \gamma}}
        \\
        &\sim e^{i\left(X - X_0\right)/\varepsilon^2} \, \varepsilon^{N + 2 \, \gamma} \, \kappa^{N + 2 \, \gamma} \, \frac{1}{\left(X_0 - X\right)^{\frac{N}{2} + \gamma}} \, \sqrt{\frac{2 \, \pi}{\frac{N}{2} + \gamma}} \, \left(\frac{\frac{N}{2} + \gamma}{e}\right)^{\frac{N}{2} + \gamma}
        \\
        &= \sqrt{2 \, \pi} \, e^{i\left(X - X_0\right)/\varepsilon^2} \, \varepsilon^{N + 2 \, \gamma} \, \kappa^{N + 2 \, \gamma} \, \frac{1}{\left(X_0 - X\right)^{\frac{N}{2} + \gamma}} \, \left(\frac{N}{2} + \gamma\right)^{\frac{N - 1}{2} + \gamma} \, \frac{1}{e^{\frac{N}{2} + \gamma}}
        \\
        &\sim \sqrt{2 \, \pi} \, \varepsilon^{N + 2 \, \gamma} \, \kappa^{N + 2 \, \gamma} \, \frac{1}{\left(- \rho \, e^{i \, \theta}\right)^{\frac{N}{2} + \gamma}} \, \left(\frac{N}{2} + \gamma\right)^{\frac{N - 1}{2} + \gamma} \, \frac{e^{i\rho \, e^{i \theta}/\varepsilon^2}}{e^{\frac{N}{2} + \gamma}}
        \\
        &\sim \sqrt{2 \, \pi} \, \varepsilon^{2 \, \left(\frac{\rho}{\varepsilon^2 \, \abs{\kappa}^2} + \nu\right)} \, \kappa^{2 \, \left(\frac{\rho}{\varepsilon^2 \, \abs{\kappa}^2} + \nu\right)} \, \frac{1}{\left(\rho \, e^{i \, (\theta + \pi)}\right)^{\frac{\rho}{\varepsilon^2 \, \abs{\kappa}^2} + \nu}} \, \left(\frac{\rho}{\varepsilon^2 \, \abs{\kappa}^2} + \nu\right)^{\frac{\rho}{\varepsilon^2 \, \abs{\kappa}^2} + \nu - \frac{1}{2}} \, \frac{e^{i \rho \, e^{i \theta}/\varepsilon^2}}{e^{\frac{\rho}{\varepsilon^2 \, \abs{\kappa}^2} + \nu}}
        \\
        &= \sqrt{2 \, \pi} \, \frac{\kappa^{2 \, \left(\frac{\rho}{\varepsilon^2 \, \abs{\kappa}^2} + \nu\right)}}{\abs{\kappa}^{2 \, \left(\frac{\rho}{\varepsilon^2 \, \abs{\kappa}^2} + \nu\right)}} \, \frac{1}{\left(e^{i (\theta + \pi)}\right)^{\frac{\rho}{\varepsilon^2 \, \abs{\kappa}^2} + \nu}} \, \left(1 + \varepsilon^2 \, \frac{\abs{\kappa}^2 \, \nu}{\rho}\right)^{\frac{\rho}{\varepsilon^2 \, \abs{\kappa}^2} + \nu} \, \left(\frac{\rho}{\varepsilon^2 \, \abs{\kappa}^2} + \nu\right)^{- \frac{1}{2}} \, \frac{e^{i \rho \, e^{i \theta}/\varepsilon^2}}{e^{\frac{\rho}{\varepsilon^2 \, \abs{\kappa}^2} + \nu}}
        \\
        &\sim \sqrt{2 \, \pi} \, \frac{\kappa^{2 \, \left(\frac{\rho}{\varepsilon^2 \, \abs{\kappa}^2} + \nu\right)}}{\abs{\kappa}^{2 \, \left(\frac{\rho}{\varepsilon^2 \, \abs{\kappa}^2} + \nu\right)}} \, \frac{1}{\left(e^{i (\theta + \pi)}\right)^{\frac{\rho}{\varepsilon^2 \, \abs{\kappa}^2} + \nu}} \, e^{\nu} \, \frac{\varepsilon \, \abs{\kappa}}{\rho^{\frac{1}{2}}} \, \frac{e^{i \rho \, e^{i \theta}/\varepsilon^2}}{e^{\frac{\rho}{\varepsilon^2 \, \abs{\kappa}^2} + \nu}}
        \\
        &= \sqrt{2 \, \pi} \, \varepsilon \, \abs{\kappa} \, \frac{\kappa^{2 \, \left(\frac{\rho}{\varepsilon^2 \, \abs{\kappa}^2} + \nu\right)}}{\rho^{\frac{1}{2}} \, \abs{\kappa}^{2 \, \left(\frac{\rho}{\varepsilon^2 \, \abs{\kappa}^2} + \nu\right)}} \, \frac{e^{- \frac{\rho}{\varepsilon^2 \, \abs{\kappa}^2}}}{\left(e^{i (\theta + \pi)}\right)^{\frac{\rho}{\varepsilon^2 \, \abs{\kappa}^2} + \nu}} \, e^{i \rho \, e^{i \theta}/\varepsilon^2}.
    \end{align*}
    }
    Here, observe that this quantity is exponentially small except in a neighborhood of the Stokes' line, $\theta = - \dfrac{\pi}{2}$. Therefore, we let $\theta = - \dfrac{\pi}{2} + \varepsilon \, \hat \theta$, which implies
    \begin{align*}
        e^{i \left(X - X_0\right)/\varepsilon^2} \, \varepsilon^{N + 2 \, \gamma} \, \kappa^{N + 2 \, \gamma} \, \frac{\Gamma\left(\frac{N}{2} + \gamma\right)}{\left(X_0 - X\right)^{\frac{N}{2} + \gamma}} \sim \sqrt{2 \, \pi} \, \varepsilon \, \abs{\kappa} \, \frac{\kappa^{2 \, \left(\frac{\rho}{\varepsilon^2 \, \abs{\kappa}^2} + \nu\right)}}{\rho^{\frac{1}{2}} \, \abs{\kappa}^{2 \, \left(\frac{\rho}{\varepsilon^2 \, \abs{\kappa}^2} + \nu\right)}} \, \frac{e^{- \frac{\rho}{\varepsilon^2 \, \abs{\kappa}^2}}}{\left(e^{i \left(\frac{\pi}{2} + \varepsilon \, \hat \theta\right)}\right)^{\frac{\rho}{\varepsilon^2 \, \abs{\kappa}^2} + \nu}} \, e^{\rho \, e^{i \varepsilon \, \hat \theta}/\varepsilon^2}
        \\
        \sim \sqrt{2 \, \pi} \, \varepsilon \, \abs{\kappa} \, \frac{\kappa^{2 \, \left(\frac{\rho}{\varepsilon^2 \, \abs{\kappa}^2} + \nu\right)}}{\rho^{\frac{1}{2}} \, \abs{\kappa}^{2 \, \left(\frac{\rho}{\varepsilon^2 \, \abs{\kappa}^2} + \nu\right)}} \, \frac{e^{- \frac{\rho}{\varepsilon^2 \, \abs{\kappa}^2}}}{\left(i \, e^{i \, \varepsilon \, \hat \theta}\right)^{\frac{\rho}{\varepsilon^2 \, \abs{\kappa}^2} + \nu}} \, e^{\rho \, \left(1 + i \, \varepsilon \, \hat \theta - \frac{\varepsilon^2 \, \hat \theta^2}{2}\right)/\varepsilon^2}.
    \end{align*}
    Now, if we take $\kappa^2 = \kappa_+^2$, then
    \begin{align*}
        e^{i \left(X - X_0\right)/\varepsilon^2} \, \varepsilon^{N + 2 \, \gamma} \, \kappa^{N + 2 \, \gamma} \, \frac{\Gamma\left(\frac{N}{2} + \gamma\right)}{\left(X_0 - X\right)^{\frac{N}{2} + \gamma}} &\sim \sqrt{2 \, \pi} \, \varepsilon \, \abs{\kappa} \, \frac{1}{\rho^{\frac{1}{2}}} \, \frac{e^{- \frac{\rho}{\varepsilon^2 \, \abs{\kappa}^2}}}{\left(e^{i \, \varepsilon \, \hat \theta}\right)^{\frac{\rho}{\varepsilon^2 \, \abs{\kappa}^2} + \nu}} \, e^{\rho \, \left(1 + i \, \varepsilon \, \hat \theta - \frac{\varepsilon^2 \, \hat \theta^2}{2}\right)/\varepsilon^2}
        \\
        &\sim \frac{\sqrt{2 \, \pi} \, \varepsilon \, \abs{\kappa}}{\rho^{\frac{1}{2}}} \, e^{- \rho \, \frac{\hat \theta^2}{2}}
    \end{align*}
    Furthermore, with the scales we have defined, we obtain
    \begin{align*}
        \mbf C_r &= h_r^{[0]} \, \left(\frac{1 + 2 \, i \, \kappa^2 \, r}{\kappa^4} - \varepsilon^2 \, \left(\frac{N}{2} + \gamma\right) \, \frac{1}{\kappa^2 \, \left(X - X_0\right)}\right) \, k^2 \, \hat D \, \bs \phi_1^{[1]}
        \\
        &\sim h_r^{[0]} \, \left(\frac{1 + 2 \, i \, \kappa^2 \, r}{\kappa^4} - \varepsilon^2 \, \left(\frac{\rho}{\varepsilon^2 \, \abs{\kappa}^2} + \nu\right) \, \frac{1}{\kappa^2 \, \rho \, e^{i \, \left(- \frac{\pi}{2} + \varepsilon \, \hat \theta\right)}}\right) \, k^2 \, \hat D \, \bs \phi_1^{[1]}
        \\
        &\sim h_r^{[0]} \, \left(\frac{1 + 2 \, i \, \kappa^2 \, r}{\kappa^4} - \frac{i}{\abs{\kappa}^2 \, \kappa^2 \, e^{i \, \varepsilon \, \hat \theta}}\right) \, k^2 \, \hat D \, \bs \phi_1^{[1]}
        \\
        &\sim \frac{h_r^{[0]}}{\kappa^2} \, \left(\frac{1 + 2 \, i \, \kappa^2 \, r}{\kappa^2} - \frac{i}{\abs{\kappa}^2} \, \left(1 - i \, \varepsilon \, \hat \theta\right)\right) \, k^2 \, \hat D \, \bs \phi_1^{[1]}
        \\
        &\sim h_r^{[0]} \, \left(2 \, r - 2 + i \, \varepsilon \, \hat \theta\right) \, k^2 \, \hat D \, \bs \phi_1^{[1]},
    \end{align*}
    which implies
    \begin{align}
        \mbf C_0 \sim h_0^{[0]} \, \left(- 2 + i \, \varepsilon \, \hat \theta\right) \, k^2 \, \hat D \, \bs \phi_1^{[1]}, \qquad \text{and} \qquad \mbf C_2 \sim h_2^{[0]} \, \left(2 + i \, \varepsilon \, \hat \theta\right) \, k^2 \, \hat D \, \bs \phi_1^{[1]}. \label{change-DM}
    \end{align}
    Therefore, the second line of \eqref{truncation_analysis} is, asymptotically, given by
    \begin{multline*}
        \varepsilon^{N + 1} \, \kappa^N \, e^{ix} \, \frac{\Gamma\left(\frac{N}{2} + \gamma\right)}{\left(X_0 - X\right)^{\frac{N}{2} + \gamma}} \sum_{r = 0, 2} e^{i (r - 1) \hat x} \, \mbf C_r
        \\
        = \varepsilon^{1 - 2 \, \gamma} \, \kappa^{- 2 \, \gamma} \, e^{i X_0/\varepsilon^2} \, e^{- i \hat \chi} \, \left(e^{i \left(X - X_0\right)/\varepsilon^2} \, \varepsilon^{N + 2 \, \gamma} \, \kappa^{N + 2 \, \gamma} \, \frac{\Gamma\left(\frac{N}{2} + \gamma\right)}{\left(X_0 - X\right)^{\frac{N}{2} + \gamma}}\right) \, \sum_{r = 0, 2} e^{i (r - 1) x} \, \mbf C_r
        \\
        \sim \varepsilon^{1 - 2 \, \gamma} \, \kappa^{- 2 \, \gamma} \, e^{i X_0/\varepsilon^2} \, e^{- i \hat \chi} \, \left(\frac{\sqrt{2 \, \pi} \, \varepsilon \, \abs{\kappa}}{\rho^{\frac{1}{2}}} \, e^{- \rho \, \frac{\hat \theta^2}{2}}\right) \, \sum_{r = 0, 2} e^{i (r - 1) \hat x} \, \mbf C_r
        \\
        \sim \varepsilon^{2 - 2 \, \gamma} \, \kappa^{- 2 \, \gamma} \, e^{i X_0/\varepsilon^2} \, e^{- i \hat \chi} \, \frac{\sqrt{2 \, \pi} \, \abs{\kappa} \, k^2}{\rho^{\frac{1}{2}}} \, e^{- \rho \, \frac{\hat \theta^2}{2}} \left(\left(- 2 + i \, \hat \theta \, \varepsilon\right) \, h_0^{[0]} \, e^{- i \hat x} + \left(2 + i \, \hat \theta \, \varepsilon\right) \, h_2^{[0]} \, e^{i \hat x}\right) \, \hat D \, \bs \phi_1^{[1]}.
    \end{multline*}
    Moreover,
    \begin{align*}
        \frac{\partial}{\partial \hat \theta} = \frac{\partial \theta}{\partial \hat \theta} \, \frac{\partial X}{\partial \theta} \, \frac{\partial}{\partial X} = i \, \rho \, e^{i \, \theta} \, \varepsilon \, \frac{\partial}{\partial X} = \rho \, e^{i \, \varepsilon \, \hat \theta} \, \varepsilon \, \frac{\partial}{\partial X},
    \end{align*}
    which yields
    \begin{align*}
        X_0 - X &= - \rho \, e^{i\theta} = i \, \rho \, e^{i \, \varepsilon \, \hat \theta},
        \\
        e^{i \hat x} &= e^{\rho \, e^{i \, \varepsilon \, \hat \theta}/\varepsilon^2} \, e^{i X_0/\varepsilon^2} \, e^{- i \hat \chi}
        \\
        \frac{\partial}{\partial X} &= \frac{e^{- i \varepsilon \, \hat \theta}}{\rho \, \varepsilon} \, \frac{\partial}{\partial \hat \theta}.
    \end{align*}
    Thus, \eqref{truncation_analysis} becomes
    \begin{align*}
        \mbf 0 = \sum_{p = 0}^{M_a} \, \sum_{q = 0}^{M_b} a^p \, b^q \, \jac \mbf f_{p, q}(\mbf S) \, \mbf R_{N, 2} + k^2 \, \hat D \, \left(\frac{\partial^2 \mbf R_{N, 2}}{\partial x^2} + 2 \, \frac{\varepsilon}{\rho} \frac{\partial^2 \mbf R_{N, 2}}{\partial x \partial \hat \theta} e^{- i \, \varepsilon \, \hat \theta} + \frac{\varepsilon^2}{\rho^2} \, \frac{\partial^2 \mbf R_{N, 2}}{\partial \hat \theta^2} e^{- 2 \, i \, \varepsilon \, \hat \theta}\right)
        \\
        + \varepsilon^{2 - 2 \, \gamma} \, \kappa^{- 2 \, \gamma} \, e^{i X_0/\varepsilon^2} \, e^{- i \hat \chi} \, \frac{\sqrt{2 \, \pi} \, \abs{\kappa} \, k^2}{\rho^{\frac{1}{2}}} \, e^{- \rho \, \frac{\hat \theta^2}{2}} \, \left(\left(- 2 + i \, \hat \theta \, \varepsilon\right) \, h_0^{[0]} \, e^{- ix} + \left(2 + i \, \hat \theta \, \varepsilon\right) \, h_2^{[0]} \, e^{ix}\right) \, \hat D \, \bs \phi_1^{[1]},
    \end{align*}
    which can be expanded as
    \begin{align}
        \mbf 0 = \sum_{p = 0}^{M_a} \, \sum_{q = 0}^{M_b} a^p \, b^q \, \jac \mbf f_{p, q}(\mbf S) \, \mbf R_{N, 2} + k^2 \, \hat D \, \left(\frac{\partial^2 \mbf R_{N, 2}}{\partial x^2} + 2 \, \frac{\varepsilon}{\rho} \left(1 - i \, \varepsilon \, \hat \theta\right) \frac{\partial^2 \mbf R_{N, 2}}{\partial x \partial \hat \theta} + \frac{\varepsilon^2}{\rho^2} \, \left(1 - 2 \, i \, \varepsilon \, \hat \theta\right) \, \frac{\partial^2 \mbf R_{N,2}}{\partial \hat \theta^2}\right) \notag
        \\
        + \varepsilon^{2 - 2 \, \gamma} \, \kappa^{- 2 \, \gamma} \, e^{i X_0/\varepsilon^2} \, e^{- i \hat \chi} \, \frac{\sqrt{2 \, \pi} \, \abs{\kappa} \, k^2}{\rho^{\frac{1}{2}}} \, e^{- \rho \, \frac{\hat \theta^2}{2}} \, \left(\left(- 2 + i \, \hat \theta \, \varepsilon\right) \, h_0^{[0]} \, e^{- ix} + \left(2 + i \, \hat \theta \, \varepsilon\right) \, h_2^{[0]} \, e^{ix}\right) \, \hat D \, \bs \phi_1^{[1]}, \hspace{0.5cm} \label{exp_rem}
    \end{align}
    Now, we proceed to solve \eqref{exp_rem}. To do so, we use the following expansion:
    \begin{align*}
        \mbf R_{N, 2} = \varepsilon^{- 2 \, \gamma} \, e^{i X_0/\varepsilon^2} \, \sum_{j \geq 1} \varepsilon^j \, \mbf R_{N, 2}^{[j]}.
    \end{align*}
    Therefore, when solving this equation order by order, we have
    
    \paragraph{Order $\mathcal O(\varepsilon)$:} At this order, \eqref{exp_rem} becomes
    \begin{align*}
        \left(\jac \mbf f_{0, 0}(\mbf 0) + k^2 \, \hat D \, \frac{\partial^2}{\partial x^2}\right) \mbf R_{N, 2}^{[1]} = \mbf 0,
    \end{align*}
    which has the following solution:
    \begin{align*}
        \mbf R_{N, 2}^{[1]} &= \left(C_1\left(\hat \theta\right) \, e^{ix} + C_{- 1}\left(\hat \theta\right) \, e^{- ix}\right) \, \bs \phi_1^{[1]},
    \end{align*}
    and we will omit the dependence on $\hat \theta$ from now on when there is no confusion.
    \paragraph{Order $\mathcal O\left(\varepsilon^2\right)$:} At this order, \eqref{exp_rem} becomes
    \begin{align*}
        \left(\jac \mbf f_{0, 0}(\mbf 0) + k^2 \, \hat D \, \frac{\partial^2}{\partial x^2}\right) \, \mbf R_{N, 2}^{[2]} &= - \jac \mbf F_{1, 0, 0}\left(\mbf u^{[1]}\right) \, \mbf R_{N, 2}^{[1]} - a_1 \, \jac \mbf f_{1, 0}(\mbf 0) \, \mbf R_{N, 2}^{[1]} - b_1 \, \jac \mbf f_{0, 1}(\mbf 0) \, \mbf R_{N, 2}^{[1]}
        \\
        & \quad - \frac{2}{\rho} \, k^2 \, \hat D \, \frac{\partial^2 \mbf R_{N, 2}^{[1]}}{\partial x \partial \hat \theta} + \kappa^{- 2 \, \gamma} \, e^{- i \hat \chi} \, \frac{\sqrt{2 \, \pi} \, \abs{\kappa}}{\rho^{\frac{1}{2}}} \, e^{- \rho \, \frac{\hat \theta^2}{2}} \, \left(h_0^{[0]} \, e^{- ix} \, \left(2 \, k^2 \, \hat D \, \bs \phi_1^{[1]}\right)\right)
        \\
        & \quad - \kappa^{- 2 \, \gamma} \, e^{- i \hat \chi} \, \frac{\sqrt{2 \, \pi} \, \abs{\kappa}}{\rho^{\frac{1}{2}}} \, e^{- \rho \, \frac{\hat \theta^2}{2}} \, \left(h_2^{[0]} \, e^{ix} \, \left(2 \, k^2 \, \hat D \, \bs \phi_1^{[1]}\right)\right).
    \end{align*}
    Therefore,
    \begin{align*}
        \mbf R_{N, 2}^{[2]} = 2 \, \left(A_1 \, C_{- 1} + \bar A_1 \, C_1\right) \, \mbf W_0^{[2]} + \left(C_1 \, e^{ix} + C_{- 1} \, e^{-ix}\right) \, \mbf W_1^{[2]} - \frac{1}{\rho} \, \left(i \, C_1' \, e^{ix} - i \, C_{- 1}' \, e^{- ix}\right) \mbf W_{1, 3}^{[3]}
        \\
        - \kappa^{- 2 \, \gamma} \, e^{- i \hat \chi} \, \frac{\sqrt{2 \, \pi} \, \abs{\kappa}}{\rho^{\frac{1}{2}}} \, e^{- \rho \, \frac{\hat \theta^2}{2}} \, h_0^{[0]} \, e^{- ix} \, \mbf W_{1, 3}^{[3]} + \kappa^{- 2 \, \gamma} \, e^{- i \hat \chi} \, \frac{\sqrt{2 \, \pi} \, \abs{\kappa}}{\rho^{\frac{1}{2}}} \, e^{- \rho \, \frac{\hat \theta^2}{2}} \, h_2^{[0]} \, e^{ix} \, \mbf W_{1, 3}^{[3]}
        \\
        + 2 \, \left(A_1 \, C_1 \, e^{2ix} + \bar A_1 \, C_{- 1} \, e^{-2ix}\right) \, \mbf W_2^{[2]}.
    \end{align*}
    \paragraph{Order $\mathcal O\left(\varepsilon^3\right)$:} At this order, \eqref{exp_rem} becomes
    {\allowdisplaybreaks
    \begin{align*}
        \left(\jac \mbf f_{0, 0}(\mbf 0) + k^2 \, \hat D \, \frac{\partial^2}{\partial x^2}\right) \, \mbf R_{N, 2}^{[3]} &= - \jac \mbf F_{1, 0, 0}\left(\mbf u^{[2]}\right) \, \mbf R_{N, 2}^{[1]} - a_1 \, \jac \mbf F_{1, 1, 0}\left(\mbf u^{[1]}\right) \, \mbf R_{N, 2}^{[1]}
        \\
        & - b_1 \, \jac \mbf F_{1, 0, 1}\left(\mbf u^{[1]}\right) \, \mbf R_{N, 2}^{[1]} - \jac \mbf F_{2, 0, 0}\left(\mbf u^{[1]}, \mbf u^{[1]}\right) \, \mbf R_{N, 2}^{[1]}
        \\
        & - \jac \mbf F_{1, 0, 0}\left(\mbf u^{[1]}\right) \, \mbf R_{N, 2}^{[2]} - a_1 \, \jac \mbf f_{1, 0}(\mbf 0) \, \mbf R_{N, 2}^{[2]} - b_1 \, \jac \mbf f_{0, 1}(\mbf 0) \, \mbf R_{N, 2}^{[2]}
        \\
        & - a_2 \, \jac \mbf f_{1, 0}(\mbf 0) \, \mbf R_{N, 2}^{[1]} - b_2 \, \jac \mbf f_{0, 1}(\mbf 0) \, \mbf R_{N, 2}^{[1]} - a_1^2 \, \jac \mbf f_{2, 0}(\mbf 0) \, \mbf R_{N, 2}^{[1]}
        \\
        & - a_1 \, b_1 \, \jac \mbf f_{1, 1}(\mbf 0) \, \mbf R_{N, 2}^{[1]} - b_1^2 \, \jac \mbf f_{0, 2}(\mbf 0) \, \mbf R_{N, 2}^{[1]} - \frac{2}{\rho} \, k^2 \, \hat D \, \frac{\partial^2 \mbf R_{N, 2}^{[2]}}{\partial x \partial \hat \theta}
        \\
        & + 2 \, i \, \frac{\hat \theta}{\rho} \, k^2 \, \hat D \, \frac{\partial^2 \mbf R_{N, 2}^{[1]}}{\partial x \partial \hat \theta} - \frac{1}{\rho^2} \, k^2 \, \hat D \, \frac{\partial^2 \mbf R_{N, 2}^{[1]}}{\partial \hat \theta^2}
        \\
        & - i \, \kappa^{- 2 \, \gamma} \, e^{- i \hat \chi} \, \frac{\sqrt{2 \, \pi} \, \abs{\kappa} \, k^2}{\rho^{\frac{1}{2}}} \, e^{- \rho \, \frac{\hat \theta^2}{2}} \, \hat \theta \, \left(h_0^{[0]} \, e^{- i x} + h_2^{[0]} \, \, e^{i x}\right) \, \hat D \, \bs \phi_1^{[1]}.
    \end{align*}
    }
    Now, as usual, we need to determine solvability conditions to ensure that this equation has a solution. In particular, we obtain two solvability conditions given by
    \begin{align}
        \left \langle \hat D \, \mbf W_{1, 3}^{[3]}, \bs \psi\right \rangle \, \frac{2 \, k^2 \, \left(C_{- 1}'' - i \, \kappa^{- 2 \, \gamma} \, e^{- i \hat \chi} \, \sqrt{2 \, \pi} \, \abs{\kappa} \, \rho^{3/2} \, e^{- \rho \, \frac{\hat \theta^2}{2}} \, \hat \theta \, h_0^{[0]}\right)}{\rho^2} &= 0, \label{solvcondC-1}
        \\
        \left \langle \hat D \, \mbf W_{1, 3}^{[3]}, \bs \psi \right \rangle \, \frac{2 \, k^2 \, \left(C_1'' - i \, \kappa^{- 2 \, \gamma} \, e^{- i \hat \chi} \, \sqrt{2 \, \pi} \, \abs{\kappa} \, \rho^{3/2} \, e^{- \rho \, \frac{\hat \theta^2}{2}} \, \hat \theta \, h_2^{[0]}\right)}{\rho^2} &= 0, \label{solvcondC1}
    \end{align}
    where $'$ denotes the derivative with respect to $\theta$, and we have assumed that $a_1 + b_1 = 0$ to ensure the convergence of the expressions we develop below noting that, generically, $a_1 = b_1 = 0$ (see \eqref{zero_first_order}). This yields
    \begin{align*}
        C_{- 1}'' &= i \, \kappa^{- 2 \, \gamma} \, e^{- i \hat \chi} \, \sqrt{2 \, \pi} \, \abs{\kappa} \, \rho^{3/2} \, e^{- \rho \, \frac{\hat \theta^2}{2}} \, \hat \theta \, h_0^{[0]},
        \\
        C_1'' &= i \, \kappa^{- 2 \, \gamma} \, e^{- i \hat \chi} \, \sqrt{2 \, \pi} \, \abs{\kappa} \, \rho^{3/2} \, e^{- \rho \, \frac{\hat \theta^2}{2}} \, \hat \theta \, h_2^{[0]}.
    \end{align*}
    Now, when integrating these expressions, we note that
    \begin{align*}
        C_{- 1} &= - i \, \kappa^{- 2 \, \gamma} \, e^{- i \hat \chi} \, \sqrt{2 \, \pi} \, \abs{\kappa} \, \rho^{\frac{1}{2}} \, h_0^{[0]} \, \int_{- \infty}^{\hat \theta} e^{-\frac{\nu^2 \, \rho}{2}} \dd \nu,
        \\
        C_1 &= - i \, \kappa^{- 2 \, \gamma} \, e^{- i \hat \chi} \, \sqrt{2 \, \pi} \, \abs{\kappa} \, \rho^{\frac{1}{2}} \, h_2^{[0]} \, \int_{- \infty}^{\hat \theta} e^{- \frac{\nu^2 \, \rho}{2}} \, \dd \nu,
    \end{align*}
    which implies
    \begin{align*}
        C_{- 1} &= - 2 \, i \, \kappa^{- 2 \, \gamma} \, e^{- i \hat \chi} \, \sqrt{\pi} \, \abs{\kappa} \, h_0^{[0]} \, \int_{- \infty}^{\sqrt{\frac{\rho}{2}} \, \hat \theta}  e^{-\nu ^2} \, \dd \nu,
        \\
        C_1 &= - 2 \, i \, \kappa ^{- 2 \, \gamma} \, e^{- i \hat \chi} \, \sqrt{\pi} \, \abs{\kappa} \, h_2^{[0]} \, \int_{- \infty}^{\sqrt{\frac{\rho}{2}} \, \hat \theta} e^{- \nu ^2} \, \dd \nu.
    \end{align*}
    where we have set the condition $C_{- 1}, C_1 \to 0$ as $\hat \theta \to - \infty$. Thus, as we cross the Stokes' line as $\hat \theta$ goes from $- \infty$ to $\infty$,
    \begin{align}
        - 2 \, i \, \pi \, \varepsilon^{- 2 \, \gamma} \, \kappa^{- 2 \, \gamma} \, e^{i X_0/\varepsilon^2} \, e^{- i \hat \chi} \, \abs{\kappa} \, \left(h_2^{[0]} \, e^{ix} + h_0^{[0]} \, e^{- ix}\right) \, \bs \phi_1^{[1]} \label{key}
    \end{align}
    gets triggered by $\kappa_+^2$. Furthermore, by symmetry, we have that
    \begin{align}
        2 \, i \, \pi \, \varepsilon^{- 2 \, \bar \gamma} \, \kappa^{- 2 \, \bar \gamma} \, e^{- i \bar X_0/\varepsilon^2} \, e^{i \hat \chi} \, \abs{\kappa} \, \left(\ol{h_2^{[0]}} \, e^{- ix} + \ol{h_0^{[0]}} \, e^{ix}\right) \, \bs \phi_1^{[1]}
    \end{align}
    gets triggered by $\kappa_-^2$.
    
    Now, let us recall that the outer solution we had obtained for the remainder in Section \ref{sec:first_residual} is given by $\mbf R_N^{[1]} = B_1 \, e^{ix} \, \bs \phi_1^{[1]} + c.c.,$ where $B_1$ is given by \eqref{B_1exp}. Now, as in \cite{Chapman}, we note that $L_1$ (respectively, $L_4$) corresponds to translating the solution $\mbf u^{[1]}$ in $X$ (respectively, $x$). Therefore, these are not relevant to our analysis. The constant we need to focus on is $L_2$, which controls the solution triggered at the Stokes' line. Therefore, for simplicity, we take $L_1 = L_3 = L_4 = 0$, which yields
    \begin{align*}
        B_1 &= \begin{multlined}[t]
            \left(R_1' \, \left(\int_{X_0}^X \frac{L_2}{\left(R_1'\right)^2} \, \dd s - \frac{1}{\alpha_1} \, \int_{X_0}^X \frac{\delta\left(R_1\right)}{\left(R_1'\right)^2} \, \dd s\right)\right.
            \\
            \left. - i \, \frac{\alpha_3 + \alpha_4}{2 \, \alpha_1} \, R_1 \, \int_{X_0}^X R_1 \, R_1' \, \left(\int_{X_0}^\omega \frac{L_2}{\left(R_1'\right)^2} \, \dd s - \frac{1}{\alpha_1} \, \int_{X_0}^\omega \frac{\delta\left(R_1\right)}{\left(R_1'\right)^2} \, \dd s\right) \, \dd \omega\right) \, e^{i \, \varphi_1(X)}.
        \end{multlined}
    \end{align*}
    Furthermore, if we focus on the contribution of the term related to $L_2$ in order to match solutions, we have that
    \begin{equation}
        \begin{aligned}
            B_1 &= 
            L_2 \, \left(R_1' \, \int_{X_0}^X \frac{\dd s}{\left(R_1'\right)^2} - i \, \frac{\alpha_3 + \alpha_4}{2 \, \alpha_1} \, R_1 \, \int_{X_0}^X \left(R_1 \, R_1' \, \int_{X_0}^\omega \frac{\dd s}{\left(R_1'\right)^2}\right) \, \dd \omega\right) \, e^{i \, \varphi_1(X)},
            \\
            &= \frac{L_2}{K_2} \, h_2^{[0]},
        \end{aligned} \label{b1}
    \end{equation}
    which has an asymptotic behavior given by,
    \begin{align}
        B_1 \sim - L_2 \, \frac{\sqrt{2} \, \beta_3}{8 \, \beta_1^2} \, \sqrt{- \frac{\beta_1}{\beta_3}} \, (1 + 2 \, \eta \,  i) \, e^{2 \, \sqrt{\beta_1} \, X} \, e^{i \, \varphi_1(X)}, \label{B_1infinity}
    \end{align}
    as $X \to \infty$.
    
    Moreover, when matching \eqref{b1} and \eqref{key} with $\kappa^2 = \kappa_+^2$, we obtain
    \begin{align}
        L_2 = L_2^+ &= - 2 \, i \, \pi \, \varepsilon^{- 2 \, \gamma} \, \kappa^{- 2 \, \gamma} \, e^{i X_0/\varepsilon^2} \, e^{- i \hat \chi} \, \abs{\kappa} \, K_2. \label{L_2+def}
    \end{align}
    On the other hand, when considering $\kappa^2 = \kappa_-^2$, we have
    \begin{align}
        L_2 = L_2^- &= 2 \, i \, \pi \, \varepsilon^{- 2 \, \bar \gamma} \, \kappa^{- 2 \, \bar \gamma} \, e^{- i \bar X_0/\varepsilon^2} \, e^{i \hat \chi} \, \abs{\kappa} \, \bar K_2. \label{L_2-def}
    \end{align}
    \begin{remark}
        We highlight a key difference between our formulation in equation \eqref{truncation_analysis} and \cite[Equation 120]{Chapman} and \cite[Equation 7.4]{Dean}. Although the approach followed here follows the same ideas as those developed in \cite{Chapman,Dean}, we have to bear in mind that the Swift-Hohenberg equation is a fourth-order partial differential equation, which implies that the definition of $X = \varepsilon^2 \, x$ produces several terms out of $\partial_{xxxx} u$. Therefore, the authors of those papers simplified the equation of the remainder by using the explicit form of their model whilst making an abuse of notation by using terms that are not part of the expansion to simplify expressions. In our context, as a stationary reaction-diffusion equation is only a second-order partial differential equation, then that is not necessary, as the forcing due to truncation gets reduced to the last three terms in \eqref{truncation_analysis}.
    \end{remark}

    \subsection{Final prediction of the width of homoclinic snaking} \label{sub:summary}
        Merging all the information we have developed so far, we summarize the expressions that need to be used to determine the width of the homoclinic snaking close to codimension-two Turing bifurcation points.

        In particular, from \eqref{B_1infinity}, we have that the homogeneous part of the amplitude of the remainder, considering the contribution from $\kappa_+^2$ and $\kappa_-^2$, is given by:
        \begin{align*}
            B_{1, h} \sim \left(- L_2^+ \, \frac{\sqrt{2} \, \beta_3}{8 \, \beta_1^2} - L_2^- \, \frac{\sqrt{2} \, \beta_3}{8 \, \beta_1^2}\right) \, \sqrt{- \frac{\beta_1}{\beta_3}} \, (1 + 2 \, \eta \,  i) \, e^{2 \, \sqrt{\beta_1} \, X} \, e^{i \, \varphi_1(X)},
        \end{align*}
        as $X \to \infty$. On the other hand, the term associated with the separation from the Maxwell point introduced in equation \eqref{rem_eq} will produce a particular solution with the form
        \begin{align*}
            B_{1, p} \sim R(\varepsilon) \, \delta b \, \sqrt{- \frac{\beta_1}{\beta_3}} \, (1 + 2 \, \eta \,  i) \, e^{2 \, \sqrt{\beta_1} \, X} \, e^{i \, \varphi_1(X)},
        \end{align*}
        as $X \to \infty$, where $R(\varepsilon)$ is a rational function in $\varepsilon$ that needs to be determined for each system one wants to study (see examples in Section \ref{sec:examples}). Thus, by considering these two contributions, we have that $B_1$ has an asymptotic behaviour given by
        \begin{equation}
            \begin{aligned}
                B_1 &\sim \left(- \frac{\sqrt{2} \, \beta_3}{4 \, \beta_1^2} \, \Re\left(L_2^+\right) + p(\delta b, \varepsilon)\right) \, \sqrt{- \frac{\beta_1}{\beta_3}} \, (1 + 2 \, \eta \,  i) \, e^{2 \, \sqrt{\beta_1} \, X} \, e^{i \, \varphi_1(X)}
                \\
                &= \begin{multlined}[t]
                    \left(- \frac{\sqrt{2} \, \beta_3}{2 \, \beta_1^2} \, \frac{\pi \, \abs{K_2} \, e^{- \frac{\pi}{2} \, \left(\frac{1}{k \, \sqrt{\beta_1} \, \varepsilon^2} + \eta\right)}}{\varepsilon^6} \, \cos\left(K_2^o - \hat \chi + 2 \, \eta \, \log(\varepsilon)\right) + R(\varepsilon) \, \delta b\right)
                    \\
                    \times \sqrt{- \frac{\beta_1}{\beta_3}} \, (1 + 2 \, \eta \,  i) \, e^{2 \, \sqrt{\beta_1} \, X} \, e^{i \, \varphi_1(X)},
                \end{multlined}
            \end{aligned} \label{B_1full}
        \end{equation}
        as $X \to \infty$, where $K_2 = \abs{K_2} \, e^{i \, K_2^o}$. Therefore, in order to have a valid expansion, we need to ensure that this expression tends to zero as $X \to \infty$, so we need the coefficient of $e^{2 \, \sqrt{\beta_1} \, X}$ to equal 0. Thus, as the cosine function is bounded between - 1 and 1, we have that there exists a value of $\hat \chi$ so that said coefficient equals zero if and only if
        \begin{align}
            \abs{\delta b} \leq \abs{\frac{\sqrt{2} \, \beta_3}{2 \, \beta_1^2 \, R(\varepsilon)}} \, \frac{\pi \, \abs{K_2} \, e^{- \frac{\pi}{2} \, \left(\frac{1}{k \, \sqrt{\beta_1} \, \varepsilon^2} + \eta\right)}}{\varepsilon^6}, \label{final_bound}
        \end{align}
        where the coefficients $\beta_1$ and $\beta_3$ are defined in \eqref{betadef} and $K_2$, defined in \eqref{K_2}, requires the determination of $\lambda_1$, which is a generic expression for the terms $c_r^{[0]}$ defined by the limit \eqref{c0-def}. We highlight that \eqref{final_bound} provides an exponentially decaying bound for the width of the snaking as $\varepsilon \to 0^+$.

\section{Joining fronts} \label{sec:joining_fronts}
    Let us recall that we have been analyzing two solutions at the same time (see sub-Section \ref{sub:solving_amplitude}). They correspond to fronts that join two steady states in both directions, up and down (see \eqref{R_1-up_front} and \eqref{R_1-down_front}). Now, we need to join those solutions to construct a homoclinic orbit, which turns out to be a localized solution of system \eqref{general_equation} (see Figure \ref{fig:fronts_to_join}), and obtain an approximation of the \textit{homoclinic snaking} bifurcation curve.

    Note that the solution we obtained for the remainder in Sections \ref{sec:first_residual} and \ref{sec:second_residual} is exponentially growing on $X$ and becomes order 1 when $X \sim \mathcal O\left(1/\varepsilon^2\right)$. Therefore, if we let $2 \, L$ be the width of the support of this solution on $x$ in \eqref{general_equation}, then the localized solution can be constructed by joining an up-front in the range $0 \leq X \leq L/\varepsilon^2$ with a down-front in the range $L/\varepsilon^2 \leq X \leq 2 \, L/\varepsilon^2$. With this, we now proceed to perform the matching.
    
    The up-front solution is given by
    \begin{align}
        \mbf u = \sum_{r = 1}^N \varepsilon^r \, \mbf u^{[r]} + \mbf R_N, \label{solution}
    \end{align}
    where
    \begin{align*}
        \mbf u^{[1]} &= A_1 \, e^{i\left(x - \hat \chi\right)} \, \bs \phi_1^{[1]} + c.c.,
        \\
        \mbf u^{[3]} &= A_3 \, e^{i\left(x - \hat \chi\right)} \, \bs \phi_1^{[1]} + c.c.,
    \end{align*}
    and a first-order approximation of the remainder has also been determined. Still, it depends on the function $\delta\left(R_1\right)$, which is related to the separation of one parameter from the Maxwell point (see equation \eqref{rem_eq}), which will be studied in the examples developed in Section \ref{sec:examples}.
    
    Now, for $1 \ll X \ll 1/\varepsilon^2$, we have
    \begin{align}
        A_1 \sim \left(\Delta_{1, 1} + \Delta_{1, 2} \, e^{- 2 \, \sqrt{\beta_1} \, X}\right) \, e^{\Phi \, i},
    \end{align}
    where
    \begin{align*}
        \Delta_{1, 1} = \sqrt{- \frac{2 \, \beta_1}{\beta_3}}, \qquad \Delta_{1, 2} = \sqrt{- \frac{2 \, \beta_1}{\beta_3}} \, \frac{- 1 + 2 \, \eta \, i}{2},
        \\
        \Phi = 2 \, (\eta - \xi) \, \sqrt{\beta_1} \, X + \zeta - \eta \, \log\left(2 \, \sqrt{\beta_1}\right),
    \end{align*}
    and $\zeta = \zeta(\Xi)$ is a linear function on $\Xi = \varepsilon^2 \, X$ obtained in Appendix \ref{sec:7expansion}.
    
    Furthermore, from Appendix \ref{sec:7expansion}, we have that $A_3$ has an asymptotic behaviour, as $X \to \infty$, given by
    \begin{align*}
        A_3 &\sim \left(\Delta_{3, 1} + \Delta_{3, 2} \, X \, e^{- 2 \, \sqrt{\beta_1} \, X}\right) \, e^{\Phi \, i},
    \end{align*}
    where
    \begin{align*}
        \Delta_{3, 1} &= - \frac{2^{- 5/2}}{\alpha_1^2 \, \sqrt{\beta_1} \, \left(- \beta_3\right)^{7/2}} \, \left(\alpha_2 \, \beta_3^3 \, \left(\alpha_{1, 4} + 2 \, \alpha_1 \, \zeta_\Xi\right) + \alpha_2^2 \, \beta_3^3\right.
        \\
        & \quad \left. + 2 \, \alpha_1 \, \left(\beta_3^3 \, \alpha_{8, 4} - 8 \, \beta_1^3 \, \beta_{7, 3} + 4 \, \beta_1^2 \, \beta_3 \, \beta_{5, 3} - 2 \, \beta_1 \, \beta_3^2 \, \beta_{3, 3} + 4 \, i \, \alpha_1 \, \beta_1 \, \beta_3^3 \, \omega_4\right)\right),
        \\
        \Delta_{3, 2} &= \frac{2^{- 3/2}}{\alpha_1^2 \, \left(- \beta_3\right)^{7/2}} \, \left(1 - 2 \, \eta \, i\right) \, \left(\alpha_2 \, \beta_3^3 \, \left(\alpha_2 + \alpha_{1, 4} + 2 \, \alpha_1 \, \zeta_\Xi\right)\right.
        \\
        & \quad + \alpha_1 \, \left(- \beta_1^2 \, \beta_3^2 \, \left(\alpha_{6, 4} + \alpha_{7, 4}\right) + 2 \, \beta_3^3 \, \alpha_{8, 4} - 4 \, \beta_1^3 \, \beta_{7, 3}\right).
    \end{align*}
    With this, we note that the up-front solution is, asymptotically, given by
    \begin{align}
        \mbf u &\sim \varepsilon \, \mbf u^{[1]} + \varepsilon^3 \, \mbf u^{[3]} \notag
        \\
        &= \varepsilon \, \left(\left(\Delta_{1, 1} + \Delta_{1, 2} \, e^{- 2 \, \sqrt{\beta_1} \, X}\right) + \varepsilon^2 \, \left(\Delta_{3, 1} + \Delta_{3, 2} \, X \, e^{- 2 \, \sqrt{\beta_1} \, X}\right)\right) \, e^{\Phi \, i} \, e^{i \left(x - \hat \chi\right)} \, \bs \phi_1^{[1]} + c.c., \label{up-front}
    \end{align}
    and, by the symmetric properties of the amplitude equation \eqref{firstamplitudeeq} (see sub-Section \ref{sub:solving_amplitude}), we have that the down-front is given by \eqref{up-front} when we change $X \to L/\varepsilon^2 - X$, and $x \to L/\varepsilon^4 - X$. However, the corresponding two solutions may have different phase shifts. Therefore, for $1 \ll L/\varepsilon^2 - X \ll 1/\varepsilon^2$, an asymptotic approximation of the down-front is given by
    \begin{align}
        \mbf u &\sim \varepsilon \, \left(\left(\Delta_{1, 1} + \Delta_{1, 2} \, e^{- 2 \, \sqrt{\beta_1} \, \left(L/\varepsilon^2 - X\right)}\right) + \varepsilon^2 \, \left(\Delta_{3, 1} + \Delta_{3, 2} \, \left(L/\varepsilon^2 - X\right) \, e^{- 2 \, \sqrt{\beta_1} \, \left(L/\varepsilon^2 - X\right)}\right)\right) \notag
        \\
        & \quad \times \left(2 \, \sqrt{\beta_1}\right)^{- \eta \, i} \, e^{2 \, i \, (\eta - \xi) \, \sqrt{\beta_1} \, \left(L/\varepsilon^2 - X\right)} \, e^{i \, \zeta} \, e^{i \left(L/\varepsilon^4 - x - \check \chi\right)} \, \bs \phi_1^{[1]} + c.c., \label{down-front}
    \end{align}
    where $\check \chi \in \mathbb R$ is the phase shift for the down-front.
    
    Now, we highlight that, for our expansion to be valid, we need to ensure that these match for $X = \mathcal O\left(\varepsilon^2\right)$ and $L/\varepsilon^2 - X = \mathcal O\left(1/\varepsilon^2\right)$. Nevertheless, the expressions of these fronts will not be uniformly valid in these limits due to the term $X \, e^{- 2 \, \sqrt{\beta_1} \, X}$ present in $A_3$ (unless $\Delta_{3, 2} = 0$). Therefore, in the general case, we must construct an extra outer expansion valid inside the extended periodic domain (see the periodic pattern when the fronts merge in Figure \ref{fig:fronts_to_join}) to match it with these two fronts.

    \subsection{Outer expansion inside the extended periodic domain} \label{sub:fast_oscillations}
        In order to obtain a suitable outer approximation, we take the usual expansion, \eqref{solution}, only focusing on the limit as $X \to \infty$ (see the periodic pattern when the fronts merge in Figure \ref{fig:fronts_to_join}). In particular, we note that
        \begin{align*}
            A_1 = \Delta_{1, 1} \, e^{\Phi \, i}
        \end{align*}
        solves the amplitude equation at fifth order, \eqref{firstamplitudeeq}. With this, we can solve the equation for the remainder at order five, \eqref{rem_eq}, by setting $B_1(X, \Xi) = \left(\Lambda_1(X, \Xi) + i \, \Omega_1(X, \Xi)\right) \, e^{i \, \Phi(X,\Xi)}$, which transforms said equation into
        \begin{align}
            \alpha_1 \, \Lambda_{1_{X \! X}} - \gamma_{1, 2, -} \, \Omega_{1_X} + \gamma_{1, 1} \, \Lambda_1 &= 0, \label{real}
            \\
            \alpha_1 \, \Omega_{1_{X \! X}} + \gamma_{1, 2, +} \, \Lambda_{1_X} &= 0, \label{imaginary}
        \end{align}
        where
        \begin{align*}
            \gamma_{1, 1} &= \frac{4 \, \left(\alpha_4 \, \left(\alpha_3 + \alpha_4\right) + 4 \, \alpha_1 \, \alpha_7\right) \, \beta_1^2}{3 \, \alpha_1 \, \beta_3^2},
            \\
            \gamma_{1, 2, \pm} &= 2 \, \alpha_1 \, \Phi_X + \left(\alpha_3 \pm \alpha_4\right) \, \Delta_1^2 + \alpha_2.
        \end{align*}
        Now, if we integrate \eqref{imaginary} once, we obtain
        \begin{align*}
            \Omega_{1_X} = - \dfrac{\gamma_{1, 2, +}}{\alpha_1} \, \Lambda_1,
        \end{align*}
        which transforms \eqref{real} into
        \begin{align*}
            \alpha_1 \, \Lambda_{1_{X \! X}} - 4 \, \alpha_1 \, \beta_1 \, \Lambda_1 = 0,
        \end{align*}
        Therefore,
        \begin{align*}
            \Lambda_1 = \Psi(\Xi) \, e^{2 \, \sqrt{\beta_1} \, X} + \Pi(\Xi) \, e^{- 2 \, \sqrt{\beta_1} \, X},
        \end{align*}
        which implies
        \begin{align*}
            \Omega_1 = 2 \, \eta \, \left(\Psi(\Xi) \, e^{2 \, \sqrt{\beta_1} \, X} - \Pi(\Xi) \, e^{- 2 \, \sqrt{\beta_1} \, X}\right).
        \end{align*}
        With this, we can obtain the leading-order solution of the amplitude equation at order seven, \eqref{messy-A_3} (see Appendix \ref{sec:7expansion}):
        \begin{multline*}
            \alpha_1 \, A_{3_{X \! X}} + i \, \alpha_2 \, A_{3_X} + i \, \alpha_3 \, \abs{A_1}^2 \, A_{3_X} + i \, \alpha_4 \, A_1^2 \, \bar A_{3_X} + \alpha_5 \, A_3 + 2 \, \alpha_6 \, \abs{A_1}^2 \, A_3 + 3 \, \alpha_7 \, \abs{A_1}^4 \, A_3
            \\
            + i \, \alpha_3 \, \bar A_1 \, A_{1_X} \, A_3 + 2 \, i \, \alpha_4 \, A_1 \, \bar A_{1_X} \, A_3 + \alpha_6 \, A_1^2 \, \bar A_3 + 2 \, \alpha_7 \, \abs{A_1}^2 \, A_1^2 \, \bar A_3 + i \, \alpha_3 \, A_1 \, A_{1_X} \, \bar A_3
            \\
            + i \, \alpha_{1, 4} \, A_{1_X} + i \, \alpha_{2, 4} \, \abs{A_1}^2 \, A_{1_X} + i \, \alpha_{3, 4} \, \abs{A_1}^4 \, A_{1_X} + i \, \alpha_{4, 4} \, A_1^2 \, \bar A_{1_X} + i \, \alpha_{5, 4} \, \abs{A_1}^2 \, A_1^2 \, \bar A_{1_X}
            \\
            + \alpha_{6, 4} \, \bar A_1 \, \left(A_{1_X}\right)^2 + \alpha_{7, 4} \, A_1 \, \abs{A_{1_X}}^2 + \alpha_{8, 4} \, A_1 + \alpha_{9, 4} \, \abs{A_1}^2 \, A_1 + \alpha_{10, 4} \, \abs{A_1}^4 \, A_1 + \alpha_{11, 4} \, \abs{A_1}^6 \, A_1
            \\
            + 2 \, \alpha_1 \, A_{1_{X \! \Xi}} + i \, \alpha_2 \, A_{1_\Xi} + i \, \alpha_3 \, \abs{A_1}^2 \, A_{1_\Xi} + i \, \alpha_4 \, A_1^2 \, \bar A_{1_\Xi} = 0,
        \end{multline*}
        In particular, note that
        \begin{align*}
            A_3 &= \Delta_{3, 1} \, e^{\Phi \, i}
        \end{align*}
        solves the seventh-order equation for $A_3$.
        
        Now, we note that the full homogeneous equation for the remainder at order seven is given by \eqref{rem_7_eq} (see Appendix \ref{sec:rem_order7}). However, for simplicity, as stated in Appendix \ref{sec:7expansion}, we can assume that $A_2 = B_2 = 0$ as that is, generically, the case (the right-hand side of the differential equations to determine $A_2$ and $B_2$ can be forced to equal zero). This transforms the equation for the remainder at order seven into:
        {\allowdisplaybreaks
        \begin{multline*}
            \alpha_1 \, B_{3_{X \! X}} + i \, \alpha_2 \, B_{3_X} + i \, \alpha_3 \, \abs{A_1}^2 \, B_{3_X} + i \, \alpha_3 \, \bar A_1 \, A_{1_X} \, B_3 + i \, \alpha_3 \, A_1 \, A_{1_X} \, \bar B_3 + i \, \alpha_4 \, A_1^2 \, \bar B_{3_X}
            \\
            + 2 \, i \, \alpha_4 \, A_1 \, \bar A_{1_X} \, B_3 + \alpha_5 \, B_3 + \alpha_6 \, A_1^2 \, \bar B_3 + 2 \, \alpha_6 \, \abs{A_1}^2 \, B_3 + 2 \, \alpha_7 \, \abs{A_1}^2 \, A_1^2 \, \bar B_3 + 3 \, \alpha_7 \, \abs{A_1}^4 \, B_3
            \\
            + i \, \alpha_{1, 3} \, B_{1_{X \! X \! X}} + \alpha_{2, 3} \, B_{1_{X \! X}} + \alpha_{3, 3} \, \abs{A_1}^2 \, B_{1_{X \! X}} + \alpha_{3, 3} \, \bar A_1 \, A_{1_{X \! X}} \, B_1 + \alpha_{3, 3} \, A_1 \, A_{1_{X \! X}} \, \bar B_1
            \\
            + \alpha_{4, 3} \, A_1^2 \, \bar B_{1_{X \! X}} + 2 \, \alpha_{4, 3} \, A_1 \, \bar A_{1_{X \! X}} \, B_1 + i \, \alpha_{5, 3} \, B_{1_X} + i \, \alpha_{6, 3} \, \abs{A_1}^2 \, B_{1_X} + i \, \alpha_{6, 3} \, \bar A_1 \, A_{1_X} \, B_1
            \\
            + i \, \alpha_{6, 3} \, A_1 \, A_{1_X} \, \bar B_1 + i \, \alpha_{7, 3} \, \abs{A_1}^4 \, B_{1_X} + 2 \, i \, \alpha_{7, 3} \, \abs{A_1}^2 \, \bar A_1 \, A_{1_X} \, B_1 + 2 \, i \, \alpha_{7, 3} \, \abs{A_1}^2 \, A_1 \, A_{1_X} \, \bar B_1
            \\
            + i \, \alpha_{8, 3} \, A_1^2 \, \bar B_{1_X} + 2 \, i \, \alpha_{8, 3} \, A_1 \, \bar A_{1_X} \, B_1 + i \, \alpha_{9, 3} \, \abs{A_1}^2 \, A_1^2 \, \bar B_{1_X} + i \, \alpha_{9, 3} \, A_1^3 \, \bar A_{1_X} \, \bar B_1
            \\
            + 3 \, i \, \alpha_{9, 3} \, \abs{A_1}^2 \, A_1 \, \bar A_{1_X} \, B_1 + \alpha_{10, 3} \, A_{1_X}^2 \, \bar B_1 + 2 \, \alpha_{10, 3} \, \bar A_1 \, A_{1_X} \, B_{1_X} + \alpha_{11, 3} \, A_1 \, \bar A_{1_X} \, B_{1_X}
            \\
            + \alpha_{11, 3} \, A_1 \, A_{1_X} \, \bar B_{1_X} + \alpha_{11, 3} \, \abs{A_{1_X}}^2 \, B_1 + \alpha_{12, 3} \, B_1 + \alpha_{13, 3} \, A_1^2 \, \bar B_1 + 2 \, \alpha_{13, 3} \, \abs{A_1}^2 \, B_1
            \\
            + 2 \, \alpha_{14, 3} \, \abs{A_1}^2 \, A_1^2 \, \bar B_1 + 3 \, \alpha_{14, 3} \, \abs{A_1}^4 \, B_1 + 3 \, \alpha_{15, 3} \, \abs{A_1}^4 \, A_1^2 \, \bar B_1 + 4 \, \alpha_{15, 3} \, \abs{A_1}^6 \, B_1
            \\
            + 2 \, \alpha_1 \, B_{1_{X \! \Xi}} + i \, \alpha_2 \, B_{1_\Xi} + i \, \alpha_3 \, \abs{A_1}^2 \, B_{1_\Xi} + i \, \alpha_3 \, \bar A_1 \, A_{1_\Xi} \, B_1 + i \, \alpha_3 \, A_1 \, A_{1_\Xi} \, \bar B_1 + i \, \alpha_4 \, A_1^2 \, \bar B_{1_\Xi}
            \\
            + 2 \, i \, \alpha_4 \, A_1 \, \bar A_{1_\Xi} \, B_1 + i \, \alpha_3 \, \bar A_1 \, A_3 \, B_{1_X} + i \, \alpha_3 \, A_1 \, \bar A_3 \, B_{1_X} + i \, \alpha_3 \, \bar A_1 \, A_{3_X} \, B_1 + i \, \alpha_3 \, A_1 \, A_{3_X} \, \bar B_1
            \\
            + i \, \alpha_3 \, A_{1_X} \, \bar A_3 \, B_1 + i \, \alpha_3 \, A_{1_X} \, A_3 \, \bar B_1 + 2 \, i \, \alpha_4 \, A_1 \, A_3 \, \bar B_{1_X} + 2 \, i \, \alpha_4 \, A_1 \, \bar A_{3_X} \, B_1
            \\
            + 2 \, i \, \alpha_4 \, \bar A_{1_X} \, A_3 \, B_1 + 2 \, \alpha_6 \, \bar A_1 \, A_3 \, B_1 + 2 \, \alpha_6 \, A_1 \, \bar A_3 \, B_1 + 2 \, \alpha_6 \, A_1 \, A_3 \, \bar B_1 + 2 \, \alpha_7 \, A_1^3 \, \bar A_3 \, \bar B_1
            \\
            + 6 \, \alpha_7 \, \abs{A_1}^2 \, A_1 \, \bar A_3 \, B_1 + 6 \, \alpha_7 \, \abs{A_1}^2 \, \bar A_1 \, A_3 \, B_1 + 6 \, \alpha_7 \, \abs{A_1}^2 \, A_1 \, A_3 \, \bar B_1 = 0.
        \end{multline*}
        }
        Now, if we set $B_3 = \left(\Lambda_3(X, \Xi) + i \, \Omega_3(X, \Xi)\right) \, e^{i \, \Phi}$, we obtain two equations given by
        \begin{align}
            \alpha_1 \, \Lambda_{3_{X \! X}} + \gamma_{1, 3} \, \Omega_{3_X} - \gamma_{1, 4} \, \Lambda_3 = \left(\gamma_{1, 5} \, \Psi + \gamma_{1, 6} \, \Psi_\Xi\right) \, e^{2 \, \sqrt{\beta_1} \, X} + \left(\gamma_{1, 7} \, \Pi - \gamma_{1, 6} \, \Pi_\Xi\right) \, e^{- 2 \, \sqrt{\beta_1} \, X}, \label{u3eq}
            \\
            \alpha_1 \, \Omega_{3_{X \! X}} + \gamma_{1, 8} \, \Lambda_{3_X} = \left(\gamma_{1, 9} \, \Psi + \gamma_{1, 8} \, \Psi_\Xi\right) \, e^{2 \, \sqrt{\beta_1} \, X} + \left(\gamma_{1, 10} \, \Pi + \gamma_{1, 8} \, \Pi_\Xi\right) \, e^{- 2 \, \sqrt{\beta_1} \, X}, \label{v3eq}
        \end{align}
        where
        {\allowdisplaybreaks
        \begin{align*}
            \gamma_{1, 3} &= \frac{\left(\alpha_3 - 3 \, \alpha_4\right) \, \beta_1}{\beta_3},
            \\
            \gamma_{1, 4} &= 4 \, \alpha_1 \, \beta_1 - \frac{\gamma_{1, 3} \, \gamma_{1, 8}}{\alpha_1},
            \\
            \gamma_{1, 5} &= \frac{1}{24 \, \alpha_1^3 \, \beta_3^5} \, \left(2 \, \alpha_1 \, \left(- 2 \, \alpha_3 \, \beta_1 \, \left(\alpha_1 \, \beta_3^2 \, \left(28 \, \beta_1^2 \, \left(\alpha_{3, 4} + \alpha_{5, 4}\right) - 15 \, \beta_3 \, \beta_1 \, \left(\alpha_{2, 4} + \alpha_{4, 4}\right) + 9 \, \beta_3^2 \, \alpha_{1, 4}\right.\right.\right.\right.
            \\
            & \quad \left. + 6 \, \alpha_1 \, \beta_3^2 \, \left(3 \, \zeta_\Xi - 2 \, \sqrt{\beta_1} \, \omega_4\right)\right) + \alpha_4 \, \left(4 \, \beta_1^2 \, \beta_3 \, \left(7 \, \beta_3 \, \alpha_{6, 4} + 15 \, \beta_{5, 3}\right) + 15 \, \beta_3^3 \, \alpha_{8, 4} - 120 \, \beta_1^3 \, \beta_{7, 3}\right.
            \\
            & \quad \left.\left. - 30 \, \beta_1 \, \beta_3^2 \, \beta_{3, 3}\right) + 15 \, \alpha_4^2 \, \beta_1^{3/2} \, \beta_3^2 \, \omega_4\right) - 3 \, \left(2 \, \alpha_1 \, \alpha_4 \, \beta_1 \, \beta_3^2 \, \left(- 28 \, \beta_1^2 \, \left(\alpha_{3, 4} + \alpha_{5, 4}\right) + 15 \, \beta_1 \, \beta_3 \, \left(\alpha_{2, 4} + \alpha_{4, 4}\right)\right.\right.
            \\
            & \quad \left. - 9 \, \beta_3^2 \, \alpha_{1, 4} + 6 \, \alpha_1 \, \beta_3^2 \, \left(2 \, \sqrt{\beta_1} \, \omega_4 - 3 \, \zeta_\Xi\right)\right) + \alpha_4^2 \, \beta_1 \, \left(4 \, \beta_1^2 \, \beta_3 \, \left(7 \, \beta_3 \, \alpha_{6, 4} + 15 \, \beta_{5, 3}\right) + 15 \, \beta_3^3 \, \alpha_{8, 4}\right.
            \\
            & \quad \left. - 120 \, \beta_1^3 \, \beta_{7, 3} - 30 \, \beta_1 \, \beta_3^2 \, \beta_{3, 3}\right) + 16 \, \alpha_1^2 \, \beta_3^2 \, \left(- 2 \, \beta_3^3 \, \alpha_{8, 4} + 4 \, \beta_1^3 \, \beta_{7, 3} - 4 \, \beta_1^2 \, \beta_3 \, \beta_{5, 3} + 3 \, \beta_1 \, \beta_3^2 \, \beta_{3, 3}\right)
            \\
            & \quad \left. + 6 \, \alpha_4^3 \, \beta_1^{5/2} \, \beta_3^2 \,  \omega_4\right) + \alpha_3^2 \, \beta_1 \, \left(\beta_3 \, \left(\beta_3 \, \left(28 \, \beta_1^2 \, \alpha_{6, 4} + 15 \, \beta_3 \, \alpha_{8, 4} - 30 \, \beta_1 \, \beta_{3, 3} - 6 \, \alpha_4 \, \beta_1^{3/2} \, \omega_4\right) + 60 \, \beta_1^2 \, \beta_{5, 3}\right)\right.
            \\
            & \quad \left.\left. - 120 \, \beta_1^3 \, \beta_{7, 3}\right) + 6 \, \alpha_3^3 \, \beta_1^{5/2} \, \beta_3^2 \, \omega_4\right) + 3 \, \alpha_2 \, \beta_3^3 \, \left(10 \, \left(\alpha_3 - 3 \, \alpha_4\right) \, \alpha_1 \, \beta_1 \, \left(\left(\alpha_3 + \alpha_4\right) \, \zeta_\Xi - 2 \, \beta_1 \, \alpha_{6, 4}\right)\right.
            \\
            & \quad \left. + 8 \, \alpha_1^2 \, \beta_3 \, \left(4 \, \beta_3 \, \alpha_{1, 4} + 3 \, \left(\alpha_4 - \alpha_3\right) \, \beta_1\right) + 5 \, \left(\alpha_3 + \alpha_4\right) \, \left(\alpha_3 - 3 \, \alpha_4\right) \, \beta_1 \, \alpha_{1, 4} + 72 \, \alpha_1^3 \, \beta_3^2 \, \zeta_\Xi\right)
            \\
            & \quad \left. + 3 \, \alpha_2^2 \, \beta_3^3 \, \left(36 \, \alpha_1^2 \, \beta_3^2 + 5 \, \left(\alpha_3 + \alpha_4\right) \, \left(\alpha_3 - 3 \, \alpha_4\right) \, \beta_1\right)\right),
            \\
            \gamma_{1, 6} &= - \frac{\sqrt{\beta_1} \, \left(\left(\alpha_3^2 - 2 \, \alpha_3 \, \alpha_4 - 3 \, \alpha_4^2\right) \, \beta_1 + 8 \, \alpha_1^2 \, \beta_3^2\right)}{2 \, \alpha_1 \, \beta_3^2},
            \\
            \gamma_{1, 7} &= \gamma_{1, 5} - \frac{\left(\alpha_3 - 3 \, \alpha_4\right) \, \beta_1^{3/2} \, \left(\left(\alpha_3 + \alpha_4\right)^2 \, \beta_1 + 4 \, \alpha_1^2 \, \beta_3^2\right)}{\alpha_1^2 \, \beta_3^3} \, \omega_4,
            \\
            \gamma_{1, 8} &= - \frac{\left(\alpha_3 + \alpha_4\right) \, \beta_1}{\beta_3},
            \\
            \gamma_{1, 9} &= \frac{1}{48 \, \alpha_1^4 \, \sqrt{\beta_1} \, \beta_3^6} \, \left(2 \, \alpha_1 \, \left(12 \, \alpha_1^2 \, \left(\alpha_3 + \alpha_4\right) \, \beta_3^2 \, \left(4 \, \beta_1 \, \left(\beta_3 \, \left(\beta_3 \, \left(2 \, \beta_1^2 \, \alpha_{6, 4} + \beta_3 \, \alpha_{8, 4} - 2 \, \beta_1 \, \beta_{3, 3}\right) + 4 \, \beta_1^2 \, \beta_{5, 3}\right)\right.\right.\right.\right.
            \\
            & \quad \left.\left. - 8 \, \beta_1^3 \, \beta_{7, 3}\right) + \alpha_3 \, \beta_1^2 \, \beta_3^2 \, \left(2 \, \sqrt{\beta_1} \, \omega_4 - \zeta_\Xi\right) + \alpha_4 \, \beta_1^2 \, \beta_3^2 \, \left(2 \, \sqrt{\beta_1} \, \omega_4 + 3 \, \zeta_\Xi\right)\right)
            \\
            & \quad - 48 \, \alpha_1^3 \, \beta_1 \, \beta_3^4 \, \left(4 \, \beta_1^2 \, \left(\alpha_{3, 4} + \alpha_{5, 4}\right) - 2 \, \beta_1 \, \beta_3 \, \left(\alpha_{2, 4} + \alpha_{4, 4}\right) + \beta_3^2 \, \alpha_{1, 4}\right)
            \\
            & \quad - 2 \, \alpha_1 \, \left(\alpha_3 + \alpha_4\right) \, \left(\alpha_3 - 3 \, \alpha_4\right) \, \beta_1^2 \, \beta_3^2 \, \left(4 \, \beta_1^2 \, \left(\alpha_{3, 4} + \alpha_{5, 4}\right) - 3 \, \beta_1 \, \beta_3 \, \left(\alpha_{2, 4} + \alpha_{4, 4}\right) + 3 \, \beta_3^2 \, \alpha_{1, 4}\right)
            \\
            & \quad + \left(\alpha_3 + \alpha_4\right)^2 \, \left(\alpha_3 - 3 \, \alpha_4\right) \, \beta_1^2 \, \left(4 \, \beta_1^2 \, \beta_3 \, \left(\beta_3 \, \alpha_{6, 4} + 3 \, \beta_{5, 3}\right) + 3 \, \beta_3^3 \, \alpha_{8, 4} - 24 \, \beta_1^3 \, \beta_{7, 3} - 6 \, \beta_1 \, \beta_3^2 \, \beta_{3, 3}\right)
            \\
            & \quad \left. + 96 \, \alpha_1^4 \, \beta_1 \, \beta_3^6 \, \left(\sqrt{\beta_1} \, \omega_4 - \zeta_\Xi\right)\right) + 3 \, \alpha_2 \, \beta_1 \, \beta_3^3 \, \left(8 \, \alpha_1^3 \, \beta_3^2 \, \left(5 \, \left(\alpha_3 + \alpha_4\right) \, \zeta_\Xi - 8 \, \beta_1 \, \alpha_{6, 4}\right)\right.
            \\
            & \quad + 2 \, \left(\alpha_3 + \alpha_4\right) \, \left(\alpha_3 - 3 \, \alpha_4\right) \, \alpha_1 \, \beta_1 \, \left(\left(\alpha_3 + \alpha_4\right) \, \zeta_\Xi - 2 \, \beta_1 \, \alpha_{6, 4}\right) - 8 \, \left(\alpha_3 + \alpha_4\right) \, \alpha_1^2 \, \beta_3 \, \left(\left(\alpha_3 - \alpha_4\right) \, \beta_1\right.
            \\
            & \quad \left.\left. - 2 \, \beta_3 \, \alpha_{1, 4}\right) + \left(\alpha_3 + \alpha_4\right)^2 \, \left(\alpha_3 - 3 \, \alpha_4\right) \, \beta_1 \, \alpha_{1, 4}\right) + 3 \, \alpha_2^2 \, \left(\alpha_3 + \alpha_4\right) \, \beta_1 \, \beta_3^3 \, \left(20 \, \alpha_1^2 \, \beta_3^2\right.
            \\
            & \quad \left.\left. + \left(\alpha_3 + \alpha_4\right) \, \left(\alpha_3 - 3 \, \alpha_4\right) \, \beta_1\right)\right),
            \\
            \gamma_{1, 10} &= - \gamma_{1, 9} + \frac{2 \, \beta_1 \, \left(\left(\alpha_3 + \alpha_4\right)^2 \, \beta_1 + 4 \, \alpha_1^2 \, \beta_3^2\right)}{\alpha_1 \, \beta_3^2} \, \omega_4.
        \end{align*}
        }
        Next, note that we can integrate \eqref{v3eq} with respect to $X$, to obtain
        \begin{align*}
            \Omega_{3_X} = \frac{1}{\alpha_1} \, \left(\frac{1}{2 \, \sqrt{\beta_1}} \, \left(\gamma_{1, 9} \, \Psi + \gamma_{1, 8} \, \Psi_\Xi\right) \, e^{2 \, \sqrt{\beta_1} \, X} - \frac{1}{2 \, \sqrt{\beta_1}} \, \left(\gamma_{1, 10} \, \Pi + \gamma_{1, 8} \, \Pi_\Xi\right) \, e^{- 2 \, \sqrt{\beta_1} \, X} - \gamma_{1, 8} \, \Lambda_3\right),
        \end{align*}
        which can be replaced into \eqref{u3eq}, yielding
        \begin{align}
            \alpha_1 \, \Lambda_{3_{X \! X}} - 4 \, \alpha_1 \, \beta_1 \, \Lambda_3 = - 4 \, \alpha_1 \, \sqrt{\beta_1} \, \left(\gamma_{1, 11} \, \Psi + \Psi_\Xi\right) \, e^{2 \sqrt{\beta _1} X} - 4 \, \alpha_1 \, \sqrt{\beta_1} \, \left(\gamma_{1, 11} \, \Pi - \Pi_\Xi\right) \, e^{- 2 \, \sqrt{\beta_1} \, X}, \label{secularseventhremainder}
        \end{align}
        where
        {\allowdisplaybreaks
        \begin{align*}
            \gamma_{1, 11} &= \frac{1}{384 \, \alpha_1^6 \, \sqrt{\beta_1} \, \beta_3^7} \, \left(3 \, \alpha_2^2 \, \left(- 144 \, \alpha_1^4 \, \beta_3^4 - 28 \, \alpha_1^2 \, \left(\alpha_3 - 3 \, \alpha_4\right) \, \left(\alpha_3 + \alpha_4\right) \, \beta_1 \, \beta_3^2\right.\right.
            \\
            & \quad \left. + \left(\alpha_3 + \alpha_4\right)^2 \, \left(\alpha_3 - 3 \, \alpha_4\right)^2 \, \beta_1^2\right) \, \beta_3^3 - 3 \, \alpha_2 \, \left(288 \, \alpha_1^5 \, \beta_3^4 \, \zeta_\Xi + 16 \, \alpha_1^4 \, \beta_3^3 \, \left(\left(\alpha_3 - 15 \, \alpha_4\right) \, \beta_1 + 8 \, \beta_3 \, \alpha_{1, 4}\right)\right.
            \\
            & \quad + 8 \, \alpha_1^3 \, \left(\alpha_3 - 3 \, \alpha_4\right) \, \beta_1 \, \beta_3^2 \left(12 \, \beta_1 \, \alpha_{6, 4} + 7 \, \left(\alpha_3 + \alpha_4\right) \, \zeta_\Xi\right)
            \\
            & \quad + 8 \, \alpha_1^2 \, \left(\alpha_3 + \alpha_4\right) \, \left(\alpha_3 - 3 \, \alpha_4\right) \, \beta_1 \, \beta_3 \, \left(\left(\alpha_3 - \alpha_4\right) \, \beta_1 + 4 \, \beta_3 \, \alpha_{1, 4}\right)
            \\
            & \quad - 2 \, \alpha_1 \, \left(\alpha_3 - 3 \, \alpha_4\right)^2 \, \left(\alpha_3 + \alpha_4\right) \, \beta_1^2 \, \left(\left(\alpha_3 + \alpha_4\right) \, \zeta_\Xi - 2 \, \beta_1 \, \alpha_{6, 4}\right)
            \\
            & \quad \left. - \left(\alpha_3 + \alpha_4\right)^2 \, \left(\alpha_3 - 3 \, \alpha_4\right)^2 \, \beta_1^2 \, \alpha_{1, 4}\right) \, \beta_3^3 + 2 \, \alpha_1 \, \left(- 48 \, \alpha_1^3 \, \left(\alpha_3 - 3 \, \alpha_4\right) \, \beta_1 \, \left(4 \, \beta_1^2 \, \left(\alpha_{3, 4} + \alpha_{5, 4}\right)\right.\right.
            \\
            & \quad \left. - 3 \, \beta_1 \, \beta_3 \, \left(\alpha_{2, 4} + \alpha_{4, 4}\right) + 3 \, \beta_3^2 \, \alpha_{1, 4}\right) \, \beta_3^4 + 96 \, \alpha_1^4 \, \left(8 \, \beta_1^3 \, \beta_{7, 3} - 8 \, \beta_1^2 \, \beta_3 \, \beta_{5, 3}\right.
            \\
            & \quad \left. + \beta_3^2 \, \left(6 \, \beta_{3, 3} + \left(9 \, \alpha_4 - 3 \, \alpha_3\right) \, \zeta_\Xi\right) \, \beta_1 - 4 \, \beta_3^3 \, \alpha_{8, 4}\right) \, \beta_3^4
            \\
            & \quad - 2 \, \alpha_1 \, \left(\alpha_3 + \alpha_4\right) \, \left(\alpha_3 - 3 \, \alpha_4\right)^2 \, \beta_1^2 \, \beta_3^2 \, \left(4 \, \left(\alpha_{3, 4} + \alpha_{5, 4}\right) \, \beta_1^2 - 3 \, \beta_3 \, \left(\alpha_{2, 4} + \alpha_{4, 4}\right) \, \beta_1 + 3 \, \beta_3^2 \, \alpha_{1, 4}\right)
            \\
            & \quad - 4 \, \alpha_1^2 \, \left(\alpha_3 - 3 \, \alpha_4\right) \, \left(\alpha_3 + \alpha_4\right) \, \beta_1 \, \beta_3^2 \, \left(60 \, \beta_1^3 \, \beta_{7, 3} - \beta_1^2 \, \beta_3 \, \left(\beta_3 \, \left(31 \, \alpha_{6, 4} + 7 \, \alpha_{7, 4}\right) + 16 \, \beta_{5, 3}\right)\right.
            \\
            & \quad \left. + \beta_3^2 \, \left(3 \, \left(\alpha_3 - 3 \, \alpha_4\right) \, \zeta_\Xi - 6 \, \beta_{3, 3}\right) \, \beta_1 + 24 \, \beta_3^3 \, \alpha_{8, 4}\right)
            \\
            & \quad \left.\left. - \left(\alpha_3 - 3 \, \alpha_4\right)^2 \, \left(\alpha_3 + \alpha_4\right)^2 \, \beta_1^2 \, \left(24 \, \beta_1^3 \, \beta_{7, 3} - 4 \, \beta_3 \, \left(\beta_3 \, \alpha_{6, 4} + 3 \, \beta_{5, 3}\right) \, \beta_1^2 + 6 \, \beta_1 \, \beta_3^2 \, \beta_{3, 3} - 3 \, \beta_3^3 \, \alpha_{8, 4}\right)\right)\right).
        \end{align*}
        }
        Now, as the term on the right-hand side of \eqref{secularseventhremainder} is secular, we need to set
        \begin{align*}
            \Psi = \Psi_0 \, e^{- \gamma_{1, 11} \, \Xi}, \qquad \text{and} \qquad \Pi = \Pi_0 \, e^{\gamma_{1, 11} \, \Xi},
        \end{align*}
        in order to ensure the solution is bounded.

        In summary, we can conclude that the far-field solution inside the localized pattern is given by
        \begin{align}
            \mbf u^* &\sim \left(\varepsilon \, \left(\Delta_{1, 1} + \Lambda_1 + i \, \Omega_1\right) + \varepsilon^3 \, \Delta_{3, 1}\right) \, e^{\Phi \, i} \, e^{i \left(x + \tilde \chi\right)} \, \bs \phi_1^{[1]} + c.c. \notag
            \\
            &= \varepsilon \, \left(\Delta_{1, 1} + (1 + 2 \, \eta \, i) \, \Psi_0 \, e^{2 \, \sqrt{\beta_1} \, X} \, e^{- \gamma_{1, 11} \, \Xi} + (1 - 2 \, \eta \, i) \, \Pi_0 \, e^{- 2 \, \sqrt{\beta_1} \, X} \, e^{\gamma_{1, 11} \, \Xi} + \varepsilon^2 \, \Delta_{3, 1}\right) \, e^{\Phi \, i} \, e^{i \left(x - \tilde \chi\right)} \, \bs \phi_1^{[1]} \notag
            \\
            & + c.c., \label{fast_oscillations}
        \end{align}
        where $\tilde \chi$ is the phase shift for the fast oscillations.
        
        The determination of the constants $\Psi_0$ and $\Pi_0$ can be carried out by matching the fronts and the far-field solutions inside the localized patterns. Specifically, they can be found by using \cite{Chapman}
        \begin{align}
            \lim_{X \to \infty} \mbf u\left(x, X, 0^-\right) \sim \lim_{\Xi \to 0^+} \mbf u^*(x, X, \Xi), \label{limit1}
        \end{align}
        for matching \eqref{up-front} to \eqref{fast_oscillations}, whilst
        \begin{align}
            \lim_{X - L/\varepsilon^2 \to - \infty} \mbf u\left(x, X, L^+\right) \sim \lim_{\Xi \to L^-} \mbf u^*(x, X, \Xi). \label{limit2}
        \end{align}
        for matching \eqref{down-front} to \eqref{fast_oscillations}.

        However, these limits depend on the function $\delta = \delta\left(R_1\right)$, which is related to the separation of one parameter from the Maxwell point (see equation \eqref{rem_eq}), and depends on the specific form of the vector field posed in \eqref{geneq}.

        \subsection{Summary of Section \ref{sec:joining_fronts}}
            The calculations we carried out up until Section \ref{sec:second_residual} let us obtain an approximation of the width of the homoclinic snaking close to codimension-two Turing bifurcation points. Nevertheless, from the beginning, we stated that we have been studying two solutions at the same time, an up-front and a down-front (see sub-Section \ref{sub:solving_amplitude}). Section \ref{sec:joining_fronts} gives the final conditions that need to be met to match these fronts. We have to bear in mind that these conditions depend on $\delta b$, the separation of parameter $b$ from the Maxwell point and, therefore after determining bounds for that parameter by using \eqref{final_bound}, one can use conditions \eqref{limit1} and \eqref{limit2} to obtain approximations of the homoclinic snaking as a function of $\varepsilon$.

            We highlight that, in the end, although the limits \eqref{limit1} and \eqref{limit2} seem complicated to evaluate, what one really needs to do to match the fronts is to equate the corresponding coefficients of the constant and exponential terms in $X$ in \eqref{up-front}, \eqref{down-front}, and \eqref{fast_oscillations}. It is worth noting that, in the limit \eqref{limit2}, one needs to match the corresponding exponentials by making use of the complex conjugates of \eqref{up-front}, \eqref{down-front}, and \eqref{fast_oscillations}, for the equations to be well-posed (see e.g.~\cite{Chapman}).

\section{Examples} \label{sec:examples}
    We now proceed to illustrate the theory by computing the necessary components of the beyond-all-orders theory for several common examples. Readers are invited to run the code for each example themselves which is provided at \cite{beyond-code}, using Python and Mathematica. In each case, we compare our analytical width of the snake given by \eqref{final_bound} with a numerical evaluation of fold points of homoclinic trajectories of the associated spatial-dynamics problem on the real line, using AUTO \cite{auto}.
    \subsection{Swift-Hohenberg 2-3} \label{sub:SH23}
        We start by applying our general theory to the Swift-Hohenberg equation, which was first deduced from the equations for thermal convection. Said equation, with quadratic and cubic nonlinearities, was studied in \cite{Chapman}, and served as the main inspiration for this paper. We highlight that said equation has been studied from different points of view and different nonlinearities as it is one the of the simplest pattern-forming equations (see, e.g.~\cite{Dean,peletier,knobloch,snakes-ladders-knobloch}, and references therein). The Swift-Hohenberg 2-3 equation can be written as
        \begin{align*}
            \partial_t u = - C \, u - 3 \, E \, u^2 - u^3 - \left(1 + \partial_{xx}\right)^2 u,
        \end{align*}
        where $u = u(x, t)$ is a real variable. Nevertheless, as explained in Section \ref{sec:introduction}, this type of equation can be written as \eqref{geneq} simply by defining an auxiliary variable $v = \partial_{xx} u$ as:        
        \begin{equation}
            \begin{aligned}
                \partial_t u &= - (1 + C) \, u - 3 \, E \, u^2 - u^3 - 2 \, v - \partial_{xx} v,
                \\
                0 &= v - \partial_{xx} u,
            \end{aligned} \label{SH23}
        \end{equation}
        and we use $a = C$, and $b = E$.
        
        To evidence the use of the code shared to study this type of problem, we state the parameter expansions as they were considered in \cite{Chapman}:
        \begin{equation}
            \begin{aligned}
                C &= C_4 \, \varepsilon^4,
                \\
                E &= E_0 + E_2 \, \varepsilon^2 + E_4 \, \varepsilon^4 + \delta E.
            \end{aligned} \label{expansionSH23}
        \end{equation}
        Now, we note that we can obtain the values of the variables shown in Table \ref{tab:valSH23} simply by running the Python code at \cite{beyond-code}. In particular, we highlight that the values obtained are consistent with the results obtained in \cite{Chapman}.
        \begin{table}
            \centering
            {\def\arraystretch{2.3}
            \begin{tabular}{|c|c|c|c|c|c|c|c|c|c|}
                \hline
                \textbf{Variable} & $C_4$ & $E_0$ & $E_2$ & $E_4$ & $\alpha_1$ & $\alpha_3$ & $\alpha_5$ & $\alpha_6$ & $\alpha_7$
                \\
                \hline
                \textbf{Value} & 1 & $\dfrac{\sqrt{114}}{38}$ & $4 \, \dfrac{\sqrt{20919}}{1083}$ & $\dfrac{63711 \, \sqrt{114}}{10069012}$ & 4 & $\dfrac{16}{19}$ & - 1 & $8 \, \dfrac{\sqrt{734}}{19}$ & $- \dfrac{8820}{361}$
                \\
                \hline
            \end{tabular}
            }
            \caption{Values of some of the main parameters in the asymptotic expansion of system \eqref{SH23}.}
            \label{tab:valSH23}
        \end{table}
        
        To find the width of the snaking, we need to focus on the fifth-order equation for the remainder, \eqref{rem_eq}. To study said equation, we note that the only terms that will affect the remainder are given by the linearization of \eqref{SH23} at $(u, v) = (0, 0)$, and we add $- 3 \, \delta E \, u^2$ to the equation for $u$ in order to take into account the separation of $E$ from the Maxwell point. In particular, we need such a term to appear in the solvability condition at order five. With this in mind, we note that $\mbf F_{2, 0, 1}\left(\mbf u^{[1]}, \mbf u^{[1]}\right)$ will not have resonant terms at said order. The first of those resonant terms will turn up due to $2 \, \mbf F_{2, 0, 1}\left(\mbf u^{[1]}, \mbf u^{[2]}\right)$ at order five, which lets us know that the first term coming from the quadratic expression needs to be part of $\mbf R_N^{[4]}$ for the resonant term to appear at order five. In particular, when using the same notation as before, we have that $\mbf R_N^{[4]}$ is given by \eqref{fourth_remainder} plus
        \begin{align*}
            \frac{\delta E}{\varepsilon^2} \, \abs{A_1}^2 \, \mbf W_*^{[4]} + \frac{\delta E}{\varepsilon^2} \, A_1^2 \, e^{2ix} \, \mbf W_{*, 2}^{[4]} + c.c.,
        \end{align*}
        where
        \begin{align*}
            \mathcal M_0 \, \mbf W_*^{[4]} &= - \mbf F_{2, 0, 1}\left(\bs \phi_1^{[1]}, \bs \phi_1^{[1]}\right),
            \\
            \mathcal M_2 \, \mbf W_{*, 2}^{[4]} &= - \mbf F_{2, 0, 1}\left(\bs \phi_1^{[1]}, \bs \phi_1^{[1]}\right),
        \end{align*}
        and the terms $\varepsilon^2$ in the denominators are put to match the orders of $\mbf R_N^{[4]}$ and $\mbf F_{2, 0, 1}\left(\mbf u^{[1]}, \mbf u^{[1]}\right)$.
        
        With this, we conclude that the equation for the amplitude of the remainder, \eqref{rem_eq}, becomes
        \begin{align*}
            \alpha_1 \, B_1'' + i \, \alpha_3 \, \abs{A_1}^2 \, B_1' + i \, \alpha_3 \, A_1 \, A_1' \, \bar B_1 + i \, \alpha_3 \, \bar A_1 \, A_1' \, B_1 + \alpha_5 \, B_1 + \alpha_6 \, A_1^2 \, \bar B_1 + 2 \, \alpha_6 \, \abs{A_1}^2 \, B_1
            \\
            + 2 \, \alpha_7 \, \abs{A_1}^2 \, A_1^2 \, \bar B_1 + 3 \, \alpha_7 \, \abs{A_1}^4 \, B_1 + 2 \, \sqrt{114} \, \frac{\delta E}{\varepsilon^2} \, \abs{A_1}^2 \, A_1 = 0,
        \end{align*}
        which implies that $\dfrac{\dd \delta}{\dd A_1}\left(A_1\right) = 2 \, \sqrt{114} \, \delta E \, \abs{A_1}^2 \, A_1$. Therefore, (see \eqref{B_1full})
        \begin{align*}
            B_1 &\sim \sqrt{- \frac{\beta_1}{\beta_3}} \, \left(- \frac{\sqrt{2} \, \beta_3 \, L_2^+}{8 \, \beta_1^2} - \frac{\sqrt{2} \, \beta_3 \, L_2^-}{8 \, \beta_1^2} + \frac{\sqrt{57} \, \delta E}{2 \, \varepsilon^2 \, \alpha_1 \, \beta_3}\right) \, (1 + 2 \, \eta \, i) \, e^X
            \\
            &= \frac{1}{2} \, \sqrt{- \frac{\beta_1}{\beta_3}} \, \left(- \frac{\sqrt{2} \, \beta_3}{\beta_1^2} \, \frac{\pi \, \abs{K_2} \, e^{- \frac{\pi}{2} \, \left(\eta + \frac{1}{\sqrt{\beta_1} \, \varepsilon^2}\right)}}{\varepsilon^6} \, \cos(K_2^o - \hat \chi + 2 \, \eta \, \log(\varepsilon)) + \frac{\sqrt{57} \, \delta E}{\varepsilon^2 \, \alpha_1 \, \beta_3}\right) \, (1 + 2 \, \eta \, i) \, e^X,
        \end{align*}
        as $X \to \infty$, where $\beta_1, \beta_3$ are defined in \eqref{betadef} and $K_2 = \abs{K_2} \, e^{i \, K_2^o}$ was defined in \eqref{K_2}.
        
        Thus, in order to ensure that our solution remains bounded, we need to force the coefficient of this expression to be zero, which implies that we need
        \begin{align}
            \abs{\delta E} \leq \frac{\sqrt{2} \, \alpha_1 \, \beta_3^2}{\sqrt{57} \, \beta_1^2} \, \frac{\pi \, \abs{K_2} \, e^{- \frac{\pi}{2} \, \left(\eta + \frac{1}{\sqrt{\beta_1} \, \varepsilon^2}\right)}}{\varepsilon^4}, \label{boundSH23}
        \end{align}
        for there to exist a value of $\hat \chi$ so that $B_1$ decays to zero as $X$ tends to infinity. Note that \eqref{boundSH23} is an exponentially small term as $\varepsilon \to 0^+$.
    
        Now, note that all the variables in \eqref{boundSH23} are known except for $K_2$. To determine said variable, we need to run the recurrence given by \eqref{DM} with the initial conditions given by
        \begin{align*}
            A_1 &= \frac{\sqrt{19} \, \sqrt[4]{98861726} \, i}{734},
            \\
            A_3 &= \frac{\sqrt{19} \, \sqrt[4]{734} \, \left(- 345333190 \, \sqrt{367} + 954973269 \, \sqrt{2} \, \pi + 2602107 \, \sqrt{367} \, \pi \, i + 236726744 \, \sqrt{2} \, i\right)}{360647576448},
        \end{align*}
        which correspond to the leading order coefficients of $A_1$ and $A_3$, respectively, at $X_0$ (see ansatz \eqref{ansatz-inner-sol}). This is done to obtain an approximation of $c_2^{[0]}$, given by \eqref{c0-def}. In particular, when running $n$ up to 181, we obtain the graphs shown in Figure \ref{fig:c20SH23}, where panel (a) (respectively, (b)) shows the convergence of the magnitude (respectively, angle) of $c_2^{[0]}$. In particular, the dotted lines correspond to the approximation of each of these quantities for different values of $n/2$ and the continuous line is a graph of the best curve of the form
        \begin{align}
            f(n) = c_0 + \frac{c_1}{n} + \frac{c_2}{n^2} \label{best-fit}
        \end{align}
        that fits each set of points, ignoring the first few points. These curves were obtained based on the procedure in \cite{Dean}, and said lines seem to approximate the curves well. In particular, this lets us know that $c_2^{[0]} \to 0.0303085 \, e^{- 2.57629 \, i}$ as $n\to \infty$, which implies that (see \eqref{K_2}):
        \begin{align*}
            K_2 = - \frac{6 \, i \, \left(- 2 \, \beta_3\right)^{- 1/2}}{(3 - 2 \, \eta \, i)} \, c_0^{[0]} \approx 0.0300695 + 0.0195978 \, i,
        \end{align*}
        and
        \begin{align*}
            \abs{K_2} = 0.0358922, \qquad K_2^o = 0.577605.
        \end{align*}
        With this, we can see that the width of the snaking in terms of $\varepsilon$ is approximately given by
        \begin{align}
            \abs{\delta E} \leq \frac{2.82934 \, e^{- \frac{\pi}{\varepsilon^2}}}{\varepsilon^4}, \label{perfect-SH23}
        \end{align}
        which is close to the function obtained by numerical fitting in \cite[Equation (163)]{Chapman}. A numerical graph showing the fit of this function compared to numerics is shown in Figure \ref{fig:matchingSH23}, which shows that our approximation of the width of the homoclinic snaking is remarkably good.
        \begin{figure}
            \centering
            \begin{subfigure}[b]{0.48\textwidth}
                 \centering
                 \includegraphics[width = \textwidth]{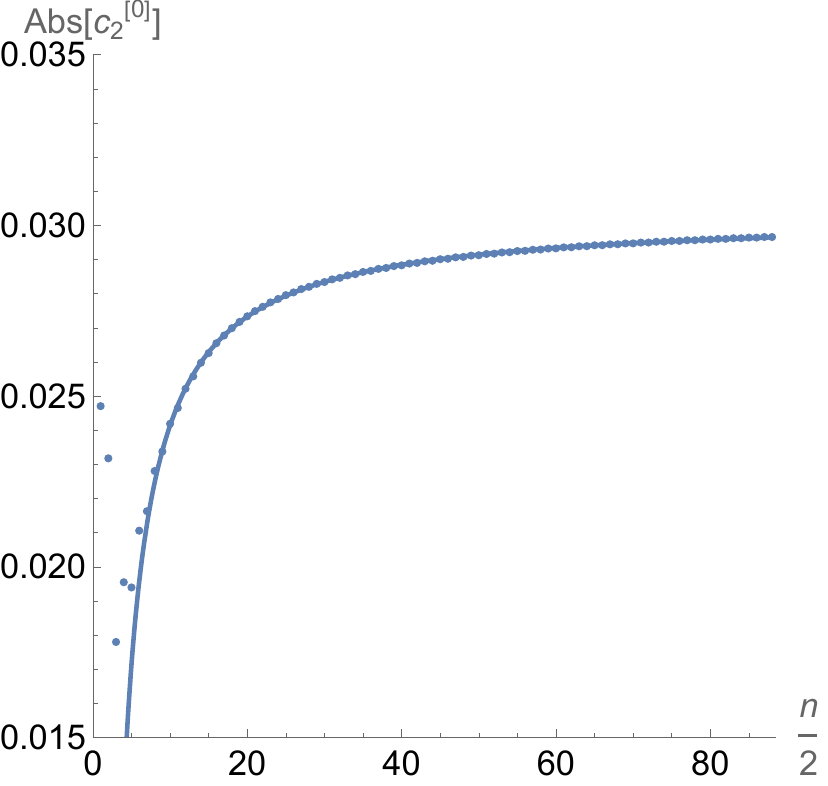}
                 \caption{}
                 \label{fig:c0normSH23}
             \end{subfigure}
             \hfill
             \begin{subfigure}[b]{0.48\textwidth}
                 \centering
                 \includegraphics[width = \textwidth]{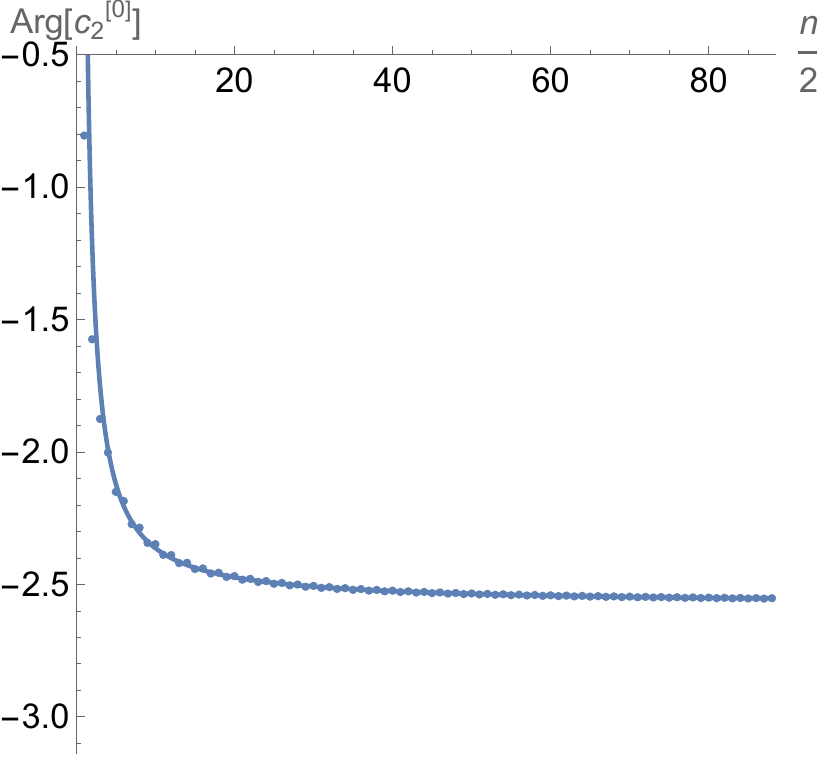}
                 \caption{}
                 \label{fig:c0angleSH23}
             \end{subfigure}
            \caption{Result of the iteration of \eqref{DM} to approximate $c_2^{[0]}$ for system \eqref{SH23}. The points correspond to the iteration of the argument in the limit \eqref{c0-def} for different values of $n$ and continuous lines are the best fit for these points with a function given by \eqref{best-fit}. (a) Behaviour of the absolute value of the argument of the limit in \eqref{c0-def}. (b) Behaviour of the angle of the argument of the limit in \eqref{c0-def} for $r = 2$.}
            \label{fig:c20SH23}
        \end{figure}

        \begin{figure}
            \centering
            \begin{subfigure}[b]{0.49\textwidth}
                 \centering
                 \includegraphics[width = \linewidth]{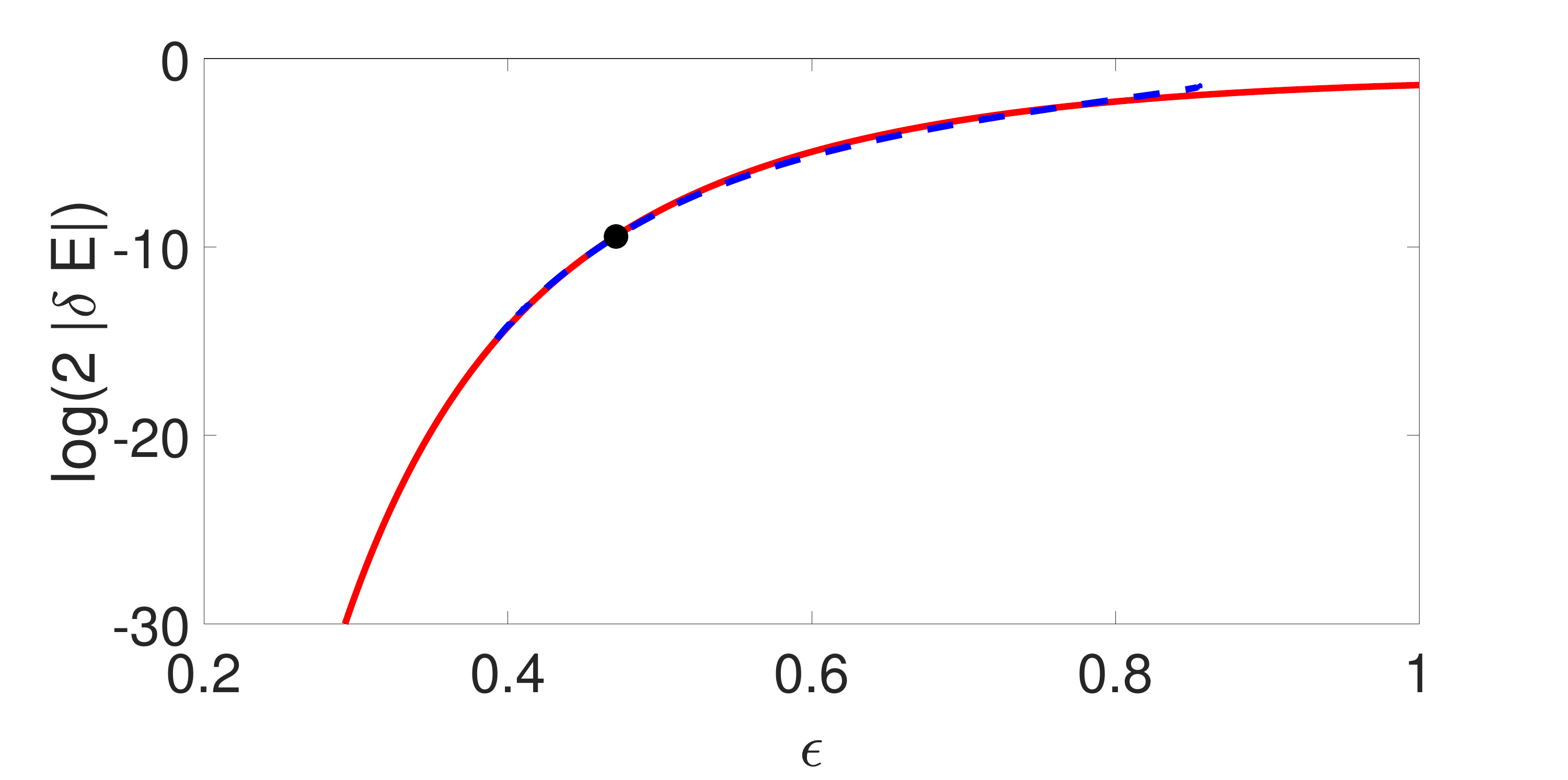}
                 \caption{}
             \end{subfigure}
             \hfill
             \begin{subfigure}[b]{0.49\textwidth}
                 \centering
                 \includegraphics[width = \textwidth]{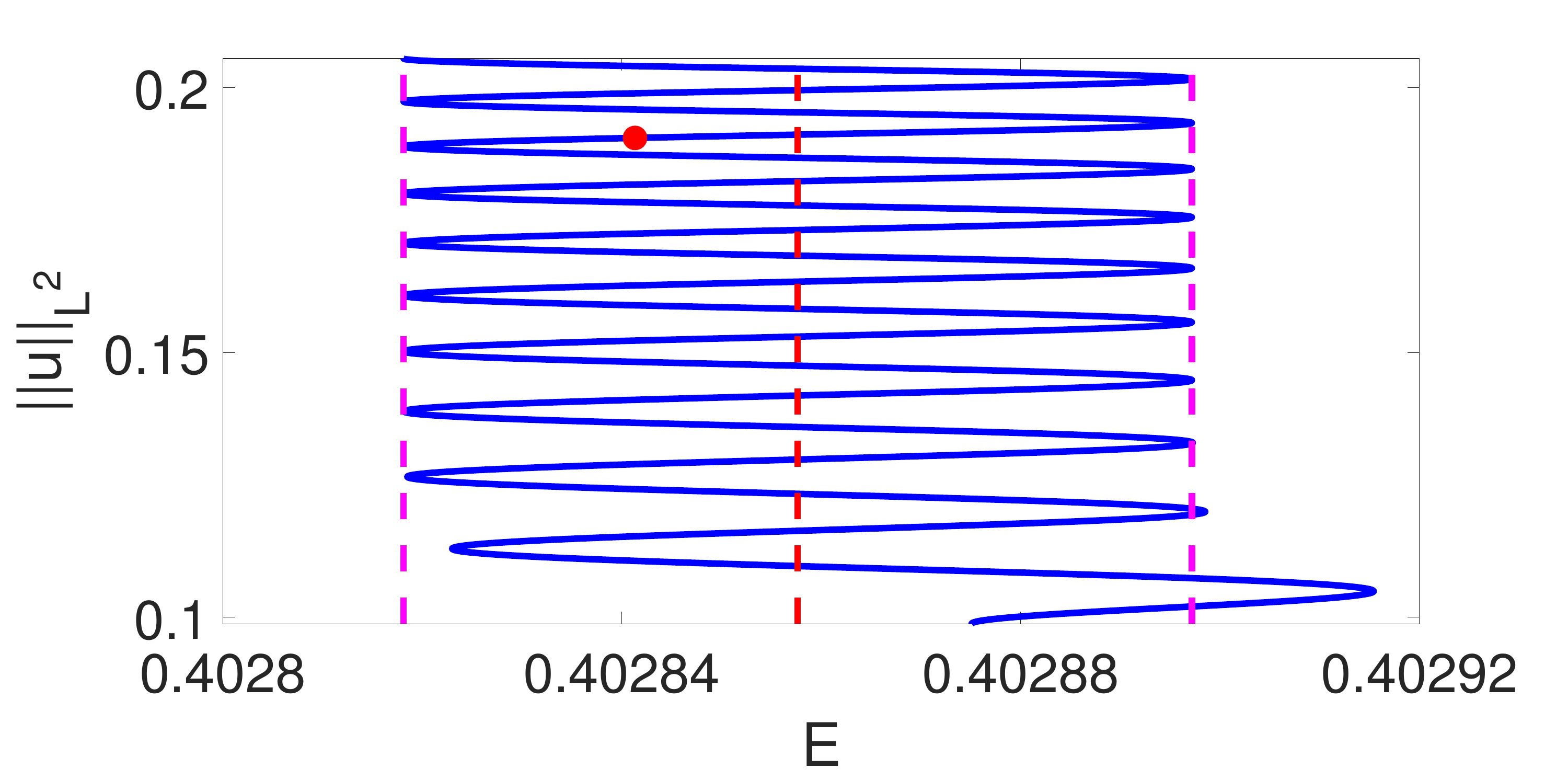}
                 \caption{}
             \end{subfigure}
             \\
             \begin{subfigure}[b]{0.7\textwidth}
                 \centering
                 \includegraphics[width = \textwidth]{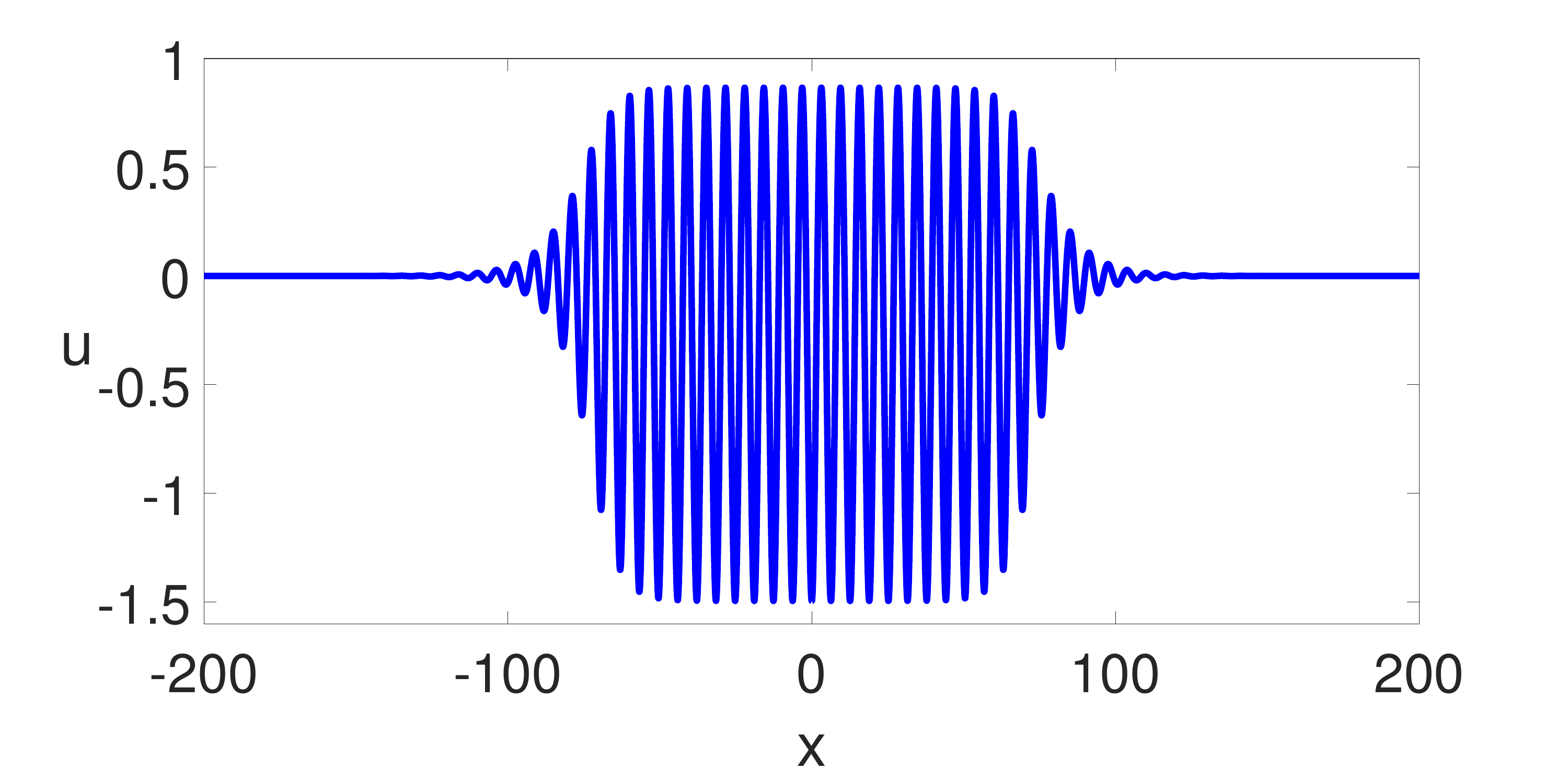}
                 \caption{}
             \end{subfigure}
            \caption{(a) Numerical comparison between the analytical formula for the width of the homoclinic snaking, \eqref{perfect-SH23} (red continuous line), compared to numerical results obtained with Auto (blue dashed line) for system \eqref{SH23}. (b) Graph of one branch of the homoclinic snaking curve (blue curve), together with three vertical dashed lines, representing the folds and the Maxwell point at $\varepsilon \approx 0.471138$ (point marked in black in panel (a)). (c) Graph of the homoclinic orbit to the homogeneous steady state $\mbf P = \mbf 0$ at the red point marked in panel (b), corresponding to $E \approx 0.40284133406$.}
            \label{fig:matchingSH23}
        \end{figure}

    \subsection{Swift-Hohenberg 3-5}
        As a second example, we study the same type of equation we studied previously but with different nonlinearities. In particular, we now apply our theory to the Swift-Hohenberg 3-5 equation, which can be written as
        \begin{equation}
            \begin{aligned}
                \partial_t u &= r \, u - u - 2 \, v + s \, u^3 - u^5 - v_{xx},
                \\
                0 &= v - \partial_{xx} u,
            \end{aligned} \label{SH35}
        \end{equation}
        and we use $a = s$ and $b = r$. This example is motivated by the fact that it has the symmetry $(u, v) \to - (u, v)$, which changes the dominant value of $\kappa$ and the dominant modes to be taken into account to study the width of the snaking \cite{Dean}. In particular, we show that the theory we have developed can also be applied in these cases (see Appendix \ref{sec:symmetric_case}).
        
        First, we note that a degenerate Turing bifurcation in \eqref{SH35} occurs at $(r, s) = (0, 0)$ with a wavenumber $k = 1$. Furthermore, the parameters obtained when making the expansions as developed throughout this article are shown in Table \ref{tab:valSH35}.
        \begin{table}
            \centering
            {\def\arraystretch{2.3}
            \begin{tabular}{|c|c|c|c|c|c|c|c|c|c|c|c|c|c|}
                \hline
                \textbf{Variable} & $r_4$ & $s_2$ & $\alpha_1$ & $\alpha_5$ & $\alpha_6$ & $\alpha_7$ & $\beta_1$ & $\beta_3$ & $\beta_5$
                \\
                \hline
                \textbf{Value} & $- \dfrac{27}{160}$ & $1$ & 4 & $- \dfrac{27}{160}$ & 3 & $- 10$ & $\dfrac{27}{640}$ & $- \dfrac{3}{8}$ & $\dfrac{5}{6}$
                \\
                \hline
            \end{tabular}
            }
            \caption{Values of some of the main parameters in the asymptotic expansion of system \eqref{SH35}.}
            \label{tab:valSH35}
        \end{table}
    
        As in the previous example, to obtain an approximation of the width of the snaking, it is important to know $c_0^{[0]}$ and $c_2^{[0]}$. Nevertheless, due to the special symmetry $(u, v) \to - (u, v)$ \eqref{SH35} has, these terms equal zero. The dominant modes in this case are $r = \pm 1, \pm 3$, so we need to compute $c_1^{[0]}$ and $c_3^{[0]}$ (see sub-Section \ref{sub:kappa_explanation}). The convergence of $c_1^{[0]}$ when using our code with initial conditions
        \begin{align*}
            A_1 = \frac{10^{\frac{3}{4}} \, \sqrt[4]{3} \, i}{10}, \qquad \text{and} \qquad A_3 = - \frac{10^{\frac{3}{4}} \, \sqrt[4]{3}}{16}
        \end{align*}
        is shown in Figure \ref{fig:c10SH35}. The convergence for $c_3^{[0]}$ looks very similar to the one for $c_1^{[0]}$ and is, therefore, omitted (the result can be checked by using the code that is contained in the GitHub repository provided in \cite{beyond-code}). We highlight that the results obtained are consistent with \cite{Dean}. In fact, we also conclude that $c_1^{[0]} \approx 0.101 \, i$, which is the same value as was found in \cite{Dean}, but with a different sign. The change in sign here is consistent with the choice of $\kappa^2 = \kappa_+^2/2$ to make our analysis (see Appendix \ref{sec:symmetric_case} for an explanation of the division by 2).
        
        Next, to complete this example, we note that the calculation in this case can be carried out using the same ansatzes at each stage, but with a few minor changes (see Appendix \ref{sec:symmetric_case}). Now, note that the parameter $r$  only influences the linear term in \eqref{SH35}. Therefore, when considering the equation for the remainder, we only need to add a linear term to \eqref{rem_eq}:
        \begin{equation*}
            \begin{aligned}
                \alpha_1 \, B_1'' + i \, \alpha_2 \, B_1' + i \, \alpha_3 \, \abs{A_1}^2 \, B_1' + i \, \alpha_3 \, A_1 \, A_1' \, \bar B_1 + i \, \alpha_3 \, \bar A_1 \, A_1' \, B_1 + i \, \alpha_4 \, A_1^2 \, \bar B_1' + 2 \, i \, \alpha_4 \, A_1 \, \bar A_1' \, B_1
                \\
                + \alpha_5 \, B_1 + \alpha_6 \, A_1^2 \, \bar B_1 + 2 \, \alpha_6 \, \abs{A_1}^2 \, B_1 + 2 \, \alpha_7 \, \abs{A_1}^2 \, A_1^2 \, \bar B_1 + 3 \, \alpha_7 \, \abs{A_1}^4 \, B_1 + \frac{\delta r}{\varepsilon^4} \, A_1 = 0,
            \end{aligned}
        \end{equation*}
        which implies that $\dfrac{\dd \delta}{\dd A_1}\left(A_1\right) = \dfrac{\delta r}{\varepsilon^4} \, A_1$. Therefore, when taking into account the little changes one needs to make in order to study the width of the snaking for \eqref{SH35} (see Appendix \ref{sec:symmetric_case}), we have
        \begin{align*}
            B_1 &\sim \frac{1}{8 \, \beta_1} \, \sqrt{- \frac{2 \, \beta_1}{\beta_3}} \, \left(- L_2^+ \, \frac{\beta_3}{\beta_1} - L_2^- \, \frac{\beta_3}{\beta_1} - \frac{\delta r}{\alpha_1}\right) \, e^{2 \, \sqrt{\beta_1} \, X} \, e^{i \, \varphi_1}
            \\
            &= \frac{1}{8 \, \beta_1} \, \sqrt{- \frac{2 \, \beta_1}{\beta_3}} \, \left(- \frac{\beta_3}{\beta_1} \, \frac{16 \, \pi \, \abs{K_2} \, e^{- \frac{\pi}{\sqrt{\beta_1} \, \varepsilon^2}}}{\varepsilon^6} \, \cos\left(K_2^o - 2 \, \hat \chi\right) - \frac{\delta r}{\varepsilon^4 \, \alpha_1}\right) \, e^{2 \, \sqrt{\beta_1} \, X} \, e^{i \, \varphi_1},
        \end{align*}
        as $X \to \infty$. With this, similar to the previous example, we need the condition
        \begin{align*}
            \abs{\delta r} \leq - \frac{\alpha_1 \, \beta_3}{\beta_1} \, \frac{16 \, \pi \, \abs{K_2} \, e^{- \frac{\pi}{\sqrt{\beta_1} \, \varepsilon^2}}}{\varepsilon^2},
        \end{align*}
        to ensure that there is a condition to ensure that the remainder tends to 0 as $X \to \infty$, where
        \begin{align*}
            \abs{K_2} \approx 0.0202 \, 10^{\frac{3}{4}} \, \sqrt[4]{3}.
        \end{align*}
        \begin{figure}
            \centering
            \begin{subfigure}[b]{0.48\textwidth}
                 \centering
                 \includegraphics[width = \textwidth]{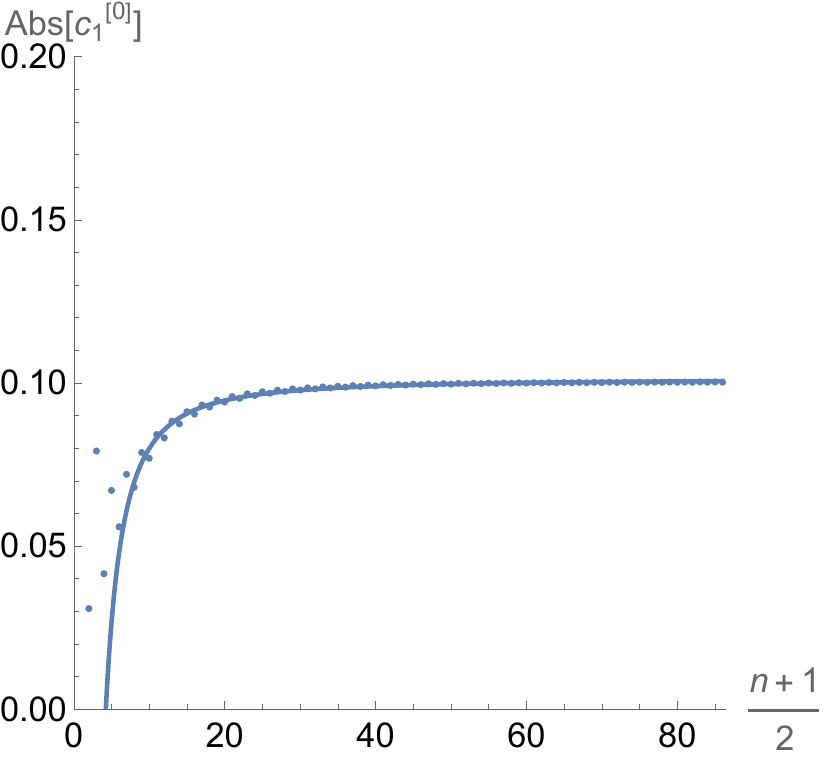}
                 \caption{}
                 \label{fig:c0normSH35}
             \end{subfigure}
             \hfill
             \begin{subfigure}[b]{0.48\textwidth}
                 \centering
                 \includegraphics[width = \textwidth]{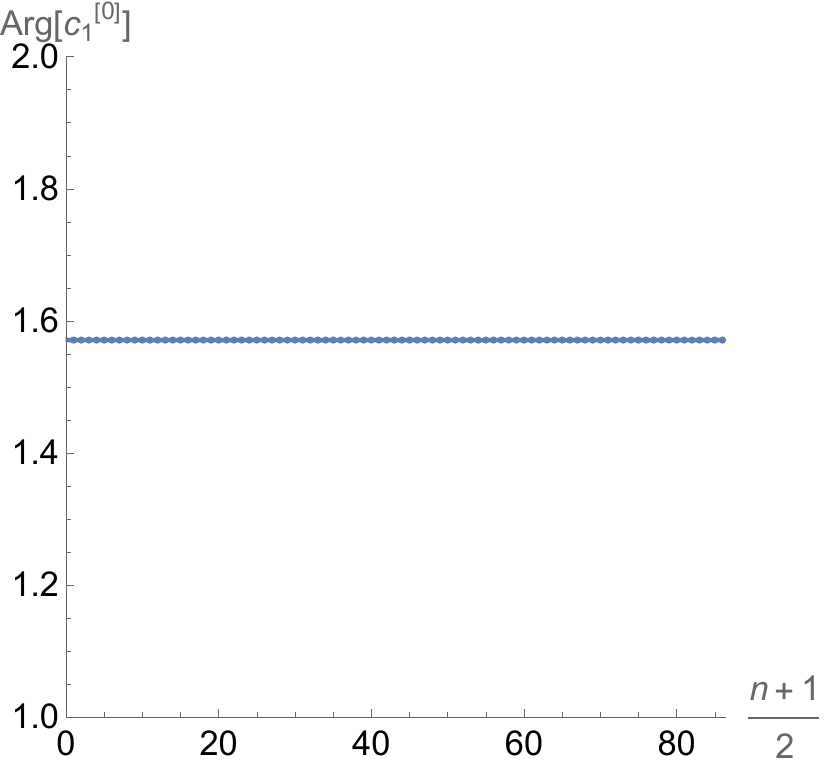}
                 \caption{}
                 \label{fig:c0angleSH35}
             \end{subfigure}
            \caption{Similar to Figure \ref{fig:c20SH23} but to approximate $c_1^{[0]}$ for model \eqref{SH35}.}
            \label{fig:c10SH35}
        \end{figure}

        Therefore, we conclude that
        \begin{align*}
            \abs{\delta r} \leq \frac{267.183114 \, e^{- 4.868645 \, \frac{\pi}{\varepsilon^2}}}{\varepsilon^2}.
        \end{align*}

        Finally, when making the numerical check with actual snaking curves, we obtain the graph shown in Figure \ref{fig:matching-SH35}, where we can see that the matching is, once again, pretty good.
        \begin{figure}
            \centering
            \begin{subfigure}[b]{0.49\textwidth}
                 \centering
                 \includegraphics[width = \linewidth]{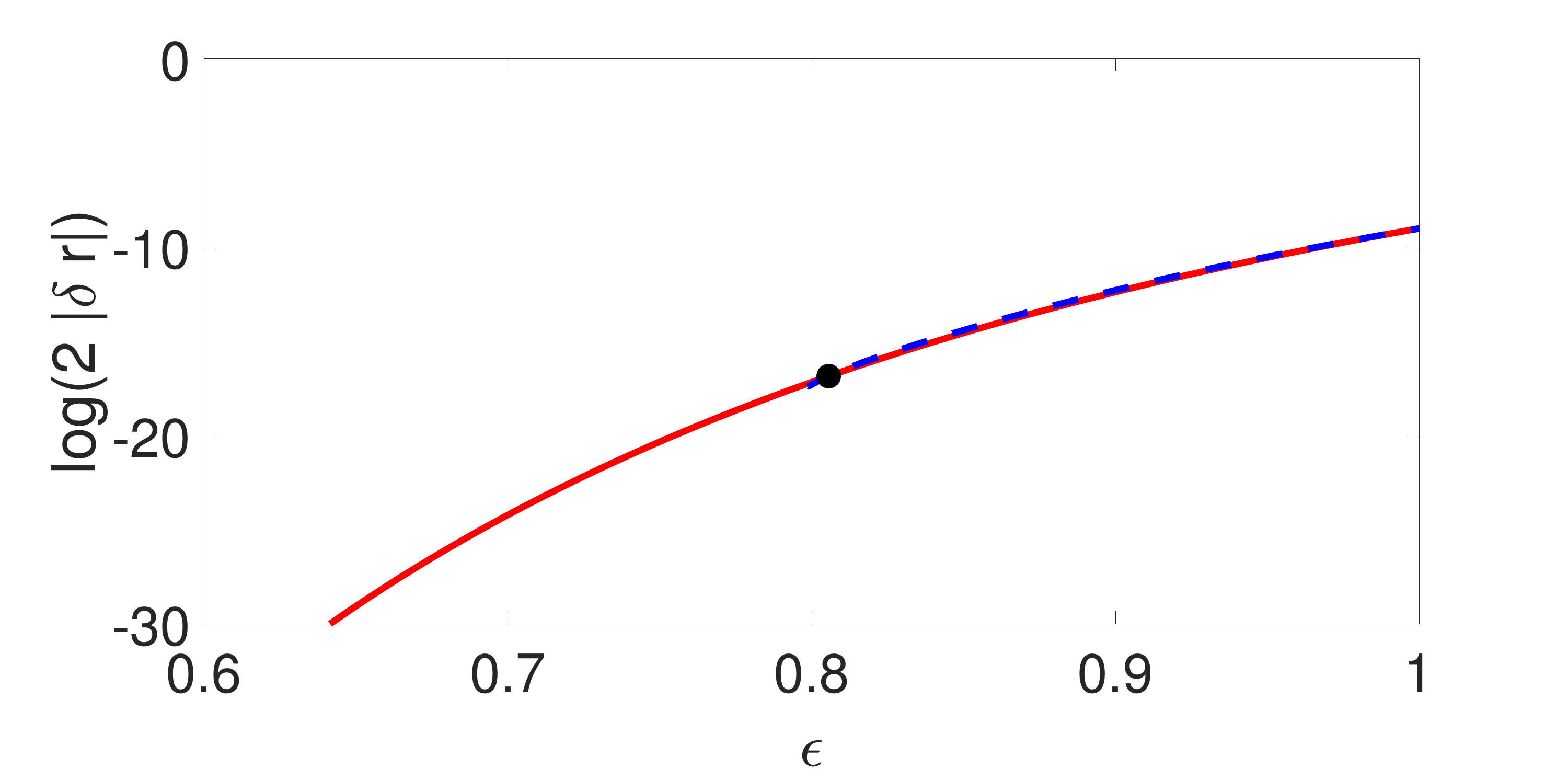}
                 \caption{}
             \end{subfigure}
             \hfill
             \begin{subfigure}[b]{0.49\textwidth}
                 \centering
                 \includegraphics[width = \textwidth]{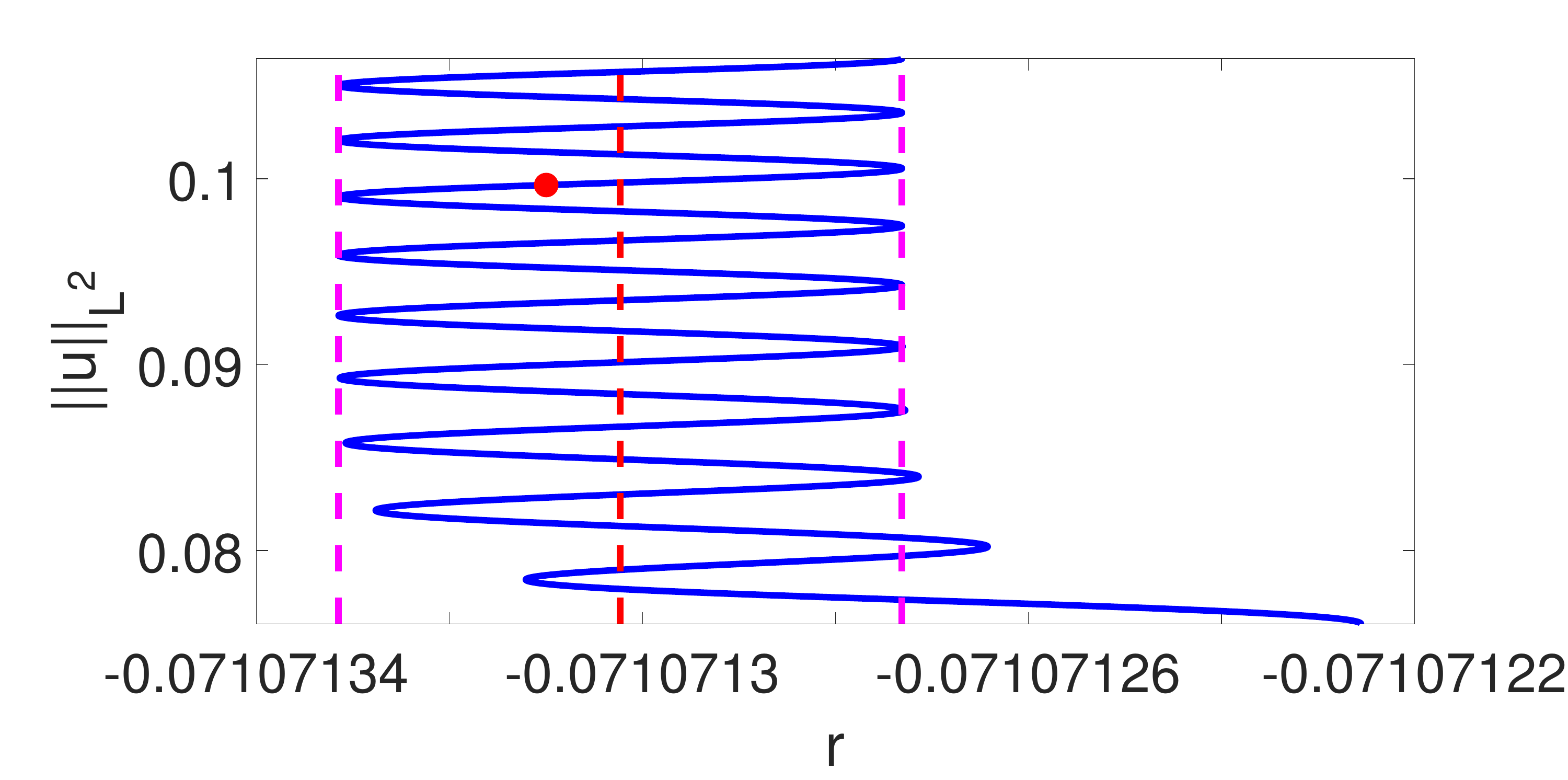}
                 \caption{}
             \end{subfigure}
             \\
             \begin{subfigure}[b]{0.7\textwidth}
                 \centering
                 \includegraphics[width = \textwidth]{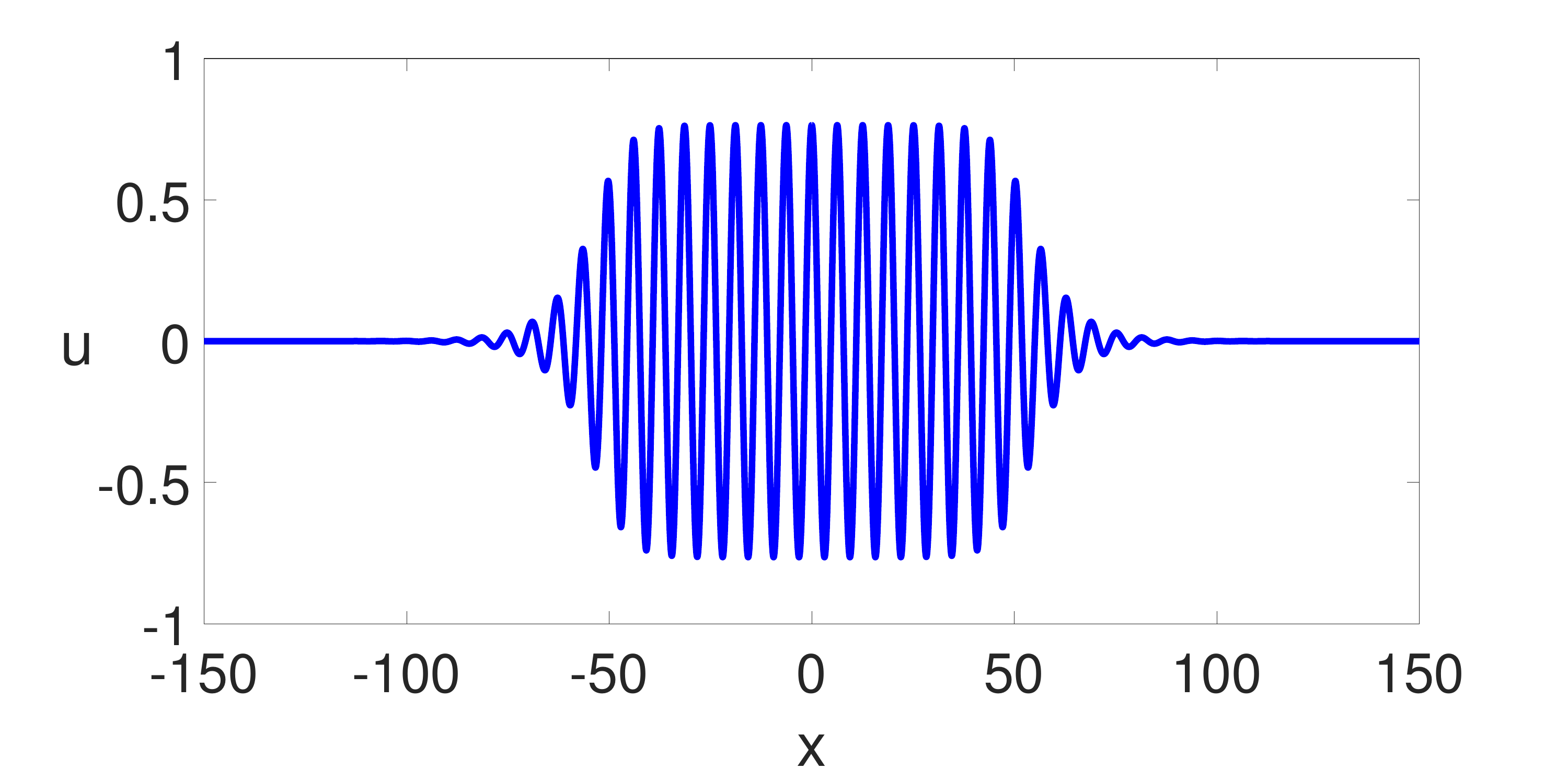}
                 \caption{}
             \end{subfigure}
            \caption{Similar to Figure \ref{fig:matchingSH23}, but for model \eqref{SH35}. The snaking in panel (b) (respectively, the solution in panel (c)) corresponds to $\varepsilon \approx 0.805579$ (respectively, $r \approx - 0.071071309973$).}
            \label{fig:matching-SH35}
        \end{figure}

    \subsection{Modified Schnakenberg system}
        To finish the study of models that have been studied in the past, we now deal with a modified version of the Schnakenberg system, which turns out to be a simple model for glycolysis that has been widely used to study pattern formation (see, e.g.~\cite{villar2023degenerate,FahadWoods,HannesdeWitt}). The modified version of the Schnakenberg system we are to study here is given by:
        \begin{equation}
            \begin{aligned}
                \partial_t u &= - u + u^2 \, v + \sigma \, \left(u - \frac{1}{v}\right)^2 + \partial_{xx} u,
                \\
                \partial_t v &= \lambda - u^2 \, v - \sigma \, \left(u - \frac{1}{v}\right)^2 + d \, \partial_{xx} v,
            \end{aligned} \label{Schnakenberg}
        \end{equation}
        and is motivated by the study carried out in \cite{HannesdeWitt}. Said system has a homogeneous steady state given by
        \begin{align*}
            \mbf P = (u, v) = \left(\lambda, \frac{1}{\lambda}\right).
        \end{align*}
        Furthermore, if we fix $d = 3 + \sqrt{8}$, then this steady state goes through a codimension-two Turing bifurcation at
        \begin{align*}
            (\sigma, \lambda) = \left(\frac{21}{22} \, \sqrt{2} \, \sqrt{9407 - 6651 \, \sqrt{2}} - \frac{15 \, \sqrt{2}}{22} + \frac{15}{11} \, \sqrt{9407 - 6651 \, \sqrt{2}} - \frac{31}{11}, 1\right).
        \end{align*}
        Now, let us take $a = \lambda$ and $b = \sigma$, and consider the translation $(u, v) \to (u, v) + \mbf P$. With this, system \eqref{Schnakenberg} becomes
        \begin{equation}
            \begin{aligned}
                \partial_t u &= \left(\lambda \, u \, v + u + \lambda^2 \, v\right) \, \left(\frac{u}{\lambda} + \frac{\sigma \, \left(\lambda \, u \, v + u + \lambda^2 \, v\right)}{(\lambda \, v + 1)^2} + 1\right) + \partial_{xx} u,
                \\
                \partial_t v &= - \frac{u^2 \, (\lambda \, v + 1) + 2 \, \lambda \, u \, (\lambda \, v + 1) + \lambda^3 \, v}{\lambda} - \frac{\sigma  \left(\lambda \, u \, v + u + \lambda^2 \, v\right)^2}{(\lambda \, v + 1)^2} + d \, \partial_{xx} v.
            \end{aligned}
        \end{equation}
        Now, we note that, when getting away from the Maxwell point with $\sigma$, said parameter influences infinitely many degrees of nonlinearities. Nevertheless, we only need to consider the dominant nonlinearities which, in this case, are the quadratic and cubic ones. In particular, following the same idea as in Example \ref{sub:SH23}, we have that the equation for the amplitude of the remainder, \eqref{rem_eq}, becomes
        \begin{align*}
            \alpha_1 \, B_1'' + i \, \alpha_3 \, \abs{A_1}^2 \, B_1' + i \, \alpha_3 \, A_1 \, A_1' \, \bar B_1 + i \, \alpha_3 \, \bar A_1 \, A_1' \, B_1 + \alpha_5 \, B_1 + \alpha_6 \, A_1^2 \, \bar B_1 + 2 \, \alpha_6 \, \abs{A_1}^2 \, B_1
            \\
            + 2 \, \alpha_7 \, \abs{A_1}^2 \, A_1^2 \, \bar B_1 + 3 \, \alpha_7 \, \abs{A_1}^4 \, B_1 - \frac{2}{3} \, \sqrt{9407 - 6651 \, \sqrt{2}} \, \frac{\delta \, \sigma}{\varepsilon^2} \, \abs{A_1}^2 \, A_1 = 0.
        \end{align*}        
        Furthermore, in this case, after running the recurrence, \eqref{DM}, with initial conditions
        \begin{align*}
            A_1 &= 0.833888471879154 \, i,
            \\
            A_3 &= - 0.534219592319241 - 0.432892956121214 \, i,
        \end{align*}
        we obtain that $c_2^{[0]} \approx 0.121435 \, e^{- 0.292384 \, i}$, which implies that $\abs{K_2} \approx 0.201897174319627$ (the graph showing the convergence of $c_2^{[0]}$ is shown in Figure \ref{fig:c20Schnak}).
        \begin{figure}
            \centering
            \begin{subfigure}[b]{0.48\textwidth}
                 \centering
                 \includegraphics[width = \textwidth]{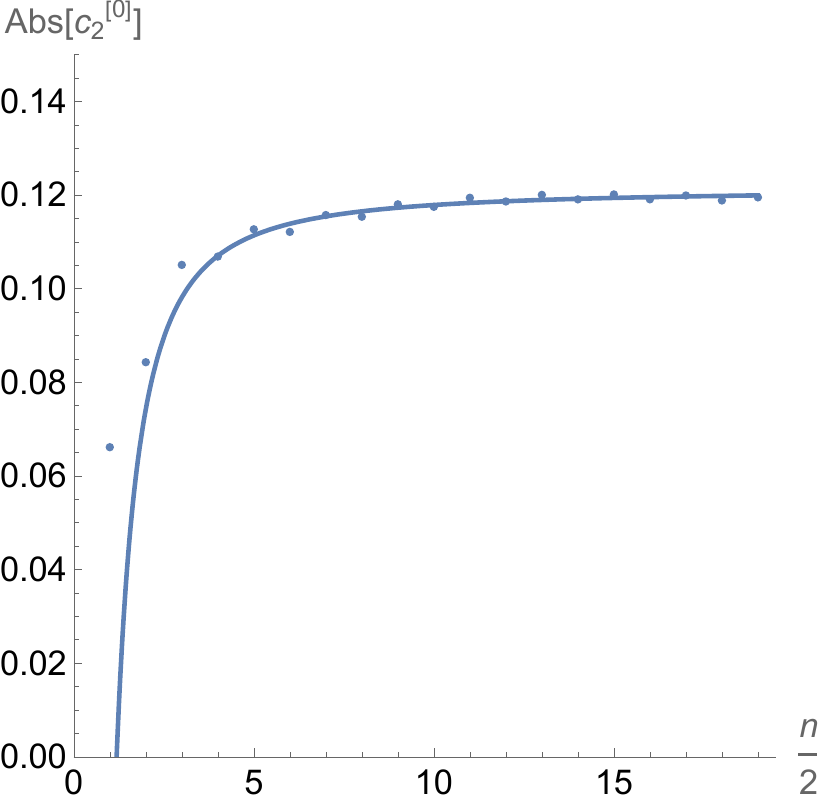}
                 \caption{}
             \end{subfigure}
             \hfill
             \begin{subfigure}[b]{0.48\textwidth}
                 \centering
                 \includegraphics[width = \textwidth]{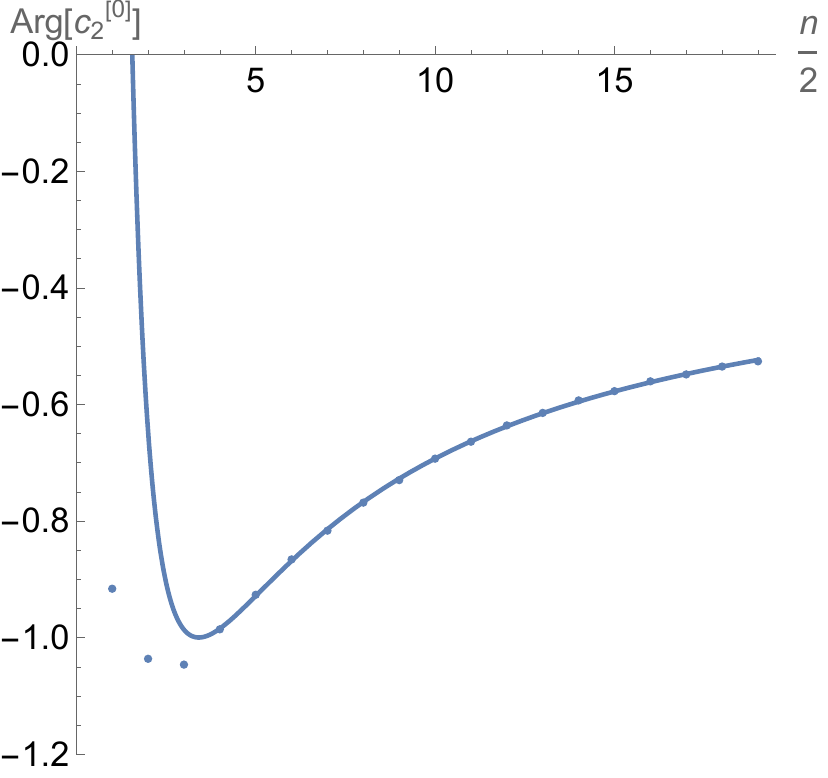}
                 \caption{}
             \end{subfigure}
            \caption{Similar to Figure \ref{fig:c20SH23} but to approximate $c_2^{[0]}$ for model \eqref{Schnakenberg}.}
            \label{fig:c20Schnak}
        \end{figure}
        \begin{table}
            \centering
            {\def\arraystretch{2.8}
            \begin{tabular}{|c|c|}
                \hline
                \textbf{Variable} & \textbf{Value}
                \\
                \hline
                $\lambda_4$ & 1
                \\
                \hline
                $\sigma_2$ & $- \dfrac{\sqrt{- 2595800404 - 27511240 \, \sqrt{2} \, \sqrt{9407 - 6651 \, \sqrt{2}} - 1525844 \, \sqrt{9407 - 6651 \, \sqrt{2}} + 1869381209 \, \sqrt{2}}}{242 \, \sqrt{1617 - 1139 \, \sqrt{2}}}$
                \\
                \hline
                $\sigma_4$ & $3.24595161831014$
                \\
                \hline
                $\alpha_1$ & $4 - 2 \, \sqrt{2}$
                \\
                \hline
                $\alpha_3$ & $- \dfrac{76294 \, \sqrt{2}}{363} - \dfrac{1140 \, \sqrt{9407 - 6651 \, \sqrt{2}}}{121} - \dfrac{444 \, \sqrt{18814 - 13302 \, \sqrt{2}}}{121} + \dfrac{113852}{363}$
                \\
                \hline
                $\alpha_4$ & $- \dfrac{626}{33} + \dfrac{2 \, \sqrt{18814 - 13302 \, \sqrt{2}}}{33} + \dfrac{10 \, \sqrt{9407 - 6651 \, \sqrt{2}}}{33} + \dfrac{421 \, \sqrt{2}}{33}$
                \\
                \hline
                $\alpha_5$ & $2 - 2 \, \sqrt{2}$
                \\
                \hline
                $\alpha_6$ & $2.83351727667069$
                \\
                \hline
                $\alpha_7$ & $- \dfrac{213917678}{14641} - \dfrac{622015 \, \sqrt{18814 - 13302 \, \sqrt{2}}}{14641} + \dfrac{6617678 \, \sqrt{9407 - 6651 \, \sqrt{2}}}{43923} + \dfrac{450818570 \, \sqrt{2}}{43923}$
                \\
                \hline
            \end{tabular}
            }
            \caption{Values of some of the main parameters in the asymptotic expansion of system \eqref{Schnakenberg}.}
            \label{tab:valSchnak}
        \end{table}

        With this, we have that
        \begin{align*}
            B_1 &\sim \left(- \frac{\sqrt{2} \, \beta_3 \, L_2^+}{4 \, \beta_1^2} - \frac{\sqrt{2} \, \beta_3 \, L_2^-}{4 \, \beta_1^2} - \frac{\sqrt{9407 - 6651 \, \sqrt{2}}}{3 \, \sqrt{2} \, \alpha_1 \, \beta_3 \, \varepsilon ^2} \, \delta \sigma\right) \, \frac{(1 + 2 \, \eta \, i)}{2} \, \sqrt{- \frac{\beta_1}{\beta_3}} \,  e^{2 \, \sqrt{\beta_1} \, X}
            \\
            &\sim \begin{multlined}[t]
                \left(- \frac{\sqrt{2} \, \pi \, \beta_3}{\beta_1^2} \, \frac{\abs{K_2} \, e^{- \frac{\pi}{2} \, \left(\frac{1}{\sqrt{\beta_1} \, \varepsilon^2} + \eta\right)}}{\varepsilon^6} \, \cos\left(K_2^o - \hat \chi + 2 \, \eta \, \log(\varepsilon)\right)\right.
                \\
                \left. - \frac{\sqrt{9407 - 6651 \, \sqrt{2}}}{3 \, \sqrt{2} \, \alpha_1 \, \beta_3 \, \varepsilon^2} \, \delta \sigma\right) \, \frac{(1 + 2 \, \eta \, i)}{2} \, \sqrt{- \frac{\beta_1}{\beta_3}} \,  e^{2 \, \sqrt{\beta_1} \, X},
            \end{multlined}
        \end{align*}
        as $X \to \infty$. Therefore, as in the previous examples, the condition we need to ensure the remainder to converge to 0 as $X \to \infty$ is given by
        \begin{align*}
            \abs{\delta \sigma} \leq \frac{6 \, \pi \, \alpha_1 \, \beta_3^2}{\beta_1^2 \, \sqrt{9407 - 6651 \, \sqrt{2}}} \, \frac{\abs{K_2} \, e^{- \frac{\pi}{2} \, \left(\frac{1}{\sqrt{\beta_1} \, \varepsilon^2} + \eta\right)}}{\varepsilon^4},
        \end{align*}
        which is, approximately, given by
        \begin{align*}
            \abs{\Delta \sigma} \leq \frac{15.217545 \, e^{- 0.594604 \,\frac{\pi}{\varepsilon^2}}}{\varepsilon^4}.
        \end{align*}
        The graph showing the match, in this case, is shown in Figure \ref{fig:matching-Schnakenberg}, which shows, once again, that the theory we have developed matches numerics pretty well. We highlight that the result shown in \cite{HannesdeWitt} does not appear to be successful because of the choice of the parameters to do the matching. In fact, if one takes $b = \lambda$, then all the powers of $(\delta \lambda)^p$ will appear in the final expression for $B_1$, for every integer $p \geq 1$, and finding a bound in that case is much more intricate. In that case, it would have been better to scale $t \to \lambda \, t$ and $x \to \sqrt{\lambda} \, x$ in order to carry out this analysis. We also remark that the recurrence \eqref{DM} takes longer to run for this model in comparison to the previous two examples due to the presence of the variable $v$ in the denominator of \eqref{Schnakenberg}. Fortunately, $c_2^{[0]}$ has a tidy behaviour from the beginning (see Figure \ref{fig:c20Schnak}), so not many terms were necessary to conclude.
        \begin{figure}
            \centering
            \begin{subfigure}[b]{0.49\textwidth}
                 \centering
                 \includegraphics[width = \linewidth]{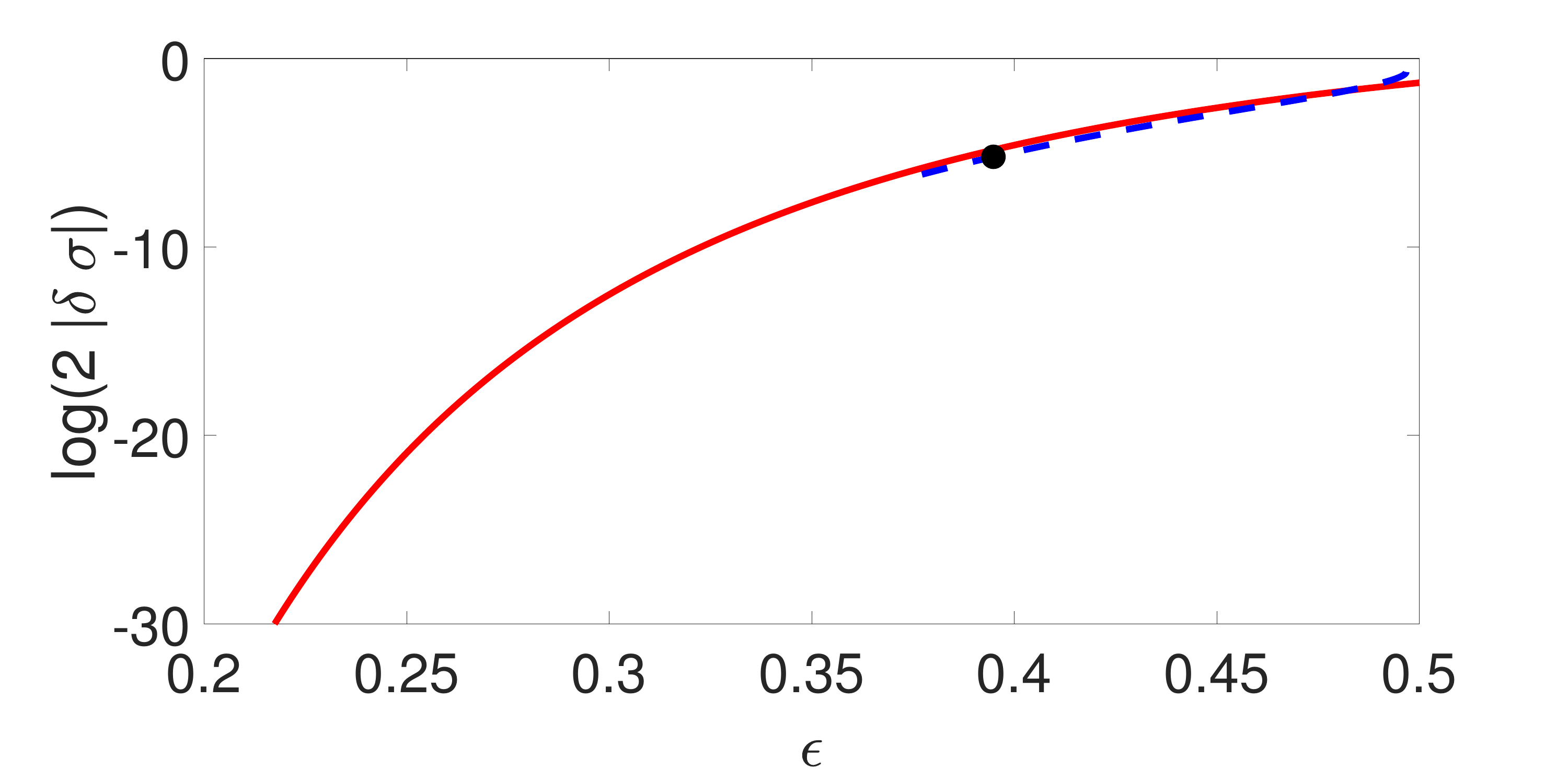}
                 \caption{}
             \end{subfigure}
             \hfill
             \begin{subfigure}[b]{0.49\textwidth}
                 \centering
                 \includegraphics[width = \linewidth]{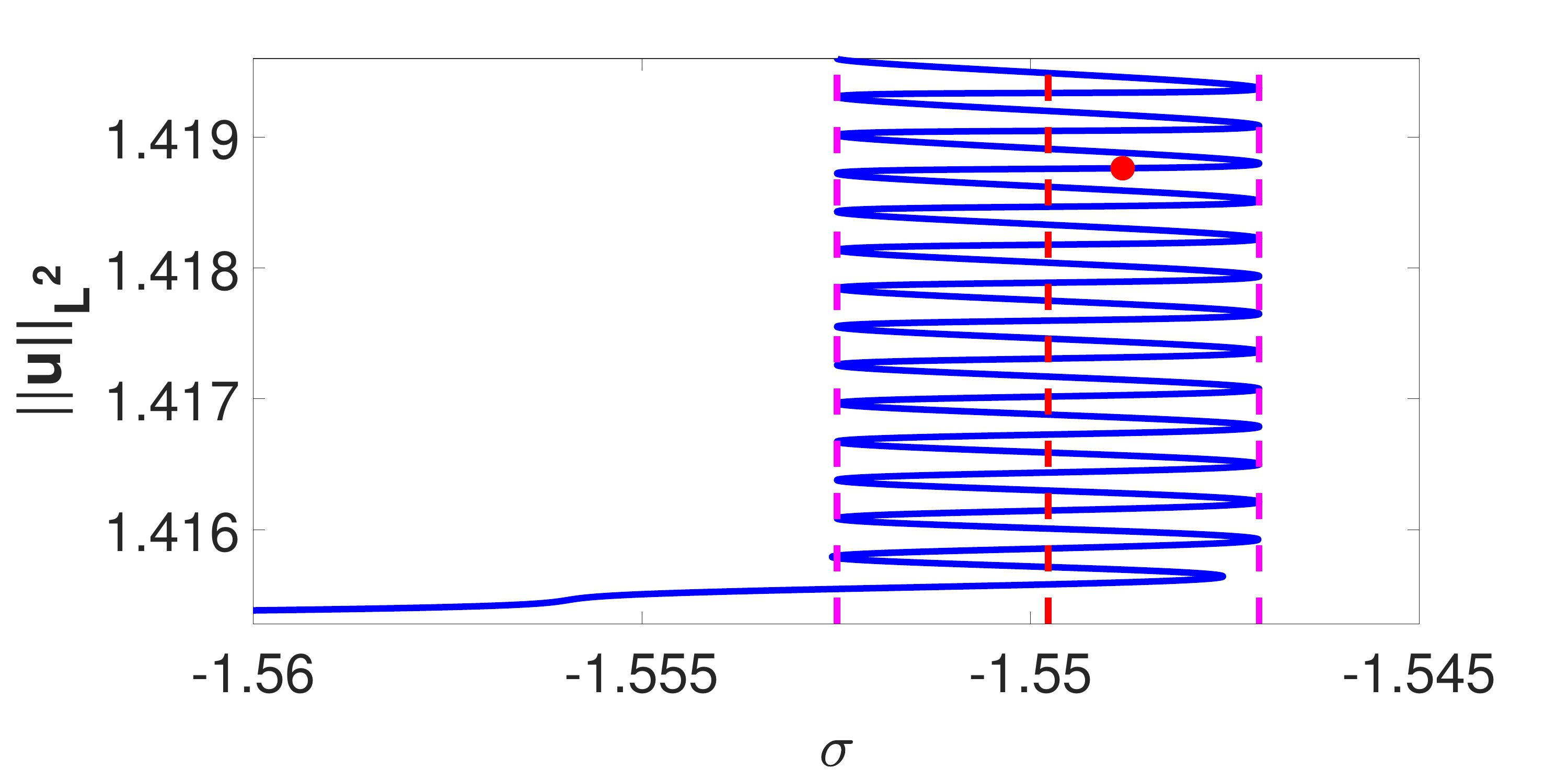}
                 \caption{}
             \end{subfigure}
             \\
             \begin{subfigure}[b]{0.49\textwidth}
                 \centering
                 \includegraphics[width = \textwidth]{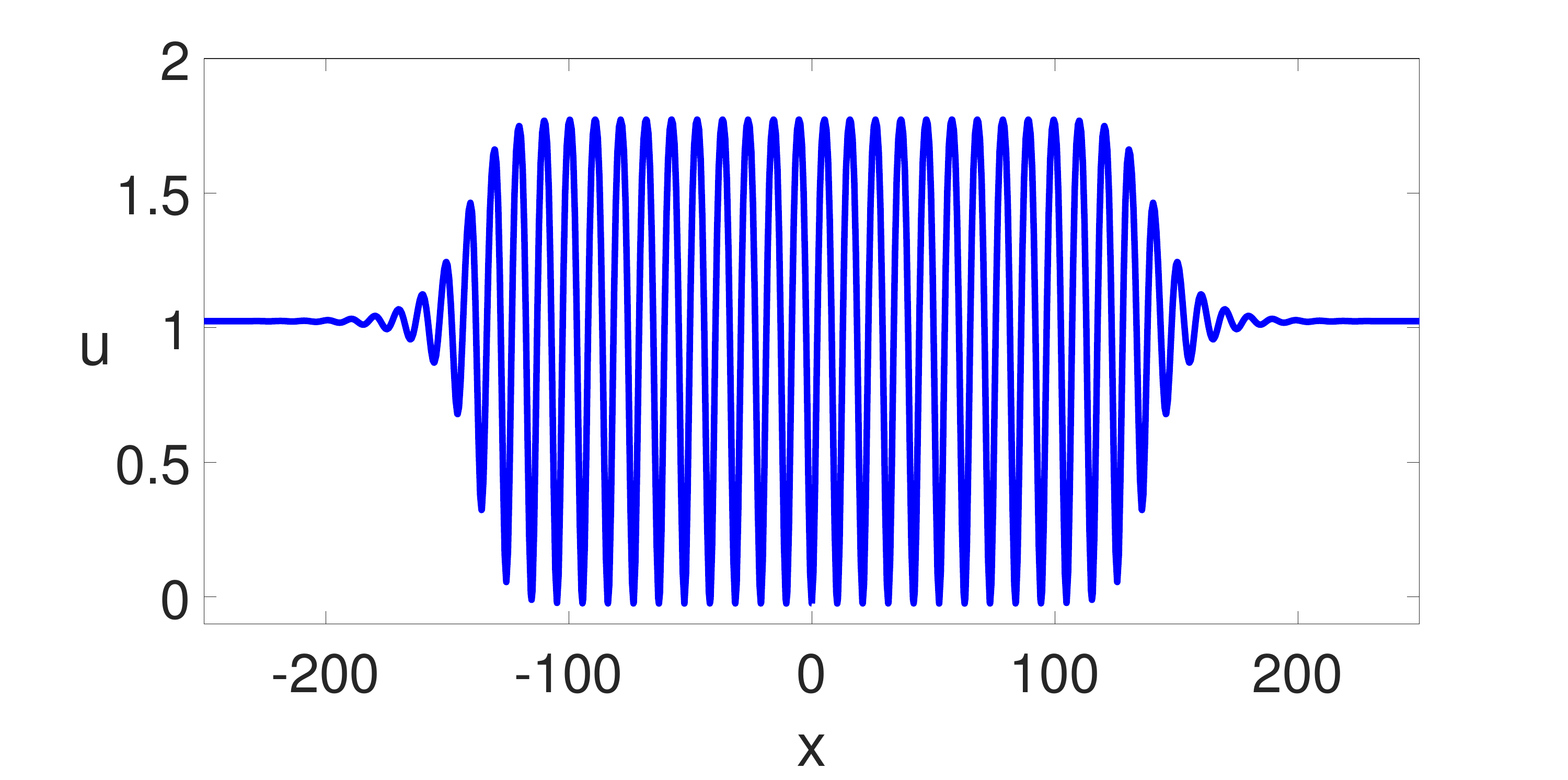}
                 \caption{}
             \end{subfigure}
             \hfill
             \begin{subfigure}[b]{0.49\textwidth}
                 \centering
                 \includegraphics[width = \textwidth]{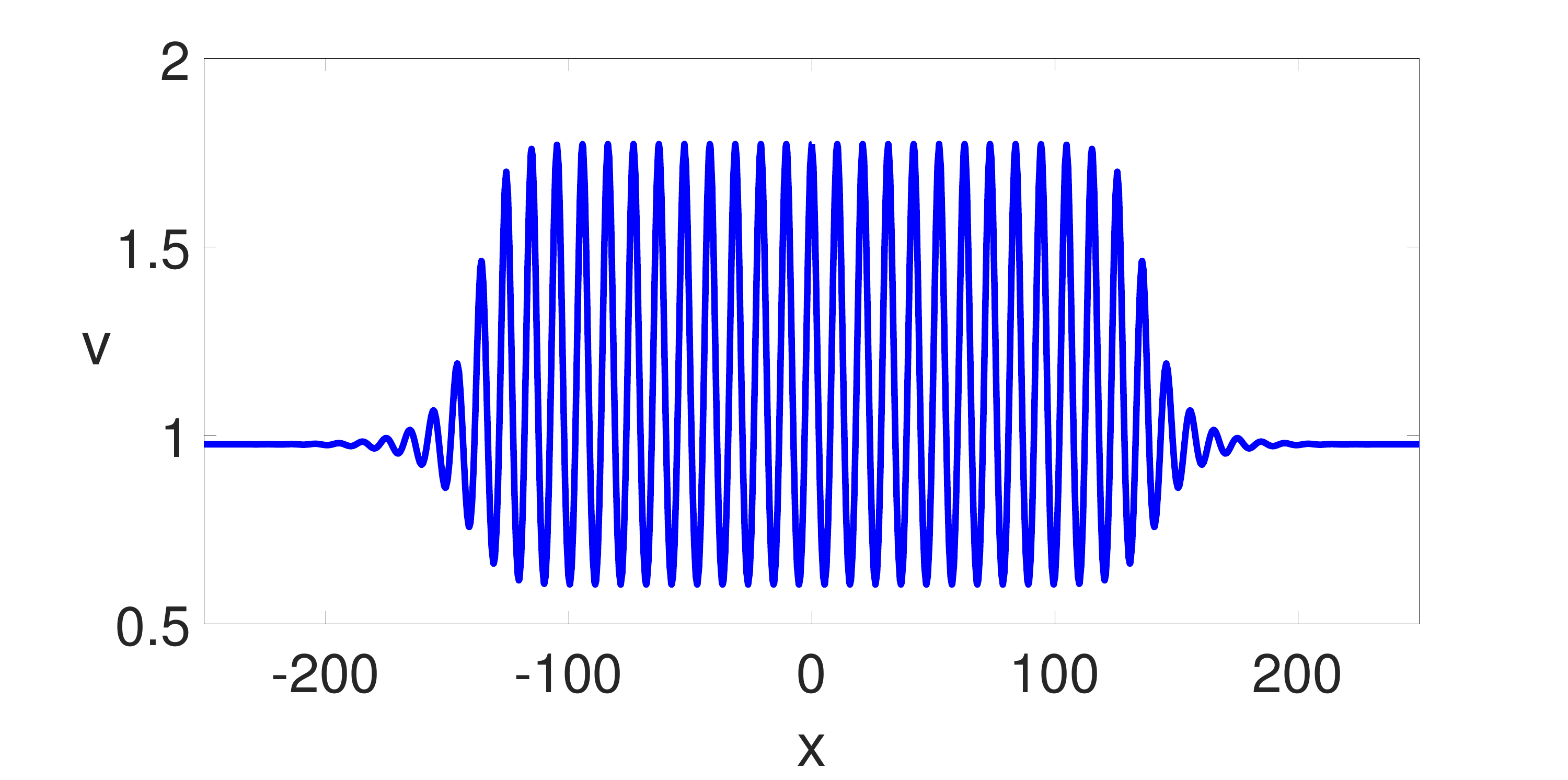}
                 \caption{}
             \end{subfigure}
            \caption{Similar to Figure \ref{fig:matchingSH23}, but for model \eqref{Schnakenberg}. The snaking in panel (b) (respectively, the solution in panels (c) and (d)) corresponds to $\varepsilon \approx 0.394849$ (respectively, $\sigma \approx - 1.5488164814$).}
            \label{fig:matching-Schnakenberg}
        \end{figure}

    \subsection{Brusselator}
        In \cite{villar2023degenerate}, the authors developed pattern formation and pattern localization for Schnakenberg-type models used to model chemical reactions or root-hair development. In particular, they proved several results for each model but highlighted that the Brusselator is simple enough so that codimension-two points can be found explicitly, but complex enough to show localized structures (in fact, they proved that there is a way to scale the model so that the coordinates of its codimension-two Turing bifurcation points do not change under a change in the diffusion coefficient). Therefore, we focus on the Brusselator system, which can be written as:
        \begin{align}
            \partial_t u &= a - c \, u + u^2 \, v + \delta^2 \, \partial_{xx} u, \label{Brus}
            \\
            \partial_t v &= (c - 1) \, u - u^2 \, v + \partial_{xx} v, \notag
        \end{align}
        and we use $b = c$. This system has only one homogeneous steady state given by
        \begin{align*}
            \mbf P = \left(a, \frac{c - 1}{a}\right).
        \end{align*}
        which goes through a Turing bifurcation whenever $c = a^2 \, \delta^2 + 2 \, a \, \delta + 2$.
        
        The authors of \cite{villar2023degenerate} proved this model has codimension-two points at $a_\pm = \dfrac{1}{16 \, \delta} \, \left(21 \pm \sqrt{313}\right)$. In particular, if we fix $\delta = \dfrac{21 + \sqrt{313}}{48}$, then the variables of the expansion for the codimension-2 point at $a = a_+$ can be obtained and are shown in Table \ref{tab:valBrus}. Now, note that when making the change
        \begin{align*}
            (u, v) &\to \left(u + a, v + \frac{c - 1}{a}\right),
        \end{align*}
        we obtain the system given by
        \begin{equation}
            \begin{aligned}
                \partial_t u &= a^2 v - \frac{u^2}{a} + 2 \, a \, u \, v + u^2 \, v - 2 \, u + c \, \left(u + \frac{u^2}{a}\right) + \delta^2 \, \partial_{xx} u,
                \\
                \partial_t v &= - a^2 \, v + \frac{u^2}{a} - 2 \, a \, u \, v - c \, u - u^2 \, v + u - c \, \left(u + \frac{u^2}{a}\right) + \partial_{xx} v.
            \end{aligned} \label{Bruschange}
        \end{equation}
        \begin{table}
            \centering
            {\def\arraystretch{2.5}
            \begin{tabular}{|c|c|}
                \hline
                \textbf{Variable} & \textbf{Value}
                \\
                \hline
                $a_2$ & 1
                \\
                \hline
                $c_2$ & $\dfrac{545 + 29 \, \sqrt{313}}{192}$
                \\
                \hline
                $c_4$ & $\dfrac{27181639265}{160777328256} + \dfrac{1139313695 \, \sqrt{313}}{53592442752}$
                \\
                \hline
                $c_6$ & $\dfrac{519128111361361961282269542371}{70461515885712580123818315840} - \dfrac{149360653887596807674222710397 \, \sqrt{313}}{352307579428562900619091579200}$
                \\
                \hline
                $\alpha_1$ & $\dfrac{58 + 2 \, \sqrt{313}}{33}$
                \\
                \hline
                $\alpha_2$ & $- \dfrac{58 + 2 \, \sqrt{313}}{99}$
                \\
                \hline
                $\alpha_3$ & $\dfrac{8549 + 49 \, \sqrt{313}}{7128}$
                \\
                \hline
                $\alpha_4$ & $- \dfrac{316 + 20 \, \sqrt{313}}{891}$
                \\
                \hline
                $\alpha_5$ & $\dfrac{- 25180885801 + 399883939 \, \sqrt{313}}{165801619764}$
                \\
                \hline
                $\alpha_6$ & $\dfrac{55 + 3 \, \sqrt{313}}{162}$
                \\
                \hline
                $\alpha_7$ & $- \dfrac{265583 + 5555 \, \sqrt{313}}{497664}$
                \\
                \hline
            \end{tabular}
            }
            \caption{Values of some of the main parameters in the asymptotic expansion of system \eqref{Brus}.}
            \label{tab:valBrus}
        \end{table}
        
        Here, we note that a change in $c$ away from the Maxwell point, which will be called $\Delta c$ to avoid confusion with $\delta$, will affect two terms, a linear and a quadratic term: $u + u^2/a$. Nevertheless, we only need to care about the dominant term, which is the linear one, which implies that the equation for the amplitude of the remainder becomes
        \begin{align*}
            \alpha_1 \, B_1'' + i \, \alpha_3 \, \abs{A_1}^2 \, B_1' + i \, \alpha_3 \, A_1 \, A_1' \, \bar B_1 + i \, \alpha_3 \, \bar A_1 \, A_1' \, B_1 + \alpha_5 \, B_1 + \alpha_6 \, A_1^2 \, \bar B_1 + 2 \, \alpha_6 \, \abs{A_1}^2 \, B_1
            \\
            + 2 \, \alpha_7 \, \abs{A_1}^2 \, A_1^2 \, \bar B_1 + 3 \, \alpha_7 \, \abs{A_1}^4 \, B_1 + \frac{1}{66} \, \left(37 - \sqrt{313}\right) \, \frac{\Delta \gamma}{\varepsilon^4} \, A_1 = 0.
        \end{align*}
        Now, in order to determine the width of the snaking, let us recall that we need to run the recursion given by \eqref{DM} with the initial conditions given by
        \begin{align*}
            A_1 &\approx 0.109085 \, i,
            \\
            A_3 &\approx 0.145930 - 0.009474 \, i.
        \end{align*}
        Figure \ref{fig:c20brus} shows the results of the convergence of the argument of the limit in \eqref{c0-def} for $r = 2$, for different values of $n/2$. In particular, when fitting a curve as in the Swift-Hohenberg equation, we note that
        \begin{align*}
            c_2^{[0]} \approx 0.000302336 \, e^{- 1.83337 \, i}.
        \end{align*}
        \begin{figure}
            \centering
            \begin{subfigure}[b]{0.48\textwidth}
                 \centering
                 \includegraphics[width = \textwidth]{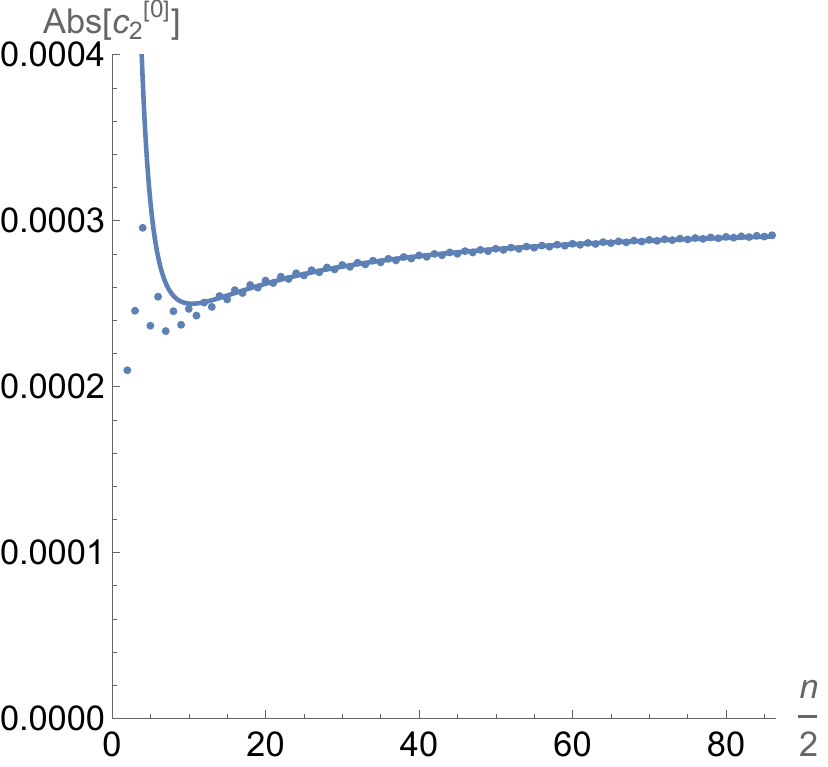}
                 \caption{}
             \end{subfigure}
             \hfill
             \begin{subfigure}[b]{0.48\textwidth}
                 \centering
                 \includegraphics[width = \textwidth]{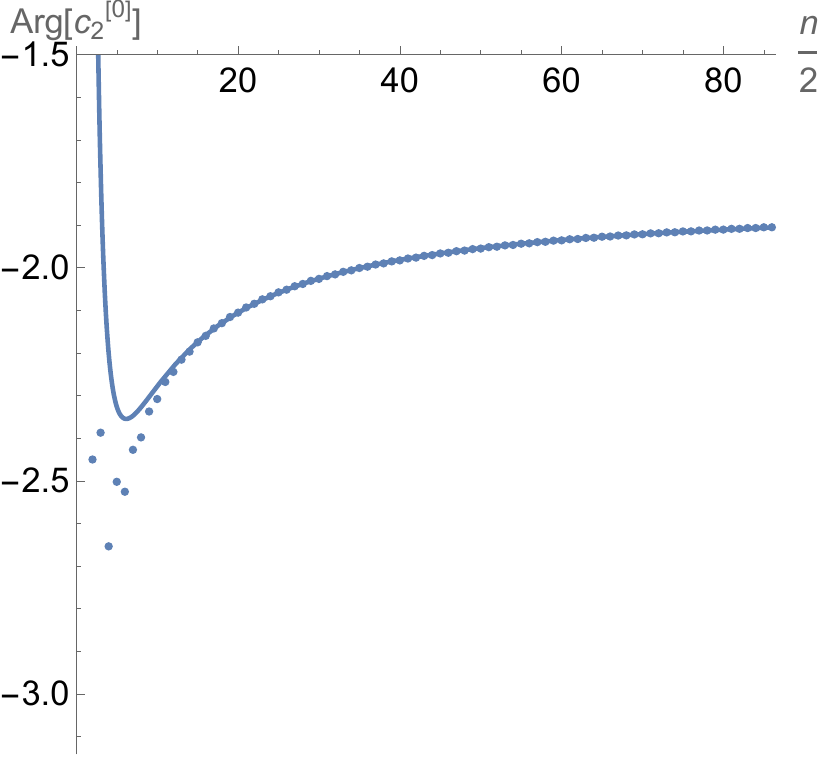}
                 \caption{}
             \end{subfigure}
            \caption{Similar to Figure \ref{fig:c20SH23} but for model \eqref{Brus}.}
            \label{fig:c20brus}
        \end{figure}
        
        Furthermore, we have
        \begin{align*}
            B_1 &\sim \frac{1}{8 \, \beta_1} \, \sqrt{- \frac{\beta_1}{\beta_3}} \, \left(- \frac{\sqrt{2} \, \beta_3 \, L_2^+}{\beta_1} - \frac{\sqrt{2} \, \beta_3 \, L_2^-}{\beta_1} + \frac{\sqrt{313} - 37}{33 \, \sqrt{2} \, \alpha_1 \, \varepsilon^4} \, \Delta c\right) \, (1 + 2 \, \eta \, i) \, e^{2 \, \sqrt{\beta_1} \, X}
            \\
            &= \frac{1}{8 \, \beta_1} \, \sqrt{- \frac{\beta_1}{\beta_3}} \, \left(- \frac{4 \, \sqrt{2} \, \beta_3}{\beta_1} \, \frac{\pi \, \abs{K_2} \, e^{- \frac{\pi}{2} \, \left(\frac{1}{\sqrt{\beta_1} \, \varepsilon^2} + \eta\right)}}{\varepsilon^6} \, \cos\left(K_2^o - \hat \chi + 2 \, \eta \, \log (\varepsilon)\right)\right.
            \\
            &\quad \left. + \frac{\sqrt{313} - 37}{33 \, \sqrt{2} \, \alpha_1 \, \varepsilon^4} \, \Delta c\right) \, (1 + 2 \, \eta \, i) \, e^{2 \, \sqrt{\beta_1} \, X},
        \end{align*}
        as $X \to \infty$.
        
        Thus, to ensure that the remainder tends to zero as $X \to \infty$, we need
        \begin{align*}
            \abs{\Delta c} \leq - \frac{264 \, \alpha_1 \, \beta_3}{\beta_1 \, \left(\sqrt{313} - 37\right)} \, \frac{\pi \, \abs{K_2} \, e^{- \frac{\pi}{2} \, \left(\frac{1}{\sqrt{\beta_1} \, \varepsilon^2} + \eta\right)}}{\varepsilon^2},
        \end{align*}
        where $\abs{K_2} \approx 0.0101224029550248$. Therefore,
        \begin{align}
            \abs{\Delta c} \leq \frac{7.473090 \, e^{- 4.808385 \, \frac{\pi}{\varepsilon^2}}}{\varepsilon^2}. \label{wrong-brus}
        \end{align}
        The result of the matching of this function with respect to numerical computation of the snaking for model \eqref{Brus} is shown by the black line in Figure \ref{fig:bru-fit}, which shows that although the prediction we have obtained seems to follow the same behaviour as the actual width of the snaking, the coefficient in front of \eqref{wrong-brus} needs to be higher. In fact, in this case, we have that
        \begin{align*}
            \abs{\frac{h_1}{k_1}} \approx 0.006504 \ll 1,
        \end{align*}
        which implies that the mode $r = 2$ is much less dominant than $r = 0$ (see \eqref{h1-k1ratio}). Now, observe the convergence of $c_0^{[0]}$, shown in Figure \eqref{fig:c00brus}. In this case, we note that we have two limits to which the curve seems to converge. This happens because the mode $r = 0$ is present for every dominant value of $\kappa^2 = \kappa_\pm^2$. That is, there is an interaction between these modes.

        To summarize, if we take the average between the two limits of convergence, which turn out to be 0.038285 and 0.089638, and divide the result by 4, considering that there are four different values of $\kappa$ (that is, four different solutions, see Section \ref{sec:late_term}), we obtain
        \begin{align*}
            \abs{c_0^{[0]}} \approx 0.0159904.
        \end{align*}
        Now, if we use this value for computing $K_2$, instead of $c_2^{[0]}$, taking into account that the matching does not change when considering that $h_0^{[0]}$ is defined as the complex-conjugate of $h_2^{[0]}$ when the integrals that defined them are set to be real (see \eqref{K_2}), we obtain
        \begin{align}
            \abs{\Delta c} \leq \frac{395.248011 \, e^{- 4.808385 \, \frac{\pi}{\varepsilon^2}}}{\varepsilon^2}, \label{perfect-brus}
        \end{align}
        which leads to the red line in Figure \ref{fig:bru-fit}, showing to a remarkable match, even for high values of $\varepsilon$.
        \begin{figure}
            \centering
            \begin{subfigure}[b]{0.49\textwidth}
                 \centering
                 \includegraphics[width = \textwidth]{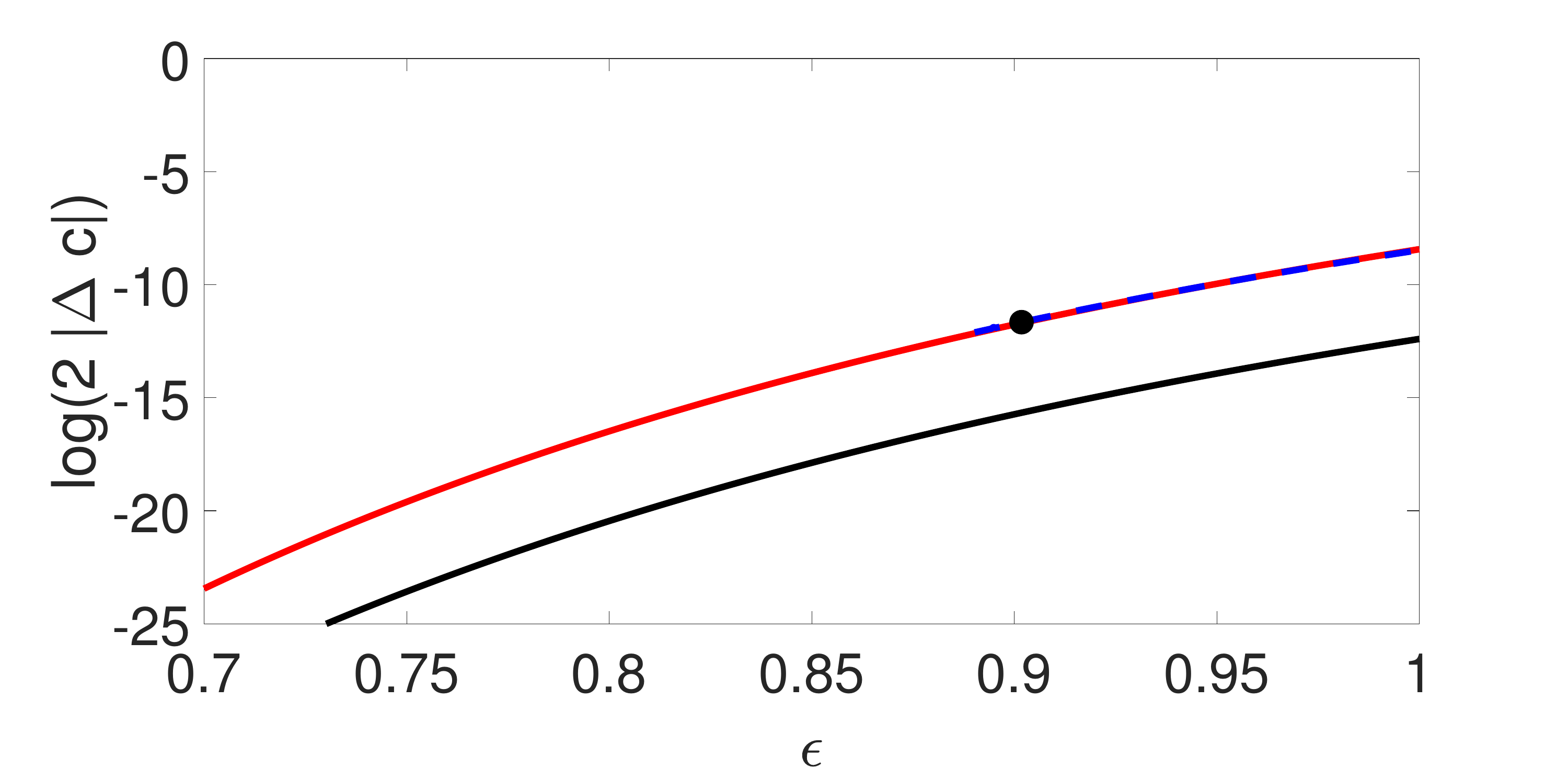}
                 \caption{}
                 \label{fig:bru-fit}
             \end{subfigure}
             \hfill
             \begin{subfigure}[b]{0.49\textwidth}
                 \centering
                 \includegraphics[width = \textwidth]{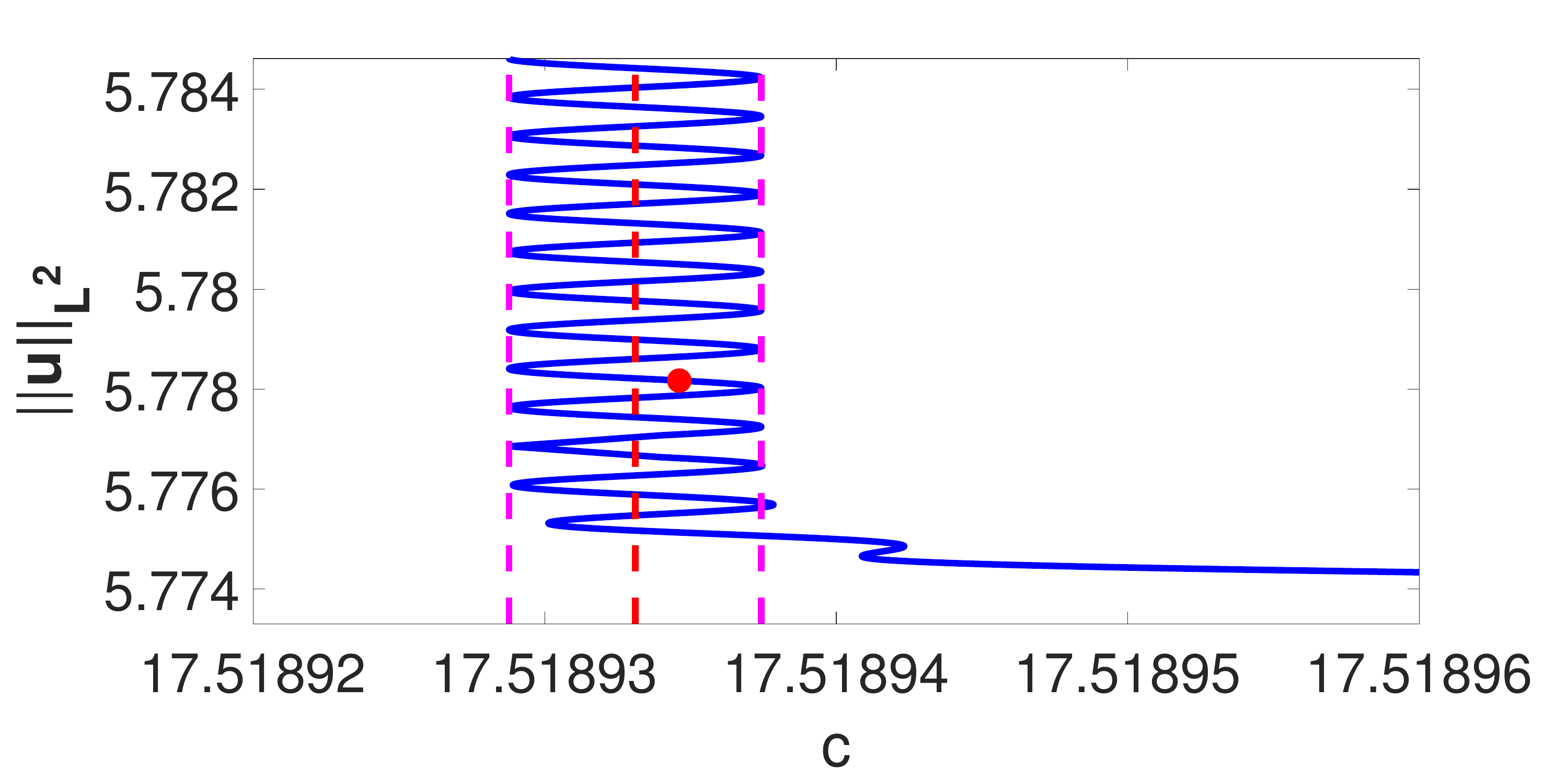}
                 \caption{}
             \end{subfigure}
             \\
             \begin{subfigure}[b]{0.48\textwidth}
                 \centering
                 \includegraphics[width = \textwidth]{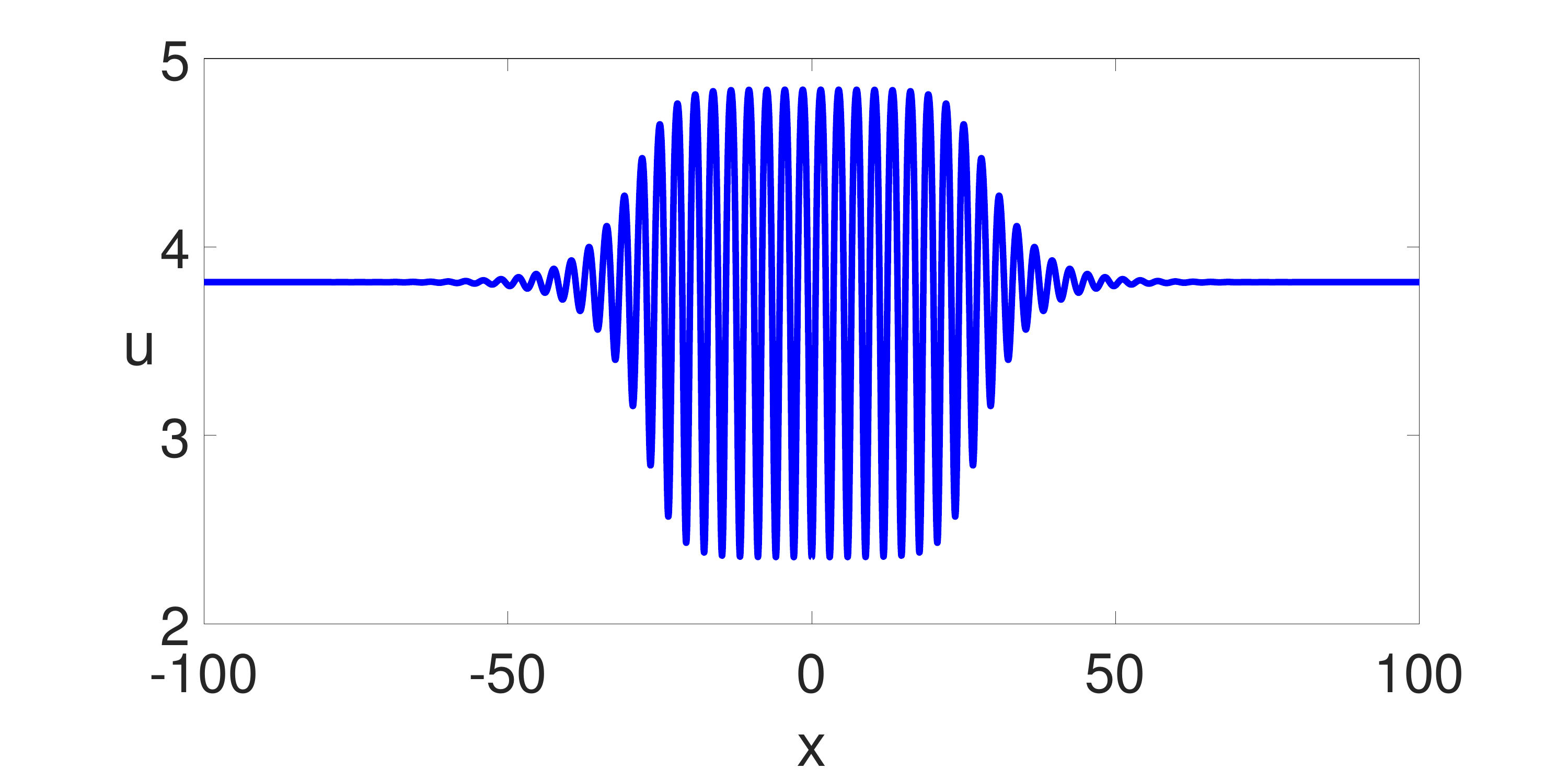}
                 \caption{}
             \end{subfigure}
             \hfill
             \begin{subfigure}[b]{0.48\textwidth}
                 \centering
                 \includegraphics[width = \textwidth]{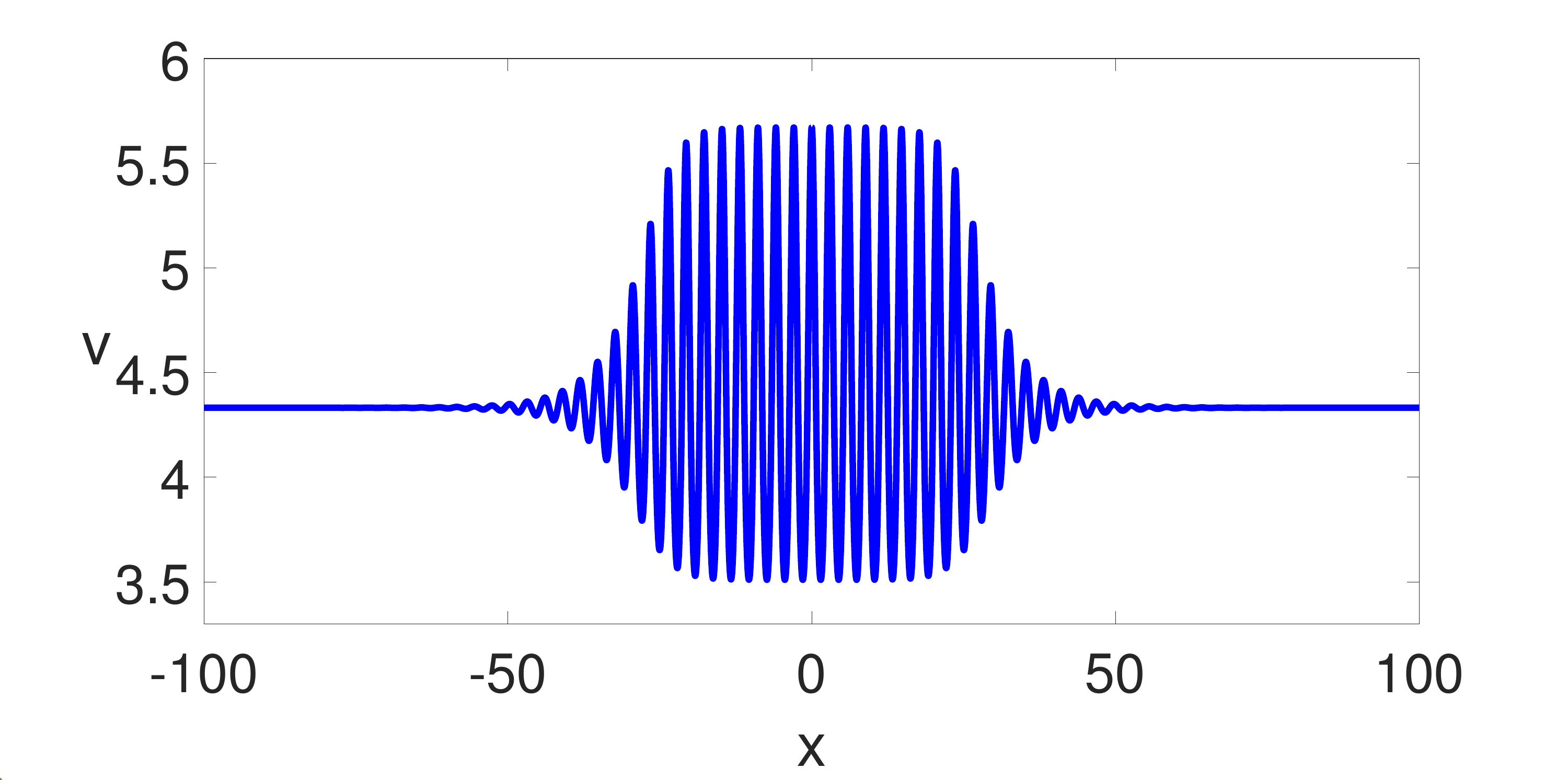}
                 \caption{}
             \end{subfigure}
            \caption{Similar to Figure \ref{fig:matchingSH23}, but for model \eqref{Brus}. Here, panel (a) shows the comparison between the width of the snaking with respect to \eqref{wrong-brus} (black line) and \eqref{perfect-brus} (red line). The snaking in panel (b) (respectively, the solution shown in panels (c) and (d)) corresponds to $\varepsilon \approx 0.901788$ (respectively, $c \approx 17.518934615$).}
            \label{fig:matching-brus}
        \end{figure}
        \begin{figure}
            \centering
            \begin{subfigure}[b]{0.48\textwidth}
                 \centering
                 \includegraphics[width = \textwidth]{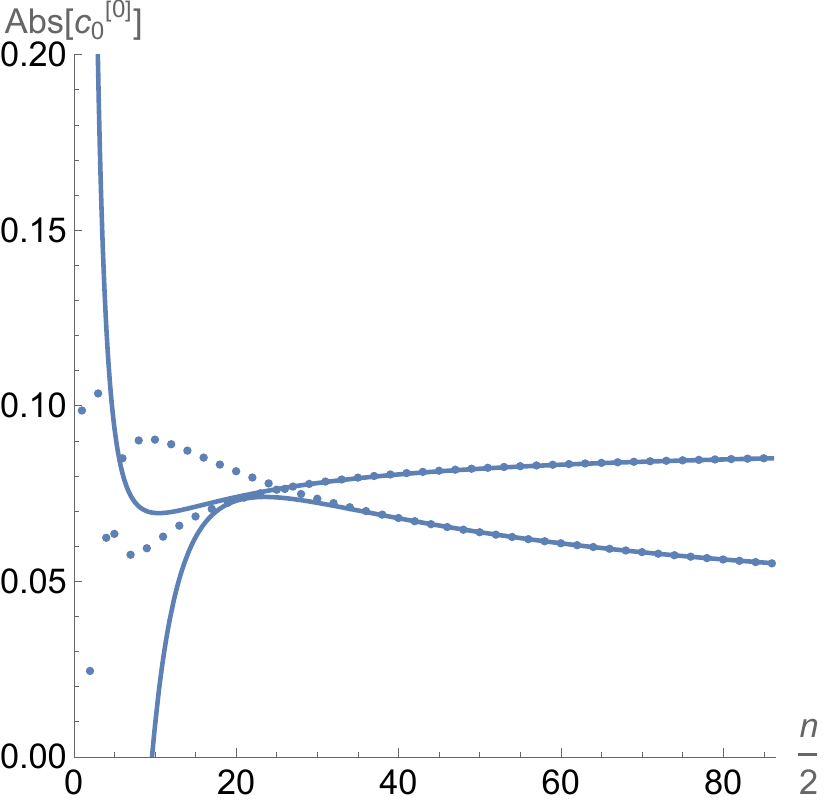}
                 \caption{}
                 \label{fig:c0normbrus}
             \end{subfigure}
             \hfill
             \begin{subfigure}[b]{0.48\textwidth}
                 \centering
                 \includegraphics[width = \textwidth]{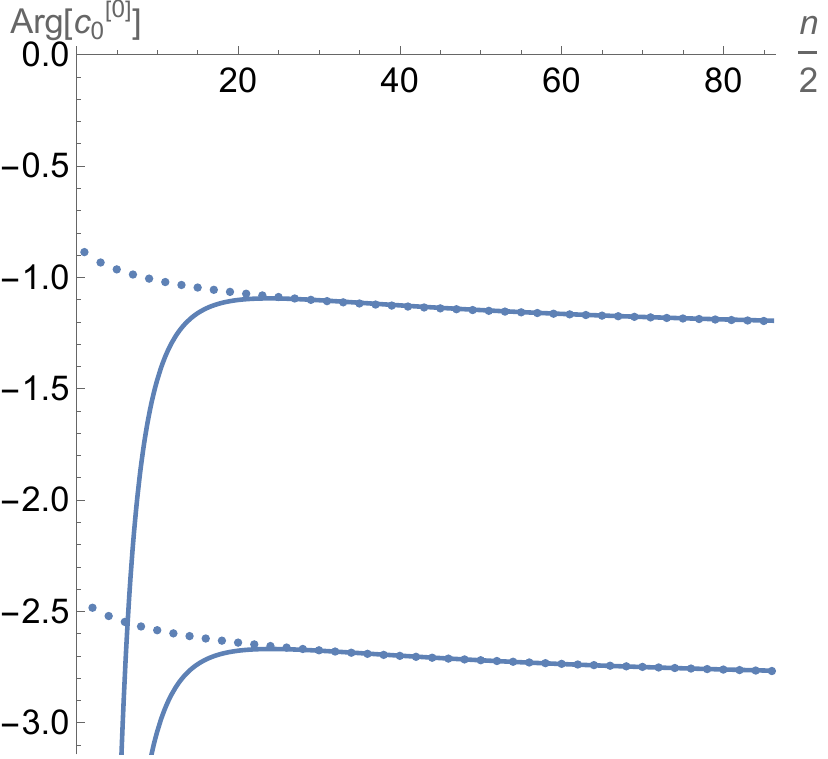}
                 \caption{}
                 \label{fig:c0anglebrus}
             \end{subfigure}
            \caption{Similar to Figure \ref{fig:c20SH23} but to approximate $c_0^{[0]}$ for model \eqref{Brus}.}
            \label{fig:c00brus}
        \end{figure}

    \subsection{4-component Brusselator}
        Finally, we will show how our code can be used to carry out the calculations for higher-component systems. In particular, once again, motivated by the Brusselator, we now work with two Brusselator models coupled linearly, which form one 4-component reaction-diffusion system studied in \cite{AlastairBru4}. Said system is given by
        \begin{equation}
            \begin{aligned}
                \partial_t u &= a - (b + 1) \, u + u^2 \, v + \alpha \, (w - u) + \delta^2 \, \partial_{xx} u,
                \\ 
                \partial_t v &= b \, u - u^2 \, v + \beta \, (z-v) + \partial_{xx} v,
                \\ 
                \partial_t w &= a - (b + 1) \, w + w^2 \, z + \alpha \, (u - w) + \delta^2 \, \partial_{xx} w,
                \\
                \partial_t z &= b \, w - w^2 \, z + \beta \, (v - z) + \partial_{xx} z,
            \end{aligned} \label{Bru4}
        \end{equation}
        and it is a model that has a homogeneous steady state given by
        \begin{align*}
            \mbf P = \left(a, \frac{b}{a}, a, \frac{b}{a}\right).
        \end{align*}
        Now, if we fix
        \begin{align*}
            \left(\alpha, \beta, \delta\right) = \left(1, \frac{1}{2}, \frac{21 + \sqrt{313}}{48}\right),
        \end{align*}
        then $\mbf P$ goes through a codimension-two Turing bifurcation at
        \begin{align*}
            (a, b) = \left(3, \frac{1}{128} \left(841 + 37 \, \sqrt{313}\right)\right).
        \end{align*}
        The variables of the expansion at this codimension-two point are shown in Table \ref{tab:valBru4}.
        \begin{table}
            \centering
            {\def\arraystretch{2.5}
            \begin{tabular}{|c|c|}
                \hline
                \textbf{Variable} & \textbf{Value}
                \\
                \hline
                $a_2$ & 1
                \\
                \hline
                $b_2$ & $\dfrac{29 \, \sqrt{313}}{192} + \frac{545}{192}$
                \\
                \hline
                $b_4$ & $\dfrac{27181639265}{160777328256} + \dfrac{1139313695 \, \sqrt{313}}{53592442752}$
                \\
                \hline
                $b_6$ & $\dfrac{519128111361361961282269542371}{70461515885712580123818315840} - \dfrac{149360653887596807674222710397 \, \sqrt{313}}{352307579428562900619091579200}$
                \\
                \hline
                $\alpha_1$ & $\dfrac{4 \, \sqrt{313}}{33} + \dfrac{116}{33}$
                \\
                \hline
                $\alpha_2$ & $- \dfrac{116}{99} - \dfrac{4 \, \sqrt{313}}{99}$
                \\
                \hline
                $\alpha_3$ & $\dfrac{49 \, \sqrt{313}}{3564} + \dfrac{8549}{3564}$
                \\
                \hline
                $\alpha_4$ & $- \dfrac{40 \, \sqrt{313}}{891} - \dfrac{632}{891}$
                \\
                \hline
                $\alpha_5$ & $- \dfrac{25180885801}{82900809882} + \dfrac{399883939 \, \sqrt{313}}{82900809882}$
                \\
                \hline
                $\alpha_6$ & $\dfrac{\sqrt{313}}{27} + \dfrac{55}{81}$
                \\
                \hline
                $\alpha_7$ & $- \dfrac{265583}{248832} - \dfrac{5555 \, \sqrt{313}}{248832}$
                \\
                \hline
            \end{tabular}
            }
            \caption{Values of some of the main parameters in the asymptotic expansion of system \eqref{Bru4}.}
            \label{tab:valBru4}
        \end{table}
        Now, if we make the translation $(u, v, w, z) \to (u, v, w, z) + \mbf P$, we obtain the system
        \begin{align*}
            \partial_t u &= b \, \left(u + \frac{u^2}{2 \, a}\right) + v \, (2 \, a + u)^2 - 2 \, u + w + \delta^2 \, \partial_{xx} u,
            \\
            \partial_t v &= \frac{1}{2} \, \left(- b \, \left(\frac{u \, (2 \, a + u)}{a}\right) - v \, \left(2 \, (2 \, a + u)^2 + 1\right) + z\right) + \partial_{xx} v,
            \\
            \partial_t w &= b \, \left(w + \frac{w^2}{2 \, a}\right) + z \, (2 \, a + w)^2 - 2 \, w + u + \delta^2 \, \partial_{xx} w,
            \\
            \partial_t z &= \frac{1}{2} \, \left(- b \, \left(\frac{w \, (2 \, a + w)}{a}\right) - z \, \left(2 \, (2 \, a + w)^2 + 1\right) + v\right) + \partial_{xx} z.
        \end{align*}
        Now, as in the previous examples, let us recall that we need to run the iteration \eqref{DM} in order to be able to estimate the width of the snaking in this case, with initial conditions given by
        \begin{align*}
            A_1 &\approx 0.109084591223343 \, i,
            \\
            A_3 &\approx 0.145930074159533 - 0.00947358642130998 \, i.
        \end{align*}
        The result of said iteration is shown in Figure \ref{fig:c00bru4}. We highlight here that, as in the previous example, the most dominant mode turns out to be $r = 0$, reason why we show the convergence of $c_0^{[0]}$ instead of $c_2^{[0]}$. In this case, note that the limit that defines $\abs{c_0^{[0]}}$ seems to converge to a uniform value as $n \to \infty$. However, the computation needed to get to the point in which those limits are the same is quite intensive, so we generated an approximation for both limits, which turns out to be, approximately, 0.0599897 and 0.0755666. Therefore, following the same idea as in the previous example, if we divide the average of these two limits by 4, we obtain that
        \begin{align*}
            \abs{c_0^{[0]}} \approx 0.0169445,
        \end{align*}
        which implies $\abs{K_2} \approx 0.567313$.
        \begin{figure}
            \centering
            \begin{subfigure}[b]{0.48\textwidth}
                 \centering
                 \includegraphics[width = \textwidth]{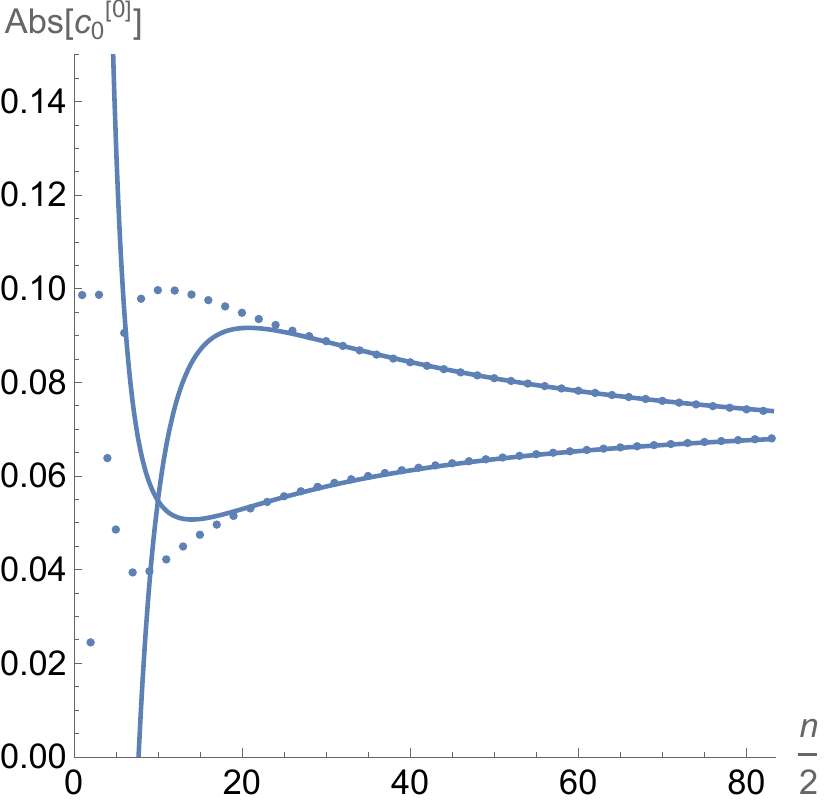}
                 \caption{}
                 \label{fig:c00-amp-bru4}
             \end{subfigure}
             \hfill
             \begin{subfigure}[b]{0.48\textwidth}
                 \centering
                 \includegraphics[width = \textwidth]{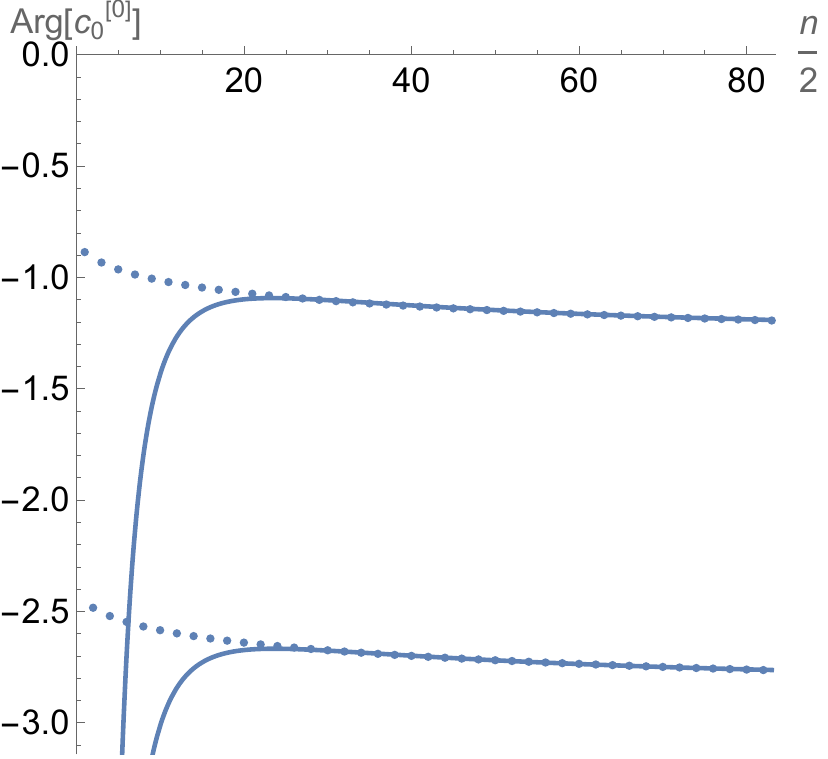}
                 \caption{}
             \end{subfigure}
            \caption{Similar to Figure \ref{fig:c00brus} but for model \eqref{Bru4}.}
            \label{fig:c00bru4}
        \end{figure}
        
        To summarize the process, by following the same ideas as in the previous example we have to solve the following equation for the amplitude of the remainder at order five, \eqref{rem_eq}:
        \begin{align*}
            \alpha_1 \, B_1'' + i \, \alpha_3 \, \abs{A_1}^2 \, B_1' + i \, \alpha_3 \, A_1 \, A_1' \, \bar B_1 + i \, \alpha_3 \, \bar A_1 \, A_1' \, B_1 + \alpha_5 \, B_1 + \alpha_6 \, A_1^2 \, \bar B_1 + 2 \, \alpha_6 \, \abs{A_1}^2 \, B_1
            \\
            + 2 \, \alpha_7 \, \abs{A_1}^2 \, A_1^2 \, \bar B_1 + 3 \, \alpha_7 \, \abs{A_1}^4 \, B_1 + \frac{1}{33} \, \left(37 - \sqrt{313}\right) \, \frac{\delta b}{\varepsilon^4} \, A_1 = 0.
        \end{align*}
        With this, we have
        \begin{align*}
            B_1 &\sim \frac{1}{4 \, \beta_1} \, \sqrt{- \frac{\beta_1}{\beta_3}} \, \left(- \frac{\sqrt{2} \, \beta_3 \, L_2^+}{2 \, \beta_1} - \frac{\sqrt{2} \, \beta_3 \, L_2^-}{2 \, \beta_1} + \frac{\sqrt{313} - 37}{33 \, \sqrt{2} \, \alpha_1 \, \varepsilon^4} \, \delta b\right) \, (1 + 2 \, \eta \, i) \, e^{2 \, \sqrt{\beta_1} \, X}
            \\
            &= \begin{multlined}[t]
                \frac{1}{4 \, \beta_1} \, \sqrt{- \frac{\beta_1}{\beta_3}} \, \left(- \frac{2 \, \sqrt{2} \, \beta_3}{\beta_1} \, \frac{\pi \, \abs{K_2} \, e^{- \frac{\pi}{2} \, \left(\frac{1}{\sqrt{\beta_1} \, \varepsilon^2} + \eta\right)}}{\varepsilon^6} \, \cos\left(K_2^o - \hat \chi + 2 \, \eta \, \log (\varepsilon)\right)\right.
                \\
                \quad \left. + \frac{\sqrt{313} - 37}{33 \, \sqrt{2} \, \alpha_1 \, \varepsilon^4} \, \delta b\right) \, (1 + 2 \, \eta \, i) \, e^{2 \, \sqrt{\beta_1} \, X},
            \end{multlined}
        \end{align*}
        as $X \to \infty$. Therefore, as in the previous examples, we conclude that
        \begin{align*}
            \abs{\delta b} \leq \frac{132 \, \alpha_1 \, \beta_3}{\beta_1 \, \left(\sqrt{313} - 37\right)} \, \frac{\pi \, \abs{K_2} \, e^{- \frac{\pi}{2} \, \left(\frac{1}{\sqrt{\beta_1} \, \varepsilon^2} + \eta\right)}}{\varepsilon^2},
        \end{align*}
        which can be written, approximately, as
        \begin{align}
            \abs{\delta b} \leq \frac{418.831294 \, e^{- 4.808385 \, \frac{\pi}{\varepsilon^2}}}{\varepsilon^2}, \label{perfect-bru4}
        \end{align}
        which leads to the red line in Figure \eqref{fig:fit-bru4}, which shows, once again, a remarkable match, letting us conclude that the theory we have developed produces a great result when the expansions and constants are computed accurately.
        \begin{figure}
            \centering
            \begin{subfigure}[b]{0.49\textwidth}
                \centering
                \includegraphics[width = \textwidth]{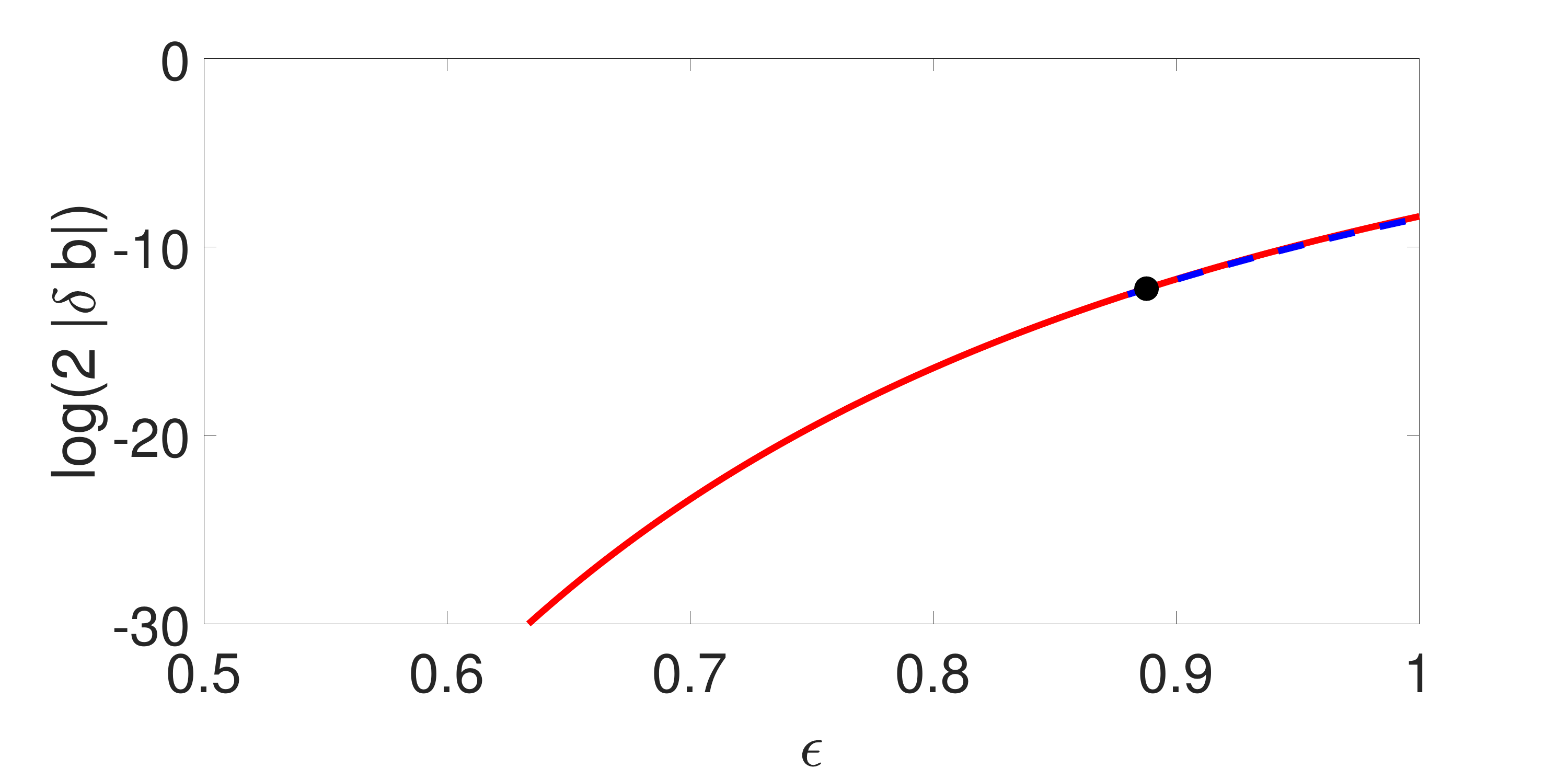}
                \caption{}
                \label{fig:fit-bru4}
            \end{subfigure}
            \hfill
            \begin{subfigure}[b]{0.49\textwidth}
                \centering
                \includegraphics[width = \textwidth]{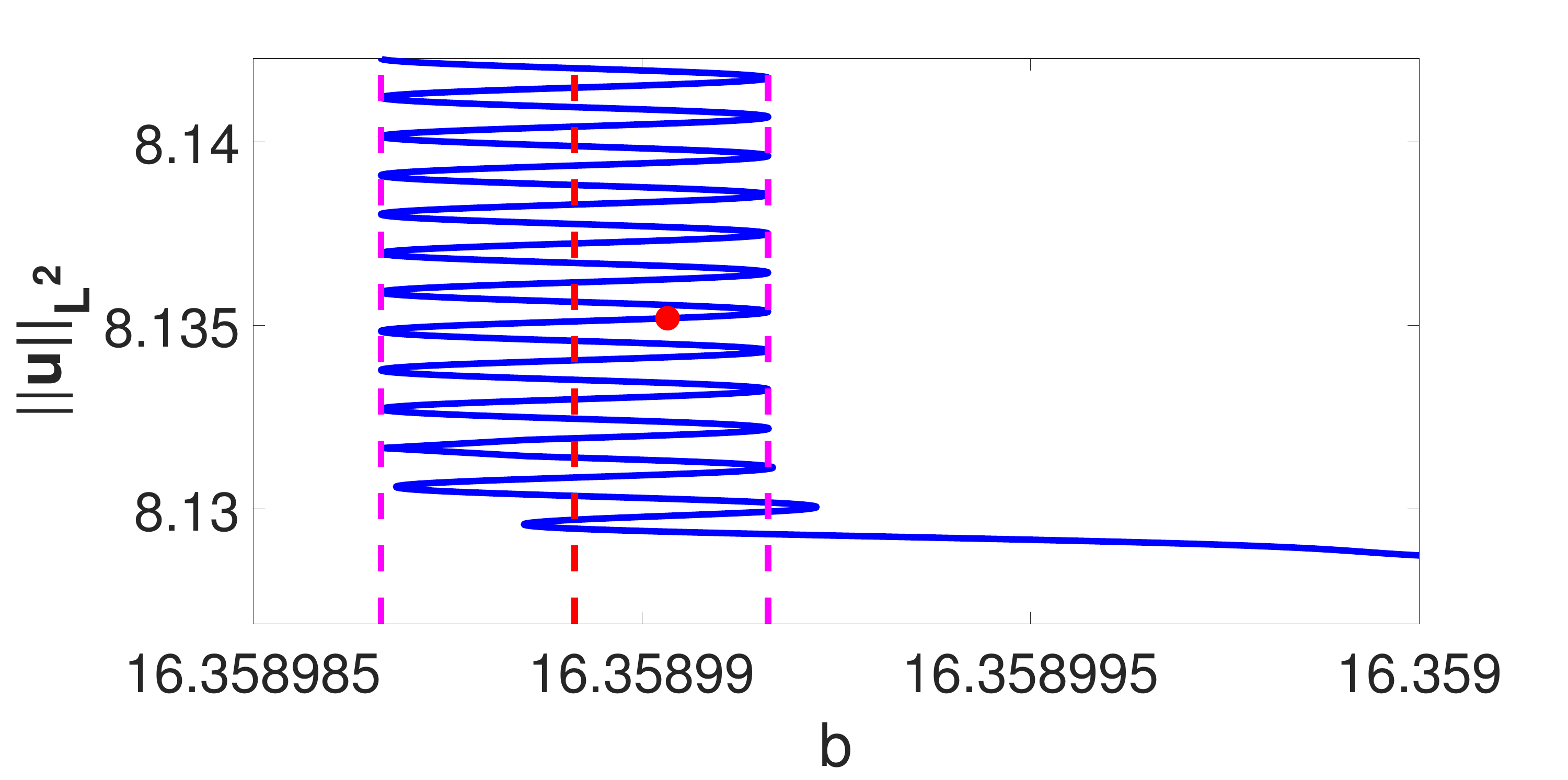}
                \caption{}
            \end{subfigure}
            \\
            \begin{subfigure}[b]{0.49\textwidth}
                \centering
                \includegraphics[width = \textwidth]{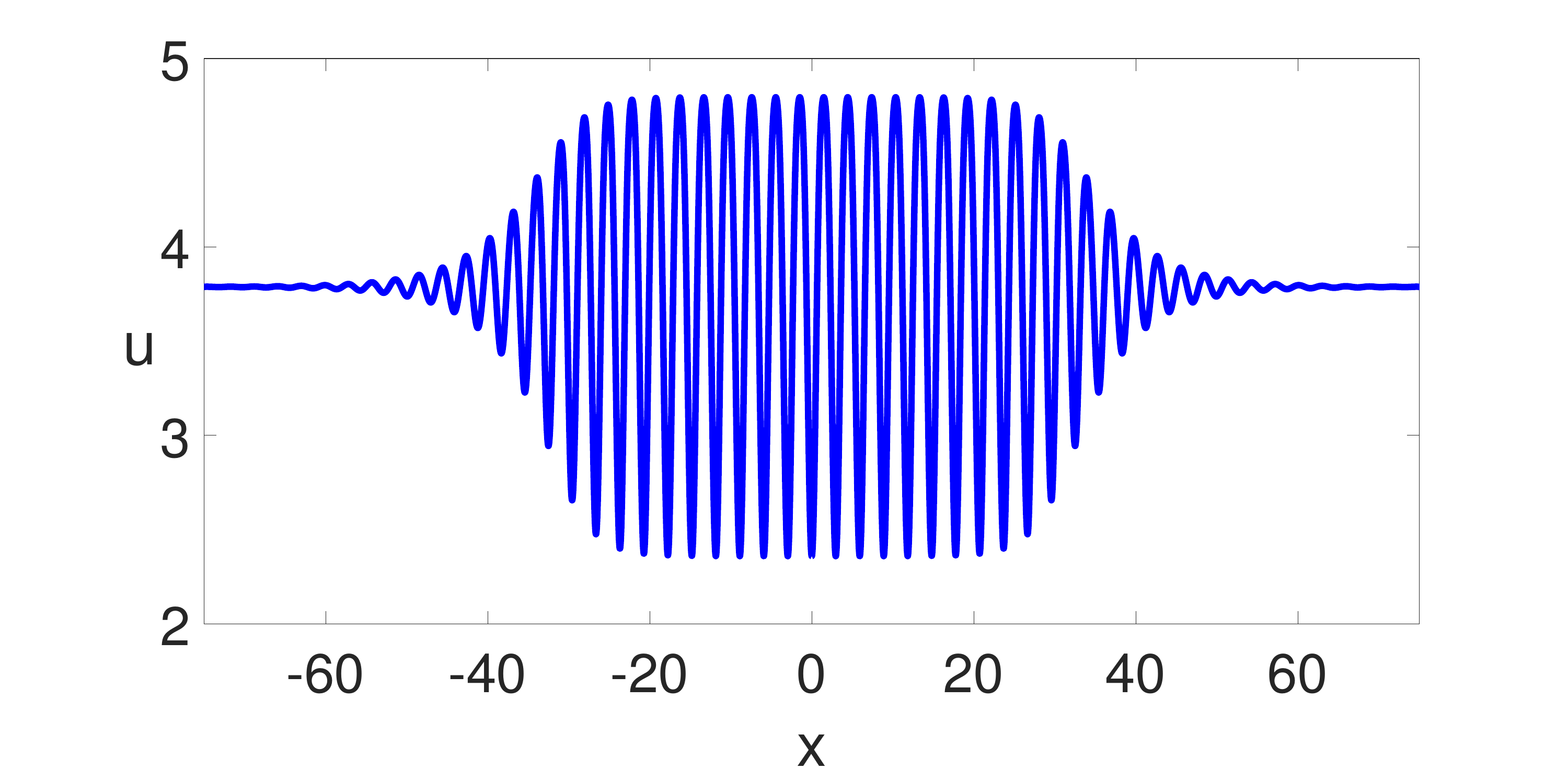}
                \caption{}
            \end{subfigure}
            \hfill
            \begin{subfigure}[b]{0.49\textwidth}
                \centering
                \includegraphics[width = \textwidth]{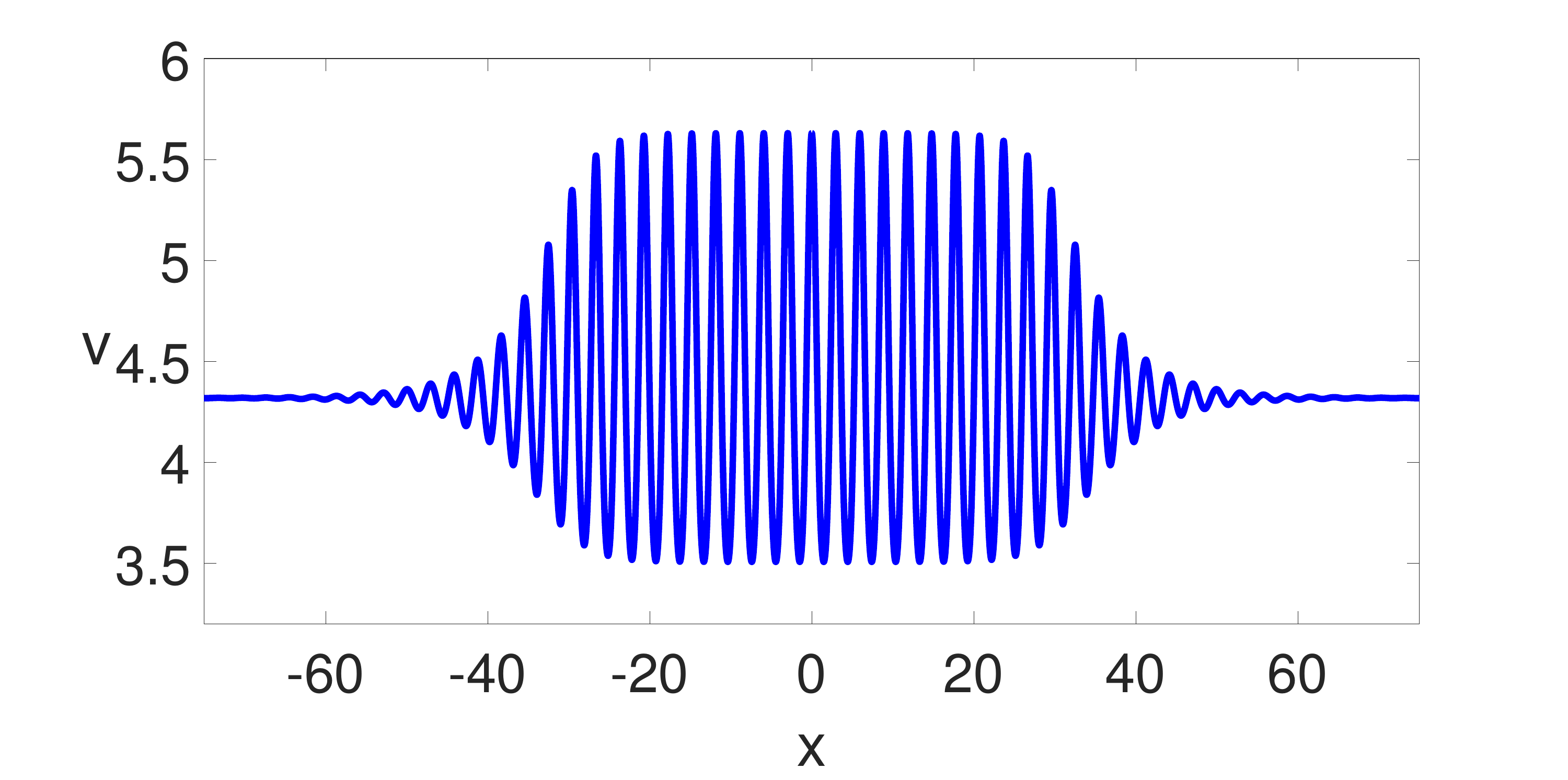}
                \caption{}
            \end{subfigure}
            \\
            \begin{subfigure}[b]{0.49\textwidth}
                \centering
                \includegraphics[width = \textwidth]{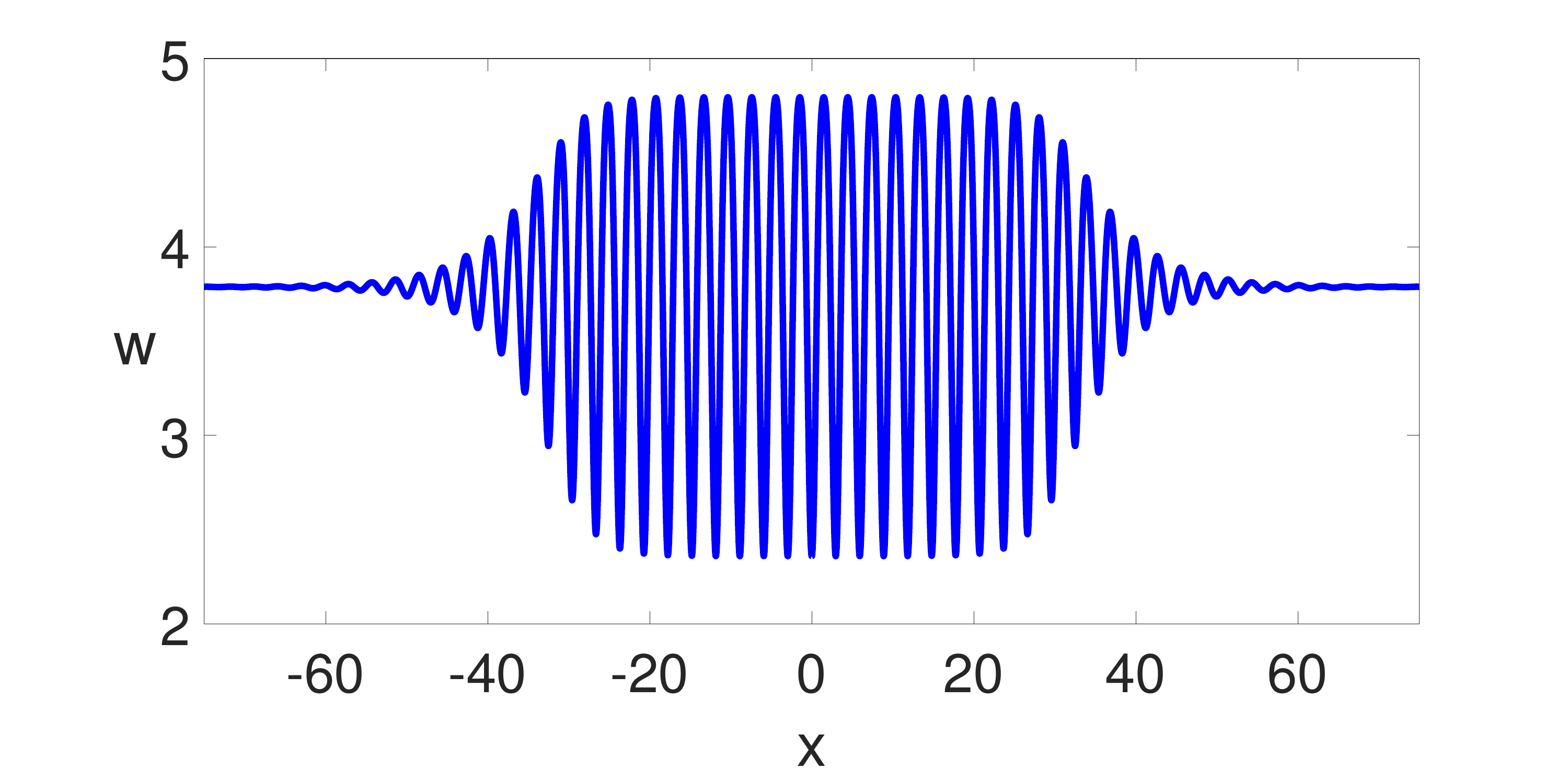}
                \caption{}
            \end{subfigure}
            \hfill
            \begin{subfigure}[b]{0.49\textwidth}
                \centering
                \includegraphics[width = \textwidth]{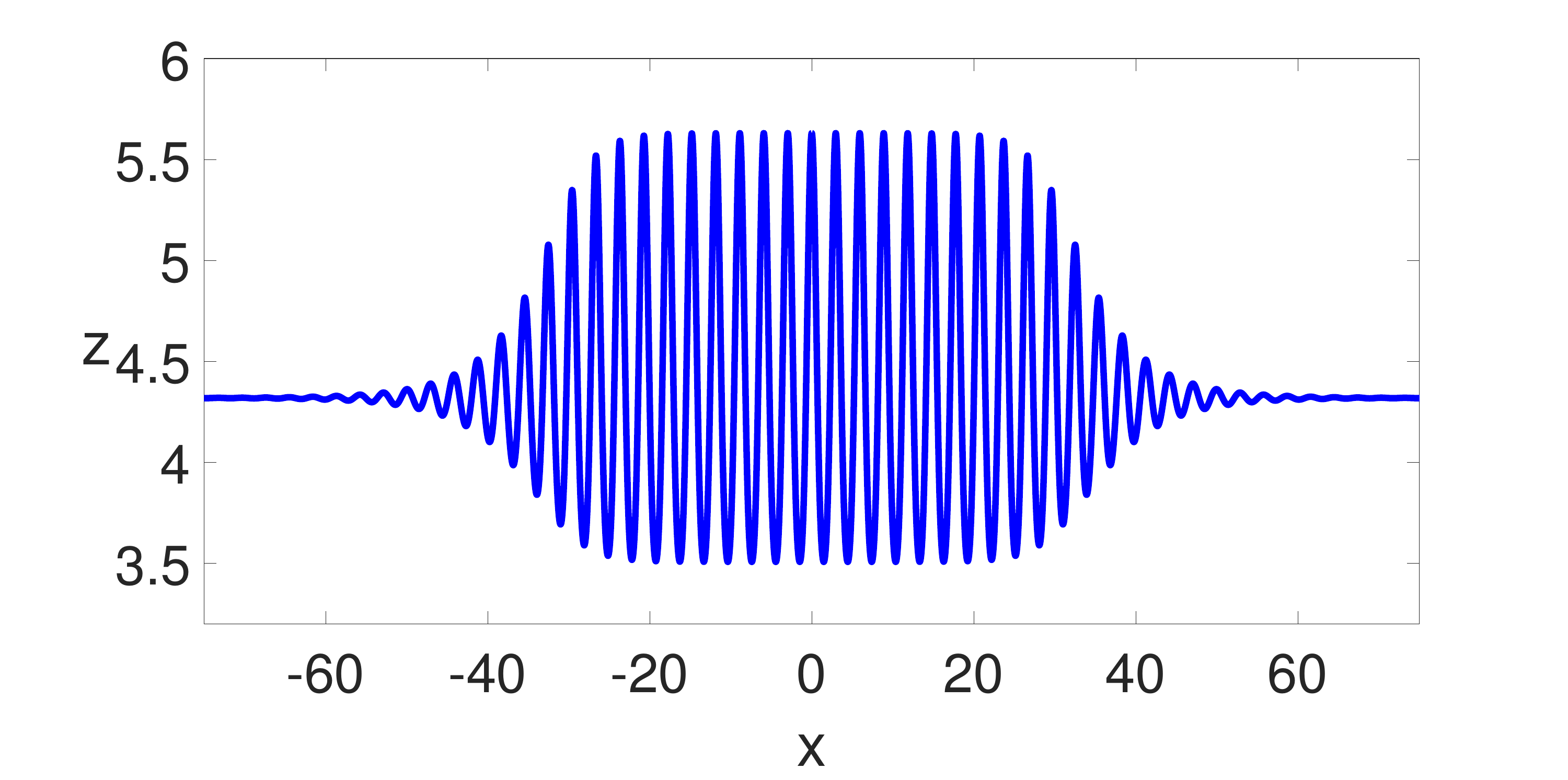}
                \caption{}
            \end{subfigure}             
            \caption{Similar to Figure \ref{fig:matchingSH23} but for model \eqref{Bru4}. The snaking in panel (b) (respectively, the solution in panels (c), (d), (e), and (f)) corresponds to $\varepsilon \approx 0.88773$ (respectively, $b \approx 16.35899033$).}
            \label{fig:matching-bru4}
        \end{figure}

    \begin{remark}
        We close this section by highlighting that it is hard to work with small values of $\varepsilon$ in the numerical computations. We omit the details, but using suitable tolerances in AUTO, we had to take extremely long domains in order to find reliable estimates for the width of the homoclinic snaking when it becomes smaller than $\mathcal O(10^{- 7})$. Given the exponentially small asymptotic estimates of the theory, in practice, this means that the smallest values of $\varepsilon$ for which we can demonstrate the agreement of the numerics with the theory is only about 0.4. Nevertheless, we note that the homoclinic snaking occurs, generically, at a distance of order $\varepsilon^2$ or $\varepsilon^4$ from the codimension-two Turing bifurcation line, which allowed us to demonstrate good agreement between our analysis and numerics by computing how the width of the snake scales with $\varepsilon$.
    \end{remark}

\section{Discussion} \label{sec:discussion}
    In this paper, we have achieved what we set out to do. We have successfully generalized the theory of exponential asymptotics used to study homoclinic snaking close to a codimension-two Turing bifurcation in the Swift-Hohenberg equation \cite{Chapman,Dean} to arbitrary systems of reaction-diffusion equations undergoing the same instability. Accompanying codes to do all the calculations automatically in Python and Mathematica for any reaction-diffusion systems have also been provided \cite{beyond-code}, and the codes were used to do the calculations for each of the examples shown in this paper. Further study will include the use of these codes to study homoclinic snaking in models where it has been found but not much has been said about the region in the parameter space where such phenomenon can be found. Although this calculation is extensive, the purpose of this article is to provide tools for everyone to be able to understand it easily and use it as needed. 

    We highlight that some details in this paper have not been treated, as it is already quite lengthy. For example, we have not explained how to use information on the phase of the exponentially small estimate in order to study the fine details of the homoclinic snaking. Furthermore, with additional symmetry or conserved quantities, there can be additional branches. Moreover, it is well known that examples such as the Swift-Hohenberg equation, which have variational structures give rise to `ladder' asymmetric stationary localised patterns connecting the two interleaving snaking branches.

    Also, we have not discussed the temporal stability of patterns. Under certain conditions on the matrix $\mathbb M$ and forms of $\mbf f$, much can be said at the general level, without needing to resort to model-specific calculations. Details are left for future work. 

    A more challenging open question is to consider analogous structures that arise near Turing bifurcations corresponding to the dispersion curve having a non-zero imaginary part. Similar codimension-two bifurcation points occur for such finite wavenumber Hopf bifurcations (also known as wave bifurcations), but are more complex due to the presence of both standing and travelling waves. Nevertheless, in earlier work \cite{villar-sepulveda-wave}, the present author has computed the regular asymptotic expansion to produce the amplitude equations for wave bifurcations up to order five. The same approach as that presented here is expected to be generalisable, but a general calculation is expected to be yet more cumbersome.

    Another generalisation under consideration is to study so-called slanted snaking, in the presence of a zero wavenumber mode in addition to the Turing instability \cite{Knobloch2}. Particular classes of systems in higher spatial dimensions would also be interesting to study, although complete generalisations there seem a long way off.

    Furthermore, as the theory developed here is related to the WKB theory \cite{WKB-theory}, which has been used to study Turing bifurcations and localized patterns in heterogeneous reaction-diffusion equations (see e.g.~\cite{our-WKB}), a generalization of this approach to study localized solutions in heterogeneous systems is also under consideration.

    One last thing to highlight is that the process carried out in this article is a formal derivation of the region in the parameter space where homoclinic snaking can be found close to codimension-two Turing bifurcation points. It would be nice to develop a more rigorous approach to deal with this kind of problem as often such approaches follow from more formal asymptotic procedures, like the ones used here.

\section*{Acknowledgements}
    I want to thank Alan R. Champneys for his incredible support and insistence on this project. I also want to acknowledge Andrew Dean, Jon Chapman, Gregory Kozyreff, Philippe Trinh and Hannes de Witt for their insightful input in different stages of this project. Moreover, I would like to thank Bastián Jiménez-Sepúlveda for his invaluable support in optimizing the codes developed to make the intensive computations efficient. This project was funded by ANID, Beca Chile Doctorado en el extranjero, number 72210071.

\bibliographystyle{unsrt}
\bibliography{refs}

\newpage 

\appendix

    \section{The case in which $c_0^{[0]} = c_2^{[0]} = 0$} \label{sec:symmetric_case}
        Generically, we have that the constants $c_0^{[0]}$ or $c_2^{[0]}$ let us estimate the width of the homoclinic snaking successfully, close to codimension-two Turing bifurcation points (see Section \ref{sec:second_residual}). Nevertheless, there are cases in which both of these constants equal zero, which does not let us give a proper estimation for this width (see e.g.~\cite{Dean}). This happens, generically, when the system \eqref{geneq} has the symmetry $\mbf u \to - \mbf u$, a case in which the even terms of the expansion vanish. To overcome this issue, the same general ansatzes we used in Section \ref{sec:late_term} are still valid, but some key considerations need to be made. In particular,
        \begin{enumerate}
            \item The dominant values of $\kappa^2$ become $\pm i/2$, and the dominant modes for which these values are attained are $r = \pm 1, \pm 3$ (see sub-Section \ref{sub:kappa_explanation}).
            \item The ansatz for the values of $\gamma_r$, \eqref{gammavals} becomes
            \begin{align*}
                \gamma_0 = \gamma_{\pm 2} = \gamma - \frac{1}{2}, \qquad \gamma_{\pm 1} = \gamma, \qquad \gamma_{\pm r} = \gamma - \frac{r - 3}{2}, \quad \text{for } r \geq 3.
            \end{align*}
            \item After making the same expansion as in \eqref{Vnr} to obtain the inner solution with these new values of $\gamma_r$ (see Section \ref{sec:late_term} for the ideas that need to be followed to study this case), we conclude that
            \begin{align*}
                \kappa \in \{- 1 - 2 \, \eta \, i, - 2 \, \eta \, i, 2 - 2 \, \eta \, i, 3 - 2 \, \eta \, i\},
            \end{align*}
            from where we take, once again, the one with the highest real part, $\gamma = 3 - 2 \, \eta \, i$.
            \item As the dominant multiples of the critical wavenumber, $k$, are now odd, we need to assume that $n$ is odd. Furthermore, we need to consider the changes of the modes in the forcing due to truncation of the equation for the remainder (see Section \ref{sec:second_residual}). In particular, the forcing due to truncation provided by $\kappa^2 = \kappa_+^2/2$ is now given by
            \begin{align*}
                \varepsilon^{N + 1} \, \kappa^N \, e^{2ix} \, \frac{\Gamma\left(\frac{N}{2} + \gamma\right)}{\left(X_0 - X\right)^{\frac{N}{2} + \gamma}} \sum_{r = 1, 3} e^{i (r - 2) \hat x} \, \mbf C_r,
            \end{align*}
            where
            \begin{align*}
                \mbf C_r
                &= 2 \, h_r^{[0]} \, \left(- 2 \, \left(2 - r\right) + 2 \, i \, \varepsilon \, \hat \theta\right) \, k^2 \, \hat D \, \bs \phi_1^{[1]},
            \end{align*}
            which implies
            \begin{align*}
                \mbf C_1 = h_1^{[0]} \, \left(- 4 + 4 \, i \, \varepsilon \, \hat \theta\right) \, k^2 \, \hat D \, \bs \phi_1^{[1]}, \qquad \mbf C_3 = h_3^{[0]} \, \left(4 + 4 \, i \, \varepsilon \, \hat \theta\right) \, k^2 \, \hat D \, \bs \phi_1^{[1]},
            \end{align*}
            and $h_1^{[0]}$, $h_3^{[0]}$ are the analogs of $h_0^{[0]}$, $h_2^{[0]}$, respectively, for this case. We highlight that the equations that determine $h_1^{[0]}$, $h_3^{[0]}$ coincide with the ones obtained for $h_0^{[0]}$, $h_2^{[0]}$ in Section \ref{sec:late_term2}.
            \item Following the same steps as in Section \ref{sec:second_residual}, we note that
            \begin{align*}
                e^{2 i \hat x} = e^{2 i \left(X - X_0 + X_0\right)/\varepsilon^2 - 2i\hat \chi} = e^{2 i \left(X - X_0\right)/\varepsilon^2} \, e^{2iX_0/\varepsilon^2} \, e^{- 2i\hat \chi},
            \end{align*}
            which yields
            \begin{align*}
                e^{2i\left(X - X_0\right)/\varepsilon^2} \, \varepsilon^{N + 2 \, \gamma} \, \kappa^{N + 2 \, \gamma} \, \frac{\Gamma\left(\frac{N}{2} + \gamma\right)}{\left(X_0 - X\right)^{\frac{N}{2} + \gamma}} \sim \sqrt{2 \, \pi} \, \varepsilon \, \abs{\kappa} \, e^{- \rho \, \hat \theta^2}.
            \end{align*}
            Therefore,
            \begin{align*}
                \mbf R_{N, 2}^{[2]} = 2 \, \left(A_1 \, C_{- 1} + \bar A_1 \, C_1\right) \, \mbf W_0^{[2]} + \left(C_1 \, e^{ix} + C_{- 1} \, e^{-ix}\right) \, \mbf W_1^{[2]} - \frac{1}{\rho} \, \left(i \, C_1' \, e^{ix} - i \, C_{- 1}' \, e^{- ix}\right) \mbf W_{1, 3}^{[3]}
                \\
                - 2 \, \kappa^{- 2 \, \gamma} \, e^{- i \hat \chi} \, \frac{\sqrt{2 \, \pi} \, \abs{\kappa}}{\rho^{\frac{1}{2}}} \, e^{- \rho \, \hat \theta^2} \, h_0^{[0]} \, e^{- ix} \, \mbf W_{1, 3}^{[3]} + 2 \, \kappa^{- 2 \, \gamma} \, e^{- i \hat \chi} \, \frac{\sqrt{2 \, \pi} \, \abs{\kappa}}{\rho^{\frac{1}{2}}} \, e^{- \rho \, \hat \theta^2} \, h_2^{[0]} \, e^{ix} \, \mbf W_{1, 3}^{[3]}
                \\
                + 2 \, \left(A_1 \, C_1 \, e^{2ix} + \bar A_1 \, C_{- 1} \, e^{-2ix}\right) \, \mbf W_2^{[2]},
            \end{align*}
            which implies that the analog of the solvability conditions at order $\mathcal O\left(\varepsilon^3\right)$, \eqref{solvcondC-1} and \eqref{solvcondC1}, yield
            \begin{align*}
                C_{- 1}'' &= 2 \, \sqrt{2 \, \pi} \, i \, \hat \theta \, h_0^{[0]} \, \rho^{3/2} \, \abs{\kappa} \, \kappa^{- 2 \, \gamma} \, e^{- \hat \theta^2 \, \rho - i \hat \chi }
                \\
                C_1'' &= 2 \, \sqrt{2 \, \pi} \, i \, \hat \theta \, h_2^{[0]} \, \rho^{3/2} \, \abs{\kappa} \, \kappa^{- 2 \, \gamma} \, e^{- \hat \theta^2 \, \rho - i \, \hat \chi},
            \end{align*}
            which yield
            \begin{align*}
                C_{- 1} &= - i \, \pi \, \kappa^{- 2 \, \gamma} \, h_0^{[0]} \, e^{- i \hat \chi},
                \\
                C_1 &= - i \, \pi \, \kappa^{- 2 \, \gamma} \, h_0^{[0]} \, e^{- i \hat \chi}.
            \end{align*}
            Therefore, in this case, as we cross the Stokes' line, the following function gets triggered by $\kappa^2 = \kappa_+^2/2$:
            \begin{align*}
                - i \, \pi \, \varepsilon^{- 2 \, \gamma} \, \kappa^{- 2 \, \gamma} \, e^{2 i X_0/\varepsilon^2} \, e^{- 2 i \hat \chi} \, \left(h_2^{[0]} \, e^{ix} + h_0^{[0]} \, e^{- ix}\right) \, \bs \phi_1^{[1]},
            \end{align*}
            which implies
            \begin{align}
                L_2 = L_2^+ &= - i \, \pi \, \varepsilon^{- 2 \, \gamma} \, \kappa^{- 2 \, \gamma} \, e^{2 i X_0/\varepsilon^2} \, e^{- 2 i \hat \chi} \, K_2,
            \end{align}
            with the corresponding development also for $\kappa_-^2/2$.
        \end{enumerate}
        In summary, the only change one needs to make in this case to apply the same theory in order to estimate the width of the snaking is to scale \eqref{L_2+def} by a factor of $\dfrac{1}{2}$, take the factor $\abs{\kappa}$ out of it, and multiply the power of the exponential terms by 2.
        
    \section{Expansion up to order 7} \label{sec:7expansion}
        To iterate \eqref{DM}, we need to obtain the dominant mode of $A_3$ and use it as part of the initial condition. Therefore, we need to study the amplitude equation at order 7. To do this, we define an extra small variable for the expansion: $\Xi = \varepsilon^2 \, X = \varepsilon^4 \, x$.
        
        Therefore,
        \begin{align*}
            \frac{\partial^2 \mbf u}{\partial x^2} = \frac{\partial^2 \mbf u}{\partial x^2} + 2 \, \varepsilon^2 \, \frac{\partial^2 \mbf u}{\partial x \partial X} + \varepsilon^4 \, \left(\frac{\partial^2 \mbf u}{\partial X^2} + 2 \, \frac{\partial^2 \mbf u}{\partial x \partial \Xi}\right) + 2 \, \varepsilon^6 \, \frac{\partial^2 \mbf u}{\partial X \partial \Xi} + \varepsilon^8 \, \frac{\partial^2 \mbf u}{\partial \Xi^2}.
        \end{align*}
        Thus, the equation at order five gets an extra term given by
        \begin{align*}
            2 \, k^2 \, \hat D \, \frac{\partial^2 \mbf u^{[1]}}{\partial x \partial \Xi},
        \end{align*}
        which does not affect the Ginzburg-Landau equation, \eqref{firstamplitudeeq}, but it does imply that $A_1 = A_1(X, \Xi)$, which yields $\varphi_1 = \varphi_1(X, \Xi) = \varphi_1(X) + \zeta(\Xi)$, where $\zeta$ is a real function still to determine.
        
        Here, we use the notation $A_X = \partial_X A$ or $A_X = \dfrac{\dd A}{\dd X}$ to denote partial or ordinary derivatives, disregarding the change of notation depending on whether functions depend on one or more variables. With this, we have
        \begin{align*}
            \mbf u^{[5]} &= \abs{A_1}^2 \, \mbf W_0^{[5]} + \abs{A_1}^4 \, \mbf W_{0, 2}^{[5]} + i \, \bar A_1 \, A_{1_X} \, \mbf W_{0, 3}^{[5]} + A_{1_{X \! X}} \, e^{ix} \, \mbf W_1^{[5]} + i \, A_{1_X} \, e^{ix} \, \mbf W_{1, 2}^{[5]} + i \, \abs{A_1}^2 \, A_{1_X} \, e^{ix} \, \mbf W_{1, 3}^{[5]}
            \\
            & \quad + i \, A_1^2 \, \bar A_{1_X} \, e^{ix} \, \mbf W_{1, 4}^{[5]} + A_1 \, e^{ix} \, \mbf W_{1, 5}^{[5]} + \abs{A_1}^2 \, A_1 \, e^{ix} \, \mbf W_{1, 6}^{[5]} + \abs{A_1}^4 \, A_1 \, e^{ix} \, \mbf W_{1, 7}^{[5]} + i \, A_{1_\Xi} \, \mbf W_{1, 3}^{[3]} \, e^{ix}
            \\
            & \quad + A_1^2 \, e^{2ix} \, \mbf W_2^{[5]} + \abs{A_1}^2 \, A_1^2 \, e^{2ix} \, \mbf W_{2, 2}^{[5]} + i \, A_1 \, A_{1_X} \, e^{2ix} \, \mbf W_{2, 3}^{[5]} + A_1^3 \, e^{3ix} \, \mbf W_3^{[5]} + \abs{A_1}^2 \, A_1^3 \, e^{3ix} \, \mbf W_{3, 2}^{[5]}
            \\
            & \quad  + i \, A_1^2 \, A_{1_X} \, e^{3ix} \, \mbf W_{3, 3}^{[5]} + 2 \, A_1 \, \bar A_2 \, \mbf W_0^{[4]} + 4 \, \abs{A_1}^2 \, A_1 \, \bar A_2 \, \mbf W_{0, 2}^{[4]} + i \, A_{1_X} \, \bar A_2 \, \mbf W_{0, 3}^{[4]} + i \, \bar A_1 \, A_{2_X} \, \mbf W_{0, 3}^{[4]}
            \\
            & \quad + 2 \, A_1 \, \bar A_3 \, \mbf W_0^{[3]} + \abs{A_2}^2 \, \mbf W_0^{[3]} + 2 \, A_2 \, \bar A_3 \, \mbf W_0^{[2]} + 2 \, A_1 \, \bar A_4 \, \mbf W_0^{[2]} + A_2 \, e^{ix} \, \mbf W_1^{[4]} + A_1^2 \, \bar A_2 \, e^{ix} \, \mbf W_{1, 2}^{[4]}
            \\
            & \quad + 2 \, \abs{A_1}^2 \, A_2 \, e^{ix} \, \mbf W_{1, 2}^{[4]} + i \, A_{2_X} \, e^{ix} \, \mbf W_{1, 3}^{[4]} + A_3 \, e^{ix} \, \mbf W_1^{[3]} + 2 \, A_1 \, \abs{A_2}^2 \, e^{ix} \, \mbf W_{1, 2}^{[3]} + \bar A_1 \, A_2^2 \, e^{ix} \, \mbf W_{1, 2}^{[3]}
            \\
            & \quad + A_1^2 \, \bar A_3 \, e^{ix} \, \mbf W_{1, 2}^{[3]} + 2 \, \abs{A_1}^2 \, A_3 \, e^{ix} \, \mbf W_{1, 2}^{[3]} + i \, A_{3_X} \, e^{ix} \, \mbf W_{1, 3}^{[3]} + A_4 \, e^{ix} \, \mbf W_1^{[2]} + A_5 \, e^{ix} \, \bs \phi_1^{[1]}
            \\
            & \quad + 2 \, A_1 \, A_2 \, e^{2ix} \, \mbf W_2^{[4]} + 3 \, \abs{A_1}^2 \, A_1 \, A_2 \, e^{2ix} \, \mbf W_{2, 2}^{[4]} + A_1^3 \, \bar A_2 \, e^{2ix} \, \mbf W_{2, 2}^{[4]} + i \, A_1 \, A_{2_X} \, e^{2ix} \, \mbf W_{2, 3}^{[4]}
            \\
            & \quad + i \, A_{1_X} \, A_2 \, e^{2ix} \, \mbf W_{2, 3}^{[4]} + 2 \, A_1 \, A_3 \, e^{2ix} \, \mbf W_2^{[3]} + A_2^2 \, e^{2ix} \, \mbf W_2^{[3]} + 2 \, A_1 \, A_4 \, e^{2ix} \, \mbf W_2^{[2]} + 2 \, A_2 \, A_3 \, e^{2ix} \, \mbf W_2^{[2]}
            \\
            & \quad + 3 \, A_1^2 \, A_2 \, e^{3ix} \, \mbf W_3^{[4]} + 3 \, A_1^2 \, A_3 \, e^{3ix} \, \mbf W_3^{[3]} + 3 \, A_1 \, A_2^2 \, e^{3ix} \, \mbf W_3^{[3]} + \ldots + c.c.,
        \end{align*}
        where `$\ldots$' represents terms that are multiples of $e^{4ix}$ or $e^{5ix}$, and
        \begin{align*}
            \mathcal M_0 \, \mbf W_0^{[5]} &= - a_1 \, \jac \mbf f_{1, 0}(\mbf 0) \, \mbf W_0^{[4]} - b_1 \, \jac \mbf f_{0, 1}(\mbf 0) \, \mbf W_0^{[4]} - a_2 \, \jac \mbf f_{1, 0}(\mbf 0) \, \mbf W_0^{[3]} - b_2 \, \jac \mbf f_{0, 1}(\mbf 0) \, \mbf W_0^{[3]}
            \\
            & \quad - a_3 \, \jac \mbf f_{1, 0}(\mbf 0) \, \mbf W_0^{[2]} - b_3 \, \jac \mbf f_{0, 1}(\mbf 0) \, \mbf W_0^{[2]} - a_1^2 \, \jac \mbf f_{2, 0}(\mbf 0) \, \mbf W_0^{[3]} - a_1 \, b_1 \, \jac \mbf f_{1, 1}(\mbf 0) \, \mbf W_0^{[3]}
            \\
            & \quad - b_1^2 \,  \jac \mbf f_{0, 2}(\mbf 0) \, \mbf W_0^{[3]} - 2 \, a_1 \, a_2 \, \jac \mbf f_{2, 0}(\mbf 0) \, \mbf W_0^{[2]} - \left(a_1 \, b_2 + a_2 \, b_1\right) \, \jac \mbf f_{1, 1}(\mbf 0) \, \mbf W_0^{[2]}
            \\
            & \quad - 2 \, b_1 \, b_2 \, \jac \mbf f_{0, 2}(\mbf 0) \, \mbf W_0^{[2]} - a_1^3 \, \jac \mbf f_{3, 0}(\mbf 0) \, \mbf W_0^{[2]} - a_1^2 \, b_1 \, \jac \mbf f_{2, 1}(\mbf 0) \, \mbf W_0^{[2]} - a_1 \, b_1^2 \, \jac \mbf f_{1, 2}(\mbf 0) \, \mbf W_0^{[2]}
            \\
            & \quad - b_1^3 \, \jac \mbf f_{0, 3}(\mbf 0) \, \mbf W_0^{[2]} - 2 \, \mbf F_{2, 0, 0}\left(\bs \phi_1^{[1]}, \mbf W_1^{[4]}\right) - 2 \, \mbf F_{2, 0, 0}\left(\mbf W_1^{[2]}, \mbf W_1^{[3]}\right)
            \\
            & \quad - 2 \, a_1 \, \mbf F_{2, 1, 0}\left(\bs \phi_1^{[1]}, \mbf W_1^{[3]}\right) - 2 \, b_1 \, \mbf F_{2, 0, 1}\left(\bs \phi_1^{[1]}, \mbf W_1^{[3]}\right) - a_1 \, \mbf F_{2, 1, 0}\left(\mbf W_1^{[2]}, \mbf W_1^{[2]}\right)
            \\
            & \quad - b_1 \, \mbf F_{2, 0, 1}\left(\mbf W_1^{[2]}, \mbf W_1^{[2]}\right) - 2 \, a_2 \, \mbf F_{2, 1, 0}\left(\bs \phi_1^{[1]}, \mbf W_1^{[2]}\right) - 2 \, b_2 \, \mbf F_{2, 0, 1}\left(\bs \phi_1^{[1]}, \mbf W_1^{[2]}\right)
            \\
            & \quad - a_3 \, \mbf F_{2, 1, 0}\left(\bs \phi_1^{[1]}, \bs \phi_1^{[1]}\right) - b_3 \, \mbf F_{2, 0, 1}\left(\bs \phi_1^{[1]}, \bs \phi_1^{[1]}\right) - 2 \, a_1^2 \, \mbf F_{2, 2, 0}\left(\bs \phi_1^{[1]}, \mbf W_1^{[2]}\right)
            \\
            & \quad - 2 \, a_1 \, b_1 \, \mbf F_{2, 1, 1}\left(\bs \phi_1^{[1]}, \mbf W_1^{[2]}\right) - 2 \, b_1^2 \, \mbf F_{2, 0, 2}\left(\bs \phi_1^{[1]}, \mbf W_1^{[2]}\right) - 2 \, a_1 \, a_2 \, \mbf F_{2, 2, 0}\left(\bs \phi_1^{[1]}, \bs \phi_1^{[1]}\right)
            \\
            & \quad - \left(a_1 \, b_2 + a_2 \, b_1\right) \, \mbf F_{2, 1, 1}\left(\bs \phi_1^{[1]}, \bs \phi_1^{[1]}\right) - 2 \, b_1 \, b_2 \, \mbf F_{2, 0, 2}\left(\bs \phi_1^{[1]}, \bs \phi_1^{[1]}\right) - a_1^3 \, \mbf F_{2, 3, 0}\left(\bs \phi_1^{[1]}, \bs \phi_1^{[1]}\right)
            \\
            & \quad - a_1^2 \, b_1 \, \mbf F_{2, 2, 1}\left(\bs \phi_1^{[1]}, \bs \phi_1^{[1]}\right) - a_1 \, b_1^2 \, \mbf F_{2, 1, 2}\left(\bs \phi_1^{[1]}, \bs \phi_1^{[1]}\right) - b_1^3 \, \mbf F_{2, 0, 3}\left(\bs \phi_1^{[1]}, \bs \phi_1^{[1]}\right),
        \end{align*}
        \begin{align*}
            \mathcal M_0 \, \mbf W_{0, 2}^{[5]} &= - a_1 \, \jac \mbf f_{1, 0}(\mbf 0) \, \mbf W_{0, 2}^{[4]} - b_1 \, \jac \mbf f_{0, 1}(\mbf 0) \, \mbf W_{0, 2}^{[4]} - 2 \, \mbf F_{2, 0, 0}\left(\bs \phi_1^{[1]}, \mbf W_{1, 2}^{[4]}\right)
            \\
            & \quad - 4 \, \mbf F_{2, 0, 0}\left(\mbf W_0^{[2]}, \mbf W_0^{[3]}\right) - 2 \, \mbf F_{2, 0, 0}\left(\mbf W_1^{[2]}, \mbf W_{1, 2}^{[3]}\right) - 2 \, \mbf F_{2, 0, 0}\left(\mbf W_2^{[2]}, \mbf W_2^{[3]}\right)
            \\
            & \quad - 2 \, a_1 \, \mbf F_{2, 1, 0}\left(\bs \phi_1^{[1]}, \mbf W_{1, 2}^{[3]}\right) - 2 \, b_1 \, \mbf F_{2, 0, 1}\left(\bs \phi_1^{[1]}, \mbf W_{1, 2}^{[3]}\right) - 2 \, a_1 \, \mbf F_{2, 1, 0}\left(\mbf W_0^{[2]}, \mbf W_0^{[2]}\right)
            \\
            & \quad - 2 \, b_1 \, \mbf F_{2, 0, 1}\left(\mbf W_0^{[2]}, \mbf W_0^{[2]}\right) - a_1 \, \mbf F_{2, 1, 0}\left(\mbf W_2^{[2]}, \mbf W_2^{[2]}\right) - b_1 \, \mbf F_{2, 0, 1}\left(\mbf W_2^{[2]}, \mbf W_2^{[2]}\right)
            \\
            & \quad - 3 \, \mbf F_{3, 0, 0}\left(\bs \phi_1^{[1]}, \bs \phi_1^{[1]}, 2 \, \mbf W_0^{[3]} + \mbf W_2^{[3]}\right) - 6 \, \mbf F_{3, 0, 0}\left(\bs \phi_1^{[1]}, \mbf W_1^{[2]}, 2 \, \mbf W_0^{[2]} + \mbf W_2^{[2]}\right)
            \\
            & \quad - 3 \, a_1 \, \mbf F_{3, 1, 0}\left(\bs \phi_1^{[1]}, \bs \phi_1^{[1]}, 2 \, \mbf W_0^{[2]} + \mbf W_2^{[2]}\right) - 3 \, b_1 \, \mbf F_{3, 0, 1}\left(\bs \phi_1^{[1]}, \bs \phi_1^{[1]}, 2 \, \mbf W_0^{[2]} + \mbf W_2^{[2]}\right)
            \\
            & \quad - 12 \, \mbf F_{4, 0, 0}\left(\bs \phi_1^{[1]}, \bs \phi_1^{[1]}, \bs \phi_1^{[1]}, \mbf W_1^{[2]}\right) - 3 \, a_1 \, \mbf F_{4, 1, 0}\left(\bs \phi_1^{[1]}, \bs \phi_1^{[1]}, \bs \phi_1^{[1]}, \bs \phi_1^{[1]}\right)
            \\
            & \quad - 3 \, b_1 \, \mbf F_{4, 0, 1}\left(\bs \phi_1^{[1]}, \bs \phi_1^{[1]}, \bs \phi_1^{[1]}, \bs \phi_1^{[1]}\right),
        \end{align*}
        \begin{align*}
            \mathcal M_0 \, \mbf W_{0, 3}^{[5]} &= - a_1 \, \jac \mbf f_{1, 0}(\mbf 0) \, \mbf W_{0, 3}^{[4]} - b_1 \, \jac \mbf f_{0, 1}(\mbf 0) \, \mbf W_{0, 3}^{[4]} - 2 \, \mbf F_{2, 0, 0}\left(\bs \phi_1^{[1]}, \mbf W_{1, 3}^{[4]}\right)
            \\
            & \quad - 2 \, \mbf F_{2, 0, 0}\left(\mbf W_1^{[2]}, \mbf W_{1, 3}^{[3]}\right) - 2 \, a_1 \, \mbf F_{2, 1, 0}\left(\bs \phi_1^{[1]}, \mbf W_{1, 3}^{[3]}\right) - 2 \, b_1 \, \mbf F_{2, 0, 1}\left(\bs \phi_1^{[1]}, \mbf W_{1, 3}^{[3]}\right).
        \end{align*}
        Moreover, to find the resonant solution to \eqref{fifth-order-eq}, we note that the corresponding equation will be given by
        \begin{multline}
            \mathcal M_1 \, \mbf u_1^{[5]} = - \mbf Q_1^{[5]} \, A_1'' - i \, \mbf Q_2^{[5]} \, A_1' - i \, \mbf Q_3^{[5]} \, \abs{A_1}^2 \, A_1' - i \, \mbf Q_4^{[5]} \, A_1^2 \, \bar A_1' - \mbf Q_5^{[5]} \, A_1 - \mbf Q_6^{[5]} \, \abs{A_1}^2 \, A_1 - \mbf Q_7^{[5]} \, \abs{A_1}^4 \, A_1. \label{resonant-order-five}
        \end{multline}
        Unfortunately, we cannot simply solve the equation for each term separately, as we cannot ensure that $\mbf Q_p^{[5]}\in \mbox{im}\left(\mathcal M_1\right)$ for every $p = 1, \ldots, 7$. Nevertheless, from \eqref{firstamplitudeeq}, we know that
        \begin{align*}
            \alpha_1 \, A_1'' + i \, \alpha_2 \, A_1' + i \, \alpha_3 \, \abs{A_1}^2 \, A_1' + i \, \alpha_4 \, A_1^2 \, \bar A_1' + \alpha_5 \, A_1 + \alpha_6 \, \abs{A_1}^2 \, A_1 + \alpha_7 \, \abs{A_1}^4 \, A_1 &= 0,
        \end{align*}
        which implies
        \begin{align*}
            \left(\alpha_1 \, A_1'' + i \, \alpha_2 \, A_1' + i \, \alpha_3 \, \abs{A_1}^2 \, A_1' + i \, \alpha_4 \, A_1^2 \, \bar A_1' + \alpha_5 \, A_1 + \alpha_6 \, \abs{A_1}^2 \, A_1 + \alpha_7 \, \abs{A_1}^4 \, A_1\right) \, \frac{\bs \psi}{\langle \bs \psi, \bs \psi \rangle} &= \mbf 0.
        \end{align*}
        Therefore, we see that \eqref{resonant-order-five} is equivalent to
        \begin{align*}
            \mathcal M_1 \mbf u_1^{[5]} &= \left(- \mbf Q_1^{[5]} + \alpha_1 \, \frac{\bs \psi}{\langle \bs \psi, \bs \psi\rangle}\right) \, A_1'' + i \, \left(- \mbf Q_2^{[5]} + \alpha_2 \, \frac{\bs \psi}{\langle \bs \psi, \bs \psi\rangle}\right) \, A_1' + i \, \left(- \mbf Q_3^{[5]} + \alpha_3 \, \frac{\bs \psi}{\langle \bs \psi, \bs \psi\rangle}\right) \, \abs{A_1}^2 \, A_1'
            \\
            & \quad + i \, \left(- \mbf Q_4^{[5]} + \alpha_4 \, \frac{\bs \psi}{\langle \bs \psi, \bs \psi\rangle}\right) \, A_1^2 \, \bar A_1' + \left(- \mbf Q_5^{[5]} + \alpha_5 \, \frac{\bs \psi}{\langle \bs \psi, \bs \psi\rangle}\right) \, A_1
            \\
            & \quad + \left(- \mbf Q_6^{[5]} + \alpha_6 \, \frac{\bs \psi}{\langle \bs \psi, \bs \psi\rangle}\right) \, \abs{A_1}^2 \, A_1 + \left(- \mbf Q_7^{[5]} + \alpha_7 \, \frac{\bs \psi}{\langle \bs \psi, \bs \psi\rangle}\right) \, \abs{A_1}^4 \, A_1,
        \end{align*}
        which fulfills that the coefficient of each term related to $A_1$ is a vector that belongs to $\mbox{im}\left(\mathcal M_1\right)$. Therefore,
        \begin{align*}
            \mathcal M_1 \, \mbf W_1^{[5]} = - \mbf Q_1^{[5]} + \frac{\alpha_1}{\left \langle \bs \psi, \bs \psi\right \rangle} \, \bs \psi,
        \end{align*}
        \begin{align*}
            \mathcal M_1 \, \mbf W_{1, p}^{[5]} = - \mbf Q_p^{[5]} + \frac{\alpha_p}{\left \langle \bs \psi, \bs \psi\right \rangle} \, \bs \psi \quad \text{for } p = 2, \ldots, 7.
        \end{align*}
        Furthermore,
        \begin{align*}
            \mathcal M_2 \, \mbf W_2^{[5]} &= - a_1 \, \jac \mbf f_{1, 0}(\mbf 0) \, \mbf W_2^{[4]} - b_1 \, \jac \mbf f_{0, 1}(\mbf 0) \, \mbf W_2^{[4]} - a_2 \, \jac \mbf f_{1, 0}(\mbf 0) \, \mbf W_2^{[3]} - b_2 \, \jac \mbf f_{0, 1}(\mbf 0) \, \mbf W_2^{[3]}
            \\
            & \quad - a_3 \, \jac \mbf f_{1, 0}(\mbf 0) \, \mbf W_2^{[2]} - b_3 \, \jac \mbf f_{0, 1}(\mbf 0) \, \mbf W_2^{[2]} - a_1^2 \, \jac \mbf f_{2, 0}(\mbf 0) \, \mbf W_2^{[3]} - a_1 \, b_1 \, \jac \mbf f_{1, 1}(\mbf 0) \, \mbf W_2^{[3]}
            \\
            & \quad - b_1^2 \, \jac \mbf f_{0, 2}(\mbf 0) \, \mbf W_2^{[3]} - 2 \, a_1 \, a_2 \, \jac \mbf f_{2, 0}(\mbf 0) \, \mbf W_2^{[2]} - \left(a_1 \, b_2 + a_2 \, b_1\right) \, \jac \mbf f_{1, 1}(\mbf 0) \, \mbf W_2^{[2]}
            \\
            & \quad - 2 \, b_1 \, b_2 \, \jac \mbf f_{0, 2}(\mbf 0) \, \mbf W_2^{[2]} - a_1^3 \, \jac \mbf f_{3, 0}(\mbf 0) \, \mbf W_2^{[2]} - a_1^2 \, b_1 \, \jac \mbf f_{2, 1}(\mbf 0) \, \mbf W_2^{[2]} - a_1 \, b_1^2 \, \jac \mbf f_{1, 2}(\mbf 0) \, \mbf W_2^{[2]}
            \\
            & \quad - b_1^3 \, \jac \mbf f_{0, 3}(\mbf 0) \, \mbf W_2^{[2]} - 2 \, \mbf F_{2, 0, 0}\left(\bs \phi_1^{[1]}, \mbf W_1^{[4]}\right) - 2 \, \mbf F_{2, 0, 0}\left(\mbf W_1^{[2]}, \mbf W_1^{[3]}\right) - 2 \, a_1 \, \mbf F_{2, 1, 0}\left(\bs \phi_1^{[1]}, \mbf W_1^{[3]}\right)
            \\
            & \quad - 2 \, b_1 \, \mbf F_{2, 0, 1}\left(\bs \phi_1^{[1]}, \mbf W_1^{[3]}\right) - a_1 \, \mbf F_{2, 1, 0}\left(\mbf W_1^{[2]}, \mbf W_1^{[2]}\right) - b_1 \, \mbf F_{2, 0, 1}\left(\mbf W_1^{[2]}, \mbf W_1^{[2]}\right)
            \\
            & \quad - 2 \, a_2 \, \mbf F_{2, 1, 0}\left(\bs \phi_1^{[1]}, \mbf W_1^{[2]}\right) - 2 \, b_2 \, \mbf F_{2, 0, 1}\left(\bs \phi_1^{[1]}, \mbf W_1^{[2]}\right) - a_3 \, \mbf F_{2, 1, 0}\left(\bs \phi_1^{[1]}, \bs \phi_1^{[1]}\right)
            \\
            & \quad - b_3 \, \mbf F_{2, 0, 1}\left(\bs \phi_1^{[1]}, \bs \phi_1^{[1]}\right) - 2 \, a_1^2 \, \mbf F_{2, 2, 0}\left(\bs \phi_1^{[1]}, \mbf W_1^{[2]}\right) - 2 \, a_1 \, b_1 \, \mbf F_{2, 1, 1}\left(\bs \phi_1^{[1]}, \mbf W_1^{[2]}\right)
            \\
            & \quad - 2 \, b_1^2 \, \mbf F_{2, 0, 2}\left(\bs \phi_1^{[1]}, \mbf W_1^{[2]}\right) - 2 \, a_1 \, a_2 \, \mbf F_{2, 2, 0}\left(\bs \phi_1^{[1]}, \bs \phi_1^{[1]}\right) - a_1 \, b_2 \, \mbf F_{2, 1, 1}\left(\bs \phi_1^{[1]}, \bs \phi_1^{[1]}\right)
            \\
            & \quad - a_2 \, b_1 \, \mbf F_{2, 1, 1}\left(\bs \phi_1^{[1]}, \bs \phi_1^{[1]}\right) - 2 \, b_1 \, b_2 \, \mbf F_{2, 0, 2}\left(\bs \phi_1^{[1]}, \bs \phi_1^{[1]}\right) - a_1^3 \, \mbf F_{2, 3, 0}\left(\bs \phi_1^{[1]}, \bs \phi_1^{[1]}\right)
            \\
            & \quad - a_1^2 \, b_1 \, \mbf F_{2, 2, 1}\left(\bs \phi_1^{[1]}, \bs \phi_1^{[1]}\right) - a_1 \, b_1^2 \, \mbf F_{2, 1, 2}\left(\bs \phi_1^{[1]}, \bs \phi_1^{[1]}\right) - b_1^3 \, \mbf F_{2, 0, 3}\left(\bs \phi_1^{[1]}, \bs \phi_1^{[1]}\right),
        \end{align*}
        \begin{align*}
            \mathcal M_2 \, \mbf W_{2, 2}^{[5]} &= - a_1 \, \jac \mbf f_{1, 0}(\mbf 0) \, \mbf W_{2, 2}^{[4]} - b_1 \, \jac \mbf f_{0, 1}(\mbf 0) \, \mbf W_{2, 2}^{[4]} - 2 \, \mbf F_{2, 0, 0}\left(\bs \phi_1^{[1]}, \mbf W_{1, 2}^{[4]} + \mbf W_3^{[4]}\right)
            \\
            & \quad - 4 \, \mbf F_{2, 0, 0}\left(\mbf W_0^{[2]}, \mbf W_2^{[3]}\right) - 2 \, \mbf F_{2, 0, 0}\left(\mbf W_1^{[2]}, \mbf W_{1, 2}^{[3]} + \mbf W_3^{[3]}\right) - 4 \, \mbf F_{2, 0, 0}\left(\mbf W_2^{[2]}, \mbf W_0^{[3]}\right)
            \\
            & \quad - 2 \, a_1 \, \mbf F_{2, 1, 0}\left(\bs \phi_1^{[1]}, \mbf W_{1, 2}^{[3]} + \mbf W_3^{[3]}\right) - 2 \, b_1 \, \mbf F_{2, 0, 1}\left(\bs \phi_1^{[1]}, \mbf W_{1, 2}^{[3]} + \mbf W_3^{[3]}\right)
            \\
            & \quad - 4 \, a_1 \, \mbf F_{2, 1, 0}\left(\mbf W_0^{[2]}, \mbf W_2^{[2]}\right) - 4 \, b_1 \, \mbf F_{2, 0, 1}\left(\mbf W_0^{[2]}, \mbf W_2^{[2]}\right) - 6 \, \mbf F_{3, 0, 0}\left(\bs \phi_1^{[1]}, \bs \phi_1^{[1]}, \mbf W_0^{[3]} + \mbf W_2^{[3]}\right)
            \\
            & \quad - 12 \, \mbf F_{3, 0, 0}\left(\bs \phi_1^{[1]}, \mbf W_1^{[2]}, \mbf W_0^{[2]} + \mbf W_2^{[2]}\right) - 6 \, a_1 \, \mbf F_{3, 1, 0}\left(\bs \phi_1^{[1]}, \bs \phi_1^{[1]}, \mbf W_0^{[2]} + \mbf W_2^{[2]}\right)
            \\
            & \quad - 6 \, b_1 \, \mbf F_{3, 0, 1}\left(\bs \phi_1^{[1]}, \bs \phi_1^{[1]}, \mbf W_0^{[2]} + \mbf W_2^{[2]}\right) - 16 \, \mbf F_{4, 0, 0}\left(\bs \phi_1^{[1]}, \bs \phi_1^{[1]}, \bs \phi_1^{[1]}, \mbf W_1^{[2]}\right)
            \\
            & \quad - 4 \, a_1 \, \mbf F_{4, 1, 0}\left(\bs \phi_1^{[1]}, \bs \phi_1^{[1]}, \bs \phi_1^{[1]}, \bs \phi_1^{[1]}\right) - 4 \, b_1 \, \mbf F_{4, 0, 1}\left(\bs \phi_1^{[1]}, \bs \phi_1^{[1]}, \bs \phi_1^{[1]}, \bs \phi_1^{[1]}\right),
        \end{align*}
        \begin{align*}
            \mathcal M_2 \, \mbf W_{2, 3}^{[5]} &= - a_1 \, \jac \mbf f_{1, 0}(\mbf 0) \, \mbf W_{2, 3}^{[4]} - b_1 \, \jac \mbf f_{0, 1}(\mbf 0) \, \mbf W_{2, 3}^{[4]} - 2 \, \mbf F_{2, 0, 0}\left(\bs \phi_1^{[1]}, \mbf W_{1, 3}^{[4]}\right) - 2 \, \mbf F_{2, 0, 0}\left(\mbf W_1^{[2]}, \mbf W_{1, 3}^{[3]}\right)
            \\
            & \quad - 2 \, a_1 \, \mbf F_{2, 1, 0}\left(\bs \phi_1^{[1]}, \mbf W_{1, 3}^{[3]}\right) - 2 \, b_1 \, \mbf F_{2, 0, 1}\left(\bs \phi_1^{[1]}, \mbf W_{1, 3}^{[3]}\right) - 8 \, k^2 \, \hat D \, \mbf W_2^{[3]},
        \end{align*}
        \begin{align*}
            \mathcal M_3 \, \mbf W_3^{[5]} &= - a_1 \, \jac \mbf f_{1, 0}(\mbf 0) \, \mbf W_3^{[4]} - b_1 \, \jac \mbf f_{0, 1}(\mbf 0) \, \mbf W_3^{[4]} - a_2 \, \jac \mbf f_{1, 0}(\mbf 0) \, \mbf W_3^{[3]} - b_2 \, \jac \mbf f_{0, 1}(\mbf 0) \, \mbf W_3^{[3]}
            \\
            & \quad - a_1^2 \, \jac \mbf f_{2, 0}(\mbf 0) \, \mbf W_3^{[3]} - a_1 \, b_1 \, \jac \mbf f_{1, 1}(\mbf 0) \, \mbf W_3^{[3]} - b_1^2 \, \jac \mbf f_{0, 2}(\mbf 0) \, \mbf W_3^{[3]} - 2 \, \mbf F_{2, 0, 0}\left(\bs \phi_1^{[1]}, \mbf W_2^{[4]}\right)
            \\
            & \quad - 2 \, \mbf F_{2, 0, 0}\left(\mbf W_1^{[2]}, \mbf W_2^{[3]}\right) - 2 \, \mbf F_{2, 0, 0}\left(\mbf W_2^{[2]}, \mbf W_1^{[3]}\right) - 2 \, a_1 \, \mbf F_{2, 1, 0}\left(\bs \phi_1^{[1]}, \mbf W_2^{[3]}\right)
            \\
            & \quad - 2 \, b_1 \, \mbf F_{2, 0, 1}\left(\bs \phi_1^{[1]}, \mbf W_2^{[3]}\right) - 2 \, a_1 \, \mbf F_{2, 1, 0}\left(\mbf W_1^{[2]}, \mbf W_2^{[2]}\right) - 2 \, b_1 \, \mbf F_{2, 0, 1}\left(\mbf W_1^{[2]}, \mbf W_2^{[2]}\right)
            \\
            & \quad - 2 \, a_2 \, \mbf F_{2, 1, 0}\left(\bs \phi_1^{[1]}, \mbf W_2^{[2]}\right) - 2 \, b_2 \, \mbf F_{2, 0, 1}\left(\bs \phi_1^{[1]}, \mbf W_2^{[2]}\right) - 2 \, a_1^2 \, \mbf F_{2, 2, 0}\left(\bs \phi_1^{[1]}, \mbf W_2^{[2]}\right)
            \\
            & \quad - 2 \, a_1 \, b_1 \, \mbf F_{2, 1, 1}\left(\bs \phi_1^{[1]}, \mbf W_2^{[2]}\right) - 2 \, b_1^2 \, \mbf F_{2, 0, 2}\left(\bs \phi_1^{[1]}, \mbf W_2^{[2]}\right) - 3 \, \mbf F_{3, 0, 0}\left(\bs \phi_1^{[1]}, \bs \phi_1^{[1]}, \mbf W_1^{[3]}\right)
            \\
            & \quad - 3 \, \mbf F_{3, 0, 0}\left(\bs \phi_1^{[1]}, \mbf W_1^{[2]}, \mbf W_1^{[2]}\right) - 3 \, a_1 \, \mbf F_{3, 1, 0}\left(\bs \phi_1^{[1]}, \bs \phi_1^{[1]}, \mbf W_1^{[2]}\right) - 3 \, b_1 \, \mbf F_{3, 0, 1}\left(\bs \phi_1^{[1]}, \bs \phi_1^{[1]}, \mbf W_1^{[2]}\right)
            \\
            & \quad - a_2 \, \mbf F_{3, 1, 0}\left(\bs \phi_1^{[1]}, \bs \phi_1^{[1]}, \bs \phi_1^{[1]}\right) - b_2 \, \mbf F_{3, 0, 1}\left(\bs \phi_1^{[1]}, \bs \phi_1^{[1]}, \bs \phi_1^{[1]}\right) - a_1^2 \, \mbf F_{3, 2, 0}\left(\bs \phi_1^{[1]}, \bs \phi_1^{[1]}, \bs \phi_1^{[1]}\right)
            \\
            & \quad - a_1 \, b_1 \, \mbf F_{3, 1, 1}\left(\bs \phi_1^{[1]}, \bs \phi_1^{[1]}, \bs \phi_1^{[1]}\right) - b_1^2 \, \mbf F_{3, 0, 2}\left(\bs \phi_1^{[1]}, \bs \phi_1^{[1]}, \bs \phi_1^{[1]}\right),
        \end{align*}
        \begin{align*}
            \mathcal M_3 \, \mbf W_{3, 2}^{[5]} &= - 2 \, \mbf F_{2, 0, 0}\left(\bs \phi_1^{[1]}, \mbf W_{2, 2}^{[4]} + \mbf W_4^{[4]}\right) - 4 \, \mbf F_{2, 0, 0}\left(\mbf W_0^{[2]}, \mbf W_3^{[3]}\right) - 2 \, \mbf F_{2, 0, 0}\left(\mbf W_2^{[2]}, \mbf W_{1, 2}^{[3]}\right)
            \\
            & \quad - 3 \, \mbf F_{3, 0, 0}\left(\bs \phi_1^{[1]}, \bs \phi_1^{[1]}, \mbf W_{1, 2}^{[3]} + 2 \, \mbf W_3^{[3]}\right) - 3 \, \mbf F_{3, 0, 0}\left(\bs \phi_1^{[1]}, \mbf W_2^{[2]}, 4 \, \mbf W_0^{[2]} + \mbf W_2^{[2]}\right)
            \\
            & \quad - 4 \, \mbf F_{4, 0, 0}\left(\bs \phi_1^{[1]}, \bs \phi_1^{[1]}, \bs \phi_1^{[1]}, 2 \, \mbf W_0^{[2]} + 3 \, \mbf W_2^{[2]}\right) - 5 \, \mbf F_{5, 0, 0}\left(\bs \phi_1^{[1]}, \bs \phi_1^{[1]}, \bs \phi_1^{[1]}, \bs \phi_1^{[1]}, \bs \phi_1^{[1]}\right),
        \end{align*}
        and
        \begin{align*}
            \mathcal M_3 \, \mbf W_{3, 3}^{[5]} &= - 2 \, \mbf F_{2, 0, 0}\left(\bs \phi_1^{[1]}, \mbf W_{2, 3}^{[4]}\right) - 2 \, \mbf F_{2, 0, 0}\left(\mbf W_2^{[2]}, \mbf W_{1, 3}^{[3]}\right) - 3 \, \mbf F_{3, 0, 0}\left(\bs \phi_1^{[1]}, \bs \phi_1^{[1]}, \mbf W_{1, 3}^{[3]}\right) - 18 \, k^2 \, \hat D \, \mbf W_3^{[3]}.
        \end{align*}
        
        \paragraph{Order $\mathcal O\left(\varepsilon^6\right)$.} At this order, \eqref{eqtoexpand} becomes
        {
        \allowdisplaybreaks
        \begin{align*}
            \mbf 0 &= \jac \mbf f_{0, 0}(\mbf 0) \, \mbf u^{[6]} + a_1 \, \jac \mbf f_{1, 0}(\mbf 0) \, \mbf u^{[5]} + b_1 \, \jac \mbf f_{0, 1}(\mbf 0) \, \mbf u^{[5]} + a_2 \, \jac \mbf f_{1, 0}(\mbf 0) \, \mbf u^{[4]} + b_2 \, \jac \mbf f_{0, 1}(\mbf 0) \, \mbf u^{[4]}
            \\
            & \quad + a_3 \, \jac \mbf f_{1, 0}(\mbf 0) \, \mbf u^{[3]} + b_3 \, \jac \mbf f_{0, 1}(\mbf 0) \, \mbf u^{[3]} + a_4 \, \jac \mbf f_{1, 0}(\mbf 0) \, \mbf u^{[2]} + b_4 \, \jac \mbf f_{0, 1}(\mbf 0) \, \mbf u^{[2]} + a_5 \, \jac \mbf f_{1, 0}(\mbf 0) \, \mbf u^{[1]}
            \\
            & \quad + b_5 \, \jac \mbf f_{0, 1}(\mbf 0) \, \mbf u^{[1]} + a_1^2 \, \jac \mbf f_{2, 0}(\mbf 0) \, \mbf u^{[4]} + a_1 \, b_1 \, \jac \mbf f_{1, 1}(\mbf 0) \, \mbf u^{[4]} + b_1^2 \, \jac \mbf f_{0, 2}(\mbf 0) \, \mbf u^{[4]}
            \\
            & \quad + 2 \, a_1 \, a_2 \, \jac \mbf f_{2, 0}(\mbf 0) \, \mbf u^{[3]} + \left(a_1 \, b_2 + a_2 \, b_1\right) \, \jac \mbf f_{1, 1}(\mbf 0) \, \mbf u^{[3]} + 2 \, b_1 \, b_2 \, \jac \mbf f_{0, 2}(\mbf 0) \, \mbf u^{[3]}
            \\
            & \quad + 2 \, a_1 \, a_3 \, \jac \mbf f_{2, 0}(\mbf 0) \, \mbf u^{[2]} + \left(a_1 \, b_3 + a_3 \, b_1\right) \, \jac \mbf f_{1, 1}(\mbf 0) \, \mbf u^{[2]} + 2 \, b_1 \, b_3 \, \jac \mbf f_{0, 2}(\mbf 0) \, \mbf u^{[2]}
            \\
            & \quad + 2 \, a_1 \, a_4 \, \jac \mbf f_{2, 0}(\mbf 0) \, \mbf u^{[1]} + \left(a_1 \, b_4 + a_4 \, b_1\right) \, \jac \mbf f_{1, 1}(\mbf 0) \, \mbf u^{[1]} + 2 \, b_1 \, b_4 \, \jac \mbf f_{0, 2}(\mbf 0) \, \mbf u^{[1]}
            \\
            & \quad + a_2^2 \, \jac \mbf f_{2, 0}(\mbf 0) \, \mbf u^{[2]} + a_2 \, b_2 \, \jac \mbf f_{1, 1}(\mbf 0) \, \mbf u^{[2]} + b_2^2 \, \jac \mbf f_{0, 2}(\mbf 0) \, \mbf u^{[2]} + 2 \, a_2 \, a_3 \, \jac \mbf f_{2, 0}(\mbf 0) \, \mbf u^{[1]}
            \\
            & \quad + \left(a_2 \, b_3 + a_3 \, b_2\right) \, \jac \mbf f_{1, 1}(\mbf 0) \, \mbf u^{[1]} + 2 \, b_2 \, b_3 \, \jac \mbf f_{0, 2}(\mbf 0) \, \mbf u^{[1]} + a_1^3 \, \jac \mbf f_{3, 0}(\mbf 0) \, \mbf u^{[3]}
            \\
            & \quad + a_1^2 \, b_1 \, \jac \mbf f_{2, 1}(\mbf 0) \, \mbf u^{[3]} + a_1 \, b_1^2 \, \jac \mbf f_{1, 2}(\mbf 0) \, \mbf u^{[3]} + b_1^3 \, \jac \mbf f_{0, 3}(\mbf 0) \, \mbf u^{[3]} + 3 \, a_1^2 \, a_2 \, \jac \mbf f_{3, 0}(\mbf 0) \, \mbf u^{[2]}
            \\
            & \quad + 2 \, a_1 \, a_2 \, b_1 \, \jac \mbf f_{2, 1}(\mbf 0) \, \mbf u^{[2]} + a_2 \, b_1^2 \, \jac \mbf f_{1, 2}(\mbf 0) \, \mbf u^{[2]} + a_1^2 \, b_2 \, \jac \mbf f_{2, 1}(\mbf 0) \, \mbf u^{[2]} + 2 \, a_1 \, b_1 \, b_2 \, \jac \mbf f_{1, 2}(\mbf 0) \, \mbf u^{[2]}
            \\
            & \quad + 3 \, b_1^2 \, b_2 \, \jac \mbf f_{0, 3}(\mbf 0) \, \mbf u^{[2]} + 3 \, a_1^2 \, a_3 \, \jac \mbf f_{3, 0}(\mbf 0) \, \mbf u^{[1]} + 2 \, a_1 \, a_3 \, b_1 \, \jac \mbf f_{2, 1}(\mbf 0) \, \mbf u^{[1]} + a_3 \, b_1^2 \, \jac \mbf f_{1, 2}(\mbf 0) \, \mbf u^{[1]}
            \\
            & \quad + a_1^2 \, b_3 \, \jac \mbf f(\mbf 0) \, \mbf u^{[1]} + 2 \, a_1 \, b_1 \, b_3 \, \jac \mbf f_{1, 2}(\mbf 0) \, \mbf u^{[1]} + 3 \, b_1^2 \, b_3 \, \jac \mbf f_{0, 3}(\mbf 0) \, \mbf u^{[1]} + 3 \, a_1 \, a_2^2 \, \jac \mbf f(\mbf 0) \, \mbf u^{[1]}
            \\
            & \quad + 2 \, a_1 \, a_2 \, b_2 \, \jac \mbf f_{2, 1}(\mbf 0) \, \mbf u^{[1]} + a_1 \, b_2^2 \, \jac \mbf f_{1, 2}(\mbf 0) \, \mbf u^{[1]} + a_2^2 \, b_1 \, \jac \mbf f_{2, 1}(\mbf 0) \, \mbf u^{[1]} + 2 \, a_2 \, b_1 \, b_2 \, \jac \mbf f_{1, 2}(\mbf 0) \, \mbf u^{[1]}
            \\
            & \quad + 3 \, b_1 \, b_2^2 \, \jac \mbf f(\mbf 0) \, \mbf u^{[1]} + a_1^4 \, \jac \mbf f_{4, 0}(\mbf 0) \, \mbf u^{[2]} + a_1^3 \, b_1 \, \jac \mbf f_{3, 1}(\mbf 0) \, \mbf u^{[2]} + a_1^2 \, b_1^2 \, \jac \mbf f_{2, 2}(\mbf 0) \, \mbf u^{[2]}
            \\
            & \quad + a_1 \, b_1^3 \, \jac \mbf f_{1, 3}(\mbf 0) \, \mbf u^{[2]} + b_1^4 \, \jac \mbf f_{0, 4}(\mbf 0) \, \mbf u^{[2]} + 4 \, a_1^3 \, a_2 \, \jac \mbf f_{4, 0}(\mbf 0) \, \mbf u^{[1]} + 3 \, a_1^2 \, a_2 \, b_1 \, \jac \mbf f_{3, 1}(\mbf 0) \, \mbf u^{[1]}
            \\
            & \quad + 2 \, a_1 \, a_2 \, b_1^2 \, \jac \mbf f_{2, 2}(\mbf 0) \, \mbf u^{[1]} + a_2 \, b_1^3 \, \jac \mbf f_{1, 3}(\mbf 0) \, \mbf u^{[1]} + a_1^3 \, b_2 \, \jac \mbf f_{3, 1}(\mbf 0) \, \mbf u^{[1]} + 2 \, a_1^2 \, b_1 \, b_2 \, \jac \mbf f_{2, 2}(\mbf 0) \, \mbf u^{[1]}
            \\
            & \quad + 3 \, a_1 \, b_1^2 \, b_2 \, \jac \mbf f_{1, 3}(\mbf 0) \, \mbf u^{[1]} + 4 \, b_1^3 \, b_2 \, \jac \mbf f_{0, 4}(\mbf 0) \, \mbf u^{[1]} + a_1^5 \, \jac \mbf f_{5, 0}(\mbf 0) \, \mbf u^{[1]} + a_1^4 \, b_1 \, \jac \mbf f_{4, 1}(\mbf 0) \, \mbf u^{[1]}
            \\
            & \quad + a_1^3 \, b_1^2 \, \jac \mbf f_{3, 2}(\mbf 0) \, \mbf u^{[1]} + a_1^2 \, b_1^3 \, \jac \mbf f_{2, 3}(\mbf 0) \, \mbf u^{[1]} + a_1 \, b_1^4 \, \jac \mbf f_{1, 4} \, \mbf u^{[1]} + b_1^5 \, \jac \mbf f_{0, 5}(\mbf 0) \, \mbf u^{[1]}
            \\
            & \quad + 2 \, \mbf F_{2, 0, 0}\left(\mbf u^{[1]}, \mbf u^{[5]}\right) + 2 \, \mbf F_{2, 0, 0}\left(\mbf u^{[2]}, \mbf u^{[4]}\right) + \mbf F_{2, 0, 0}\left(\mbf u^{[3]}, \mbf u^{[3]}\right) + 2 \, a_1 \, \mbf F_{2, 1, 0}\left(\mbf u^{[1]}, \mbf u^{[4]}\right)
            \\
            & \quad + 2 \, b_1 \,  \mbf F_{2, 0, 1}\left(\mbf u^{[1]}, \mbf u^{[4]}\right) + 2 \, a_1 \, \mbf F_{2, 1, 0}\left(\mbf u^{[2]}, \mbf u^{[3]}\right) + 2 \, b_1 \, \mbf F_{2, 0, 1}\left(\mbf u^{[2]}, \mbf u^{[3]}\right) + 2 \, a_2 \, \mbf F_{2, 1, 0}\left(\mbf u^{[1]}, \mbf u^{[3]}\right)
            \\
            & \quad + 2 \, b_2 \, \mbf F_{2, 0, 1}\left(\mbf u^{[1]}, \mbf u^{[3]}\right) + a_2 \, \mbf F_{2, 1, 0}\left(\mbf u^{[2]}, \mbf u^{[2]}\right) + b_2 \, \mbf F_{2, 0, 1}\left(\mbf u^{[2]}, \mbf u^{[2]}\right) + 2 \, a_3 \, \mbf F_{2, 1, 0}\left(\mbf u^{[1]}, \mbf u^{[2]}\right)
            \\
            & \quad + 2 \, b_3 \, \mbf F_{2, 0, 1}\left(\mbf u^{[1]}, \mbf u^{[2]}\right) + a_4 \, \mbf F_{2, 1, 0}\left(\mbf u^{[1]}, \mbf u^{[1]}\right) + b_4 \, \mbf F_{2, 0, 1}\left(\mbf u^{[1]}, \mbf u^{[1]}\right) + 2 \, a_1^2 \, \mbf F_{2, 2, 0}\left(\mbf u^{[1]}, \mbf u^{[3]}\right)
            \\
            & \quad + 2 \, a_1 \, b_1 \, \mbf F_{2, 1, 1}\left(\mbf u^{[1]}, \mbf u^{[3]}\right) + 2 \, b_1^2 \, \mbf F_{2, 0, 2}\left(\mbf u^{[1]}, \mbf u^{[3]}\right) + a_1^2 \, \mbf F_{2, 2, 0}\left(\mbf u^{[2]}, \mbf u^{[2]}\right) + a_1 \, b_1 \, \mbf F_{2, 1, 1}\left(\mbf u^{[2]}, \mbf u^{[2]}\right)
            \\
            & \quad + b_1^2 \, \mbf F_{2, 0, 2}\left(\mbf u^{[2]}, \mbf u^{[2]}\right) + 4 \, a_1 \, a_2 \, \mbf F_{2, 2, 0}\left(\mbf u^{[1]}, \mbf u^{[2]}\right) + 2 \, \left(a_1 \, b_2 + a_2 \, b_1\right) \, \mbf F_{2, 1, 1}\left(\mbf u^{[1]}, \mbf u^{[2]}\right)
            \\
            & \quad + 4 \, b_1 \, b_2 \, \mbf F_{2, 0, 2}\left(\mbf u^{[1]}, \mbf u^{[2]}\right) + 2 \, a_1 \, a_3 \, \mbf F_{2, 2, 0}\left(\mbf u^{[1]}, \mbf u^{[1]}\right) + \left(a_1 \, b_3 + a_3 \, b_1\right) \, \mbf F_{2, 1, 1}\left(\mbf u^{[1]}, \mbf u^{[1]}\right)
            \\
            & \quad + 2 \, b_1 \, b_3 \, \mbf F_{2, 0, 2}\left(\mbf u^{[1]}, \mbf u^{[1]}\right) + a_2^2 \, \mbf F_{2, 2, 0}\left(\mbf u^{[1]}, \mbf u^{[1]}\right) + a_2 \, b_2 \, \mbf F_{2, 1, 1}\left(\mbf u^{[1]}, \mbf u^{[1]}\right) + b_2^2 \, \mbf F_{2, 0, 2}\left(\mbf u^{[1]}, \mbf u^{[1]}\right)
            \\
            & \quad + 2 \, a_1^3 \, \mbf F_{2, 3, 0}\left(\mbf u^{[1]}, \mbf u^{[2]}\right) + 2 \, a_1^2 \, b_1 \, \mbf F_{2, 0, 1}\left(\mbf u^{[1]}, \mbf u^{[2]}\right) + 2 \, a_1 \, b_1^2 \, \mbf F_{2, 1, 2}\left(\mbf u^{[1]}, \mbf u^{[2]}\right) + 2 \, b_1^3 \, \mbf F_{2, 0, 3}\left(\mbf u^{[1]}, \mbf u^{[2]}\right)
            \\
            & \quad + 3 \, a_1^2 \, a_2 \, \mbf F_{2, 3, 0}\left(\mbf u^{[1]}, \mbf u^{[1]}\right) + 2 \, a_1 \, a_2 \, b_1 \, \mbf F_{2, 2, 1}\left(\mbf u^{[1]}, \mbf u^{[1]}\right) + a_2 \, b_1^2 \, \mbf F_{2, 1, 2}\left(\mbf u^{[1]}, \mbf u^{[1]}\right)
            \\
            & \quad + a_1^2 \, b_2 \, \mbf F_{2, 2, 1}\left(\mbf u^{[1]}, \mbf u^{[1]}\right) + 2 \, a_1 \, b_1 \, b_2 \, \mbf F_{2, 1, 2}\left(\mbf u^{[1]}, \mbf u^{[1]}\right) + 3 \, b_1^2 \, b_2 \, \mbf F_{2, 0, 3}\left(\mbf u^{[1]}, \mbf u^{[1]}\right) + a_1^4 \, \mbf F_{2, 4, 0}\left(\mbf u^{[1]}, \mbf u^{[1]}\right)
            \\
            & \quad + a_1^3 \, b_1 \, \mbf F_{2, 3, 1}\left(\mbf u^{[1]}, \mbf u^{[1]}\right) + a_1^2 \, b_1^2 \, \mbf F_{2, 2, 2}\left(\mbf u^{[1]}, \mbf u^{[1]}\right) + a_1 \, b_1^3 \, \mbf F_{2, 1, 3}\left(\mbf u^{[1]}, \mbf u^{[1]}\right) + b_1^4 \, \mbf F_{2, 0, 4}\left(\mbf u^{[1]}, \mbf u^{[1]}\right)
            \\
            & \quad + 3 \, \mbf F_{3, 0, 0}\left(\mbf u^{[1]}, \mbf u^{[1]}, \mbf u^{[4]}\right) + 6 \, \mbf F_{3, 0, 0}\left(\mbf u^{[1]}, \mbf u^{[2]}, \mbf u^{[3]}\right) + \mbf F_{3, 0, 0}\left(\mbf u^{[2]}, \mbf u^{[2]}, \mbf u^{[2]}\right)
            \\
            & \quad + 3 \, a_1 \, \mbf F_{3, 1, 0}\left(\mbf u^{[1]}, \mbf u^{[1]}, \mbf u^{[3]}\right) + 3 \, b_1 \, \mbf F_{3, 0, 1}\left(\mbf u^{[1]}, \mbf u^{[1]}, \mbf u^{[3]}\right) + 3 \, a_1 \, \mbf F_{3, 1, 0}\left(\mbf u^{[1]}, \mbf u^{[2]}, \mbf u^{[2]}\right)
            \\
            & \quad + 3 \, b_1 \, \mbf F_{3, 0, 1}\left(\mbf u^{[1]}, \mbf u^{[2]}, \mbf u^{[2]}\right) + 3 \, a_2 \, \mbf F_{3, 1, 0}\left(\mbf u^{[1]}, \mbf u^{[1]}, \mbf u^{[2]}\right) + 3 \, b_2 \, \mbf F_{3, 0, 1}\left(\mbf u^{[1]}, \mbf u^{[1]}, \mbf u^{[2]}\right)
            \\
            & \quad + a_3 \, \mbf F_{3, 1, 0}\left(\mbf u^{[1]}, \mbf u^{[1]}, \mbf u^{[1]}\right) + b_3 \, \mbf F_{3, 0, 1}\left(\mbf u^{[1]}, \mbf u^{[1]}, \mbf u^{[1]}\right) + 3 \, a_1^2 \, \mbf F_{3, 2, 0}\left(\mbf u^{[1]}, \mbf u^{[1]}, \mbf u^{[2]}\right)
            \\
            & \quad + 3 \, a_1 \, b_1 \, \mbf F_{3, 1, 1}\left(\mbf u^{[1]}, \mbf u^{[1]}, \mbf u^{[2]}\right) + 3 \, b_1^2 \, \mbf F_{3, 0, 2}\left(\mbf u^{[1]}, \mbf u^{[1]}, \mbf u^{[2]}\right) + 2 \, a_1 \, a_2 \, \mbf F_{3, 2, 0}\left(\mbf u^{[1]}, \mbf u^{[1]}, \mbf u^{[1]}\right)
            \\
            & \quad + \left(a_1 \, b_2 + a_2 \, b_1\right) \, \mbf F_{3, 1, 1}\left(\mbf u^{[1]}, \mbf u^{[1]}, \mbf u^{[1]}\right) + 2 \, b_1 \, b_2 \, \mbf F_{3, 0, 2}\left(\mbf u^{[1]}, \mbf u^{[1]}, \mbf u^{[1]}\right) + a_1^3 \, \mbf F_{3, 3, 0}\left(\mbf u^{[1]}, \mbf u^{[1]}, \mbf u^{[1]}\right)
            \\
            & \quad + a_1^2 \, b_1 \, \mbf F_{3, 2, 1}\left(\mbf u^{[1]}, \mbf u^{[1]}, \mbf u^{[1]}\right) + a_1 \, b_1^2 \, \mbf F_{3, 1, 2}\left(\mbf u^{[1]}, \mbf u^{[1]}, \mbf u^{[1]}\right) + b_1^3 \, \mbf F_{3, 0, 3}\left(\mbf u^{[1]}, \mbf u^{[1]}, \mbf u^{[1]}\right)
            \\
            & \quad + 4 \, \mbf F_{4, 0, 0}\left(\mbf u^{[1]}, \mbf u^{[1]}, \mbf u^{[1]}, \mbf u^{[3]}\right) + 6 \, \mbf F_{4, 0, 0}\left(\mbf u^{[1]}, \mbf u^{[1]}, \mbf u^{[2]}, \mbf u^{[2]}\right) + 4 \, a_1 \, \mbf F_{4, 1, 0}\left(\mbf u^{[1]}, \mbf u^{[1]}, \mbf u^{[1]}, \mbf u^{[2]}\right)
            \\
            & \quad + 4 \, b_1 \, \mbf F_{4, 0, 1}\left(\mbf u^{[1]}, \mbf u^{[1]}, \mbf u^{[1]}, \mbf u^{[2]}\right) + a_2 \, \mbf F_{4, 1, 0}\left(\mbf u^{[1]}, \mbf u^{[1]}, \mbf u^{[1]}, \mbf u^{[1]}\right) + b_2 \, \mbf F_{4, 0, 2}\left(\mbf u^{[1]}, \mbf u^{[1]}, \mbf u^{[1]}, \mbf u^{[1]}\right)
            \\
            & \quad + a_1^2 \, \mbf F_{4, 2, 0}\left(\mbf u^{[1]}, \mbf u^{[1]}, \mbf u^{[1]}, \mbf u^{[1]}\right) + a_1 \, b_1 \, \mbf F_{4, 1, 1}\left(\mbf u^{[1]}, \mbf u^{[1]}, \mbf u^{[1]}, \mbf u^{[1]}\right) + b_1^2 \, \mbf F_{4, 0, 2}\left(\mbf u^{[1]}, \mbf u^{[1]}, \mbf u^{[1]}, \mbf u^{[1]}\right)
            \\
            & \quad + 5 \, \mbf F_{5, 0, 0}\left(\mbf u^{[1]}, \mbf u^{[1]}, \mbf u^{[1]}, \mbf u^{[1]}, \mbf u^{[2]}\right) + a_1 \, \mbf F_{5, 1, 0}\left(\mbf u^{[1]}, \mbf u^{[1]}, \mbf u^{[1]}, \mbf u^{[1]}, \mbf u^{[1]}\right)
            \\
            & \quad + b_1 \, \mbf F_{5, 0, 1}\left(\mbf u^{[1]}, \mbf u^{[1]}, \mbf u^{[1]}, \mbf u^{[1]}, \mbf u^{[1]}\right) + \mbf F_{6, 0, 0}\left(\mbf u^{[1]}, \mbf u^{[1]}, \mbf u^{[1]}, \mbf u^{[1]}, \mbf u^{[1]}, \mbf u^{[1]}\right)
            \\
            & \quad + k^2 \, \hat D \, \left(\frac{\partial^2 \mbf u^{[6]}}{\partial x^2} + 2 \, \frac{\partial^2 \mbf u^{[4]}}{\partial x \partial X} + 2 \, \frac{\partial^2 \mbf u^{[2]}}{\partial x \partial \Xi} + \frac{\partial^2 \mbf u^{[2]}}{\partial X^2}\right).
        \end{align*}
        }
        From this, we note that
        \begin{align*}
            \mbf u^{[6]} &= \abs{A_1}^2 \, \mbf W_0^{[6]} + \abs{A_1}^4 \, \mbf W_{0, 2}^{[6]} + \abs{A_1}^6 \, \mbf W_{0, 3}^{[6]} + \abs{A_{1_X}}^2 \, \mbf W_{0, 4}^{[6]} + i \, \bar A_1 \, A_{1_X} \, \mbf W_{0, 5}^{[6]} + i \, \abs{A_1}^2 \, \bar A_1 \, A_{1_X} \, \mbf W_{0, 6}^{[6]}
            \\
            & \quad + \bar A_1 \, A_{1_{X \! X}} \, \mbf W_{0, 7}^{[6]} + i \, \bar A_1 \, A_{1_\Xi} \, \mbf W_{0, 3}^{[4]} + A_{1_{X \! X}} \, e^{ix} \, \mbf W_1^{[6]} + i \, A_{1_X} \, e^{ix} \, \mbf W_{1, 2}^{[6]} + i \, \abs{A_1}^2 \, A_{1_X} \, e^{ix} \, \mbf W_{1, 3}^{[6]}
            \\
            & \quad  + i \, A_1^2 \, \bar A_{1_X} \, e^{ix} \, \mbf W_{1, 4}^{[6]} + A_1 \, e^{ix} \, \mbf W_{1, 5}^{[6]} + \abs{A_1}^2 \, A_1 \, e^{ix} \, \mbf W_{1, 6}^{[6]} + \abs{A_1}^4 \, A_1 \, e^{ix} \, \mbf W_{1, 7}^{[6]} + i \, A_{1_\Xi} \, e^{ix} \, \mbf W_{1, 3}^{[4]}
            \\
            & \quad + A_1^2 \, e^{2ix} \, \mbf W_2^{[6]} + \abs{A_1}^2 \, A_1^2 \, e^{2ix} \, \mbf W_{2, 2}^{[6]} + \abs{A_1}^4 \, A_1^2 \, e^{2ix} \, \mbf W_{2, 3}^{[6]} + \left(A_{1_X}\right)^2 \, e^{2ix} \, \mbf W_{2, 4}^{[6]} + i \, A_1 \, A_{1_X} \, e^{2ix} \, \mbf W_{2, 5}^{[6]}
            \\
            & \quad + i \, \abs{A_1}^2 \, A_1 \, A_{1_X} \, e^{2ix} \, \mbf W_{2, 6}^{[6]} + i \, A_1^3 \, \bar A_{1_X} \, e^{2ix} \, \mbf W_{2, 7}^{[6]} + A_1 \, A_{1_{X \! X}} \, e^{2ix} \, \mbf W_{2, 8}^{[6]} + i \, A_1 \, A_{1_\Xi} \, e^{2ix} \, \mbf W_{2, 3}^{[4]}
            \\
            & \quad + 2 \, A_1 \, \bar A_2 \, \mbf W_0^{[5]} + 4 \, \abs{A_1}^2 \, A_1 \, \bar A_2 \, \mbf W_{0, 2}^{[5]} + i \, A_{1_X} \, \bar A_2 \, \mbf W_{0, 3}^{[5]} + i \, \bar A_1 \, A_{2_X} \, \mbf W_{0, 3}^{[5]} + \abs{A_2}^2 \, \mbf W_0^{[4]} + 2 \, A_1 \, \bar A_3 \, \mbf W_0^{[4]}
            \\
            & \quad + 4 \, \abs{A_1}^2 \, A_1 \, \bar A_3 \, \mbf W_{0, 2}^{[4]} + 4 \, \abs{A_1}^2 \, \abs{A_2}^2 \, \mbf W_{0, 2}^{[4]} + 2 \, A_1^2 \, \bar A_2^2 \, \mbf W_{0, 2}^{[4]} + i \, A_{1_X} \, \bar A_3 \, \mbf W_{0, 3}^{[4]} + i \, \bar A_1 \, A_{3_X} \, \mbf W_{0, 3}^{[4]}
            \\
            & \quad + i \, \bar A_2 \, A_{2_X} \, \mbf W_{0, 3}^{[4]} + 2 \, A_1 \, \bar A_4 \, \mbf W_0^{[3]} + 2 \, A_2 \, \bar A_3 \, \mbf W_0^{[3]} + 2 \, A_1 \, \bar A_5 \, \mbf W_0^{[2]} + 2 \, A_2 \, \bar A_4 \, \mbf W_0^{[2]} + \abs{A_3}^2 \, \mbf W_0^{[2]}
            \\
            & \quad + A_{2_{X \! X}} \, e^{ix} \, \mbf W_1^{[5]} + i \, A_{2_X} \, e^{ix} \, \mbf W_{1, 2}^{[5]} + i \, \abs{A_1}^2 \, A_{2_X} \, e^{ix} \, \mbf W_{1, 3}^{[5]} + i \, A_1 \, A_{1_X} \, \bar A_2 \, e^{ix} \, \mbf W_{1, 3}^{[5]}
            \\
            & \quad + i \, \bar A_1 \, A_{1_X} \, A_2 \, e^{ix} \, \mbf W_{1, 3}^{[5]} + i \, A_1^2 \, \bar A_{2_X} \, e^{ix} \, \mbf W_{1, 4}^{[5]} + 2 \, i \, A_1 \, \bar A_{1_X} \, A_2 \, e^{ix} \, \mbf W_{1, 4}^{[5]} + A_2 \, e^{ix} \, \mbf W_{1, 5}^{[5]}
            \\
            & \quad + 2 \, \abs{A_1}^2 \, A_2 \, e^{ix} \, \mbf W_{1, 6}^{[5]} + A_1^2 \, \bar A_2 \, e^{ix} \, \mbf W_{1, 6}^{[5]} + 3 \, \abs{A_1}^4 \, A_2 \, e^{ix} \, \mbf W_{1, 7}^{[5]} + 2 \, \abs{A_1}^2 \, A_1^2 \, \bar A_2 \, e^{ix} \, \mbf W_{1, 7}^{[5]}
            \\
            & \quad + i \, A_{2_\Xi} \, e^{ix} \, \mbf W_{1, 3}^{[3]} + A_3 \, e^{ix} \, \mbf W_1^{[4]} + 2 \, \abs{A_1}^2 \, A_3 \, e^{ix} \, \mbf W_{1, 2}^{[4]} + \bar A_1 \, A_2^2 \, e^{ix} \, \mbf W_{1, 2}^{[4]} + 2 \, \abs{A_2}^2 \, A_1 \, e^{ix} \, \mbf W_{1, 2}^{[4]}
            \\
            & \quad + A_1^2 \, \bar A_3 \, e^{ix} \, \mbf W_{1, 2}^{[4]} + i \, A_{3_X} \, e^{ix} \, \mbf W_{1, 3}^{[4]} + A_4 \, e^{ix} \, \mbf W_1^{[3]} + 2 \, \abs{A_1}^2 \, A_4 \, e^{ix} \, \mbf W_{1, 2}^{[3]} + A_1^2 \, \bar A_4 \, e^{ix} \, \mbf W_{1, 2}^{[3]}
            \\
            & \quad + 2 \, A_1 \, A_2 \, \bar A_3 \, e^{ix} \, \mbf W_{1, 2}^{[3]} + 2 \, A_1 \, \bar A_2 \, A_3 \, e^{ix} \, \mbf W_{1, 2}^{[3]} + 2 \, \bar A_1 \, A_2 \, A_3 \, e^{ix} \, \mbf W_{1, 2}^{[3]} + \abs{A_2}^2 \, A_2 \, e^{ix} \, \mbf W_{1, 2}^{[3]}
            \\
            & \quad + i \, A_{4_X} \, e^{ix} \, \mbf W_{1, 3}^{[3]} + A_5 \, e^{ix} \, \mbf W_1^{[2]} + A_6 \, e^{ix} \, \bs \phi_1^{[1]} + 2 \, A_1 \, A_2 \, e^{2ix} \, \mbf W_2^{[5]} + 3 \, \abs{A_1}^2 \, A_1 \, A_2 \, e^{2ix} \, \mbf W_{2, 2}^{[5]}
            \\
            & \quad + A_1^3 \, \bar A_2 \, e^{2ix} \, \mbf W_{2, 2}^{[5]} + i \, A_1 \, A_{2_X} \, e^{2ix} \, \mbf W_{2, 3}^{[5]} + i \, A_{1_X} \, A_2 \, e^{2ix} \, \mbf W_{2, 3}^{[5]} + A_2^2 \, e^{2ix} \, \mbf W_2^{[4]} + 2 \, A_1 \, A_3 \, e^{2ix} \, \mbf W_2^{[4]}
            \\
            &\quad + 3 \, \abs{A_1}^2 \, A_1 \, A_3 \, e^{2ix} \, \mbf W_{2, 2}^{[4]} + A_1^3 \, \bar A_3 \, e^{2ix} \, \mbf W_{2, 2}^{[4]} + 3 \, A_1^2 \, \abs{A_2}^2 \, e^{2ix} \, \mbf W_{2, 2}^{[4]} + 3 \, \abs{A_1}^2 \, A_2^2 \, e^{2ix} \, \mbf W_{2, 2}^{[4]}
            \\
            & \quad + i \, A_{1_X} \, A_3 \, e^{2ix} \, \mbf W_{2, 3}^{[4]} + i \, A_1 \, A_{3_X} \, e^{2ix} \, \mbf W_{2, 3}^{[4]} + i \, A_2 \, A_{2_X} \, e^{2ix} \, \mbf W_{2, 3}^{[4]} + 2 \, A_1 \, A_4 \, e^{2ix} \, \mbf W_2^{[3]}
            \\
            & \quad + 2 \, A_2 \, A_3 \, e^{2ix} \, \mbf W_2^{[3]} + 2 \, A_1 \, A_5 \, e^{2ix} \, \mbf W_2^{[2]} + 2 \, A_2 \, A_4 \, e^{2ix} \, \mbf W_2^{[2]} + A_3^2 \, e^{2ix} \, \mbf W_2^{[2]} + \ldots + c.c.
        \end{align*}
        where `\ldots' represents terms that are multiples of $e^{3ix}$, $e^{4ix}$, $e^{5ix}$ or $e^{6ix}$, and
        {\allowdisplaybreaks
        \begin{align*}
            \mathcal M_0 \, \mbf W_0^{[6]} &= - a_1 \, \jac \mbf f_{1, 0}(\mbf 0) \, \mbf W_0^{[5]} - b_1 \, \jac \mbf f_{0, 1}(\mbf 0) \, \mbf W_0^{[5]} - a_2 \, \jac \mbf f_{1, 0}(\mbf 0) \, \mbf W_0^{[4]} - b_2 \, \jac \mbf f_{0, 1}(\mbf 0) \, \mbf W_0^{[4]}
            \\
            & \quad - a_3 \, \jac \mbf f_{1, 0}(\mbf 0) \, \mbf W_0^{[3]} - b_3 \, \jac \mbf f_{0, 1}(\mbf 0) \, \mbf W_0^{[3]} - a_4 \, \jac \mbf f_{1, 0}(\mbf 0) \, \mbf W_0^{[2]} - b_4 \, \jac \mbf f_{0, 1}(\mbf 0) \, \mbf W_0^{[2]}
            \\
            & \quad - a_1^2 \, \jac \mbf f_{2, 0}(\mbf 0) \, \mbf W_0^{[4]} - a_1 \, b_1 \, \jac \mbf f_{1, 1}(\mbf 0) \, \mbf W_0^{[4]} - b_1^2 \, \jac \mbf f_{0, 2}(\mbf 0) \, \mbf W_0^{[4]} - 2 \, a_1 \, a_2 \, \jac \mbf f_{2, 0}(\mbf 0) \, \mbf W_0^{[3]}
            \\
            & \quad - \left(a_1 \, b_2 + a_2 \, b_1\right) \, \jac \mbf f_{1, 1}(\mbf 0) \, \mbf W_0^{[3]} - 2 \, b_1 \, b_2 \, \jac \mbf f_{0, 2}(\mbf 0) \, \mbf W_0^{[3]} - 2 \, a_1 \, a_3 \, \jac \mbf f_{2, 0}(\mbf 0) \, \mbf W_0^{[2]}
            \\
            & \quad - \left(a_1 \, b_3 + a_3 \, b_1\right) \, \jac \mbf f_{1, 1}(\mbf 0) \, \mbf W_0^{[2]} - 2 \, b_1 \, b_3 \, \jac \mbf f_{0, 2}(\mbf 0) \, \mbf W_0^{[2]} - a_2^2 \, \jac \mbf f_{2, 0}(\mbf 0) \, \mbf W_0^{[2]}
            \\
            & \quad - a_2 \, b_2 \, \jac \mbf f_{1, 1}(\mbf 0) \, \mbf W_0^{[2]} - b_2^2 \, \jac \mbf f_{0, 2}(\mbf 0) \, \mbf W_0^{[2]} - a_1^3 \, \jac \mbf f_{3, 0}(\mbf 0) \, \mbf W_0^{[3]} - a_1^2 \, b_1 \, \jac \mbf f_{2, 1}(\mbf 0) \, \mbf W_0^{[3]}
            \\
            & \quad - a_1 \, b_1^2 \, \jac \mbf f_{1, 2}(\mbf 0) \, \mbf W_0^{[3]} - b_1^3 \, \jac \mbf f_{0, 3}(\mbf 0) \, \mbf W_0^{[3]} - 3 \, a_1^2 \, a_2 \, \jac \mbf f_{3, 0}(\mbf 0) \, \mbf W_0^{[2]} - 2 \, a_1 \, a_2 \, b_1 \, \jac \mbf f_{2, 1}(\mbf 0) \, \mbf W_0^{[2]}
            \\
            & \quad - a_2 \, b_1^2 \, \jac \mbf f_{1, 2}(\mbf 0) \, \mbf W_0^{[2]} - a_1^2 \, b_2 \, \jac \mbf f_{2, 1}(\mbf 0) \, \mbf W_0^{[2]} - 2 \, a_1 \, b_1 \, b_2 \, \jac \mbf f_{1, 2}(\mbf 0) \, \mbf W_0^{[2]} - 3 \, b_1^2 \, b_2 \, \jac \mbf f_{0, 3}(\mbf 0) \, \mbf W_0^{[2]}
            \\
            & \quad - a_1^4 \, \jac \mbf f_{4, 0}(\mbf 0) \, \mbf W_0^{[2]} - a_1^3 \, b_1 \, \jac \mbf f_{3, 1}(\mbf 0) \, \mbf W_0^{[2]} - a_1^2 \, b_1^2 \, \jac \mbf f_{2, 2}(\mbf 0) \, \mbf W_0^{[2]} - a_1 \, b_1^3 \, \jac \mbf f_{1, 3}(\mbf 0) \, \mbf W_0^{[2]}
            \\
            & \quad - b_1^4 \, \jac \mbf f_{0, 4}(\mbf 0) \, \mbf W_0^{[2]} - 2 \, \mbf F_{2, 0, 0}\left(\bs \phi_1^{[1]}, \mbf W_{1, 5}^{[5]}\right) - 2 \, \mbf F_{2, 0, 0}\left(\mbf W_1^{[2]}, \mbf W_1^{[4]}\right) - \mbf F_{2, 0, 0}\left(\mbf W_1^{[3]}, \mbf W_1^{[3]}\right)
            \\
            & \quad - 2 \, a_1 \, \mbf F_{2, 1, 0}\left(\bs \phi_1^{[1]}, \mbf W_1^{[4]}\right) - 2 \, b_1 \, \mbf F_{2, 0, 1}\left(\bs \phi_1^{[1]}, \mbf W_1^{[4]}\right) - 2 \, a_1 \, \mbf F_{2, 1, 0}\left(\mbf W_1^{[2]}, \mbf W_1^{[3]}\right)
            \\
            & \quad - 2 \, b_1 \, \mbf F_{2, 0, 1}\left(\mbf W_1^{[2]}, \mbf W_1^{[3]}\right) - 2 \, a_2 \, \mbf F_{2, 1, 0}\left(\bs \phi_1^{[1]}, \mbf W_1^{[3]}\right) - 2 \, b_2 \, \mbf F_{2, 0, 1}\left(\bs \phi_1^{[1]}, \mbf W_1^{[3]}\right)
            \\
            & \quad - a_2 \, \mbf F_{2, 1, 0}\left(\mbf W_1^{[2]}, \mbf W_1^{[2]}\right) - b_2 \, \mbf F_{2, 0, 1}\left(\mbf W_1^{[2]}, \mbf W_1^{[2]}\right) - 2 \, a_3 \, \mbf F_{2, 1, 0}\left(\bs \phi_1^{[1]}, \mbf W_1^{[2]}\right)
            \\
            & \quad - 2 \, b_3 \, \mbf F_{2, 0, 1}\left(\bs \phi_1^{[1]}, \mbf W_1^{[2]}\right) - a_4 \, \mbf F_{2, 1, 0}\left(\bs \phi_1^{[1]}, \bs \phi_1^{[1]}\right) - b_4 \, \mbf F_{2, 0, 1}\left(\bs \phi_1^{[1]}, \bs \phi_1^{[1]}\right)
            \\
            & \quad - 2 \, a_1^2 \, \mbf F_{2, 2, 0}\left(\bs \phi_1^{[1]}, \mbf W_1^{[3]}\right) - 2 \, a_1 \, b_1 \, \mbf F_{2, 1, 1}\left(\bs \phi_1^{[1]}, \mbf W_1^{[3]}\right) - 2 \, b_1^2 \, \mbf F_{2, 0, 2}\left(\bs \phi_1^{[1]}, \mbf W_1^{[3]}\right)
            \\
            & \quad - a_1^2 \, \mbf F_{2, 2, 0}\left(\mbf W_1^{[2]}, \mbf W_1^{[2]}\right) - a_1 \, b_1 \, \mbf F_{2, 1, 1}\left(\mbf W_1^{[2]}, \mbf W_1^{[2]}\right) - b_1^2 \, \mbf F_{2, 0, 2}\left(\mbf W_1^{[2]}, \mbf W_1^{[2]}\right)
            \\
            & \quad - 4 \, a_1 \, a_2 \, \mbf F_{2, 2, 0}\left(\bs \phi_1^{[1]}, \mbf W_1^{[2]}\right) - 2 \, \left(a_1 \, b_2 + a_2 \, b_1\right) \, \mbf F_{2, 1, 1}\left(\bs \phi_1^{[1]}, \mbf W_1^{[2]}\right) - 4 \, b_1 \, b_2 \, \mbf F_{2, 0, 2}\left(\bs \phi_1^{[1]}, \mbf W_1^{[2]}\right)
            \\
            & \quad - 2 \, a_1 \, a_3 \, \mbf F_{2, 2, 0}\left(\bs \phi_1^{[1]}, \bs \phi_1^{[1]}\right) - \left(a_1 \, b_3 + a_3 \, b_1\right) \, \mbf F_{2, 1, 1}\left(\bs \phi_1^{[1]}, \bs \phi_1^{[1]}\right) - 2 \, b_1 \, b_3 \, \mbf F_{2, 0, 2}\left(\bs \phi_1^{[1]}, \bs \phi_1^{[1]}\right)
            \\
            & \quad - a_2^2 \, \mbf F_{2, 2, 0}\left(\bs \phi_1^{[1]}, \bs \phi_1^{[1]}\right) - a_2 \, b_2 \, \mbf F_{2, 1, 1}\left(\bs \phi_1^{[1]}, \bs \phi_1^{[1]}\right) - b_2^2 \, \mbf F_{2, 0, 2}\left(\bs \phi_1^{[1]}, \bs \phi_1^{[1]}\right)
            \\
            & \quad - 2 \, a_1^3 \, \mbf F_{2, 3, 0}\left(\bs \phi_1^{[1]}, \mbf W_1^{[2]}\right) - 2 \, a_1^2 \, b_1 \, \mbf F_{2, 2, 1}\left(\bs \phi_1^{[1]}, \mbf W_1^{[2]}\right) - 2 \, a_1 \, b_1^2 \, \mbf F_{2, 1, 2}\left(\bs \phi_1^{[1]}, \mbf W_1^{[2]}\right)
            \\
            & \quad - 2 \, b_1^3 \, \mbf F_{2, 0, 3}\left(\bs \phi_1^{[1]}, \mbf W_1^{[2]}\right) - 3 \, a_1^2 \, a_2 \, \mbf F_{2, 3, 0}\left(\bs \phi_1^{[1]}, \bs \phi_1^{[1]}\right) - 2 \, a_1 \, a_2 \, b_1 \, \mbf F_{2, 2, 1}\left(\bs \phi_1^{[1]}, \bs \phi_1^{[1]}\right)
            \\
            & \quad - a_2 \, b_1^2 \, \mbf F_{2, 1, 2}\left(\bs \phi_1^{[1]}, \bs \phi_1^{[1]}\right) - a_1^2 \, b_2 \, \mbf F_{2, 2, 1}\left(\bs \phi_1^{[1]}, \bs \phi_1^{[1]}\right) - 2 \, a_1 \, b_1 \, b_2 \, \mbf F_{2, 1, 2}\left(\bs \phi_1^{[1]}, \bs \phi_1^{[1]}\right)
            \\
            & \quad - 3 \, b_1^2 \, b_2 \, \mbf F_{2, 0, 3}\left(\bs \phi_1^{[1]}, \bs \phi_1^{[1]}\right) - a_1^4 \, \mbf F_{2, 4, 0}\left(\bs \phi_1^{[1]}, \bs \phi_1^{[1]}\right) - a_1^3 \, b_1 \, \mbf F_{2, 3, 1}\left(\bs \phi_1^{[1]}, \bs \phi_1^{[1]}\right)
            \\
            & \quad - a_1^2 \, b_1^2 \, \mbf F_{2, 2, 2}\left(\bs \phi_1^{[1]}, \bs \phi_1^{[1]}\right) - a_1 \, b_1^3 \, \mbf F_{2, 1, 3}\left(\bs \phi_1^{[1]}, \bs \phi_1^{[1]}\right) - b_1^4 \, \mbf F_{2, 0, 4}\left(\bs \phi_1^{[1]}, \bs \phi_1^{[1]}\right),
        \end{align*}
        }
        {\allowdisplaybreaks
        \begin{align*}
            \mathcal M_0 \, \mbf W_{0, 2}^{[6]} &= - a_1 \, \jac \mbf f_{1, 0}(\mbf 0) \, \mbf W_{0, 2}^{[5]} - b_1 \, \jac \mbf f_{0, 1}(\mbf 0) \, \mbf W_{0, 2}^{[5]} - a_2 \, \jac \mbf f_{1, 0}(\mbf 0) \, \mbf W_{0, 2}^{[4]} - b_2 \, \jac \mbf f_{0, 1}(\mbf 0) \, \mbf W_{0, 2}^{[4]}
            \\
            & \quad - a_1^2 \, \jac \mbf f_{2, 0}(\mbf 0) \, \mbf W_{0, 2}^{[4]} - a_1 \, b_1 \, \jac \mbf f_{1, 1}(\mbf 0) \, \mbf W_{0, 2}^{[4]} - b_1^2 \, \jac \mbf f_{0, 2}(\mbf 0) \, \mbf W_{0, 2}^{[4]} - 2 \, \mbf F_{2, 0, 0}\left(\bs \phi_1^{[1]}, \mbf W_{1, 6}^{[5]}\right)
            \\
            & \quad - 4 \, \mbf F_{2, 0, 0}\left(\mbf W_0^{[2]}, \mbf W_0^{[4]}\right) - 2 \, \mbf F_{2, 0, 0}\left(\mbf W_1^{[2]}, \mbf W_{1, 2}^{[4]}\right) - 2 \, \mbf F_{2, 0, 0}\left(\mbf W_2^{[2]}, \mbf W_2^{[4]}\right)
            \\
            & \quad - 2 \, \mbf F_{2, 0, 0}\left(\mbf W_0^{[3]}, \mbf W_0^{[3]}\right) - 2 \, \mbf F_{2, 0, 0}\left(\mbf W_1^{[3]}, \mbf W_{1, 2}^{[3]}\right) - \mbf F_{2, 0, 0}\left(\mbf W_2^{[3]}, \mbf W_2^{[3]}\right)
            \\
            & \quad - 2 \, a_1 \, \mbf F_{2, 1, 0}\left(\bs \phi_1^{[1]}, \mbf W_{1, 2}^{[4]}\right) - 2 \, b_1 \, \mbf F_{2, 0, 1}\left(\bs \phi_1^{[1]}, \mbf W_{1, 2}^{[4]}\right) - 4 \, a_1 \, \mbf F_{2, 1, 0}\left(\mbf W_0^{[2]}, \mbf W_0^{[3]}\right)
            \\
            & \quad - 4 \, b_1 \, \mbf F_{2, 0, 1}\left(\mbf W_0^{[2]}, \mbf W_0^{[3]}\right) - 2 \, a_1 \, \mbf F_{2, 1, 0}\left(\mbf W_1^{[2]}, \mbf W_{1, 2}^{[3]}\right) - 2 \, b_1 \, \mbf F_{2, 0, 1}\left(\mbf W_1^{[2]}, \mbf W_{1, 2}^{[3]}\right)
            \\
            & \quad - 2 \, a_1 \, \mbf F_{2, 1, 0}\left(\mbf W_2^{[2]}, \mbf W_2^{[3]}\right) - 2 \, b_1 \, \mbf F_{2, 0, 1}\left(\mbf W_2^{[2]}, \mbf W_2^{[3]}\right) - 2 \, a_2 \, \mbf F_{2, 1, 0}\left(\bs \phi_1^{[1]}, \mbf W_{1, 2}^{[3]}\right)
            \\
            & \quad - 2 \, b_2 \, \mbf F_{2, 0, 1}\left(\bs \phi_1^{[1]}, \mbf W_{1, 2}^{[3]}\right) - 2 \, a_2 \, \mbf F_{2, 1, 0}\left(\mbf W_0^{[2]}, \mbf W_0^{[2]}\right) - 2 \, b_2 \, \mbf F_{2, 0, 1}\left(\mbf W_0^{[2]}, \mbf W_0^{[2]}\right)
            \\
            & \quad - a_2 \, \mbf F_{2, 1, 0}\left(\mbf W_2^{[2]}, \mbf W_2^{[2]}\right) - b_2 \, \mbf F_{2, 0, 1}\left(\mbf W_2^{[2]}, \mbf W_2^{[2]}\right) - 2 \, a_1^2 \, \mbf F_{2, 2, 0}\left(\bs \phi_1^{[1]}, \mbf W_{1, 2}^{[3]}\right)
            \\
            & \quad - 2 \, a_1 \, b_1 \, \mbf F_{2, 1, 1}\left(\bs \phi_1^{[1]}, \mbf W_{1, 2}^{[3]}\right) - 2 \, b_1^2 \, \mbf F_{2, 0, 2}\left(\bs \phi_1^{[1]}, \mbf W_{1, 2}^{[3]}\right) - 2 \, a_1^2 \, \mbf F_{2, 2, 0}\left(\mbf W_0^{[2]}, \mbf W_0^{[2]}\right)
            \\
            & \quad - 2 \, a_1 \, b_1 \, \mbf F_{2, 1, 1}\left(\mbf W_0^{[2]}, \mbf W_0^{[2]}\right) - 2 \, b_1^2 \, \mbf F_{2, 0, 2}\left(\mbf W_0^{[2]}, \mbf W_0^{[2]}\right) - a_1^2 \, \mbf F_{2, 2, 0}\left(\mbf W_2^{[2]}, \mbf W_2^{[2]}\right)
            \\
            & \quad - a_1 \, b_1 \, \mbf F_{2, 1, 1}\left(\mbf W_2^{[2]}, \mbf W_2^{[2]}\right) - b_1^2 \, \mbf F_{2, 0, 2}\left(\mbf W_2^{[2]}, \mbf W_2^{[2]}\right) - 3 \, \mbf F_{3, 0, 0}\left(\bs \phi_1^{[1]}, \bs \phi_1^{[1]}, 2 \, \mbf W_0^{[4]} + \mbf W_2^{[4]}\right)
            \\
            & \quad - 6 \, \mbf F_{3, 0, 0}\left(\bs \phi_1^{[1]}, \mbf W_1^{[2]}, 2 \, \mbf W_0^{[3]} + \mbf W_2^{[3]}\right) - 6 \, \mbf F_{3, 0, 0}\left(\bs \phi_1^{[1]}, \mbf W_1^{[3]}, 2 \, \mbf W_0^{[2]} + \mbf W_2^{[2]}\right)
            \\
            & \quad - 3 \, \mbf F_{3, 0, 0}\left(\mbf W_1^{[2]}, \mbf W_1^{[2]}, 2 \, \mbf W_0^{[2]} + \mbf W_2^{[2]}\right) - 3 \, a_1 \, \mbf F_{3, 1, 0}\left(\bs \phi_1^{[1]}, \bs \phi_1^{[1]}, 2 \, \mbf W_0^{[3]} + \mbf W_2^{[3]}\right)
            \\
            & \quad - 3 \, b_1 \, \mbf F_{3, 0, 1}\left(\bs \phi_1^{[1]}, \bs \phi_1^{[1]}, 2 \, \mbf W_0^{[3]} + \mbf W_2^{[3]}\right) - 6 \, a_1 \, \mbf F_{3, 1, 0}\left(\bs \phi_1^{[1]}, \mbf W_1^{[2]}, 2 \, \mbf W_0^{[2]} + \mbf W_2^{[2]}\right)
            \\
            & \quad - 6 \, b_1 \, \mbf F_{3, 0, 1}\left(\bs \phi_1^{[1]}, \mbf W_1^{[2]}, 2 \, \mbf W_0^{[2]} + \mbf W_2^{[2]}\right) - 3 \, a_2 \, \mbf F_{3, 1, 0}\left(\bs \phi_1^{[1]}, \bs \phi_1^{[1]}, 2 \, \mbf W_0^{[2]} + \mbf W_2^{[2]}\right)
            \\
            & \quad - 3 \, b_2 \, \mbf F_{3, 0, 1}\left(\bs \phi_1^{[1]}, \bs \phi_1^{[1]}, 2 \, \mbf W_0^{[2]} + \mbf W_2^{[2]}\right) - 3 \, a_1^2 \, \mbf F_{3, 2, 0}\left(\bs \phi_1^{[1]}, \bs \phi_1^{[1]}, 2 \, \mbf W_0^{[2]} + \mbf W_2^{[2]}\right)
            \\
            & \quad - 3 \, a_1 \, b_1 \, \mbf F_{3, 1, 1}\left(\bs \phi_1^{[1]}, \bs \phi_1^{[1]}, 2 \, \mbf W_0^{[2]} + \mbf W_2^{[2]}\right) - 3 \, b_1^2 \, \mbf F_{3, 0, 2}\left(\bs \phi_1^{[1]}, \bs \phi_1^{[1]}, 2 \, \mbf W_0^{[2]} + \mbf W_2^{[2]}\right)
            \\
            & \quad - 12 \, \mbf F_{4, 0, 0}\left(\bs \phi_1^{[1]}, \bs \phi_1^{[1]}, \bs \phi_1^{[1]}, \mbf W_1^{[3]}\right) - 18 \, \mbf F_{4, 0, 0}\left(\bs \phi_1^{[1]}, \bs \phi_1^{[1]}, \mbf W_1^{[2]}, \mbf W_1^{[2]}\right)
            \\
            & \quad - 12 \, a_1 \, \mbf F_{4, 1, 0}\left(\bs \phi_1^{[1]}, \bs \phi_1^{[1]}, \bs \phi_1^{[1]}, \mbf W_1^{[2]}\right) - 12 \, b_1 \, \mbf F_{4, 0, 1}\left(\bs \phi_1^{[1]}, \bs \phi_1^{[1]}, \bs \phi_1^{[1]}, \mbf W_1^{[2]}\right)
            \\
            & \quad - 3 \, a_2 \, \mbf F_{4, 1, 0}\left(\bs \phi_1^{[1]}, \bs \phi_1^{[1]}, \bs \phi_1^{[1]}, \bs \phi_1^{[1]}\right) - 3 \, b_2 \, \mbf F_{4, 0, 1}\left(\bs \phi_1^{[1]}, \bs \phi_1^{[1]}, \bs \phi_1^{[1]}, \bs \phi_1^{[1]}\right)
            \\
            & \quad - 3 \, a_1^2 \, \mbf F_{4, 2, 0}\left(\bs \phi_1^{[1]}, \bs \phi_1^{[1]}, \bs \phi_1^{[1]}, \bs \phi_1^{[1]}\right) - 3 \, a_1 \, b_1 \, \mbf F_{4, 1, 1}\left(\bs \phi_1^{[1]}, \bs \phi_1^{[1]}, \bs \phi_1^{[1]}, \bs \phi_1^{[1]}\right)
            \\
            & \quad - 3 \, b_1^2 \, \mbf F_{4, 0, 2}\left(\bs \phi_1^{[1]}, \bs \phi_1^{[1]}, \bs \phi_1^{[1]}, \bs \phi_1^{[1]}\right),
        \end{align*}
        }
        \begin{align*}
            \mathcal M_0 \, \mbf W_{0, 3}^{[6]} &= - 2 \, \mbf F_{2, 0, 0}\left(\bs \phi_1^{[1]}, \mbf W_{1, 7}^{[5]}\right) - 4 \, \mbf F_{2, 0, 0}\left(\mbf W_0^{[2]}, \mbf W_{0, 2}^{[4]}\right) - 2 \, \mbf F_{2, 0, 0}\left(\mbf W_2^{[2]}, \mbf W_{2, 2}^{[4]}\right)
            \\
            & \quad - \mbf F_{2, 0, 0}\left(\mbf W_{1, 2}^{[3]}, \mbf W_{1, 2}^{[3]}\right) - \mbf F_{2, 0, 0}\left(\mbf W_3^{[3]}, \mbf W_3^{[3]}\right) - 3 \, \mbf F_{3, 0, 0}\left(\bs \phi_1^{[1]}, \bs \phi_1^{[1]}, 2 \, \mbf W_{0, 2}^{[4]} + \mbf W_{2, 2}^{[4]}\right)
            \\
            & \quad - 6 \, \mbf F_{3, 0, 0}\left(\bs \phi_1^{[1]}, \mbf W_2^{[2]}, \mbf W_3^{[3]}\right) - 6 \, \mbf F_{3, 0, 0}\left(\bs \phi_1^{[1]}, \mbf W_{1, 2}^{[3]}, 2 \, \mbf W_0^{[2]} + \mbf W_2^{[2]}\right)
            \\
            & \quad - 4 \, \mbf F_{3, 0, 0}\left(\mbf W_0^{[2]}, \mbf W_0^{[2]}, \mbf W_0^{[2]}\right) - 6 \, \mbf F_{3, 0, 0}\left(\mbf W_0^{[2]}, \mbf W_2^{[2]}, \mbf W_2^{[2]}\right)
            \\
            & \quad - 4 \, \mbf F_{4, 0, 0}\left(\bs \phi_1^{[1]}, \bs \phi_1^{[1]}, \bs \phi_1^{[1]}, 3 \, \mbf W_{1, 2}^{[3]} + \mbf W_3^{[3]}\right) - 24 \, \mbf F_{4, 0, 0}\left(\bs \phi_1^{[1]}, \bs \phi_1^{[1]}, \mbf W_0^{[2]}, \mbf W_0^{[2]}\right)
            \\
            & \quad - 12 \, \mbf F_{4, 0, 0}\left(\bs \phi_1^{[1]}, \bs \phi_1^{[1]}, \mbf W_2^{[2]}, 2 \, \mbf W_0^{[2]} + \mbf W_2^{[2]}\right) - 10 \, \mbf F_{5, 0, 0}\left(\bs \phi_1^{[1]}, \bs \phi_1^{[1]}, \bs \phi_1^{[1]}, \bs \phi_1^{[1]}, 3 \, \mbf W_0^{[2]} + 2 \, \mbf W_2^{[2]}\right)
            \\
            & \quad - 10 \, \mbf F_{6, 0, 0}\left(\bs \phi_1^{[1]}, \bs \phi_1^{[1]}, \bs \phi_1^{[1]}, \bs \phi_1^{[1]}, \bs \phi_1^{[1]}, \bs \phi_1^{[1]}\right),
        \end{align*}
        \begin{align*}
            \mathcal M_0 \, \mbf W_{0, 4}^{[6]} = - \mbf F_{2, 0, 0}\left(\mbf W_{1, 3}^{[3]}, \mbf W_{1, 3}^{[3]}\right) - 2 \, k^2 \, \hat D \, \mbf W_0^{[2]},
        \end{align*}
        \begin{align*}
            \mathcal M_0 \, \mbf W_{0, 5}^{[6]} &= - a_1 \, \jac \mbf f_{1, 0}(\mbf 0) \, \mbf W_{0, 3}^{[5]} - b_1 \, \jac \mbf f_{0, 1}(\mbf 0) \, \mbf W_{0, 3}^{[5]} - a_2 \, \jac \mbf f_{1, 0}(\mbf 0) \, \mbf W_{0, 3}^{[4]} - b_2 \, \jac \mbf f_{0, 1}(\mbf 0) \, \mbf W_{0, 3}^{[4]}
            \\
            & \quad - a_1^2 \, \jac \mbf f_{2, 0}(\mbf 0) \, \mbf W_{0, 3}^{[4]} - a_1 \, b_1 \, \jac \mbf f_{1, 1}(\mbf 0) \, \mbf W_{0, 3}^{[4]} - b_1^2 \, \jac \mbf f_{0, 2}(\mbf 0) \, \mbf W_{0, 3}^{[4]} - 2 \, \mbf F_{2, 0, 0}\left(\bs \phi_1^{[1]}, \mbf W_{1, 2}^{[5]}\right)
            \\
            & \quad - 2 \, \mbf F_{2, 0, 0}\left(\mbf W_1^{[2]}, \mbf W_{1, 3}^{[4]}\right) - 2 \, \mbf F_{2, 0, 0}\left(\mbf W_1^{[3]}, \mbf W_{1, 3}^{[3]}\right) - 2 \, a_1 \, \mbf F_{2, 1, 0}\left(\bs \phi_1^{[1]}, \mbf W_{1, 3}^{[4]}\right)
            \\
            & \quad - 2 \, b_1 \, \mbf F_{2, 0, 1}\left(\bs \phi_1^{[1]}, \mbf W_{1, 3}^{[4]}\right) - 2 \, a_1 \, \mbf F_{2, 1, 0}\left(\mbf W_1^{[2]}, \mbf W_{1, 3}^{[3]}\right) - 2 \, b_1 \, \mbf F_{2, 0, 1}\left(\mbf W_1^{[2]}, \mbf W_{1, 3}^{[3]}\right)
            \\
            & \quad - 2 \, a_2 \, \mbf F_{2, 1, 0}\left(\bs \phi_1^{[1]}, \mbf W_{1, 3}^{[3]}\right) - 2 \, b_2 \, \mbf F_{2, 0, 1}\left(\bs \phi_1^{[1]}, \mbf W_{1, 3}^{[3]}\right) - 2 \, a_1^2 \, \mbf F_{2, 2, 0}\left(\bs \phi_1^{[1]}, \mbf W_{1, 3}^{[3]}\right)
            \\
            & \quad - 2 \, a_1 \, b_1 \, \mbf F_{2, 1, 1}\left(\bs \phi_1^{[1]}, \mbf W_{1, 3}^{[3]}\right) - 2 \, b_1^2 \, \mbf F_{2, 0, 2}\left(\bs \phi_1^{[1]}, \mbf W_{1, 3}^{[3]}\right),
        \end{align*}
        \begin{align*}
            \mathcal M_0 \, \mbf W_{0, 6}^{[6]} &= - 2 \, \mbf F_{2, 0, 0}\left(\bs \phi_1^{[1]}, \mbf W_{1, 3}^{[5]} - \mbf W_{1, 4}^{[5]}\right) - 4 \, \mbf F_{2, 0, 0}\left(\mbf W_0^{[2]}, \mbf W_{0, 3}^{[4]}\right) - 2 \, \mbf F_{2, 0, 0}\left(\mbf W_2^{[2]}, \mbf W_{2, 3}^{[4]}\right)
            \\
            & \quad - 2 \, \mbf F_{2, 0, 0}\left(\mbf W_{1, 2}^{[3]}, \mbf W_{1, 3}^{[3]}\right) - 3 \, \mbf F_{3, 0, 0}\left(\bs \phi_1^{[1]}, \bs \phi_1^{[1]}, 2 \, \mbf W_{0, 3}^{[4]} + \mbf W_{2, 3}^{[4]}\right)
            \\
            & \quad - 6 \, \mbf F_{3, 0, 0}\left(\bs \phi_1^{[1]}, \mbf W_{1, 3}^{[3]}, 2 \, \mbf W_0^{[2]} + \mbf W_2^{[2]}\right) - 12 \, \mbf F_{4, 0, 0}\left(\bs \phi_1^{[1]}, \bs \phi_1^{[1]}, \bs \phi_1^{[1]}, \mbf W_{1, 3}^{[3]}\right),
        \end{align*}
        and
        \begin{align*}
            \jac \mbf f_{0, 0}(\mbf 0) \, \mbf W_{0, 7}^{[6]} = - 2 \, \mbf F_{2, 0, 0}\left(\bs \phi_1^{[1]}, \mbf W_1^{[5]}\right) - 2 \, k^2 \, \hat D \, \mbf W_0^{[2]}.
        \end{align*}
        Furthermore, the determination of the vectors that are a multiple of $e^{ix}$ leads to a solvability condition given by
        \begin{multline}
            \alpha_1 \, A_{2_{X \! X}} + i \, \alpha_2 \, A_{2_X} + i \, \alpha_3 \, \abs{A_1}^2 \, A_{2_X} + i \, \alpha_3 \, A_1 \, A_{1_X} \, \bar A_2 + i \, \alpha_3 \, \bar A_1 \, A_{1_X} \, A_2 + i \, \alpha_4 \, A_1^2 \, \bar A_{2_X} + 2 \, i \, \alpha_4 \, A_1 \, \bar A_{1_X} \, A_2
            \\
            + \alpha_5 \, A_2 + \alpha_6 \, A_1^2 \, \bar A_2 + 2 \, \alpha_6 \, \abs{A_1}^2 \, A_2 + 2 \, \alpha_7 \, \abs{A_1}^2 \, A_1^2 \, \bar A_2 + 3 \, \alpha_7 \, \abs{A_1}^4 \, A_2 \label{A_2}
            \\
            + \alpha_{1, 2} \, A_{1_{X \! X}} + i \, \alpha_{2, 2} \, A_{1_X} + i \, \alpha_{3, 2} \, \abs{A_1}^2 \, A_{1_X} + i \, \alpha_{4, 2} \, A_1^2 \, \bar A_{1_X} + \alpha_{5, 2} \, A_1 + \alpha_{6, 2} \, \abs{A_1}^2 \, A_1 + \alpha_{7, 2} \, \abs{A_1}^4 \, A_1 = 0,
        \end{multline}
        $\alpha_{p, 2} = \left \langle \mbf Q_p^{[6]}, \bs \psi\right \rangle$, for $p = 1, \ldots, 7$, with
        {
        \allowdisplaybreaks

        }
        Here, we highlight that $\alpha_{p, 2} = 0$ for all $p = 1, \ldots, 7$ if $a_{2 \, q + 1} = b_{2 \, q + 1} = 0$ for $q = 0, 1, 2$, which implies that $A_2 = 0$. If this was not the case, then the expansion $A_2 = \left(R_2 + i \, R_1 \, \varphi_2\right) \, e^{i \, \varphi_1}$ can be used to solve \eqref{A_2}. To simplify calculations and close up with key clarifications, we assume that $A_2 = 0$.
        
        After finding a solution for \eqref{A_2} and using the same idea developed after \eqref{resonant-order-five}, we obtain that
        \begin{align*}
            \mathcal M_1 \, \mbf W_1^{[6]} = - \mbf Q_1^{[6]} + \frac{\alpha_{1, 2}}{\left \langle \bs \psi, \bs \psi\right \rangle} \, \bs \psi,
        \end{align*}
        \begin{align*}
            \mathcal M_1 \, \mbf W_{1, p}^{[6]} = - \mbf Q_p^{[6]} + \frac{\alpha_{p, 2}}{\left \langle \bs \psi, \bs \psi\right \rangle} \, \bs \psi \quad \text{for } p = 2, 3, 4, 5, 6, 7,
        \end{align*}        
        Moreover,
        \begin{align*}
            \mathcal M_2 \, \mbf W_2^{[6]} &= - a_1 \, \jac \mbf f_{1, 0}(\mbf 0) \, \mbf W_2^{[5]} - b_1 \, \jac \mbf f_{0, 1}(\mbf 0) \, \mbf W_2^{[5]} - a_2 \, \jac \mbf f_{1, 0}(\mbf 0) \, \mbf W_2^{[4]}
            \\
            & \quad - b_2 \, \jac \mbf f_{0, 1}(\mbf 0) \, \mbf W_2^{[4]} - a_3 \, \jac \mbf f_{1, 0}(\mbf 0) \, \mbf W_2^{[3]} - b_3 \, \jac \mbf f_{0, 1}(\mbf 0) \, \mbf W_2^{[3]} - a_4 \, \jac \mbf f_{1, 0}(\mbf 0) \, \mbf W_2^{[2]}
            \\
            & \quad - b_4 \, \jac \mbf f_{0, 1}(\mbf 0) \, \mbf W_2^{[2]} - a_1^2 \, \jac \mbf f_{2, 0}(\mbf 0) \, \mbf W_2^{[4]} - a_1 \, b_1 \, \jac \mbf f_{1, 1}(\mbf 0) \, \mbf W_2^{[4]} - b_1^2 \, \jac \mbf f_{0, 2}(\mbf 0) \, \mbf W_2^{[4]}
            \\
            & \quad - 2 \, a_1 \, a_2 \, \jac \mbf f_{2, 0}(\mbf 0) \, \mbf W_2^{[3]} - \left(a_1 \, b_2 + a_2 \, b_1\right) \, \jac \mbf f_{1, 1}(\mbf 0) \, \mbf W_2^{[3]} - 2 \, b_1 \, b_2 \, \jac \mbf f_{0, 2}(\mbf 0) \, \mbf W_2^{[3]}
            \\
            & \quad - 2 \, a_1 \, a_3 \, \jac \mbf f_{2, 0}(\mbf 0) \, \mbf W_2^{[2]} - \left(a_1 \, b_3 + a_3 \, b_1\right) \, \jac \mbf f_{1, 1}(\mbf 0) \, \mbf W_2^{[2]} - 2 \, b_1 \, b_3 \, \jac \mbf f_{0, 2}(\mbf 0) \, \mbf W_2^{[2]}
            \\
            & \quad - a_2^2 \, \jac \mbf f_{2, 0}(\mbf 0) \, \mbf W_2^{[2]} - a_2 \, b_2 \, \jac \mbf f_{1, 1}(\mbf 0) \, \mbf W_2^{[2]} - b_2^2 \, \jac \mbf f_{0, 2}(\mbf 0) \, \mbf W_2^{[2]} - a_1^3 \, \jac \mbf f_{3, 0}(\mbf 0) \, \mbf W_2^{[3]}
            \\
            & \quad - a_1^2 \, b_1 \, \jac \mbf f_{2, 1}(\mbf 0) \, \mbf W_2^{[3]} - a_1 \, b_1^2 \, \jac \mbf f_{1, 2}(\mbf 0) \, \mbf W_2^{[3]} - b_1^3 \, \jac \mbf f_{0, 3}(\mbf 0) \, \mbf W_2^{[3]} - 3 \, a_1^2 \, a_2 \, \jac \mbf f_{3, 0}(\mbf 0) \, \mbf W_2^{[2]}
            \\
            & \quad - 2 \, a_1 \, a_2 \, b_1 \, \jac \mbf f_{2, 1}(\mbf 0) \, \mbf W_2^{[2]} - a_2 \, b_1^2 \, \jac \mbf f_{1, 2}(\mbf 0) \, \mbf W_2^{[2]} - a_1^2 \, b_2 \, \jac \mbf f_{2, 1}(\mbf 0) \, \mbf W_2^{[2]}
            \\
            & \quad - 2 \, a_1 \, b_1 \, b_2 \, \jac \mbf f_{1, 2}(\mbf 0) \, \mbf W_2^{[2]} - 3 \, b_1^2 \, b_2 \, \jac \mbf f_{0, 3}(\mbf 0) \, \mbf W_2^{[2]} - a_1^4 \, \jac \mbf f_{4, 0}(\mbf 0) \, \mbf W_2^{[2]} - a_1^3 \, b_1 \, \jac \mbf f_{3, 1}(\mbf 0) \, \mbf W_2^{[2]}
            \\
            & \quad - a_1^2 \, b_1^2 \, \jac \mbf f_{2, 2}(\mbf 0) \, \mbf W_2^{[2]} - a_1 \, b_1^3 \, \jac \mbf f_{1, 3}(\mbf 0) \, \mbf W_2^{[2]} - b_1^4 \, \jac \mbf f_{0, 4}(\mbf 0) \, \mbf W_2^{[2]} - 2 \, \mbf F_{2, 0, 0}\left(\bs \phi_1^{[1]}, \mbf W_{1, 5}^{[5]}\right)
            \\
            & \quad - 2 \, \mbf F_{2, 0, 0}\left(\mbf W_1^{[2]}, \mbf W_1^{[4]}\right) - \mbf F_{2, 0, 0}\left(\mbf W_1^{[3]}, \mbf W_1^{[3]}\right) - 2 \, a_1 \, \mbf F_{2, 1, 0}\left(\bs \phi_1^{[1]}, \mbf W_1^{[4]}\right)
            \\
            & \quad - 2 \, b_1 \, \mbf F_{2, 0, 1}\left(\bs \phi_1^{[1]}, \mbf W_1^{[4]}\right) - 2 \, a_1 \, \mbf F_{2, 1, 0}\left(\mbf W_1^{[2]}, \mbf W_1^{[3]}\right) - 2 \, b_1 \, \mbf F_{2, 0, 1}\left(\mbf W_1^{[2]}, \mbf W_1^{[3]}\right)
            \\
            & \quad - 2 \, a_2 \, \mbf F_{2, 1, 0}\left(\bs \phi_1^{[1]}, \mbf W_1^{[3]}\right) - 2 \, b_2 \, \mbf F_{2, 0, 1}\left(\bs \phi_1^{[1]}, \mbf W_1^{[3]}\right) - a_2 \, \mbf F_{2, 1, 0}\left(\mbf W_1^{[2]}, \mbf W_1^{[2]}\right)
            \\
            & \quad - b_2 \, \mbf F_{2, 0, 1}\left(\mbf W_1^{[2]}, \mbf W_1^{[2]}\right) - 2 \, a_3 \, \mbf F_{2, 1, 0}\left(\bs \phi_1^{[1]}, \mbf W_1^{[2]}\right) - 2 \, b_3 \, \mbf F_{2, 0, 1}\left(\bs \phi_1^{[1]}, \mbf W_1^{[2]}\right)
            \\
            & \quad - a_4 \, \mbf F_{2, 1, 0}\left(\bs \phi_1^{[1]}, \bs \phi_1^{[1]}\right) - b_4 \, \mbf F_{2, 0, 1}\left(\bs \phi_1^{[1]}, \bs \phi_1^{[1]}\right) - 2 \, a_1^2 \, \mbf F_{2, 2, 0}\left(\bs \phi_1^{[1]}, \mbf W_1^{[3]}\right)
            \\
            & \quad - 2 \, a_1 \, b_1 \, \mbf F_{2, 1, 1}\left(\bs \phi_1^{[1]}, \mbf W_1^{[3]}\right) - 2 \, b_1^2 \, \mbf F_{2, 0, 2}\left(\bs \phi_1^{[1]}, \mbf W_1^{[3]}\right) - a_1^2 \, \mbf F_{2, 2, 0}\left(\mbf W_1^{[2]}, \mbf W_1^{[2]}\right)
            \\
            & \quad - a_1 \, b_1 \, \mbf F_{2, 1, 1}\left(\mbf W_1^{[2]}, \mbf W_1^{[2]}\right) - b_1^2 \, \mbf F_{2, 0, 2}\left(\mbf W_1^{[2]}, \mbf W_1^{[2]}\right) - 4 \, a_1 \, a_2 \, \mbf F_{2, 2, 0}\left(\bs \phi_1^{[1]}, \mbf W_1^{[2]}\right)
            \\
            & \quad - 2 \, \left(a_1 \, b_2 + a_2 \, b_1\right) \, \mbf F_{2, 1, 1}\left(\bs \phi_1^{[1]}, \mbf W_1^{[2]}\right) - 4 \, b_1 \, b_2 \, \mbf F_{2, 0, 2}\left(\bs \phi_1^{[1]}, \mbf W_1^{[2]}\right) - 2 \, a_1 \, a_3 \, \mbf F_{2, 2, 0}\left(\bs \phi_1^{[1]}, \bs \phi_1^{[1]}\right)
            \\
            & \quad - \left(a_1 \, b_3 + a_3 \, b_1\right) \, \mbf F_{2, 1, 1}\left(\bs \phi_1^{[1]}, \bs \phi_1^{[1]}\right) - 2 \, b_1 \, b_3 \, \mbf F_{2, 0, 2}\left(\bs \phi_1^{[1]}, \bs \phi_1^{[1]}\right) - a_2^2 \, \mbf F_{2, 2, 0}\left(\bs \phi_1^{[1]}, \bs \phi_1^{[1]}\right)
            \\
            & \quad - a_2 \, b_2 \, \mbf F_{2, 1, 1}\left(\bs \phi_1^{[1]}, \bs \phi_1^{[1]}\right) - b_2^2 \, \mbf F_{2, 0, 2}\left(\bs \phi_1^{[1]}, \bs \phi_1^{[1]}\right) - 2 \, a_1^3 \, \mbf F_{2, 3, 0}\left(\bs \phi_1^{[1]}, \mbf W_1^{[2]}\right)
            \\
            & \quad - 2 \, a_1^2 \, b_1 \, \mbf F_{2, 2, 1}\left(\bs \phi_1^{[1]}, \mbf W_1^{[2]}\right) - 2 \, a_1 \, b_1^2 \, \mbf F_{2, 1, 2}\left(\bs \phi_1^{[1]}, \mbf W_1^{[2]}\right) - 2 \, b_1^3 \, \mbf F_{2, 0, 3}\left(\bs \phi_1^{[1]}, \mbf W_1^{[2]}\right)
            \\
            & \quad - 3 \, a_1^2 \, a_2 \, \mbf F_{2, 3, 0}\left(\bs \phi_1^{[1]}, \bs \phi_1^{[1]}\right) - 2 \, a_1 \, a_2 \, b_1 \, \mbf F_{2, 2, 1}\left(\bs \phi_1^{[1]}, \bs \phi_1^{[1]}\right) - a_2 \, b_1^2 \, \mbf F_{2, 1, 2}\left(\bs \phi_1^{[1]}, \bs \phi_1^{[1]}\right)
            \\
            & \quad - a_1^2 \, b_2 \, \mbf F_{2, 2, 1}\left(\bs \phi_1^{[1]}, \bs \phi_1^{[1]}\right) - 2 \, a_1 \, b_1 \, b_2 \, \mbf F_{2, 1, 2}\left(\bs \phi_1^{[1]}, \bs \phi_1^{[1]}\right) - 3 \, b_1^2 \, b_2 \, \mbf F_{2, 0, 3}\left(\bs \phi_1^{[1]}, \bs \phi_1^{[1]}\right)
            \\
            & \quad - a_1^4 \, \mbf F_{2, 4, 0}\left(\bs \phi_1^{[1]}, \bs \phi_1^{[1]}\right) - a_1^3 \, b_1 \, \mbf F_{2, 3, 1}\left(\bs \phi_1^{[1]}, \bs \phi_1^{[1]}\right) - a_1^2 \, b_1^2 \, \mbf F_{2, 2, 2}\left(\bs \phi_1^{[1]}, \bs \phi_1^{[1]}\right)
            \\
            & \quad - a_1 \, b_1^3 \, \mbf F_{2, 1, 3}\left(\bs \phi_1^{[1]}, \bs \phi_1^{[1]}\right) - b_1^4 \, \mbf F_{2, 0, 4}\left(\bs \phi_1^{[1]}, \bs \phi_1^{[1]}\right),
        \end{align*}
        \begin{align*}
            \mathcal M_2 \, \mbf W_{2, 2}^{[6]} &= - a_1 \, \jac \mbf f_{1, 0}(\mbf 0) \, \mbf W_{2, 2}^{[5]} - b_1 \, \jac \mbf f_{0, 1}(\mbf 0) \, \mbf W_{2, 2}^{[5]} - a_2 \, \jac \mbf f_{1, 0}(\mbf 0) \, \mbf W_{2, 2}^{[4]} - b_2 \, \jac \mbf f_{0, 1}(\mbf 0) \, \mbf W_{2, 2}^{[4]}
            \\
            & \quad - a_1^2 \, \jac \mbf f_{2, 0}(\mbf 0) \, \mbf W_{2, 2}^{[4]} - a_1 \, b_1 \, \jac \mbf f_{1, 1}(\mbf 0) \, \mbf W_{2, 2}^{[4]} - b_1^2 \, \jac \mbf f_{0, 2}(\mbf 0) \, \mbf W_{2, 2}^{[4]} - 2 \, \mbf F_{2, 0, 0}\left(\bs \phi_1^{[1]}, \mbf W_{1, 6}^{[5]} + \mbf W_3^{[5]}\right)
            \\
            & \quad - 4 \, \mbf F_{2, 0, 0}\left(\mbf W_0^{[2]}, \mbf W_2^{[4]}\right) - 2 \, \mbf F_{2, 0, 0}\left(\mbf W_1^{[2]}, \mbf W_{1, 2}^{[4]} + \mbf W_3^{[4]}\right) - 4 \, \mbf F_{2, 0, 0}\left(\mbf W_2^{[2]}, \mbf W_0^{[4]}\right)
            \\
            & \quad - 4 \, \mbf F_{2, 0, 0}\left(\mbf W_0^{[3]}, \mbf W_2^{[3]}\right) - 2 \, \mbf F_{2, 0, 0}\left(\mbf W_1^{[3]}, \mbf W_{1, 2}^{[3]} + \mbf W_3^{[3]}\right) - 2 \, a_1 \, \mbf F_{2, 1, 0}\left(\bs \phi_1^{[1]}, \mbf W_{1, 2}^{[4]} + \mbf W_3^{[4]}\right)
            \\
            & \quad - 2 \, b_1 \, \mbf F_{2, 0, 1}\left(\bs \phi_1^{[1]}, \mbf W_{1, 2}^{[4]} + \mbf W_3^{[4]}\right) - 4 \, a_1 \, \mbf F_{2, 1, 0}\left(\mbf W_0^{[2]}, \mbf W_2^{[3]}\right) - 4 \, b_1 \, \mbf F_{2, 0, 1}\left(\mbf W_0^{[2]}, \mbf W_2^{[3]}\right)
            \\
            & \quad - 2 \, a_1 \, \mbf F_{2, 1, 0}\left(\mbf W_1^{[2]}, \mbf W_{1, 2}^{[3]} + \mbf W_3^{[3]}\right) - 2 \, b_1 \, \mbf F_{2, 0, 1}\left(\mbf W_1^{[2]}, \mbf W_{1, 2}^{[3]} + \mbf W_3^{[3]}\right) - 4 \, a_1 \, \mbf F_{2, 1, 0}\left(\mbf W_2^{[2]}, \mbf W_0^{[3]}\right)
            \\
            & \quad - 4 \, b_1 \, \mbf F_{2, 0, 1}\left(\mbf W_2^{[2]}, \mbf W_0^{[3]}\right) - 2 \, a_2 \, \mbf F_{2, 1, 0}\left(\bs \phi_1^{[1]}, \mbf W_{1, 2}^{[3]} + \mbf W_3^{[3]}\right) - 2 \, b_2 \, \mbf F_{2, 0, 1}\left(\bs \phi_1^{[1]}, \mbf W_{1, 2}^{[3]} + \mbf W_3^{[3]}\right)
            \\
            & \quad - 4 \, a_2 \, \mbf F_{2, 1, 0}\left(\mbf W_0^{[2]}, \mbf W_2^{[2]}\right) - 4 \, b_2 \, \mbf F_{2, 0, 1}\left(\mbf W_0^{[2]}, \mbf W_2^{[2]}\right) - 2 \, a_1^2 \, \mbf F_{2, 2, 0}\left(\bs \phi_1^{[1]}, \mbf W_{1, 2}^{[3]} + \mbf W_3^{[3]}\right)
            \\
            & \quad - 2 \, a_1 \, b_1 \, \mbf F_{2, 1, 1}\left(\bs \phi_1^{[1]}, \mbf W_{1, 2}^{[3]} + \mbf W_3^{[3]}\right) - 2 \, b_1^2 \, \mbf F_{2, 0, 2}\left(\bs \phi_1^{[1]}, \mbf W_{1, 2}^{[3]} + \mbf W_3^{[3]}\right)
            \\
            & \quad  - 4 \, a_1^2 \, \mbf F_{2, 2, 0}\left(\mbf W_0^{[2]}, \mbf W_2^{[2]}\right) - 4 \, a_1 \, b_1 \, \mbf F_{2, 1, 1}\left(\mbf W_0^{[2]}, \mbf W_2^{[2]}\right)
            \\
            & \quad - 4 \, b_1^2 \, \mbf F_{2, 0, 2}\left(\mbf W_0^{[2]}, \mbf W_2^{[2]}\right) - 6 \, \mbf F_{3, 0, 0}\left(\bs \phi_1^{[1]}, \bs \phi_1^{[1]}, \mbf W_0^{[4]} + \mbf W_2^{[4]}\right)
            \\
            & \quad - 12 \, \mbf F_{3, 0, 0}\left(\bs \phi_1^{[1]}, \mbf W_1^{[2]}, \mbf W_0^{[3]} + \mbf W_2^{[3]}\right) - 12 \, \mbf F_{3, 0, 0}\left(\bs \phi_1^{[1]}, \mbf W_1^{[3]}, \mbf W_0^{[2]} + \mbf W_2^{[2]}\right)
            \\
            & \quad - 6 \, \mbf F_{3, 0, 0}\left(\mbf W_1^{[2]}, \mbf W_1^{[2]}, \mbf W_0^{[2]} + \mbf W_2^{[2]}\right) - 6 \, a_1 \, \mbf F_{3, 1, 0}\left(\bs \phi_1^{[1]}, \bs \phi_1^{[1]}, \mbf W_0^{[3]} + \mbf W_2^{[3]}\right)
            \\
            & \quad - 6 \, b_1 \, \mbf F_{3, 0, 1}\left(\bs \phi_1^{[1]}, \bs \phi_1^{[1]}, \mbf W_0^{[3]} + \mbf W_2^{[3]}\right) - 12 \, a_1 \, \mbf F_{3, 1, 0}\left(\bs \phi_1^{[1]}, \mbf W_1^{[2]}, \mbf W_0^{[2]} + \mbf W_2^{[2]}\right)
            \\
            & \quad - 12 \, b_1 \, \mbf F_{3, 0, 1}\left(\bs \phi_1^{[1]}, \mbf W_1^{[2]}, \mbf W_0^{[2]} + \mbf W_2^{[2]}\right) - 6 \, a_2 \, \mbf F_{3, 1, 0}\left(\bs \phi_1^{[1]}, \bs \phi_1^{[1]}, \mbf W_0^{[2]} + \mbf W_2^{[2]}\right)
            \\
            & \quad - 6 \, b_2 \, \mbf F_{3, 0, 1}\left(\bs \phi_1^{[1]}, \bs \phi_1^{[1]}, \mbf W_0^{[2]} + \mbf W_2^{[2]}\right) - 6 \, a_1^2 \, \mbf F_{3, 2, 0}\left(\bs \phi_1^{[1]}, \bs \phi_1^{[1]}, \mbf W_0^{[2]} + \mbf W_2^{[2]}\right)
            \\
            & \quad - 6 \, a_1 \, b_1 \, \mbf F_{3, 1, 1}\left(\bs \phi_1^{[1]}, \bs \phi_1^{[1]}, \mbf W_0^{[2]} + \mbf W_2^{[2]}\right) - 6 \, b_1^2 \, \mbf F_{3, 0, 2}\left(\bs \phi_1^{[1]}, \bs \phi_1^{[1]}, \mbf W_0^{[2]} + \mbf W_2^{[2]}\right)
            \\
            & \quad - 16 \, \mbf F_{4, 0, 0}\left(\bs \phi_1^{[1]}, \bs \phi_1^{[1]}, \bs \phi_1^{[1]}, \mbf W_1^{[3]}\right) - 24 \, \mbf F_{4, 0, 0}\left(\bs \phi_1^{[1]}, \bs \phi_1^{[1]}, \mbf W_1^{[2]}, \mbf W_1^{[2]}\right)
            \\
            & \quad - 16 \, a_1 \, \mbf F_{4, 1, 0}\left(\bs \phi_1^{[1]}, \bs \phi_1^{[1]}, \bs \phi_1^{[1]}, \mbf W_1^{[2]}\right) - 16 \, b_1 \, \mbf F_{4, 0, 1}\left(\bs \phi_1^{[1]}, \bs \phi_1^{[1]}, \bs \phi_1^{[1]}, \mbf W_1^{[2]}\right)
            \\
            & \quad - 4 \, a_2 \, \mbf F_{4, 1, 0}\left(\bs \phi_1^{[1]}, \bs \phi_1^{[1]}, \bs \phi_1^{[1]}, \bs \phi_1^{[1]}\right) - 4 \, b_2 \, \mbf F_{4, 0, 1}\left(\bs \phi_1^{[1]}, \bs \phi_1^{[1]}, \bs \phi_1^{[1]}, \bs \phi_1^{[1]}\right)
            \\
            & \quad - 4 \, a_1^2 \, \mbf F_{4, 2, 0}\left(\bs \phi_1^{[1]}, \bs \phi_1^{[1]}, \bs \phi_1^{[1]}, \bs \phi_1^{[1]}\right) - 4 \, a_1 \, b_1 \, \mbf F_{4, 1, 1}\left(\bs \phi_1^{[1]}, \bs \phi_1^{[1]}, \bs \phi_1^{[1]}, \bs \phi_1^{[1]}\right)
            \\
            & \quad - 4 \, b_1^2 \, \mbf F_{4, 0, 2}\left(\bs \phi_1^{[1]}, \bs \phi_1^{[1]}, \bs \phi_1^{[1]}, \bs \phi_1^{[1]}\right),
        \end{align*}
        \begin{align*}
            \mathcal M_2 \, \mbf W_{2, 3}^{[6]} &= - 2 \, \mbf F_{2, 0, 0}\left(\bs \phi_1^{[1]}, \mbf W_{1, 7}^{[5]} + \mbf W_{3, 2}^{[5]}\right) - 4 \, \mbf F_{2, 0, 0}\left(\mbf W_0^{[2]}, \mbf W_{2, 2}^{[4]}\right)
            \\
            & \quad - 2 \, \mbf F_{2, 0, 0}\left(\mbf W_2^{[2]}, 2 \, \mbf W_{0, 2}^{[4]} + \mbf W_4^{[4]}\right) - \mbf F_{2, 0, 0}\left(\mbf W_{1, 2}^{[3]}, \mbf W_{1, 2}^{[3]}\right) - 2 \, \mbf F_{2, 0, 0}\left(\mbf W_{1, 3}^{[3]}, \mbf W_3^{[3]}\right)
            \\
            & \quad - 3 \, \mbf F_{3, 0, 0}\left(\bs \phi_1^{[1]}, \bs \phi_1^{[1]}, 2 \, \mbf W_{0, 2}^{[4]} + 2 \, \mbf W_{2, 2}^{[4]} + \mbf W_4^{[4]}\right) - 12 \, \mbf F_{3, 0, 0}\left(\bs \phi_1^{[1]}, \mbf W_0^{[2]}, \mbf W_{1, 2}^{[3]} + \mbf W_3^{[3]}\right)
            \\
            & \quad - 6 \, \mbf F_{3, 0, 0}\left(\bs \phi_1^{[1]}, \mbf W_2^{[2]}, 2 \, \mbf W_{1, 2}^{[3]} + \mbf W_3^{[3]}\right) - 12 \, \mbf F_{3, 0, 0}\left(\mbf W_0^{[2]}, \mbf W_0^{[2]}, \mbf W_2^{[2]}\right)
            \\
            & \quad - 3 \, \mbf F_{3, 0, 0}\left(\mbf W_2^{[2]}, \mbf W_2^{[2]}, \mbf W_2^{[2]}\right) - 4 \, \mbf F_{4, 0, 0}\left(\bs \phi_1^{[1]}, \bs \phi_1^{[1]}, \bs \phi_1^{[1]}, 4 \, \mbf W_{1, 2}^{[3]} + 3 \, \mbf W_3^{[3]}\right)
            \\
            & \quad - 24 \, \mbf F_{4, 0, 0}\left(\bs \phi_1^{[1]}, \bs \phi_1^{[1]}, \mbf W_0^{[2]}, \mbf W_0^{[2]}\right) - 6 \, \mbf F_{4, 0, 0}\left(\bs \phi_1^{[1]}, \bs \phi_1^{[1]}, \mbf W_2^{[2]}, 8 \, \mbf W_0^{[2]} + 3 \, \mbf W_2^{[2]}\right)
            \\
            & \quad - 5 \, \mbf F_{5, 0, 0}\left(\bs \phi_1^{[1]}, \bs \phi_1^{[1]}, \bs \phi_1^{[1]}, \bs \phi_1^{[1]}, 8 \, \mbf W_0^{[2]} + 7 \, \mbf W_2^{[2]}\right) - 15 \, \mbf F_{6, 0, 0}\left(\bs \phi_1^{[1]}, \bs \phi_1^{[1]}, \bs \phi_1^{[1]}, \bs \phi_1^{[1]}, \bs \phi_1^{[1]}, \bs \phi_1^{[1]}\right),
        \end{align*}
        \begin{align*}
            \mathcal M_2 \, \mbf W_{2, 4}^{[6]} = \mbf F_{2, 0, 0}\left(\mbf W_{1, 3}^{[3]}, \mbf W_{1, 3}^{[3]}\right) + 4 \, k^2 \, \hat D \, \mbf W_{2, 3}^{[4]} - 2 \, k^2 \, \hat D \, \mbf W_2^{[2]},
        \end{align*}
        \begin{align*}
            \mathcal M_2 \, \mbf W_{2, 5}^{[6]} &= - a_1 \, \jac \mbf f_{1, 0}(\mbf 0) \, \mbf W_{2, 3}^{[5]} - b_1 \, \jac \mbf f_{0, 1}(\mbf 0) \, \mbf W_{2, 3}^{[5]} - a_2 \, \jac \mbf f_{1, 0}(\mbf 0) \, \mbf W_{2, 3}^{[4]}
            \\
            & \quad - b_2 \, \jac \mbf f_{0, 1}(\mbf 0) \, \mbf W_{2, 3}^{[4]} - a_1^2 \, \jac \mbf f_{2, 0}(\mbf 0) \, \mbf W_{2, 3}^{[4]} - a_1 \, b_1 \, \jac \mbf f_{1, 1}(\mbf 0) \, \mbf W_{2, 3}^{[4]} - b_1^2 \, \jac \mbf f_{0, 2}(\mbf 0) \, \mbf W_{2, 3}^{[4]}
            \\
            & \quad - 2 \, \mbf F_{2, 0, 0}\left(\bs \phi_1^{[1]}, \mbf W_{1, 2}^{[5]}\right) - 2 \, \mbf F_{2, 0, 0}\left(\mbf W_1^{[2]}, \mbf W_{1, 3}^{[4]}\right) - 2 \, \mbf F_{2, 0, 0}\left(\mbf W_1^{[3]}, \mbf W_{1, 3}^{[3]}\right)
            \\
            & \quad - 2 \, a_1 \, \mbf F_{2, 1, 0}\left(\bs \phi_1^{[1]}, \mbf W_{1, 3}^{[4]}\right) - 2 \, b_1 \, \mbf F_{2, 0, 1}\left(\bs \phi_1^{[1]}, \mbf W_{1, 3}^{[4]}\right) - 2 \, a_1 \, \mbf F_{2, 1, 0}\left(\mbf W_1^{[2]}, \mbf W_{1, 3}^{[3]}\right)
            \\
            & \quad - 2 \, b_1 \, \mbf F_{2, 0, 1}\left(\mbf W_1^{[2]}, \mbf W_{1, 3}^{[3]}\right) - 2 \, a_2 \, \mbf F_{2, 1, 0}\left(\bs \phi_1^{[1]}, \mbf W_{1, 3}^{[3]}\right) - 2 \, b_2 \, \mbf F_{2, 0, 1}\left(\bs \phi_1^{[1]}, \mbf W_{1, 3}^{[3]}\right)
            \\
            & \quad - 2 \, a_1^2 \, \mbf F_{2, 2, 0}\left(\bs \phi_1^{[1]}, \mbf W_{1, 3}^{[3]}\right) - 2 \, a_1 \, b_1 \, \mbf F_{2, 1, 1}\left(\bs \phi_1^{[1]}, \mbf W_{1, 3}^{[3]}\right) - 2 \, b_1^2 \, \mbf F_{2, 0, 2}\left(\bs \phi_1^{[1]}, \mbf W_{1, 3}^{[3]}\right) - 8 \, k^2 \, \hat D \, \mbf W_2^{[4]},
        \end{align*}
        \begin{align*}
            \mathcal M_2 \, \mbf W_{2, 6}^{[6]} &= - 2 \, \mbf F_{2, 0, 0}\left(\bs \phi_1^{[1]}, \mbf W_{1, 3}^{[5]} + \mbf W_{3, 3}^{[5]}\right) - 4 \, \mbf F_{2, 0, 0}\left(\mbf W_0^{[2]}, \mbf W_{2, 3}^{[4]}\right) - 2 \, \mbf F_{2, 0, 0}\left(\mbf W_2^{[2]}, \mbf W_{0, 3}^{[4]}\right)
            \\
            & \quad - 2 \, \mbf F_{2, 0, 0}\left(\mbf W_{1, 2}^{[3]}, \mbf W_{1, 3}^{[3]}\right) - 3 \, \mbf F_{3, 0, 0}\left(\bs \phi_1^{[1]}, \bs \phi_1^{[1]}, \mbf W_{0, 3}^{[4]} + 2 \, \mbf W_{2, 3}^{[4]}\right)
            \\
            & \quad - 6 \, \mbf F_{3, 0, 0}\left(\bs \phi_1^{[1]}, \mbf W_{1, 3}^{[3]}, 2 \, \mbf W_0^{[2]} + \mbf W_2^{[2]}\right) - 12 \, \mbf F_{4, 0, 0}\left(\bs \phi_1^{[1]}, \bs \phi_1^{[1]}, \bs \phi_1^{[1]}, \mbf W_{1, 3}^{[3]}\right) - 12 \, k^2 \, \hat D \, \mbf W_{2, 2}^{[4]},
        \end{align*}
        \begin{align*}
            \mathcal M_2 \, \mbf W_{2, 7}^{[6]} &= - 2 \, \mbf F_{2, 0, 0}\left(\bs \phi_1^{[1]}, \mbf W_{1, 4}^{[5]}\right) + 2 \, \mbf F_{2, 0, 0}\left(\mbf W_2^{[2]}, \mbf W_{0, 3}^{[4]}\right) + 2 \, \mbf F_{2, 0, 0}\left(\mbf W_{1, 3}^{[3]}, \mbf W_3^{[3]}\right)
            \\
            & \quad + 3 \, \mbf F_{3, 0, 0}\left(\bs \phi_1^{[1]}, \bs \phi_1^{[1]}, \mbf W_{0, 3}^{[4]}\right) + 6 \, \mbf F_{3, 0, 0}\left(\bs \phi_1^{[1]}, \mbf W_2^{[2]}, \mbf W_{1, 3}^{[3]}\right)
            \\
            & \quad + 4 \, \mbf F_{4, 0, 0}\left(\bs \phi_1^{[1]}, \bs \phi_1^{[1]}, \bs \phi_1^{[1]}, \mbf W_{1, 3}^{[3]}\right) - 4 \, k^2 \, \hat D \, \mbf W_{2, 2}^{[4]},
        \end{align*}
        and
        \begin{align*}
            \mathcal M_2 \, \mbf W_{2, 8}^{[6]} = - 2 \, \mbf F_{2, 0, 0}\left(\bs \phi_1^{[1]}, \mbf W_1^{[5]}\right) + 4 \, k^2 \, \hat D \, \mbf W_{2, 3}^{[4]} - 2 \, k^2 \, \hat D \, \mbf W_2^{[2]}.
        \end{align*}
        \paragraph{Order $\mathcal O\left(\varepsilon^7\right)$.} Finally, at order seven, \eqref{eqtoexpand} becomes
        {
        \allowdisplaybreaks

        }
        Here, we are only interested in the solvability condition coming out of this equation, which is given by
        \begin{multline}
            \alpha_1 \, A_{3_{X \! X}} + i \, \alpha_2 \, A_{3_X} + i \, \alpha_3 \, \abs{A_1}^2 \, A_{3_X} + i \, \alpha_3 \, A_1 \, A_{1_X} \, \bar A_3 + i \, \alpha_3 \, \bar A_1 \, A_{1_X} \, A_3 + i \, \alpha_4 \, A_1^2 \, \bar A_{3_X} + 2 \, i \, \alpha_4 \, A_1 \, \bar A_{1_X} \, A_3
            \\
            + \alpha_5 \, A_3 + \alpha_6 \, A_1^2 \, \bar A_3 + 2 \, \alpha_6 \, \abs{A_1}^2 \, A_3 + 2 \, \alpha_7 \, \abs{A_1}^2 \, A_1^2 \, \bar A_3 + 3 \, \alpha_7 \, \abs{A_1}^4 \, A_3
            \\
            + \alpha_{1, 2} \, A_{2_{X \! X}} + i \, \alpha_{2, 2} \, A_{2_X} + i \, \alpha_{3, 2} \, \abs{A_1}^2 \, A_{2_X} + i \, \alpha_{3, 2} \, A_1 \, A_{1_X} \, \bar A_2 + i \, \alpha_{3, 2} \, \bar A_1 \, A_{1_X} \, A_2 + i \, \alpha_{4, 2} \, A_1^2 \, \bar A_{2_X}
            \\
            + 2 \, i \, \alpha_{4, 2} \, A_1 \, \bar A_{1_X} \, A_2 + \alpha_{5, 2} \, A_2 + \alpha_{6, 2} \, A_1^2 \, \bar A_2 + 2 \, \alpha_{6, 2} \, \abs{A_1}^2 \, A_2 + 2 \, \alpha_{7, 2} \, \abs{A_1}^2 \, A_1^2 \, \bar A_2 + 3 \, \alpha_{7, 2} \, \abs{A_1}^4 \, A_2
            \\
            + i \, \alpha_3 \, \bar A_1 \, A_2 \, A_{2_X} + i \, \alpha_3 \, A_1 \, \bar A_2 \, A_{2_X} + i \, \alpha_3 \, A_{1_X} \, \abs{A_2}^2 + i \, \alpha_4 \, \bar A_{1_X} \, A_2^2 + 2 \, i \, \alpha_4 \, A_1 \, A_2 \, \bar A_{2_X}
            \\
            + \alpha_6 \, \bar A_1 \, A_2^2 + 2 \, \alpha_6 \, A_1 \, \abs{A_2}^2 + \alpha_7 \, A_1^3 \, \bar A_2^2 + 3 \, \alpha_7 \, \abs{A_1}^2 \, \bar A_1 \, A_2^2 + 6 \, \alpha_7 \, \abs{A_1}^2 \, A_1 \, \abs{A_2}^2
            \\
            + i \, \alpha_{1, 3} \, A_{1_{X \! X \! X}} + \alpha_{2, 3} \, A_{1_{X \! X}} + \alpha_{3, 3} \, \abs{A_1}^2 \, A_{1_{X \! X}} + \alpha_{4, 3} \, A_1^2 \, \bar A_{1_{X \! X}} + i \, \alpha_{5, 3} \, A_{1_X} + i \, \alpha_{6, 3} \, \abs{A_1}^2 \, A_{1_X}
            \\
            + i \, \alpha_{7, 3} \, \abs{A_1}^4 \, A_{1_X} + i \, \alpha_{8, 3} \, A_1^2 \, \bar A_{1_X} + i \, \alpha_{9, 3} \, \abs{A_1}^2 \, A_1^2 \, \bar A_{1_X} + \alpha_{10, 3} \, \bar A_1 \, \left(A_{1_X}\right)^2 + \alpha_{11, 3} \, A_1 \, \abs{A_{1_X}}^2
            \\
            + \alpha_{12, 3} \, A_1 + \alpha_{13, 3} \, \abs{A_1}^2 \, A_1 + \alpha_{14, 3} \, \abs{A_1}^4 \, A_1 + \alpha_{15, 3} \, \abs{A_1}^6 \, A_1 + 2 \, \alpha_1 \, A_{1_{X \Xi}} + i \, \alpha_2 \, A_{1_\Xi} + i \, \alpha_3 \, \abs{A_1}^2 \, A_{1_\Xi}
            \\
            + i \, \alpha_4 \, A_1^2 \, \bar A_{1_\Xi} = 0, \label{A_3}
        \end{multline}
        where $\alpha_{p, 3} = \left \langle \mbf Q_p^{[7]}, \bs \psi\right \rangle$ for $p = 1, \ldots, 15$, with
        {
        \allowdisplaybreaks

        }
        Finally, as we stated at the previous order, we will assume that $A_2 = 0$, which implies that \eqref{A_3} becomes
        \begin{multline}
            \alpha_1 \, A_{3_{X \! X}} + i \, \alpha_2 \, A_{3_X} + i \, \alpha_3 \, \abs{A_1}^2 \, A_{3_X} + i \, \alpha_3 \, A_1 \, A_{1_X} \, \bar A_3 + i \, \alpha_3 \, \bar A_1 \, A_{1_X} \, A_3 + i \, \alpha_4 \, A_1^2 \, \bar A_{3_X} + 2 \, i \, \alpha_4 \, A_1 \, \bar A_{1_X} \, A_3
            \\
            + \alpha_5 \, A_3 + \alpha_6 \, A_1^2 \, \bar A_3 + 2 \, \alpha_6 \, \abs{A_1}^2 \, A_3 + 2 \, \alpha_7 \, \abs{A_1}^2 \, A_1^2 \, \bar A_3 + 3 \, \alpha_7 \, \abs{A_1}^4 \, A_3
            \\
            + i \, \alpha_{1, 3} \, A_{1_{X \! X \! X}} + \alpha_{2, 3} \, A_{1_{X \! X}} + \alpha_{3, 3} \, \abs{A_1}^2 \, A_{1_{X \! X}} + \alpha_{4, 3} \, A_1^2 \, \bar A_{1_{X \! X}} + i \, \alpha_{5, 3} \, A_{1_X} + i \, \alpha_{6, 3} \, \abs{A_1}^2 \, A_{1_X}
            \\
            + i \, \alpha_{7, 3} \, \abs{A_1}^4 \, A_{1_X} + i \, \alpha_{8, 3} \, A_1^2 \, \bar A_{1_X} + i \, \alpha_{9, 3} \, \abs{A_1}^2 \, A_1^2 \, \bar A_{1_X} + \alpha_{10, 3} \, \bar A_1 \, \left(A_{1_X}\right)^2 + \alpha_{11, 3} \, A_1 \, \abs{A_{1_X}}^2
            \\
            + \alpha_{12, 3} \, A_1 + \alpha_{13, 3} \, \abs{A_1}^2 \, A_1 + \alpha_{14, 3} \, \abs{A_1}^4 \, A_1 + \alpha_{15, 3} \, \abs{A_1}^6 \, A_1 + 2 \, \alpha_1 \, A_{1_{X \! \Xi}} + i \, \alpha_2 \, A_{1_\Xi} + i \, \alpha_3 \, \abs{A_1}^2 \, A_{1_\Xi}
            \\
            + i \, \alpha_4 \, A_1^2 \, \bar A_{1_\Xi} = 0. \label{actual-A_3}
        \end{multline}        
        Now, from the amplitude equation for $A_1$, \eqref{firstamplitudeeq}, we note that
        \begin{align*}
            A_1'' = - \frac{1}{\alpha_1} \, \left(i \, \alpha_2 \, A_1' + i \, \alpha_3 \, \abs{A_1}^2 \, A_1' + i \, \alpha_4 \, A_1^2 \, \bar A_1' + \alpha_5 \, A_1 + \alpha_6 \, \abs{A_1}^2 \, A_1 + \alpha_7 \, \abs{A_1}^4 A_1\right),
        \end{align*}
        which can be used to simplify \eqref{actual-A_3} into
        \begin{multline}
            \alpha_1 \, A_{3_{X \! X}} + i \, \alpha_2 \, A_{3_X} + i \, \alpha_3 \, \abs{A_1}^2 \, A_{3_X} + i \, \alpha_4 \, A_1^2 \, \bar A_{3_X} + \alpha_5 \, A_3 + 2 \, \alpha_6 \, \abs{A_1}^2 \, A_3 + 3 \, \alpha_7 \, \abs{A_1}^4 \, A_3
            \\
            + i \, \alpha_3 \, \bar A_1 \, A_{1_X} \, A_3 + 2 \, i \, \alpha_4 \, A_1 \, \bar A_{1_X} \, A_3 + \alpha_6 \, A_1^2 \, \bar A_3 + 2 \, \alpha_7 \, \abs{A_1}^2 \, A_1^2 \, \bar A_3 + i \, \alpha_3 \, A_1 \, A_{1_X} \, \bar A_3
            \\
            + i \, \alpha_{1, 4} \, A_{1_X} + i \, \alpha_{2, 4} \, \abs{A_1}^2 \, A_{1_X} + i \, \alpha_{3, 4} \, \abs{A_1}^4 \, A_{1_X} + i \, \alpha_{4, 4} \, A_1^2 \, \bar A_{1_X} + i \, \alpha_{5, 4} \, \abs{A_1}^2 \, A_1^2 \, \bar A_{1_X}
            \\
            + \alpha_{6, 4} \, \bar A_1 \, \left(A_{1_X}\right)^2 + \alpha_{7, 4} \, A_1 \, \abs{A_{1_X}}^2 + \alpha_{8, 4} \, A_1 + \alpha_{9, 4} \, \abs{A_1}^2 \, A_1 + \alpha_{10, 4} \, \abs{A_1}^4 \, A_1 + \alpha_{11, 4} \, \abs{A_1}^6 \, A_1
            \\
            + 2 \, \alpha_1 \, A_{1_{X \! \Xi}} + i \, \alpha_2 \, A_{1_\Xi} + i \, \alpha_3 \, \abs{A_1}^2 \, A_{1_\Xi} + i \, \alpha_4 \, A_1^2 \, \bar A_{1_\Xi} = 0, \label{messy-A_3}
        \end{multline}
        where
        \begin{align*}
            \alpha_{1, 4} &= \alpha_{5, 3} - \frac{\alpha_5}{\alpha_1} \, \alpha_{1, 3} - \frac{\alpha_2}{\alpha_1} \, \left(\alpha_{2, 3} + \frac{\alpha_2}{\alpha_1} \, \alpha_{1, 3}\right),
            \\
            \alpha_{2, 4} &= \alpha_{6, 3} - \frac{\alpha_2}{\alpha_1} \, \alpha_{3, 3} - 2 \, \frac{\alpha_6}{\alpha_1} \, \alpha_{1, 3} - \frac{\alpha_3}{\alpha_1} \, \left(\alpha_{2, 3} + 2 \, \frac{\alpha_2}{\alpha_1} \, \alpha_{1, 3}\right),
            \\
            \alpha_{3, 4} &= \alpha_{7, 3} - 3 \, \frac{\alpha_7}{\alpha_1} \, \alpha_{1, 3} - \frac{\alpha_3}{\alpha_1} \, \left(\alpha_{3, 3} + \frac{\alpha_3}{\alpha_1} \, \alpha_{1, 3}\right) + \frac{\alpha_4}{\alpha_1} \, \left(\alpha_{4, 3} + \frac{\alpha_4}{\alpha_1} \, \alpha_{1, 3}\right),
            \\
            \alpha_{4, 4} &= \alpha_{8, 3} + \frac{\alpha_2}{\alpha_1} \, \alpha_{4, 3} - \frac{\alpha_4}{\alpha_1} \, \alpha_{2, 3} - \frac{\alpha_6}{\alpha_1} \, \alpha_{1, 3},
            \\
            \alpha_{5, 4} &= \alpha_{9, 3} + \frac{\alpha_3}{\alpha_1} \, \alpha_{4, 3} - \frac{\alpha_4}{\alpha_1} \, \alpha_{3, 3} - 2 \, \frac{\alpha_7 \, \alpha_{1, 3}}{\alpha_1},
            \\
            \alpha_{6, 4} &= \alpha_{10, 3} + \frac{\alpha_3 \, \alpha_{1, 3}}{\alpha_1}
            \\
            \alpha_{7, 4} &= \alpha_{11, 3} + \frac{\left(\alpha_3 + 2 \, \alpha_4\right) \, \alpha_{1, 3}}{\alpha_1},
            \\
            \alpha_{8, 4} &= \alpha_{12, 3} - \frac{\alpha_5}{\alpha_1} \, \left(\alpha_{2, 3} + \frac{\alpha_2 \, \alpha_{1, 3}}{\alpha_1}\right),
            \\
            \alpha_{9, 4} &= \alpha_{13, 3} - \frac{\alpha_5}{\alpha_1} \, \left(\alpha_{3, 3} + \alpha_{4, 3} + \frac{\left(\alpha_3 + \alpha_4\right) \, \alpha_{1, 3}}{\alpha_1}\right) - \frac{\alpha_6}{\alpha_1} \, \left(\alpha_{2, 3} + \frac{\alpha_2 \, \alpha_{1, 3}}{\alpha_1}\right),
            \\
            \alpha_{10, 4} &= \alpha_{14, 3} - \frac{\alpha_6}{\alpha_1} \, \left(\alpha_{3, 3} + \alpha_{4, 3} + \frac{\left(\alpha_3 + \alpha_4\right) \, \alpha_{1, 3}}{\alpha_1}\right) - \frac{\alpha_7}{\alpha_1} \, \left(\alpha_{2, 3} + \frac{\alpha_2 \, \alpha_{1, 3}}{\alpha_1}\right),
            \\
            \alpha_{11, 4} &= \alpha_{15, 3} - \frac{\alpha_7}{\alpha_1} \, \left(\alpha_{3, 3} + \alpha_{4, 3} + \frac{\left(\alpha_3 + \alpha_4\right) \, \alpha_{1, 3}}{\alpha_1}\right).
        \end{align*}
        Now, we proceed to solve equation \eqref{messy-A_3}. First, we set $A_3 = \left(R_3 + i \, R_1 \, \varphi_3\right) \, e^{i \, \varphi_1}$, where $R_3$, and $\varphi_3$ are real functions. This implies that we obtain two equations, one corresponding to the real and another for the imaginary part of the resulting equation. The first one, after using equations \eqref{varphi} and \eqref{diffusionequation} to simplify it, becomes
        \begin{align*}
            R_1^2 \, \varphi_{3_{X \! X}} + 2 \, R_1 \, R_{1_X} \, \varphi_{3_X} = - \frac{\alpha_{1, 4} \, R_1 \, R_{1_X}}{\alpha_1} - \frac{\left(\alpha_3 + \alpha_4\right) \, \left(R_1^3 \, R_3\right)_X}{2 \, \alpha_1} + \left(\frac{\alpha_2 \, \alpha_{6, 4}}{\alpha_1^2} - \frac{\left(\alpha_{2, 4} + \alpha_{4, 4}\right)}{\alpha_1}\right) \, R_1^3 \, R_{1_X}
            \\
            + \left(\frac{\left(\alpha_3 + \alpha_4\right) \, \alpha_{6, 4}}{2 \, \alpha_1^2} - \frac{\left(\alpha_{3, 4} + \alpha_{5, 4}\right)}{\alpha_1}\right) \, R_1^5 \, R_{1_X},
        \end{align*}
        which implies
        \begin{multline*}
            \varphi_{3_X} = - \frac{\alpha_2}{2 \, \alpha_1} - \frac{\alpha_{1, 4}}{2 \, \alpha_1} - \zeta_\Xi - \frac{\left(\alpha_3 + \alpha_4\right) \, R_1 \, R_3}{2 \, \alpha_1} + \left(\frac{\alpha_2 \, \alpha_{6, 4}}{4 \, \alpha_1^2} - \frac{\alpha_{2, 4} + \alpha_{4, 4}}{4 \, \alpha_1}\right) \, R_1^2
            \\
            + \left(\frac{\left(\alpha_3 + \alpha_4\right) \, \alpha_{6, 4}}{12 \, \alpha_1^2} - \frac{\alpha_{3, 4} + \alpha_{5, 4}}{6 \, \alpha_1}\right) \, R_1^4 + \frac{4 \, \alpha_1 \, \omega_3}{R_1^2}.
        \end{multline*}
        where $\omega_3 \in \mathbb R$ is a constant of integration. With this, the second equation becomes
        \begin{align}
            \alpha_1 \, R_{3_{X \! X}} - \alpha_1 \, \left(\beta_1 + 6 \, \beta_3 \, R_1^2 + 15 \, \beta_5 \, R_1^4\right) \, R_3 = 2 \, \alpha_1 \, \beta_{\dagger, 3} \, R_1 + \alpha_1 \, \frac{\dd p}{\dd R_1}\left(R_1\right) + \alpha_1 \, \beta_{*, 3} \, R_1 \, \left(R_1'\right)^2, \label{re-R_3}
        \end{align}
        where
        \begin{align*}
            \beta_{\dagger, 3} &= \left(\alpha_3 - 3 \, \alpha_4\right) \, \omega_3 - \frac{\alpha_2 \, \left(\alpha_2 + \alpha_{1, 4}\right)}{4 \, \alpha_1^2} - \frac{\alpha_{8, 4}}{2 \, \alpha_1} - \frac{\alpha_2}{2 \, \alpha_1} \, \zeta_\Xi,
            \\
            \alpha_1 \, \frac{\dd p}{\dd R_1}\left(R_1\right) &= - \beta_{3, 3} \, R_1^3 - \beta_{5, 3} \, R_1^5 - \beta_{7, 3} \, R_1^7,
            \\
            \beta_{3, 3} &= \frac{\alpha_2 \, \left(\alpha_3 - \alpha_4 + \alpha_{2, 4} - \alpha_{4, 4}\right)}{2 \, \alpha_1} + \frac{\alpha_{1, 4} \, \left(\alpha_3 - \alpha_4\right)}{2 \, \alpha_1} + \frac{\alpha_2^2 \, \left(\alpha_{7, 4} - \alpha_{6, 4}\right)}{4 \, \alpha_1^2} + \alpha_{9, 4},
            \\
            \beta_{5, 3} &= \frac{\alpha_2 \, \left(\alpha_{3, 4} - \alpha_{5, 4}\right)}{2 \, \alpha_1} + \frac{\alpha_3 \, \left(3 \, \alpha_{2, 4} - \alpha_{4, 4}\right)}{8 \, \alpha_1} - \frac{\alpha_4 \, \left(\alpha_{2, 4} + 5 \, \alpha_{4, 4}\right)}{8 \, \alpha_1} + \frac{\alpha_2 \, \alpha_3 \, \left(2 \, \alpha_{7, 4} - 3 \, \alpha_{6, 4}\right)}{8 \, \alpha_1^2}
            \\
            & \quad + \frac{\alpha_2 \, \alpha_4 \, \left(\alpha_{6, 4} + 2 \, \alpha_{7, 4}\right)}{8 \, \alpha_1^2} + \alpha_{10, 4},
            \\
            \beta_{7, 3} &= \frac{\alpha_3 \, \left(2 \, \alpha_{3, 4} - \alpha_{5, 4}\right)}{6 \, \alpha_1} - \frac{\alpha_4 \, \alpha_{5, 4}}{2 \, \alpha _1} + \frac{\alpha_3^2 \, \left(3 \, \alpha_{7, 4} - 5 \, \alpha_{6, 4}\right)}{48 \, \alpha_1^2} + \frac{\alpha_3 \, \alpha_4 \, \left(3 \, \alpha_{7, 4} - \alpha_{6, 4}\right)}{24 \, \alpha_1^2}
            \\
            & \quad + \frac{\alpha_4^2 \, \left(\alpha_{6, 4} + \alpha_{7, 4}\right)}{16 \, \alpha_1^2} + \alpha_{11, 4},
            \\
            \beta_{*, 3} &= - \frac{1}{\alpha_1} \, \left(\alpha_{6, 4} + \alpha_{7, 4}\right).
        \end{align*}
        Now, note that \eqref{re-R_3} is a linear equation with respect to $R_3$ that has a homogeneous solution given by
        \begin{align*}
            R_{3, h} = R_1' \, \left(\omega_1 + \omega_2 \, \int^X \frac{\dd s}{\left(R_1'\right)^2}\right),
        \end{align*}
        which implies that a particular solution is given by
        \begin{multline*}
            R_{3, p} = R_1' \, \left(- \beta_{\dagger, 3} \, \left(R_1^2 \, \int^X \frac{\dd s}{\left(R_1'\right)^2} - \int^X \frac{R_1^2}{\left(R_1'\right)^2} \, \dd \sigma\right) - \left(p\left(R_1\right) \, \int^X \frac{\dd s}{\left(R_1'\right)^2} - \int^X \frac{p\left(R_1\right)}{\left(R_1'\right)^2} \, \dd \sigma\right)\right.
            \\
            \qquad - \beta_{*, 3} \, \int^X \, \left(R_1 \, \left(R_1'\right)^3 \, \int^\sigma \frac{\dd s}{\left(R_1'\right)^2}\right) \, \dd \sigma
            \\
            \left. + \left(\beta_{\dagger, 3} \, R_1^2 + p\left(R_1\right) + \beta_{*, 3} \, \int^X R_1 \, \left(R_1'\right)^3 \, \dd \sigma\right) \, \int^X \frac{\dd s}{\left(R_1'\right)^2}\right)
            \\
            = R_1' \, \left(\beta_{\dagger, 3} \, \int^X \frac{R_1^2}{\left(R_1'\right)^2} \, \dd \sigma + \int^X \frac{p\left(R_1\right)}{\left(R_1'\right)^2} \, \dd \sigma\right.
            \\
            \left. + \beta_{*, 3} \, \left(\int^X R_1 \, \left(R_1'\right)^3 \, \dd \sigma\right) \, \left(\int^X \frac{\dd s}{\left(R_1'\right)^2}\right) - \beta_{*, 3} \, \int^X \, \left(R_1 \, \left(R_1'\right)^3 \, \int^\sigma \frac{\dd s}{\left(R_1'\right)^2}\right) \, \dd \sigma\right).
        \end{multline*}
        Therefore,
        \begin{align*}
            R_3 &= R_{3, h} + R_{3, p}
            \\
            &= \begin{multlined}[t]
                R_1' \, \left(\omega_1 + \int^X \frac{\left(\omega_2 + \beta_{\dagger, 3} \, R_1^2 + p\left(R_1\right)\right)}{\left(R_1'\right)^2} \, \dd \sigma\right.
                \\
                \left. + \beta_{*, 3} \, \left(\int^X R_1 \, \left(R_1'\right)^3 \, \dd \sigma\right) \, \left(\int^X \frac{\dd s}{\left(R_1'\right)^2}\right) - \beta_{*, 3} \, \int^X \, \left(R_1 \, \left(R_1'\right)^3 \, \int^\sigma \frac{\dd s}{\left(R_1'\right)^2}\right) \, \dd \sigma\right),
            \end{multlined}
        \end{align*}
        which lets us note that, as $X \to - \infty$, we have
        \begin{align*}
            A_3 &\sim \left(-\frac{\beta_3}{\beta_1}\right)^{3/2} \, \frac{ \left(- \beta_1^3 \, \beta_{\star, 3} + 6 \, \beta_3^2 \, \omega_2 + 24 \, i \, \alpha_1 \, \sqrt{\beta_1} \, \beta_3^2 \, \omega_3\right)}{12 \, \sqrt{2} \, \beta_3^3} \, e^{- \sqrt{\beta_1} \, X} \, e^{i \, \varphi_1},
        \end{align*}
        implying that we must set
        \begin{align*}
            \omega_2 = - \frac{\beta_1^3 \, \left(\alpha_{6, 4} + \alpha_{7, 4}\right)}{6 \, \alpha_1 \, \beta_3^2}, \qquad \text{and} \qquad \omega_3 = 0,
        \end{align*}
        to keep $A_3$ bounded as $X \to - \infty$. On the other hand, using these variables, as $X \to \infty$ we have
        \begin{align*}
            A_3 &\sim \frac{\left(- \frac{\beta_3}{\beta_1}\right)^{3/2}}{24 \, \sqrt{2} \, \alpha_1^2 \, \beta_3^6} \, \left(2 \, \alpha_1 \, \beta_3 + i \, \sqrt{\beta_1} \, \left(\alpha_3 + \alpha_4\right)\right) \, \left(- 6 \, \alpha_1 \, \beta_1 \, \beta_3^3 \, \beta_{\dagger, 3} + 3 \, \alpha_1 \, \beta_3^4 \, \omega_2 - 3 \, \beta_1^2 \, \beta_3^2 \, \beta_{3, 3}\right.
            \\
            & \quad \left. + 4 \, \beta_1^3 \, \beta_3 \, \beta_{5, 3} - 6 \, \beta_1^4 \, \beta_{7, 3}\right) \, e^{2 \, \sqrt{\beta_1} \, X} \, e^{i \, \varphi_1}
            \\
            & \quad - i \, \frac{\beta_1^2 \, \left(- \frac{\beta_3}{\beta_1}\right)^{3/2}}{12 \, \sqrt{2} \, \alpha_1^3 \, \beta_3^6} \, \left(3 \, \alpha_2 \, \beta_3^3 \, \left(4 \, \alpha_1 \, \beta_1 \, \alpha_{6, 4} + \left(\alpha_3 + \alpha_4\right) \, \alpha_{1, 4} + 4 \, \alpha_1^2 \, \beta_3\right)\right.
            \\
            & \quad + 2 \, \alpha_1 \, \left(2 \, \alpha_1 \, \beta_3^2 \, \left(4 \, \beta_1^2 \, \left(\alpha_{3, 4} + \alpha_{5, 4}\right) - 3 \, \beta_1 \, \beta_3 \, \left(\alpha_{2, 4} + \alpha_{4, 4}\right) + 3 \, \beta_3^2 \, \alpha_{1, 4}\right)\right.
            \\
            & \quad \left. + \left(\alpha_3 + \alpha_4\right) \, \left(- \beta_1^2 \, \beta_3 \, \left(\beta_3 \, \left(5 \, \alpha_{6, 4} + \alpha_{7, 4}\right) + 4 \, \beta_{5, 3}\right) + 3 \, \beta_3^3 \, \alpha_{8, 4} + 12 \, \beta_1^3 \, \beta_{7, 3}\right)\right)
            \\
            & \quad \left. + 3 \, \alpha_2^2 \, \left(\alpha_3 + \alpha_4\right) \, \beta_3^3 + 6 \, \alpha_1 \, \beta_3^3 \, \left(\alpha_2 \, \left(\alpha_3 + \alpha_4\right) + 4 \, \alpha_1^2 \, \beta_3\right) \, \zeta_\Xi\right) \, X \, e^{2 \, \sqrt{\beta_1} \, X} \, e^{i \, \varphi_1}.
        \end{align*}
        Once again, this expression must equal zero for $A_3$ to remain finite as $X \to \infty$. In particular, to obtain explicit conditions that make this quantity equal to zero, note that if $\alpha_2 = 0$, then we can choose
        \begin{align}
            \beta_{7, 3} &= - \frac{\beta_3 \, \left(\beta_1^2 \, \left(\beta_3 \, \left(\alpha_{6, 4} + \alpha_{7, 4}\right) - 8 \, \beta_{5, 3}\right) - 6 \, \beta_3^2 \, \alpha_{8, 4} + 6 \, \beta_1 \, \beta_3 \, \beta_{3, 3}\right)}{12 \, \beta_1^3}, \label{Maxwell_correctionalpha_20}
            \\
            \zeta_\Xi &= \frac{1}{12 \, \alpha_1^2 \, \beta_3^3} \, \left(\left(\alpha_3 + \alpha_4\right) \, \left(2 \, \beta_1^2 \, \left(\beta_3 \, \left(3 \, \alpha_{6, 4} + \alpha_{7, 4}\right) - 2 \, \beta_{5, 3}\right) - 9 \, \beta_3^2 \, \alpha_{8, 4} + 6 \, \beta_1 \, \beta_3 \, \beta_{3, 3}\right)\right. \notag
            \\
            & \quad \left. - 2 \, \alpha_1 \, \beta_3 \, \left(4 \, \beta_1^2 \, \left(\alpha_{3, 4} + \alpha_{5, 4}\right) - 3 \, \beta_1 \, \beta_3 \, \left(\alpha_{2, 4} + \alpha_{4, 4}\right) + 3 \, \beta_3^2 \, \alpha_{1, 4}\right)\right), \label{zetaprimedefalpha_20}
        \end{align}
        whilst, if $\alpha_2 \neq 0$, then we choose
        \begin{align}
            \alpha_{2, 4} &= \frac{1}{12 \, \alpha_1^2 \, \beta_1 \, \beta_3^3} \, \left(3 \, \alpha_2 \, \beta_3^3 \, \left(2 \, \alpha_1 \, \left(2 \, \beta_1 \, \alpha_{6, 4} + 2 \, \alpha_1 \, \beta_3 + \left(\alpha_3 + \alpha_4\right) \, \zeta_\Xi\right) + \left(\alpha_3 + \alpha_4\right) \, \alpha_{1, 4}\right)\right. \notag
            \\
            & \quad + 2 \, \alpha_1 \, \left(2 \, \alpha_1 \, \beta_3^2 \, \left(4 \, \beta_1^2 \, \left(\alpha_{3, 4} + \alpha_{5, 4}\right) - 3 \, \beta_3 \, \beta_1 \, \alpha_{4, 4} + 3 \, \beta_3^2 \, \alpha_{1, 4}\right) + 3 \, \alpha_2^2 \, \left(\alpha_3 + \alpha_4\right) \, \beta_3^3\right. \notag
            \\
            & \quad \left.\left. + \left(\alpha_3 + \alpha_4\right) \, \left(- \beta_1^2 \, \beta_3 \, \left(\beta_3 \, \left(5 \, \alpha_{6, 4} + \alpha_{7, 4}\right) + 4 \, \beta_{5, 3}\right) + 3 \, \beta_3^3 \, \alpha_{8, 4} + 12 \, \beta_1^3 \, \beta_{7, 3}\right) + 12 \, \alpha_1^2 \, \beta_3^4 \, \zeta_\Xi\right)\right), \label{Maxwell_correctionalpha_2n0}
            \\
            \zeta_\Xi &= \frac{1}{6} \, \left(\frac{\beta_1^2 \, \beta_3 \, \left(\beta_3 \, \left(\alpha_{6, 4} + \alpha_{7, 4}\right) - 8 \, \beta_{5, 3}\right) - 6 \, \beta_3^3 \, \alpha_{8, 4} + 12 \, \beta_1^3 \, \beta_{7, 3} + 6 \, \beta_1 \, \beta_3^2 \, \beta_{3, 3}}{\alpha_2 \, \beta_3^3} - \frac{3 \, \left(\alpha_2 + \alpha_{1, 4}\right)}{\alpha_1}\right), \label{zetaprimedefalpha_2n0}
        \end{align}
        where \eqref{Maxwell_correctionalpha_20} and \eqref{Maxwell_correctionalpha_2n0} correspond to equations that must be solved in terms of the parameters of the expansion in order to obtain a correction to the Maxwell point, and \eqref{zetaprimedefalpha_20} and \eqref{zetaprimedefalpha_2n0} correspond to the values of $\zeta_\Xi$ in these two cases, where the equalities \eqref{Maxwell_correctionalpha_20} and \eqref{Maxwell_correctionalpha_2n0} have been used to explicitly state the value of $\zeta_\Xi$ in each instance. We highlight that in both cases, $\zeta_\Xi \in \mathbb R$ is a constant, which implies that $\zeta$ is a linear function on $\Xi$.

        As a final remark, we highlight that the integrals in this appendix were defined as $\ds{\int^X \dd s}$ just for simplicity, as the integrals for the homogeneous solution do converge when computing $\ds{\int_{X_0}^X \dd s}$, but the same integrals for the particular solution do not. In any case, we highlight that this does not change the value of the leading-order coefficient of $A_3$ at $X = X_0$, which is given by
        \begin{multline*}
            - \frac{e^{\pi \, \xi + \zeta \, i}}{192 \, \alpha_1^3 \, \beta_1^{5/4} \, \left(- \beta_3\right)^{9/2} \, \sqrt{-\frac{1}{X_0 - X}} \, \left(X_0 - X\right)^{2 - \eta \, i}}
            \\
            \times \left(- 3 \, (2 \, \pi + 3 \, i) \, \alpha_2 \, \beta_3^3 \, \left(\left(\alpha_3 + \alpha_4\right) \, \sqrt{\beta_1} + 2 \, i \, \alpha_1 \, \beta_3\right) \, \left(\alpha_{1, 4} + 2 \, \alpha_1 \, \zeta_\Xi\right)\right.
            \\
            + i \, \alpha_1 \, \left(2 \, \alpha_1 \, \beta_3 \, \left(\beta_3 \, \left(- 16 \, \beta_1^{5/2} \, \left(\alpha_{3, 4} + \alpha_{5, 4}\right) + i \, \beta_1^2 \, \left(5 \, \beta_3 \, \left(\alpha_{6, 4} + \alpha_{7, 4}\right) + 8 \, \beta_{5, 3}\right) - 6 \, (2 \, \pi + 3 \, i) \, \beta_3^2 \, \alpha_{8, 4}\right.\right.\right.
            \\
            \left.\left. + 6 \, i \, \beta_1 \, \beta_3 \, \beta_{3, 3}\right) + 12 \, (2 \, \pi - 3 \, i) \, \beta_1^3 \, \beta_{7, 3}\right) + \left(\alpha_3 + \alpha_4\right) \, \sqrt{\beta_1} \, \left(\beta_1^2 \, \beta_3 \, \left(\beta_3 \, \left(15 \, \alpha_{6, 4} - \alpha_{7, 4}\right) + 8 \, \beta_{5, 3}\right)\right.
            \\
            \left.\left. + 6 \, (- 3 + 2 \, \pi \, i) \, \beta_3^3 \, \alpha_{8, 4} + 12 \, (- 5 - 2 \, \pi \, i) \, \beta_1^3 \, \beta_{7, 3} + 6 \, \beta_3^2 \, \beta_1 \, \beta_{3, 3}\right)\right)
            \\
            \left. - 3 \, (2 \, \pi + 3 \, i) \, \alpha_2^2 \, \beta_3^3 \, \left(\left(\alpha_3 + \alpha_4\right) \, \sqrt{\beta_1} + 2 \, \alpha_1 \, \beta_3 \, i\right)\right).
        \end{multline*}
        
    \section{Equation for the remainder up to order 7} \label{sec:rem_order7}
        In this appendix, in order to be complete, we state the key expressions of each component of the remainder up to order six, in order to obtain its solvability condition at order seven. We highlight that the expressions below will depend on the vectors that were obtained in Appendix \ref{sec:7expansion}. In particular, we have that
        {\allowdisplaybreaks
        \begin{align*}
            \mbf R_N^{[5]} &= 2 \, A_1 \, \bar B_1 \, \mbf W_0^{[5]} + 4 \, \abs{A_1}^2 \, A_1 \, \bar B_1 \, \mbf W_{0, 2}^{[5]} + i \, A_{1_X} \, \bar B_1 \, \mbf W_{0, 3}^{[5]} + i \, \bar A_1 \, B_{1_X} \, \mbf W_{0, 3}^{[5]} + B_{1_{X \! X}} \, e^{i x} \, \mbf W_1^{[5]}
            \\
            & \quad + i \, B_{1_X} \, e^{i x} \, \mbf W_{1, 2}^{[5]} + i \, \abs{A_1}^2 \, B_{1_X} \, e^{i x} \, \mbf W_{1, 3}^{[5]} + i \, A_1 \, A_{1_X} \, \bar B_1 \, e^{i x} \, \mbf W_{1, 3}^{[5]} + i \, \bar A_1 \, A_{1_X} \, B_1 \, e^{i x} \, \mbf W_{1, 3}^{[5]}
            \\
            & \quad + i \, A_1^2 \, \bar B_{1_X} \, e^{i x} \, \mbf W_{1, 4}^{[5]} + 2 \, i \, A_1 \, \bar A_{1_X} \, B_1 \, e^{i x} \, \mbf W_{1, 4}^{[5]} + B_1 \, e^{i x} \, \mbf W_{1, 5}^{[5]} + A_1^2 \, \bar B_1 \, e^{i x} \, \mbf W_{1, 6}^{[5]}
            \\
            & \quad + 2 \, \abs{A_1}^2 \, B_1 \, e^{i x} \, \mbf W_{1, 6}^{[5]} + 2 \, \abs{A_1}^2 \, A_1^2 \, \bar B_1 \, e^{i x} \, \mbf W_{1, 7}^{[5]} + 3 \, \abs{A_1}^4 \, B_1 \, e^{i x} \, \mbf W_{1, 7}^{[5]} + i \, B_{1_\Xi} \, e^{i x} \, \mbf W_{1, 3}^{[3]}
            \\
            & \quad + 2 \, A_1 \, B_1 \, e^{2 i x} \, \mbf W_2^{[5]} + A_1^3 \, \bar B_1 \, e^{2 i x} \, \mbf W_{2, 2}^{[5]} + 3 \, \abs{A_1}^2 \, A_1 \, B_1 \, e^{2 i x} \, \mbf W_{2, 2}^{[5]} + i \, A_{1_X} \, B_1 \, e^{2 i x} \, \mbf W_{2, 3}^{[5]}
            \\
            & \quad + i \, A_1 \, B_{1_X} \, e^{2 i x} \, \mbf W_{2, 3}^{[5]} + 3 \, A_1^2 \, B_1 \, e^{3 i x} \, \mbf W_3^{[5]} + A_1^4 \, \bar B_1 \, e^{3 i x} \, \mbf W_{3, 2}^{[5]} + 4 \, \abs{A_1}^2 \, A_1^2 \, B_1 \, e^{3 i x} \, \mbf W_{3, 2}^{[5]}
            \\
            & \quad + i \, A_1^2 \, B_{1_X} \,  e^{3 i x} \, \mbf W_{3, 3}^{[5]} + 2 \, i \, A_1 \, A_{1_X} \, B_1 \, e^{3 i x} \, \mbf W_{3, 3}^{[5]} + 2 \, A_2 \, \bar B_1 \, \mbf W_0^{[4]} + 2 \, A_1 \, \bar B_2 \, \mbf W_0^{[4]}
            \\
            & \quad + 4 \, \abs{A_1}^2 \, A_1 \, \bar B_2 \, \mbf W_{0, 2}^{[4]} + 4 \, A_1^2 \, \bar A_2 \, \bar B_1 \, \mbf W_{0, 2}^{[4]} + 8 \, \abs{A_1}^2 \, A_2 \, \bar B_1 \, \mbf W_{0, 2}^{[4]} + i \, A_{1_X} \, \bar B_2 \, \mbf W_{0, 3}^{[4]}
            \\
            & \quad + i \, A_{2_X} \, \bar B_1 \, \mbf W_{0, 3}^{[4]} + i \, \bar A_2 \, B_{1_X} \, \mbf W_{0, 3}^{[4]} + i \, \bar A_1 \, B_{2_X} \, \mbf W_{0, 3}^{[4]} + 2 \, A_2 \, \bar B_2 \, \mbf W_0^{[3]} + 2 \, A_3 \, \bar B_1 \, \mbf W_0^{[3]}
            \\
            & \quad + 2 \, A_1 \, \bar B_3 \, \mbf W_0^{[3]} + 2 \, A_1 \, \bar B_4 \, \mbf W_0^{[2]} + 2 \, A_3 \, \bar B_2 \, \mbf W_0^{[2]} + 2 \, A_4 \, \bar B_1 \, \mbf W_0^{[2]} + 2 \, A_2 \, \bar B_3 \, \mbf W_0^{[2]} + B_2 \, e^{i x} \, \mbf W_1^{[4]}
            \\
            & \quad + A_1^2 \, \bar B_2 \, e^{i x} \, \mbf W_{1, 2}^{[4]} + 2 \, \abs{A_1}^2 \, B_2 \, e^{i x} \, \mbf W_{1, 2}^{[4]} + 2 \, \bar A_1 \, A_2 \, B_1 \, e^{i x} \, \mbf W_{1, 2}^{[4]} + 2 \, A_1 \, A_2 \, \bar B_1 \, e^{i x} \, \mbf W_{1, 2}^{[4]}
            \\
            & \quad + 2 \, A_1 \, \bar A_2 \, B_1 \, e^{i x} \, \mbf W_{1, 2}^{[4]} + i \, B_{2_X} \, e^{i x} \, \mbf W_{1, 3}^{[4]} + B_3 \, e^{i x} \, \mbf W_1^{[3]} + A_2^2 \, \bar B_1 \, e^{i x} \, \mbf W_{1, 2}^{[3]}
            \\
            & \quad + A_1^2 \, \bar B_3 \, e^{i x} \, \mbf W_{1, 2}^{[3]} + 2 \, A_1 \, \bar A_2 \, B_2 \, e^{i x} \, \mbf W_{1, 2}^{[3]} + 2 \, \bar A_1 \, A_2 \, B_2 \, e^{i x}  \, \mbf W_{1, 2}^{[3]} + 2 \, A_1 \, A_2 \, \bar B_2 \, e^{i x} \, \mbf W_{1, 2}^{[3]}
            \\
            & \quad + 2 \, \abs{A_2}^2 \, B_1 \, e^{i x} \, \mbf W_{1, 2}^{[3]} + 2 \, \abs{A_1}^2 \, B_3 \, e^{i x} \, \mbf W_{1, 2}^{[3]} + 2 \, A_1 \, \bar A_3 \, B_1 \, e^{i x} \, \mbf W_{1, 2}^{[3]} + 2 \, \bar A_1 \, A_3 \, B_1 \, e^{i x} \, \mbf W_{1, 2}^{[3]}
            \\
            & \quad + 2 \, A_1 \, A_3 \, \bar B_1 \, e^{i x} \, \mbf W_{1, 2}^{[3]} + i \, B_{3_X} \, e^{i x} \, \mbf W_{1, 3}^{[3]} + B_4 \, e^{i x} \, \mbf W_1^{[2]} + B_5 \, e^{i x} \, \bs \phi_1^{[1]} + 2 \, A_1 \, B_2 \, e^{2 i x} \, \mbf W_2^{[4]}
            \\
            & \quad + 2 \, A_2 \, B_1 \, e^{2 i x} \, \mbf W_2^{[4]} + A_1^3 \, \bar B_2 \, e^{2 i x} \, \mbf W_{2, 2}^{[4]} + 3 \, \abs{A_1}^2 \, A_1 \, B_2 \, e^{2 i x} \, \mbf W_{2, 2}^{[4]} + 3 \, A_1^2 \, \bar A_2 \, B_1 \, e^{2 i x} \, \mbf W_{2, 2}^{[4]}
            \\
            & \quad + 3 \, A_1^2 \, A_2 \, \bar B_1 \, e^{2 i x} \, \mbf W_{2, 2}^{[4]} + 6 \, \abs{A_1}^2 \, A_2 \, B_1 \, e^{2 i x} \, \mbf W_{2, 2}^{[4]} + i \, A_1 \, B_{2_X} \, e^{2 i x} \, \mbf W_{2, 3}^{[4]} + i \, A_{1_X} \, B_2 \, e^{2 i x} \, \mbf W_{2, 3}^{[4]}
            \\
            & \quad + i \, A_2 \, B_{1_X} \, e^{2 i x} \, \mbf W_{2, 3}^{[4]} + i \, A_{2_X} \, B_1 \, e^{2 i x} \, \mbf W_{2, 3}^{[4]} + 2 \, A_2 \, B_2 \, e^{2 i x} \, \mbf W_2^{[3]} + 2 \, A_1 \, B_3 \, e^{2 i x} \, \mbf W_2^{[3]}
            \\
            & \quad + 2 \, A_3 \, B_1 \, e^{2 i x} \, \mbf W_2^{[3]} + 2 \, A_3 \, B_2 \, e^{2 i x} \, \mbf W_2^{[2]} + 2 \, A_2 \, B_3 \, e^{2 i x} \, \mbf W_2^{[2]} + 2 \, A_1 \, B_4 \, e^{2 i x} \, \mbf W_2^{[2]}
            \\
            & \quad + 2 \, A_4 \, B_1 \, e^{2 i x} \, \mbf W_2^{[2]} + 3 \, A_1^2 \, B_2 \, e^{3 i x} \, \mbf W_3^{[4]} + 6 \, A_1 \, A_2 \, B_1 \, e^{3 i x} \, \mbf W_3^{[4]} + 3 \, A_2^2 \, B_1 \, e^{3 i x} \, \mbf W_3^{[3]}
            \\
            & \quad + 3 \, A_1^2 \, B_3 \, e^{3 i x} \, \mbf W_3^{[3]} + 6 \, A_1 \, A_2 \, B_2 \, e^{3 i x} \, \mbf W_3^{[3]} + 6 \, A_1 \, A_3 \, B_1 \, e^{3 i x} \, \mbf W_3^{[3]} + \ldots + c.c.,
        \end{align*}
        }
        where '$\ldots$' represents terms that are multiples of $e^{4ix}$ or $e^{5ix}$. Moreover,
        {\allowdisplaybreaks
        \begin{align*}
            \mbf R_N^{[6]} &= 2 \, A_1 \, \bar B_1 \, \mbf W_0^{[6]} + 4 \, \abs{A_1}^2 \, A_1 \, \bar B_1 \, \mbf W_{0, 2}^{[6]} + 6 \, \abs{A_1}^4 \, A_1 \, \bar B_1 \, \mbf W_{0, 3}^{[6]} + 2 \, A_{1_X} \, \bar B_{1_X} \, \mbf W_{0, 4}^{[6]} + i \, \bar A_1 \, B_{1_X} \, \mbf W_{0, 5}^{[6]}
            \\
            & \quad + i \, A_{1_X} \, \bar B_1 \, \mbf W_{0, 5}^{[6]} + i \, \abs{A_1}^2 \, \bar A_1 \, B_{1_X} \, \mbf W_{0, 6}^{[6]} + i \, \bar A_1^2 \, A_{1_X} \, B_1 \, \mbf W_{0, 6}^{[6]} + 2 \, i \, \abs{A_1}^2 \, A_{1_X} \, \bar B_1 \, \mbf W_{0, 6}^{[6]}
            \\
            & \quad + A_1 \, \bar B_{1_{X \! X}} \, \mbf W_{0, 7}^{[6]} + A_{1_{X \! X}} \, \bar B_1 \, \mbf W_{0, 7}^{[6]} + i \, \bar A_1 \, B_{1_\Xi} \, \mbf W_{0, 3}^{[4]} + 2 \, A_1 \, \bar B_2 \, \mbf W_0^{[5]} + 2 \, A_2 \, \bar B_1 \, \mbf W_0^{[5]}
            \\
            & \quad + 4 \, \abs{A_1}^2 \, A_1 \, \bar B_2 \, \mbf W_{0, 2}^{[5]} + 4 \, A_1^2 \, \bar A_2 \, \bar B_1 \, \mbf W_{0, 2}^{[5]} + 8 \, \abs{A_1}^2 \, A_2 \, \bar B_1 \, \mbf W_{0, 2}^{[5]} + i \, \bar A_1 \, B_{2_X} \, \mbf W_{0, 3}^{[5]}
            \\
            & \quad + i \, A_{1_X} \, \bar B_2 \, \mbf W_{0, 3}^{[5]} + i \, \bar A_2 \, B_{1_X} \, \mbf W_{0, 3}^{[5]} + i \, A_{2_X} \, \bar B_1 \, \mbf W_{0, 3}^{[5]} + 2 \, A_2 \, \bar B_2 \, \mbf W_0^{[4]} + 2 \, A_1 \, \bar B_3 \, \mbf W_0^{[4]}
            \\
            & \quad + 2 \, A_3 \, \bar B_1 \, \mbf W_0^{[4]} + 4 \, A_1^2 \, \bar A_2 \, \bar B_2 \, \mbf W_{0, 2}^{[4]} + 4 \, A_1 \, \bar A_2^2 \, B_1 \, \mbf W_{0, 2}^{[4]} + 4 \, \abs{A_1}^2 \, A_1 \, \bar B_3 \, \mbf W_{0, 2}^{[4]}
            \\
            & \quad + 4 \, A_1^2 \, \bar A_3 \, \bar B_1 \, \mbf W_{0, 2}^{[4]} + 8 \, \abs{A_1}^2 \,  A_2 \, \bar B_2 \, \mbf W_{0, 2}^{[4]} + 8 \, \abs{A_2}^2 \, A_1 \, \bar B_1 \, \mbf W_{0, 2}^{[4]} + 8 \, \abs{A_1}^2 \, A_3 \, \bar B_1 \, \mbf W_{0, 2}^{[4]}
            \\
            & \quad + i \, \bar A_2 \, B_{2_X} \, \mbf W_{0, 3}^{[4]} + i \, A_{2_X} \, \bar B_2 \, \mbf W_{0, 3}^{[4]} + i \, \bar A_1 \, B_{3_X} \, \mbf W_{0, 3}^{[4]} + i \, \bar A_3 \, B_{1_X} \, \mbf W_{0, 3}^{[4]} + i \, A_{1_X} \, \bar B_3 \, \mbf W_{0, 3}^{[4]}
            \\
            & \quad + i \, A_{1_\Xi} \, \bar B_1 \, \mbf W_{0, 3}^{[4]} + i \, A_{3_X} \, \bar B_1 \, \mbf W_{0, 3}^{[4]} + 2 \, A_3 \, \bar B_2 \, \mbf W_0^{[3]} + 2 \, A_2 \, \bar B_3 \, \mbf W_0^{[3]} + 2 \, A_1 \, \bar B_4 \, \mbf W_0^{[3]}
            \\
            & \quad + 2 \, A_4 \, \bar B_1 \, \mbf W_0^{[3]} + 2 \, A_4 \, \bar B_2 \, \mbf W_0^{[2]} + 2 \, A_3 \, \bar B_3 \, \mbf W_0^{[2]} + 2 \, A_2 \, \bar B_4 \, \mbf W_0^{[2]} + 2 \, A_1 \, \bar B_5 \, \mbf W_0^{[2]} + 2 \, A_5 \, \bar B_1 \, \mbf W_0^{[2]}
            \\
            & \quad + B_{1_{X \! X}} \, e^{i x} \, \mbf W_1^{[6]} + i \, B_{1_X} \, e^{i x} \, \mbf W_{1, 2}^{[6]} + i \, \abs{A_1}^2 \, B_{1_X} \, e^{i x} \, \mbf W_{1, 3}^{[6]} + i \, A_1 \, A_{1_X} \, \bar B_1 \, e^{i x} \, \mbf W_{1, 3}^{[6]}
            \\
            & \quad + i \, \bar A_1 \, A_{1_X} \, B_1 \,  e^{i x} \, \mbf W_{1, 3}^{[6]} + i \, A_1^2 \, \bar B_{1_X} \, e^{i x} \, \mbf W_{1, 4}^{[6]} + 2 \, i \, A_1 \, \bar A_{1_X} \, B_1 \, e^{i x} \, \mbf W_{1, 4}^{[6]} + B_1 \, e^{i x} \, \mbf W_{1, 5}^{[6]}
            \\
            & \quad + A_1^2 \, \bar B_1 \, e^{i x} \, \mbf W_{1, 6}^{[6]} + 2 \, \abs{A_1}^2 \, B_1 \, e^{i x} \, \mbf W_{1, 6}^{[6]} + 2 \, \abs{A_1}^2 \, A_1^2 \, \bar B_1 \, e^{i x} \, \mbf W_{1, 7}^{[6]} + 3 \, \abs{A_1}^4 \, B_1 \, e^{i x} \, \mbf W_{1, 7}^{[6]}
            \\
            & \quad + i \, B_{1_\Xi} \, e^{i x} \, \mbf W_{1, 3}^{[4]} + i \, B_{2_\Xi} \, e^{i x} \, \mbf W_{1, 3}^{[3]} + B_{2_{X \! X}} \, e^{i x} \, \mbf W_1^{[5]} + i \, B_{2_X} \, e^{i x} \, \mbf W_{1, 2}^{[5]} + i \, \abs{A_1}^2 \, B_{2_X} \, e^{i x} \, \mbf W_{1, 3}^{[5]}
            \\
            & \quad + i \, \bar A_1 \, A_{1_X} \, B_2 \, e^{i x} \, \mbf W_{1, 3}^{[5]} + i \, A_1 \, A_{1_X} \, \bar B_2 \, e^{i x} \, \mbf W_{1, 3}^{[5]} + i \, \bar A_1 \, A_{2_X} \, B_1 \, e^{i x} \, \mbf W_{1, 3}^{[5]} + i \, A_{1_X} \, \bar A_2 \, B_1 \, e^{i x} \, \mbf W_{1, 3}^{[5]}
            \\
            & \quad + i \, A_1 \, A_{2_X} \, \bar B_1 \, e^{i x} \, \mbf W_{1, 3}^{[5]} + i \, \bar A_1 \, A_2 \, B_{1_X} \, e^{i x} \, \mbf W_{1, 3}^{[5]} + i \, A_1 \, \bar A_2 \, B_{1_X} \, e^{i x} \, \mbf W_{1, 3}^{[5]} + i \, A_{1_X} \, A_2 \, \bar B_1 \, e^{i x} \, \mbf W_{1, 3}^{[5]}
            \\
            & \quad + i \, A_1^2 \, \bar B_{2_X} \, e^{i x} \, \mbf W_{1, 4}^{[5]} + 2 \, i \, A_1 \, \bar A_{1_X} \, B_2 \, e^{i x} \, \mbf W_{1, 4}^{[5]} + 2 \, i \, A_1 \, A_2 \, \bar B_{1_X} \, e^{i x} \, \mbf W_{1, 4}^{[5]} + 2 \, i \, A_1 \, \bar A_{2_X} \, B_1 \, e^{i x} \, \mbf W_{1, 4}^{[5]}
            \\
            & \quad + 2 \, i \, \bar A_{1_X} \, A_2 \, B_1 \, e^{i x} \, \mbf W_{1, 4}^{[5]} + B_2 \, e^{i x} \, \mbf W_{1, 5}^{[5]} + A_1^2 \, \bar B_2 \, e^{i x} \, \mbf W_{1, 6}^{[5]} + 2 \, \abs{A_1}^2 \, B_2 \, e^{i x} \, \mbf W_{1, 6}^{[5]}
            \\
            & \quad + 2 \, \bar A_1 \, A_2 \, B_1 \, e^{i x} \, \mbf W_{1, 6}^{[5]} + 2 \, A_1 \,  \bar A_2 \, B_1 \, e^{i x} \, \mbf W_{1, 6}^{[5]} + 2 \, A_1 \, A_2 \, \bar B_1 \, e^{i x} \, \mbf W_{1, 6}^{[5]} + 2 \, \abs{A_1}^2 \, A_1^2 \, \bar B_2 \, e^{i x} \, \mbf W_{1, 7}^{[5]}
            \\
            & \quad + 2 \, A_1^3 \, \bar A_2 \, \bar B_1 \, e^{i x} \, \mbf W_{1, 7}^{[5]} + 3 \, \abs{A_1}^4 \, B_2 \, e^{i x} \, \mbf W_{1, 7}^{[5]} + 6 \, \abs{A_1}^2 \, \bar A_1 \, A_2 \, B_1 \, e^{i x} \, \mbf W_{1, 7}^{[5]} + 6 \, \abs{A_1}^2 \, A_1 \, \bar A_2 \, B_1 \, e^{i x} \, \mbf W_{1, 7}^{[5]}
            \\
            & \quad + 6 \, \abs{A_1}^2 \, A_1 \, A_2 \, \bar B_1 \, e^{i x} \, \mbf W_{1, 7}^{[5]} + B_3 \, e^{i x} \, \mbf W_1^{[4]} + A_1^2 \, \bar B_3 \, e^{i x} \, \mbf W_{1, 2}^{[4]} + A_2^2 \, \bar B_1 \, e^{i x} \, \mbf W_{1, 2}^{[4]} + 2 \, \bar A_1 \, A_2 \, B_2 \, e^{i x} \, \mbf W_{1, 2}^{[4]}
            \\
            & \quad + 2 \, A_1 \, \bar A_2 \, B_2 \, e^{i x} \, \mbf W_{1, 2}^{[4]} + 2 \, A_1 \, A_2 \, \bar B_2 \, e^{i x} \, \mbf W_{1, 2}^{[4]} + 2 \, \abs{A_1}^2 \, B_3 \, e^{i x} \, \mbf W_{1, 2}^{[4]} + 2 \, \abs{A_2}^2 \, B_1 \, e^{i x} \, \mbf W_{1, 2}^{[4]}
            \\
            & \quad + 2 \, A_1 \, \bar A_3 \, B_1 \, e^{i x} \, \mbf W_{1, 2}^{[4]} + 2 \, A_1 \, A_3 \, \bar B_1 \, e^{i x} \, \mbf W_{1, 2}^{[4]} + 2 \, \bar A_1 \, A_3 \, B_1 \, e^{i x} \, \mbf W_{1, 2}^{[4]} + i \, B_{3_X} \, e^{i x} \, \mbf W_{1, 3}^{[4]}
            \\
            & \quad + B_4 \, e^{i x} \, \mbf W_1^{[3]} + A_2^2 \, \bar B_2 \, e^{i x} \, \mbf W_{1, 2}^{[3]} + A_1^2 \, \bar B_4 \, e^{i x} \, \mbf W_{1, 2}^{[3]} + 2 \, \abs{A_2}^2 \, B_2 \, e^{i x} \, \mbf W_{1, 2}^{[3]} + 2 \, A_1 \, A_3 \, \bar B_2 \, e^{i x} \, \mbf W_{1, 2}^{[3]}
            \\
            & \quad + 2 \, A_1 \, \bar A_3 \, B_2 \, e^{i x} \, \mbf W_{1, 2}^{[3]} + 2 \, \bar A_1 \, A_3 \, B_2 e^{i x} \, \mbf W_{1, 2}^{[3]} + 2 \, \bar A_1 \, A_2 \, B_3 \, e^{i x} \, \mbf W_{1, 2}^{[3]} + 2 \, A_1 \, \bar A_2 \, B_3 \, e^{i x} \, \mbf W_{1, 2}^{[3]}
            \\
            & \quad + 2 \, A_1 \, A_2 \, \bar B_3 \, e^{i x} \, \mbf W_{1, 2}^{[3]} + 2 \, \abs{A_1}^2 \, B_4 \, e^{i x} \, \mbf W_{1, 2}^{[3]} + 2 \, \bar A_1 \, A_4 \, B_1 \, e^{i x} \, \mbf W_{1, 2}^{[3]} + 2 \, A_1 \, \bar A_4 \, B_1 \, e^{i x} \, \mbf W_{1, 2}^{[3]}
            \\
            & \quad + 2 \, A_2 \, \bar A_3 \, B_1 \, e^{i x} \, \mbf W_{1, 2}^{[3]} + 2 \, \bar A_2 \, A_3 \, B_1 \, e^{i x} \, \mbf W_{1, 2}^{[3]} + 2 \, A_2 \, A_3 \, \bar B_1 \, e^{i x} \, \mbf W_{1, 2}^{[3]} + 2 \, A_1 \, A_4 \, \bar B_1 \, e^{i x} \, \mbf W_{1, 2}^{[3]}
            \\
            & \quad + i \, B_{4_X} \, e^{i x} \, \mbf W_{1, 3}^{[3]} + B_5 \, e^{i x} \, \mbf W_1^{[2]} + B_6 \, e^{i x} \, \bs \phi_1^{[1]} + 2 \, A_1 \, B_1 \, e^{2 i x} \, \mbf W_2^{[6]} + A_1^3 \, \bar B_1 \, e^{2 i x} \, \mbf W_{2, 2}^{[6]}
            \\
            & \quad + 3 \, \abs{A_1}^2 \, A_1 \, B_1 \, e^{2 i x} \, \mbf W_{2, 2}^{[6]} + 2 \, \abs{A_1}^2 \, A_1^3 \, \bar B_1 \, e^{2 i x} \, \mbf W_{2, 3}^{[6]} + 4 \, \abs{A_1}^4 \, A_1 \, B_1 \, e^{2 i x} \, \mbf W_{2, 3}^{[6]} + 2 \, A_{1_X} \, B_{1_X} \, e^{2 i x} \, \mbf W_{2, 4}^{[6]}
            \\
            & \quad + i \, A_1 \, B_{1_X} \,  e^{2 i x} \, \mbf W_{2, 5}^{[6]} + i \, A_{1_X} \, B_1 \, e^{2 i x} \, \mbf W_{2, 5}^{[6]} + i \, \abs{A_1}^2 \, A_1 \, B_{1_X} \, e^{2 i x} \, \mbf W_{2, 6}^{[6]} + i \, A_1^2 \, A_{1_X} \, \bar B_1 \, e^{2 i x} \, \mbf W_{2, 6}^{[6]}
            \\
            & \quad + 2 \, i \, \abs{A_1}^2 \, A_{1_X} \, B_1 \, e^{2 i x} \, \mbf W_{2, 6}^{[6]} + i \, A_1^3 \, \bar B_{1_X} \, e^{2 i x} \, \mbf W_{2, 7}^{[6]} + 3 \, i \, A_1^2 \, \bar A_{1_X} \, B_1 \, e^{2 i x} \, \mbf W_{2, 7}^{[6]} + A_1 \, B_{1_{X \! X}} \, e^{2 i x} \, \mbf W_{2, 8}^{[6]}
            \\
            & \quad + A_{1_{X \! X}} \, B_1 \, e^{2 i x} \, \mbf W_{2, 8}^{[6]} + i \, A_1 \, B_{1_\Xi} \, e^{2 i x} \, \mbf W_{2, 3}^{[4]} + i \, A_{1_\Xi} \, B_1 \, e^{2 i x} \, \mbf W_{2, 3}^{[4]} + 2 \, A_1 \, B_2 \, e^{2 i x} \, \mbf W_2^{[5]} + 2 \, A_2 \, B_1 \, e^{2 i x} \, \mbf W_2^{[5]}
            \\
            & \quad + A_1^3 \, \bar B_2 \, e^{2 i x} \, \mbf W_{2, 2}^{[5]} + 3 \, \abs{A_1}^2 \, A_1 \, B_2 \, e^{2 i x} \, \mbf W_{2, 2}^{[5]} + 3 \, A_1^2 \, A_2 \, \bar B_1 \, e^{2 i x} \, \mbf W_{2, 2}^{[5]} + 3 \, A_1^2 \, \bar A_2 \, B_1 \, e^{2 i x} \, \mbf W_{2, 2}^{[5]}
            \\
            & \quad + 6 \, \abs{A_1}^2 \, A_2 \, B_1 \, e^{2 i x} \, \mbf W_{2, 2}^{[5]} + i \, A_1 \, B_{2_X} \, e^{2 i x} \, \mbf W_{2, 3}^{[5]} + i \, A_{1_X} \, B_2 \, e^{2 i x} \, \mbf W_{2, 3}^{[5]} + i \, A_2 \, B_{1_X} \, e^{2 i x} \, \mbf W_{2, 3}^{[5]}
            \\
            & \quad + i \, A_{2_X} \, B_1 \, e^{2 i x} \, \mbf W_{2, 3}^{[5]} + 2 \, A_2 \, B_2 \, e^{2 i x} \, \mbf W_2^{[4]} + 2 \, A_1 \, B_3 \, e^{2 i x} \, \mbf W_2^{[4]} + 2 \, A_3 \, B_1 \, e^{2 i x} \, \mbf W_2^{[4]} + A_1^3 \, \bar B_3 \, e^{2 i x} \, \mbf W_{2, 2}^{[4]}
            \\
            & \quad + 3 \, A_1^2 \, \bar A_2 \, B_2 \, e^{2 i x} \, \mbf W_{2, 2}^{[4]} + 3 \, A_1^2 \, A_2 \, \bar B_2 \, e^{2 i x} \, \mbf W_{2, 2}^{[4]} + 3 \, \abs{A_1}^2 \, A_1 \, B_3 \, e^{2 i x} \, \mbf W_{2, 2}^{[4]} + 3 \, \bar A_1 \, A_2^2 \, B_1 \, e^{2 i x} \, \mbf W_{2, 2}^{[4]}
            \\
            & \quad + 3 \, A_1 \, A_2^2 \, \bar B_1 \, e^{2 i x} \, \mbf W_{2, 2}^{[4]} + 3 \, A_1^2 \, \bar A_3 \, B_1 \, e^{2 i x} \, \mbf W_{2, 2}^{[4]} + 3 \, A_1^2 \, A_3 \, \bar B_1 \, e^{2 i x} \, \mbf W_{2, 2}^{[4]} + 6 \, \abs{A_1}^2 \, A_2 \, B_2 \, e^{2 i x} \, \mbf W_{2, 2}^{[4]}
            \\
            & \quad + 6 \, \abs{A_2}^2 \, A_1 \, B_1 \, e^{2 i x} \, \mbf W_{2, 2}^{[4]} + 6 \, \abs{A_1}^2 \, A_3 \, B_1 \, e^{2 i x} \, \mbf W_{2, 2}^{[4]} + i \, A_2 \, B_{2_X} \, e^{2 i x} \, \mbf W_{2, 3}^{[4]} + i \, A_{2_X} \, B_2 \, e^{2 i x} \, \mbf W_{2, 3}^{[4]}
            \\
            & \quad + i \, A_1 \, B_{3_X} \, e^{2 i x} \, \mbf W_{2, 3}^{[4]} + i \, A_{1_X} \, B_3 \, e^{2 i x} \, \mbf W_{2, 3}^{[4]} + i \, A_3 \, B_{1_X} \, e^{2 i x} \, \mbf W_{2, 3}^{[4]} + i \, A_{3_X} \, B_1 \, e^{2 i x} \, \mbf W_{2, 3}^{[4]}
            \\
            & \quad + 2 \, A_3 \, B_2 \, e^{2 i x} \, \mbf W_2^{[3]} + 2 \, A_2 \, B_3 \, e^{2 i x} \, \mbf W_2^{[3]} + 2 \, A_1 \, B_4 \, e^{2 i x} \, \mbf W_2^{[3]} + 2 \, A_4 \, B_1 \, e^{2 i x} \, \mbf W_2^{[3]} + 2 \, A_4 \, B_2 \, e^{2 i x} \, \mbf W_2^{[2]}
            \\
            & \quad + 2 \, A_3 \, B_3 \, e^{2 i x} \, \mbf W_2^{[2]} + 2 \, A_2 \, B_4 \, e^{2 i x} \, \mbf W_2^{[2]} + 2 \, A_1 \, B_5 \, e^{2 i x} \, \mbf W_2^{[2]} + 2 \, A_5 \, B_1 \, e^{2 i x} \, \mbf W_2^{[2]} + \ldots + c.c.
        \end{align*}
        }
        where '$\ldots$' represents terms that are multiples of $e^{3ix}$, $e^{4ix}$, $e^{5ix}$, or $e^{6ix}$.
        
        Last but not least, the solvability condition at order $\mathcal O\left(\varepsilon^7\right)$ is given by
        {\allowdisplaybreaks
        \begin{multline}
            \alpha_1 \, B_{3_{X \! X}} + i \, \alpha_2 \, B_{3_X} + i \, \alpha_3 \, \abs{A_1}^2 \, B_{3_X} + i \, \alpha_3 \, \bar A_1 \, A_{1_X} \, B_3 + i \, \alpha_3 \, A_1 \, A_{1_X} \, \bar B_3 + i \, \alpha_4 \, A_1^2 \, \bar B_{3_X}
            \\
            + 2 \, i \, \alpha_4 \, A_1 \, \bar A_{1_X} \, B_3 + \alpha_5 \, B_3 + \alpha_6 \, A_1^2 \, \bar B_3 + 2 \, \alpha_6 \, \abs{A_1}^2 \, B_3 + 2 \, \alpha_7 \, \abs{A_1}^2 \, A_1^2 \, \bar B_3 + 3 \, \alpha_7 \, \abs{A_1}^4 \, B_3
            \\
            + \alpha_{1, 2} \, B_{2_{X \! X}} + i \, \alpha_{2, 2} \, B_{2_X} + i \, \alpha_{3, 2} \, \abs{A_1}^2 \, B_{2_X} + i \, \alpha_{3, 2} \, \bar A_1 \, A_{1_X} \, B_2 + i \, \alpha_{3, 2} \, A_1 \, A_{1_X} \, \bar B_2 + i \, \alpha_{4, 2} \, A_1^2 \, \bar B_{2_X}
            \\
            + 2 \, i \, \alpha_{4, 2} \, A_1 \, \bar A_{1_X} \, B_2 + \alpha_{5, 2} \, B_2 + \alpha_{6, 2} \, A_1^2 \, \bar B_2 + 2 \, \alpha_{6, 2} \, \abs{A_1}^2 \, B_2 + 2 \, \alpha_{7, 2} \, \abs{A_1}^2 \, A_1^2 \, \bar B_2 + 3 \, \alpha_{7, 2} \, \abs{A_1}^2 \, A_1 \, B_2
            \\
            + i \, \alpha_3 \, \bar A_1 \, A_2 \, B_{2_X} + i \, \alpha_3 \, A_1 \, \bar A_2 \, B_{2_X} + i \, \alpha_3 \, \bar A_1 \, A_{2_X} \, B_2 + i \, \alpha_3 \, A_{1_X} \, \bar A_2 \, B_2 + i \, \alpha_3 \, A_1 \, A_{2_X} \, \bar B_2
            \\
            + i \, \alpha_3 \, A_{1_X} \, A_2 \, \bar B_2 + 2 \, i \, \alpha_4 \, \bar A_{1_X} \, A_2 \, B_2 + 2 \, i \, \alpha_4 \, A_1 \, A_2 \, \bar B_{2_X} + 2 \, i \, \alpha_4 \, A_1 \, \bar A_{2_X} \, B_2 + 2 \, \alpha_6 \, \bar A_1 \, A_2 \, B_2
            \\
            + 2 \, \alpha_6 \, A_1 \, \bar A_2 \, B_2 + 2 \, \alpha_6 \, A_1 \, A_2 \, \bar B_2 + 2 \, \alpha_7 \, A_1^3 \, \bar A_2 \, \bar B_2 + 6 \, \alpha_7 \, \abs{A_1}^2 \, \bar A_1 \, A_2 \, B_2 + 6 \, \alpha_7 \, \abs{A_1}^2 \, A_1 \, \bar A_2 \, B_2
            \\
            + 6 \, \alpha_7 \, \abs{A_1}^2 \, A_1 \, A_2 \, \bar B_2 + i \, \alpha_{1, 3} \, B_{1_{X \! X \! X}} + \alpha_{2, 3} \, B_{1_{X \! X}} + \alpha_{3, 3} \, \abs{A_1}^2 \, B_{1_{X \! X}} + \alpha_{3, 3} \, \bar A_1 \, A_{1_{X \! X}} \, B_1 + \alpha_{3, 3} \, A_1 \, A_{1_{X \! X}} \, \bar B_1
            \\
            + \alpha_{4, 3} \, A_1^2 \, \bar B_{1_{X \! X}} + 2 \, \alpha_{4, 3} \, A_1 \, \bar A_{1_{X \! X}} \, B_1 + i \, \alpha_{5, 3} \, B_{1_X} + i \, \alpha_{6, 3} \, \abs{A_1}^2 \, B_{1_X} + i \, \alpha_{6, 3} \, \bar A_1 \, A_{1_X} \, B_1
            \\
            + i \, \alpha_{6, 3} \, A_1 \, A_{1_X} \, \bar B_1 + i \, \alpha_{7, 3} \, \abs{A_1}^4 \, B_{1_X} + 2 \, i \, \alpha_{7, 3} \, \abs{A_1}^2 \, \bar A_1 \, A_{1_X} \, B_1 + 2 \, i \, \alpha_{7, 3} \, \abs{A_1}^2 \, A_1 \, A_{1_X} \, \bar B_1
            \\
            + i \, \alpha_{8, 3} \, A_1^2 \, \bar B_{1_X} + 2 \, i \, \alpha_{8, 3} \, A_1 \, \bar A_{1_X} \, B_1 + i \, \alpha_{9, 3} \, \abs{A_1}^2 \, A_1^2 \, \bar B_{1_X} + i \, \alpha_{9, 3} \, A_1^3 \, \bar A_{1_X} \, \bar B_1
            \\
            + 3 \, i \, \alpha_{9, 3} \, \abs{A_1}^2 \, A_1 \, \bar A_{1_X} \, B_1 + \alpha_{10, 3} \, A_{1_X}^2 \, \bar B_1 + 2 \, \alpha_{10, 3} \, \bar A_1 \, A_{1_X} \, B_{1_X} + \alpha_{11, 3} \, A_1 \, \bar A_{1_X} \, B_{1_X}
            \\
            + \alpha_{11, 3} \, A_1 \, A_{1_X} \, \bar B_{1_X} + \alpha_{11, 3} \, \abs{A_{1_X}}^2 \, B_1 + \alpha_{12, 3} \, B_1 + \alpha_{13, 3} \, A_1^2 \, \bar B_1 + 2 \, \alpha_{13, 3} \, \abs{A_1}^2 \, B_1
            \\
            + 2 \, \alpha_{14, 3} \, \abs{A_1}^2 \, A_1^2 \, \bar B_1 + 3 \, \alpha_{14, 3} \, \abs{A_1}^4 \, B_1 + 3 \, \alpha_{15, 3} \, \abs{A_1}^4 \, A_1^2 \, \bar B_1 + 4 \, \alpha_{15, 3} \, \abs{A_1}^6 \, B_1
            \\
            + 2 \, \alpha_1 \, B_{1_{X \! \Xi}} + i \, \alpha_2 \, B_{1_\Xi} + i \, \alpha_3 \, \abs{A_1}^2 \, B_{1_\Xi} + i \, \alpha_3 \, \bar A_1 \, A_{1_\Xi} \, B_1 + i \, \alpha_3 \, A_1 \, A_{1_\Xi} \, \bar B_1 + i \, \alpha_4 \, A_1^2 \, \bar B_{1_\Xi}
            \\
            + 2 \, i \, \alpha_4 \, A_1 \, \bar A_{1_\Xi} \, B_1 + i \, \alpha_3 \, \bar A_1 \, A_3 \, B_{1_X} + i \, \alpha_3 \, A_1 \, \bar A_3 \, B_{1_X} + i \, \alpha_3 \, \abs{A_2}^2 \, B_{1_X} + i \, \alpha_3 \, \bar A_1 \, A_{3_X} \, B_1 + i \, \alpha_3 \, A_1 \, A_{3_X} \, \bar B_1
            \\
            + i \, \alpha_3 \, A_{1_X} \, \bar A_3 \, B_1 + i \, \alpha_3 \, \bar A_2 \, A_{2_X} \, B_1 + i \, \alpha_3 \, A_{1_X} \, A_3 \, \bar B_1 + i \, \alpha_3 \, A_2 \, A_{2_X} \, \bar B_1 + i \, \alpha_4 \, A_2^2 \, \bar B_{1_X}
            \\
            + 2 \, i \, \alpha_4 \, A_1 \, A_3 \, \bar B_{1_X} + 2 \, i \, \alpha_4 \, A_1 \, \bar A_{3_X} \, B_1 + 2 \, i \, \alpha_4 \, \bar A_{1_X} \, A_3 \, B_1 + 2 \, i \, \alpha_4 \, A_2 \, \bar A_{2_X} \, B_1 + \alpha_6 \, A_2^2 \, \bar B_1
            \\
            + 2 \, \alpha_6 \, \bar A_1 \, A_3 \, B_1 + 2 \, \alpha_6 \, A_1 \, \bar A_3 \, B_1 + 2 \, \alpha_6 \, A_1 \, A_3 \, \bar B_1 + 2 \, \alpha_6 \, \abs{A_2}^2 \, B_1 + 2 \, \alpha_7 \, A_1^3 \, \bar A_3 \, \bar B_1
            \\
            + 3 \, \alpha_7 \, \bar A_1^2 \, A_2^2 \, B_1 + 3 \, \alpha_7 \, A_1^2 \, \bar A_2^2 \, B_1 + 6 \, \alpha_7 \, \abs{A_1}^2 \, A_1 \, \bar A_3 \, B_1 + 6 \, \alpha_7 \, \abs{A_1}^2 \, \bar A_1 \, A_3 \, B_1
            \\
            + 6 \, \alpha_7 \, \abs{A_1}^2 \, A_1 \, A_3 \, \bar B_1 + 6 \, \alpha_7 \, \abs{A_1}^2 \, A_2^2 \, \bar B_1 + 6 \, \alpha_7 \, A_1^2 \, \abs{A_2}^2 \, \bar B_1 + 12 \, \alpha_7 \, \abs{A_1}^2 \, \abs{A_2}^2 \, B_1
            \\
            + i \, \alpha_{3, 2} \, \bar A_1 \, A_2 \, B_{1_X} + i \, \alpha_{3, 2} \, A_1 \, \bar A_2 \, B_{1_X} + i \, \alpha_{3, 2} \, \bar A_1 \, A_{2_X} \, B_1 + i \, \alpha_{3, 2} \, A_{1_X} \, \bar A_2 \, B_1 + i \, \alpha_{3, 2} \, A_1 \, A_{2_X} \, \bar B_1
            \\
            + i \, \alpha_{3, 2} \, A_{1_X} \, A_2 \, \bar B_1 + 2 \, i \, \alpha_{4, 2} \, A_1 \, A_2 \, \bar B_{1_X} + 2 \, i \, \alpha_{4, 2} \, A_1 \, \bar A_{2_X} \, B_1 + 2 \, i \, \alpha_{4, 2} \, \bar A_{1_X} \, A_2 \, B_1 + 2 \, \alpha_{6, 2} \, \bar A_1 \, A_2 \, B_1
            \\
            + 2 \, \alpha_{6, 2} \, A_1 \, \bar A_2 \, B_1 + 2 \, \alpha_{6, 2} \, A_1 \, A_2 \, \bar B_1 + 2 \, \alpha_{7, 2} \, A_1^3 \, \bar A_2 \, \bar B_1 + 6 \, \alpha_{7, 2} \, \abs{A_1}^2 \, \bar A_1 \, A_2 \, B_1
            \\
            + 6 \, \alpha_{7, 2} \, \abs{A_1}^2 \, A_1 \, \bar A_2 \, B_1 + 6 \, \alpha_{7, 2} \, \abs{A_1}^2 \, A_1 \, A_2 \, \bar B_1 = 0. \label{rem_7_eq}
        \end{multline}
        }

\end{document}